\renewcommand\section{\@startsection{section}{1}%
  \z@{.7\linespacing\@plus\linespacing}{.5\linespacing}%
 {\normalfont\bfseries\centering}}
\tikzset{->-/.style={decoration={
  markings,
  mark=at position .5 with {\arrow{>}}},postaction={decorate}}}
\newcommand{\eq}[1]{\begin{equation}
                     \begin{aligned} #1 \end{aligned}
                     \end{equation}}
\newcommand{\ov}{\overline}
\newcommand{\R}{\mathbb{R}}
\newcommand{\z}{\mathbb{Z}}
\newcommand{\lbr}{\left\lbrace}
\newcommand{\rbr}{\right\rbrace}
\newcommand{\im}{\mathrm{Im}}
\newcommand{\id}{\mathrm{Id}}
\newcommand{\p}{\partial}
\newcommand{\la}{\left\langle}
\newcommand{\ra}{\right\rangle}
\newcommand{\Hbb}{\mathbb{H}}
\newcommand{\Kbb}{\mathbb{K}}
\newcommand{\Cokerrm}{\mathrm{Coker}}
\newcommand{\Endrm}{\mathrm{End}}
\newcommand{\Graphrm}{\mathrm{Graph}}
\newcommand{\Imrm}{\mathrm{Im}}
\newcommand{\Kerrm}{\mathrm{Ker}}
\newcommand{\coevrm}{\mathrm{coev}}
\newcommand{\evrm}{\mathrm{ev}}
\newcommand{\ordrm}{\mathrm{ord}}
\newcommand{\orrm}{\mathrm{or}}
\newcommand{\trrm}{\mathrm{tr}}
\newcommand{\gcal}{\mathcal{g}}
\newcommand{\Bcal}{\mathcal{B}}
\newcommand{\Gcal}{\mathcal{G}}
\newcommand{\Hcal}{\mathcal{H}}
\newcommand{\Rcal}{\mathcal{R}}
\newcommand{\Scal}{\mathcal{S}}
\newcommand{\Ycal}{\mathcal{Y}}
\newcommand{\Blcal}{\mathcal{B}}
\newcommand{\Gcalcl}{\Gcal_{cl}}
\newcommand{\Gcalop}{\Gcal_{op}}
\newcommand{\Hcalcl}{\Hcal_{cl}}
\newcommand{\Hcalop}{\Hcal_{op}}
\newcommand{\dsf}{\mathsf{d}}
\newcommand{\Asf}{\mathsf{A}}
\newcommand{\Bsf}{\mathsf{B}}
\newcommand{\Csf}{\mathsf{C}}
\newcommand{\Dsf}{\mathsf{D}}
\newcommand{\Isf}{\mathsf{I}}
\newcommand{\Zsf}{\mathsf{Z}}
\newcommand{\Funsf}{\mathsf{Fun}}
\newcommand{\Vectsf}{\mathsf{Vect}}
\newcommand{\WSsf}{\mathsf{WS}}
\newcommand{\corr}{\mathsf{corr}}
\newcommand{\mcl}{m_{cl}}
\newcommand{\Dcl}{\Delta_{cl}}
\newcommand{\etacl}{\eta_{cl}}
\newcommand{\epcl}{\epsilon_{cl}}
\newcommand{\mop}{m_{op}}
\newcommand{\Dop}{\Delta_{op}}
\newcommand{\etaop}{\eta_{op}}
\newcommand{\epop}{\epsilon_{op}}
\newcommand{\iotaclop}{\iota_{cl-op}}
\renewcommand{\hom}{\mathrm{Hom}}
\numberwithin{equation}{section}
\theoremstyle{definition}
\newtheorem{defn}{Definition}[section]
\theoremstyle{plain}
\newtheorem{theo}[defn]{Theorem}
\theoremstyle{plain}
\newtheorem{lem}[defn]{Lemma}
\theoremstyle{remark}
\theoremstyle{remark}
\theoremstyle{plain}
\newtheorem{Cor}[defn]{Corollary}
\theoremstyle{plain}
\newtheorem{prop}[defn]{Proposition}
\theoremstyle{plain}
\newtheorem*{theomain1}{Theorem I}
\theoremstyle{plain}
\newtheorem*{theomain2}{Theorem II}
\theoremstyle{definition}
\begin{document}

\begin{flushright}
  {\tiny
  MPP-2020-172\\
  }
  \end{flushright}
  
\title{Cardy Algebras, Sewing Constraints and String-Nets}

\vspace{2cm}
\date{\today}

\vspace{0.4cm}

\author{Matthias Traube}

\address{Matthias Traube: Max-Planck-Institut f\"ur Physik (Werner-Heisenberg-Institut) \\ 
   F\"ohringer Ring 6,  80805 M\"unchen, Germany } 
\email{mtraube@mpp.mpg.de}

%%%%%%%%%%%%%%%%%%%%%%%%%%%%%%%%%%%%%%%%%%%%%%%
%%%%%%%%%%%%%%%%%%%%%%%%%%%%%%%%%%%%%%%%%%%%%%%
%%%%%%%%%%%%%%%%%%%%%%%%%%%%%%%%%%%%%%%%%%%%%%%
%%%%%%%%%%%%%%%%%%%%%%%%%%%%%%%%%%%%%%%%%%%%%%%
\begin{abstract}
\noindent In \cite{Schweigert:2019zwt} it was shown how string-net spaces for the Cardy bulk algebra in the Drinfeld center $\Zsf(\Csf)$ of a modular tensor category $\Csf$ give rise to a consistent set of correlators. We extend their results to include open-closed world sheets and allow for more general field algebras, which come in the form of $(\Csf|\Zsf(\Csf))$-Cardy algebras. To be more precise, we show that a set of fundamental string-nets with input data from a $(\Csf|\Zsf(\Csf))$-Cardy algebra gives rise to a solution of the sewing constraints formulated in \cite{Kong_2014} and that any set of fundamental string-nets solving the sewing constraints determine a $(\Csf|\Zsf(\Csf))$-Cardy algebra up to isomorphism. Hence we give an alternative proof of the results in \cite{Kong_2014} in terms of string-nets.
\end{abstract}

%\preprint{MPP-2020-??}

\maketitle

\tableofcontents
\section{Introduction}
String-net spaces were originally introduced by Levin and Wen in \cite{Levin:2004mi} in order to describe phenomena of topological phases of matter on surfaces. Roughly speaking a string-net is an equivalence class of an embedded graph on a surface $S$ with boundary. Based on earlier works \cite{Koenig_2010}\cite{Kadar:2009fs} a precise mathematical formulation of string-nets was given by Kirillov in \cite{2011arXiv1106.6033K} and it can be seen as a higher genus enhancement of the graphical calculus for tensor categories where the usual relations hold on any embedded disk. Similar discussions of categories on surfaces have appeared in \cite{HARDIMAN_2019}\cite{hardiman2019graphical} for the case of a cylinder. In a series of papers \cite{KirillovBalsam}\cite{balsam}\cite{balsam2010turaevviro}\cite{2011arXiv1106.6033K} it was shown how an appropriately defined vector space of formal linear combinations of string-nets is equivalent to the state space of the Turaev-Viro three dimensional topological field theory (TFT) based on $\Csf$ or equivalently to the state space of the Reshetikhin-Turaev (TFT) for $\Zsf(\Csf)$. 

Two dimensional rational conformal field theory (RCFT) on the other hand has a categorical description in the form of the FRS (Fuchs-Runkel-Schweigert)-formalism developed in \cite{Fuchs:2002cm} \cite{Fjelstad:2005ua} \cite{Fuchs:2004xi} \cite{Fuchs:2004dz} \cite{Fuchs:2003id}, where the state space of the Reshetikhin-Turaev TFT features prominently. The monodromy data of an RCFT is described by a modular tensor category and the bulk field algebra is a Frobenius algebra $\Hcalcl$ in $\Zsf(\Csf)$. The space of closed conformal blocks on a compact surface $S^g_{n|m}$ of genus $g$ with $n$ incoming and $m$ outgoing boundaries is given by the state-space of the Reshetikhin-Turaev TFT on $S^g_{n|m}$, $Z_{RT,\Zsf(\Csf)}(\Hcalcl^\ast,\dots,\Hcalcl^\ast,\Hcalcl,\dots, \Hcalcl)$, with $n$ copies of $\Hcalcl^\ast$ coloring incoming boundary components and $m$ copies of $\Hcalcl$ coloring outgoing boundary components. A consistent set of correlators in the RCFT is an assignment of an element in $Z_{RT,\Zsf(\Csf)}(\Hcalcl^\ast,\dots,\Hcalcl^\ast,\Hcalcl,\dots, \Hcalcl)$ for all surfaces $S^g_{n|m}$ s.th. the vectors are invariant under the action of the mapping class group and behave equivariant under sewings of surfaces. That is, there should exist a map for sewing correlators and sewn correlators should agree with the correlator on the sewn surface. The equivalence between string-net spaces and spaces of conformal blocks was used in \cite{Schweigert:2019zwt} to show that a given set of genus zero string-nets with boundary value given by the Cardy-Frobenius algebra in $\Zsf(\Csf)$ indeed give rise to consistent correlators. 

In this paper we generalize their result in two directions: Firstly we allow for world sheets with open and closed field insertions. This leads to a formulation of sewing constraints in terms of the category $\WSsf$ of open-closed world sheets given in \cite{Fjelstad:2005ua}. String-nets on open-closed world sheets still give a symmetric monoidal functor $\Blcal:\WSsf\rightarrow \Vectsf$. Solving sewing constraints for such a symmetric monoidal functor was reduced in \cite{Kong_2014} to a set of 32 relations, which have to be satisfied. The second generalization concerns the input data for the construction. We allow arbitrary $(\Csf|\Zsf(\Csf))$-Cardy algebras as inputs. Based on an operadic formulation for vertex operator algebras (VOA), Cardy algebras were introduced in \cite{Kong_2008} and formulated entirely in terms of category theory in \cite{Kong_2009}. A $(\Csf|\Zsf(\Csf))$-Cardy algebra encodes the necessary data for an open-closed RCFT in genus one and zero and consists of a triple $(\Hcalcl,\Hcalop,\iota_{clop})$, where $\Hcalop$, $\Hcalcl$ are Frobenius algebras in $\Csf$ and $\Zsf(\Csf)$ respectively, together with an algebra map $\iotaclop:\Hcalcl\rightarrow L(\Hcalop)$. Here $L:\Csf\rightarrow \Zsf(\Csf)$ is the adjoint functor to the forgetful functor. This data has to satisfy three conditions: modularity, the center property and finally the Cardy condition. In \cite{Kong_2014} it was shown, using the Reshetikhin-Turaev TFT, that Cardy algebras give rise to a consistent set of correlators and vice versa. The category $\WSsf$ is generated by a set 
\eq{
\lbr O_m,O_\Delta,O_\eta,O_\epsilon,O_{prop},C_m,C_\Delta,C_\eta,C_\epsilon,C_{prop},I, I	^\dagger \rbr 
}
of fundamental world sheets. With the help of the $(\Csf|\Zsf(\Csf))$-Cardy algebra we define correlators 
\eq{
\corr=\lbr \corr_{prop}^{op},\corr_m^{op}\corr_\Delta^{op},\corr_\eta^{op},\corr^{op}_\epsilon,\corr^{cl}_{prop},\corr_m^{cl},\corr^{cl}_\Delta,\corr_\eta^{cl},\corr_\epsilon^{cl},\corr_I,\corr_{I^\dagger}\rbr 
}
in terms of string-nets and the first main result, whose precise formulation in the main text is Theorem \ref{maintheo1}, is
\begin{theomain1} The set of correlators $\corr$ gives a solution to the sewing constraints for the conformal block functor $\Blcal$ with boundary coloring determined by the $(\Csf|\Zsf(\Csf))$-Cardy algebra.
\end{theomain1}

Assuming a solution to the sewing constraints for $\Blcal$ exists, we also show that the converse is true, which is our second main result (for the precise statement see Theorem \ref{maintheo3} in the text).

\begin{theomain2} An assignment of fundamental string-net correlators based on boundary colorings $(\widehat{\Hcalcl},\widehat{\Hcalop})$ in $\Csf$, and $\Zsf(\Csf)$, solving the sewing contraints, determines a $(\Csf|\Zsf(\Csf))$-Cardy algebra $(\Hcalcl,\Hcalop,\iotaclop)$, which is unique up to isomorphism.
\end{theomain2}

The proofs of the theorems very much use the fact that the graphical calculus in $\Csf$ carries over to string-nets on surfaces. Hence the graphical representation of consistency relations for Cardy algebras appear directly as string-nets on surfaces and can be manipulated accordingly. This renders the proof very tractable. As an example, figure \ref{randomcorrel} shows a correlator world sheet. Purple curves denote open insertions, orange ones correspond to closed insertion. Vertices stand for structure morphisms of the Frobenius algberas in The Cardy algebras and squares denote open-closed embedding maps. In addition there are circle decorations where world sheets are glued. All of these ingredients will be explained in the core of the text.

\begin{figure}
\centering
\includegraphics[scale=0.15]{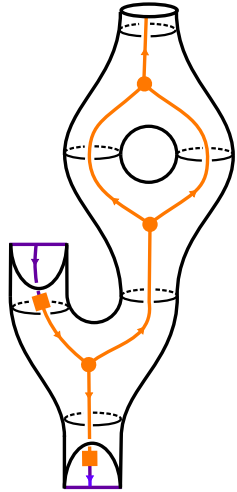}
\caption{Example of an open-closed correlator on a genus 1 surface with two open insertions and one closed insertion. The colored graph runs on the front of the surface.}
\label{randomcorrel}
\end{figure}

The paper is organized as follows. In section \ref{sec1} we give an account of the necessary categorical preliminaries, including modular tensor categories, the Drinfeld center and Frobenius algebras in tensor categories. In section \ref{sec2} we motivate Cardy algebras and recall their definition from \cite{Kong_2009}. In section \ref{sec3} string-net spaces are discussed based on \cite{2011arXiv1106.6033K}. In all of the paper we abusively use the term string-net to refer to an whole equivalence class of string-nets. The definition of the category of world sheets $\WSsf$ from \cite{Fjelstad:2005ua} and the formulation of sewing constraints given in \cite{Kong_2014} is recalled in section \ref{sec4}. Section \ref{sec5} is the main part of the paper and contains the precise formulation of the above theorems, as well as their proofs. In the appendix we display the generating world sheets.

\section{Categorical Preliminaries}\label{sec1}

As stated in the introduction two dimensional (rational) conformal field theory can be conveniently described in categorical terms. This section serves as a reminder of the necessary terms, together with the relevant graphical calculus. A classical source for the material presented here is e.g. \cite{etingof2016tensor}. In all about to come, $\mathbb{K}$ is an algebraically closed field of characteristic $0$.

\subsection{Basic Categorical Identities}

Let $\Csf$ be an abelian $\Kbb$-linear category, i.e. for any morphism $\phi$, $\Kerrm(\phi)$, $\Cokerrm(\phi)$ and $\Imrm(\phi)$ exist and moreover for $A$, $B\, \in \Csf$, $\hom_\Csf(A,B)$ is a $\Kbb$-vector space. A monoidal structure  on $\Csf$ is a bilinear bifunctor $\otimes:\Csf\times \Csf\rightarrow \Csf$ with associativity and unit natural isomorphisms which are assumed to be identities in this paper. Thus we always consider strictly monoidal categories. The unit object for $\otimes$ is denoted by $\mathbf{1}$. A \textit{braiding} on $(\Csf,\otimes)$ is a natural isomorphism $\beta_{A,B}:A\otimes B\rightarrow B\otimes A$. In the graphical calculus about to be introduced, all diagrams run from bottom to top. Graphically $\beta_{A,B}$ is depicted by

\begin{figure}[H]
\centering
\begin{minipage}{0.45\linewidth}
\centering
\includegraphics[scale=0.1]{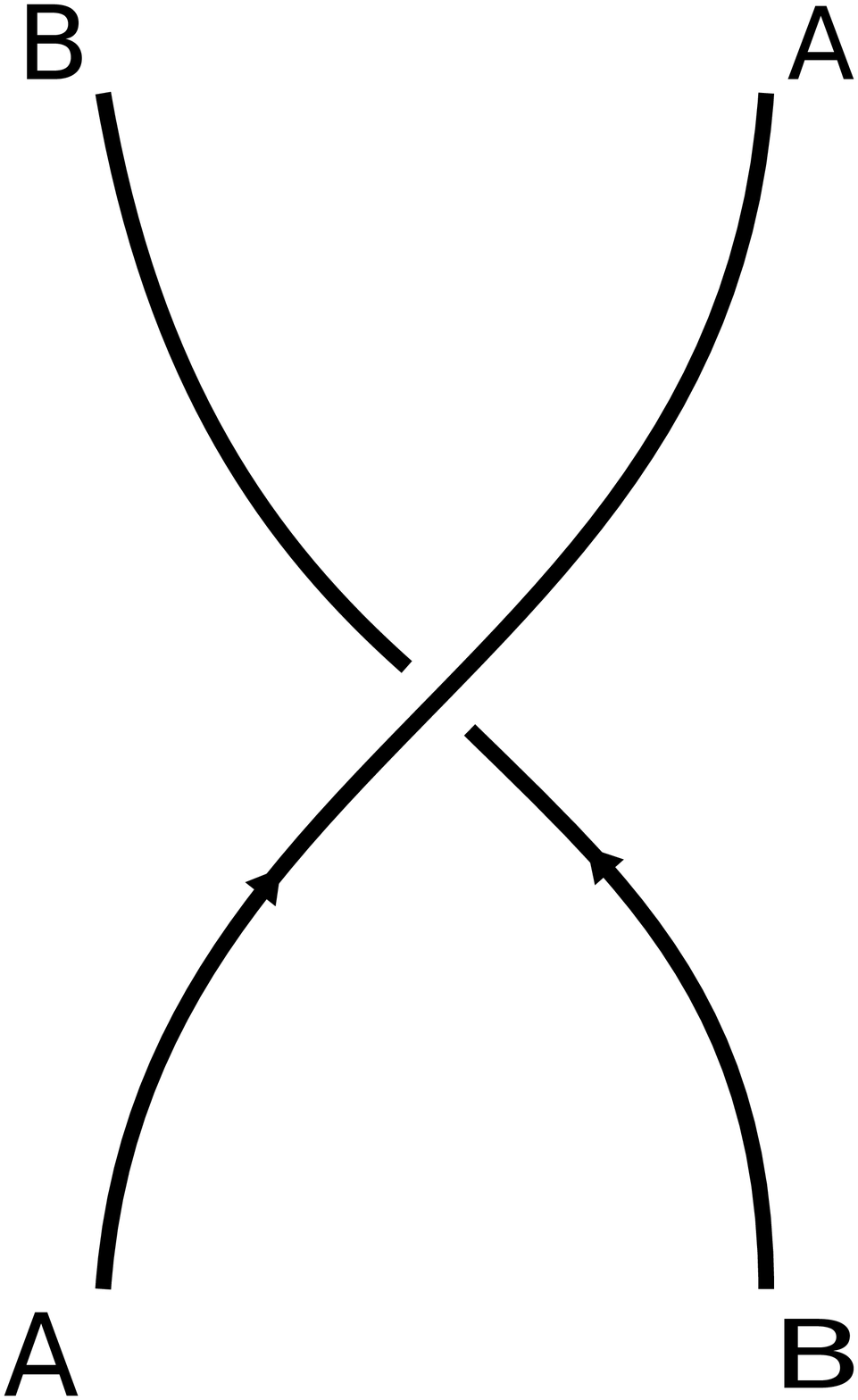}
\caption{$\beta_{A,B}$}
\end{minipage}
\begin{minipage}{0.45\linewidth}
\centering
\includegraphics[scale=0.1]{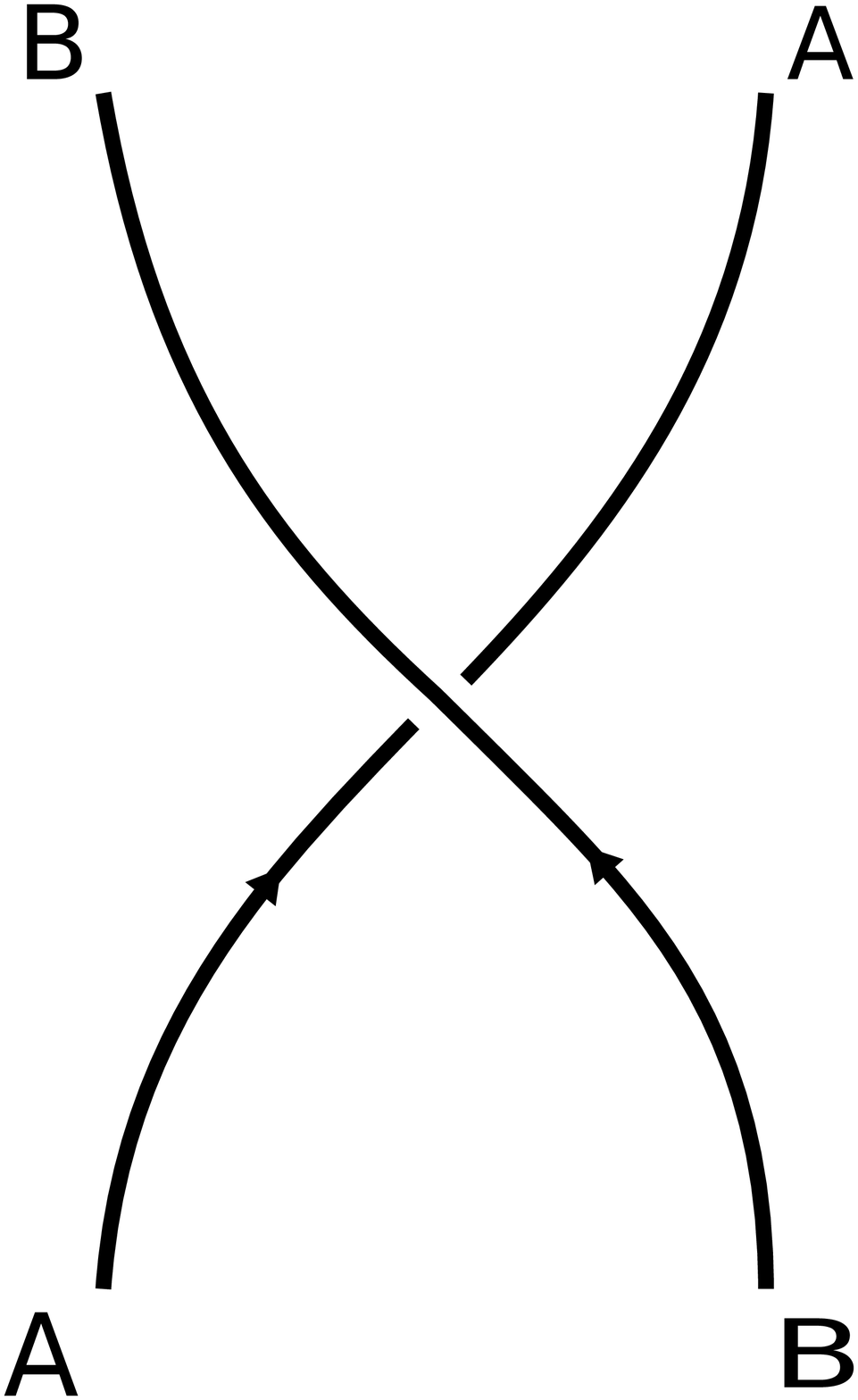}
\caption{$\beta^{-1}_{B,A}$.}
\end{minipage}
\end{figure}

Besides a braiding, $\Csf$ is required to have dualities. For $A\in \Csf$ a \textit{right (left) dual} is a triple $(A^\ast, \evrm_A,\coevrm_A)$ $(^\ast A,\widetilde{\evrm_A},\widetilde{\coevrm_A})$ of an object $A^\ast$ $(^\ast A)$ with morphisms
\begin{figure}[H]
\centering
\begin{tabular}{c @{\hspace{1cm}} c @{\hspace{1cm}}c @{\hspace{1cm}} c @{\hspace{1cm}}} 
\includegraphics[scale=0.1]{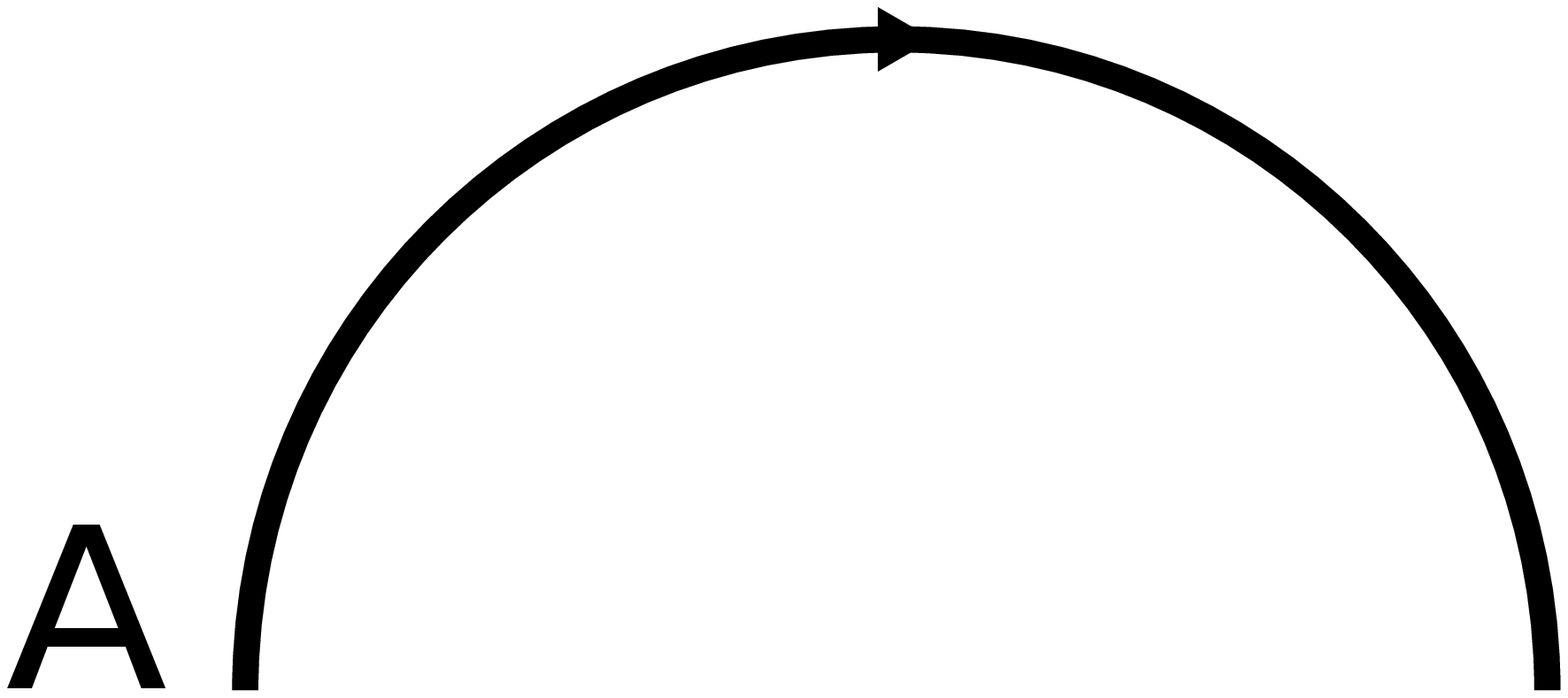} & \includegraphics[scale=0.1]{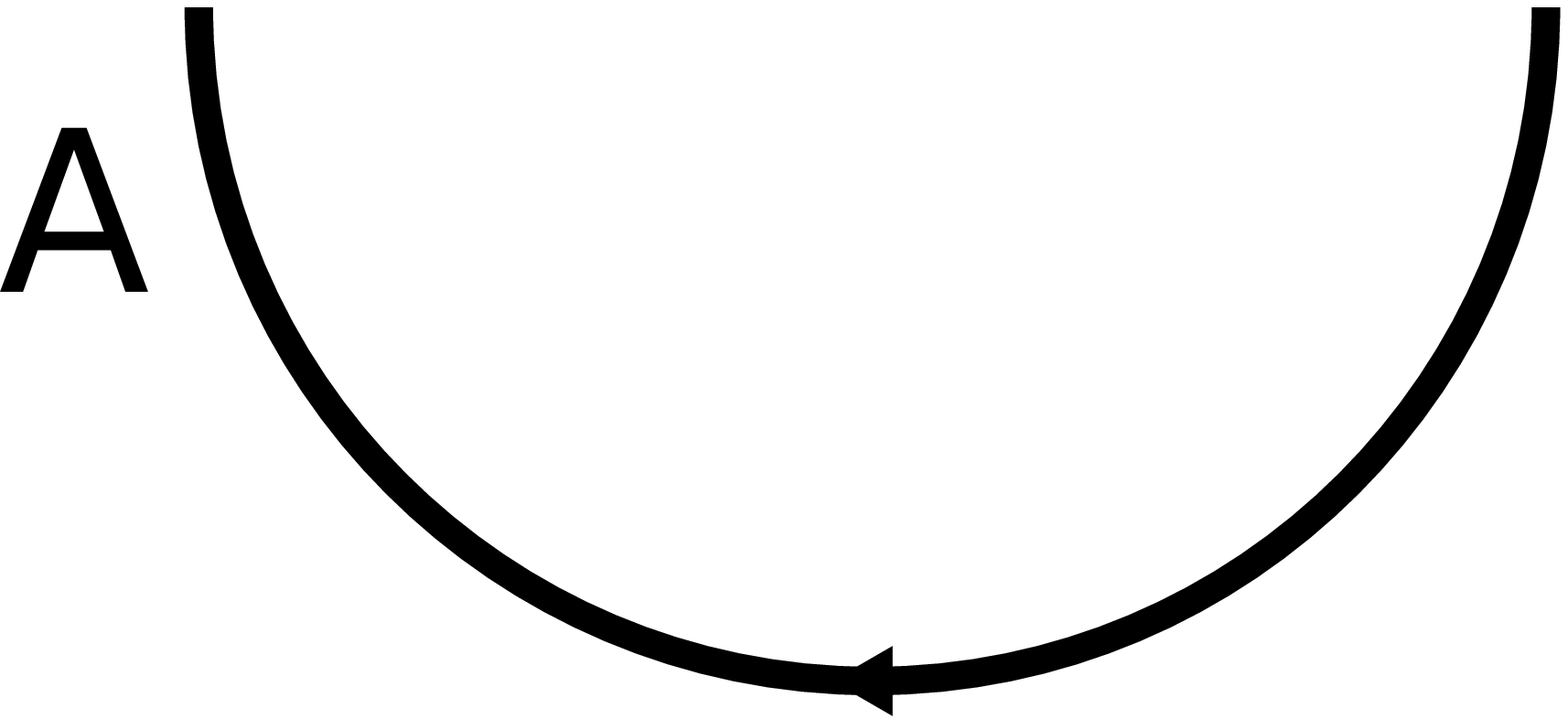} & \includegraphics[scale=0.1]{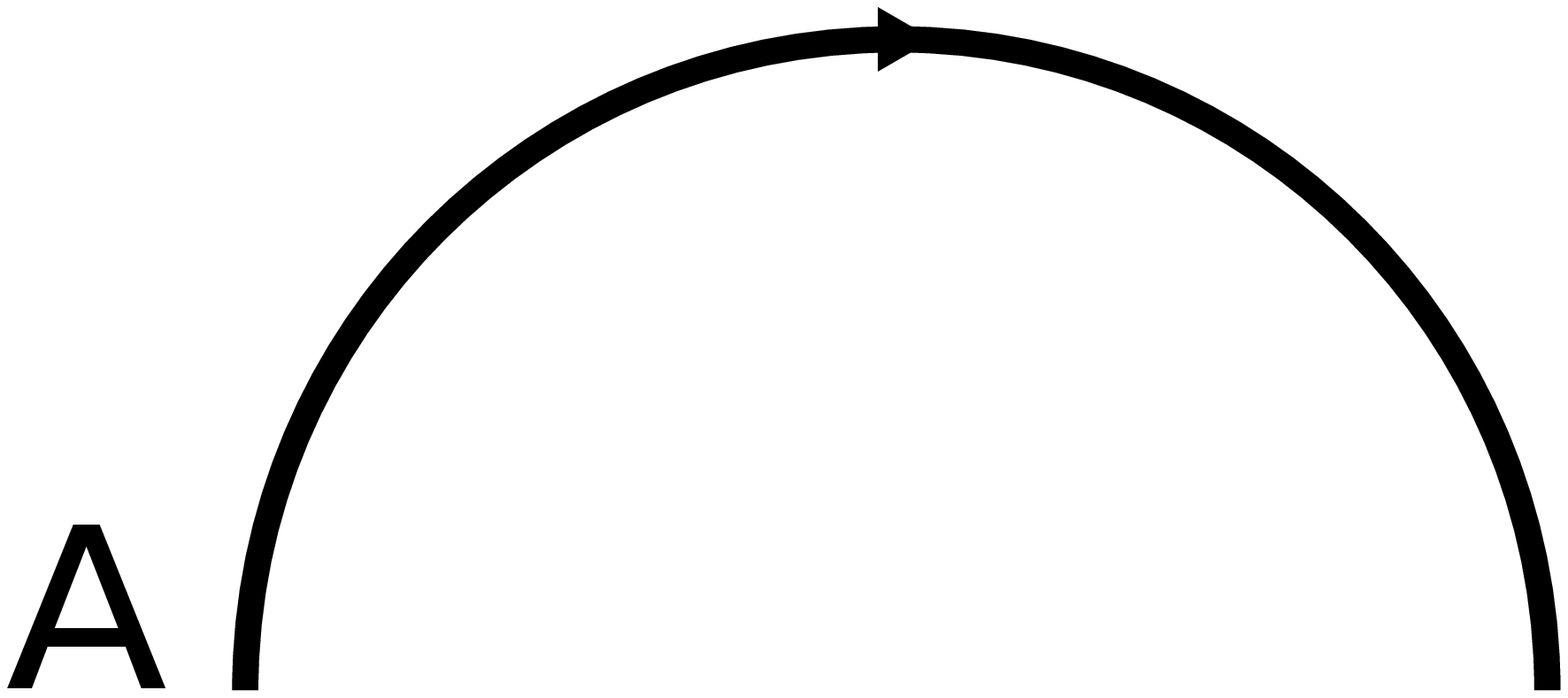} & \includegraphics[scale=0.1]{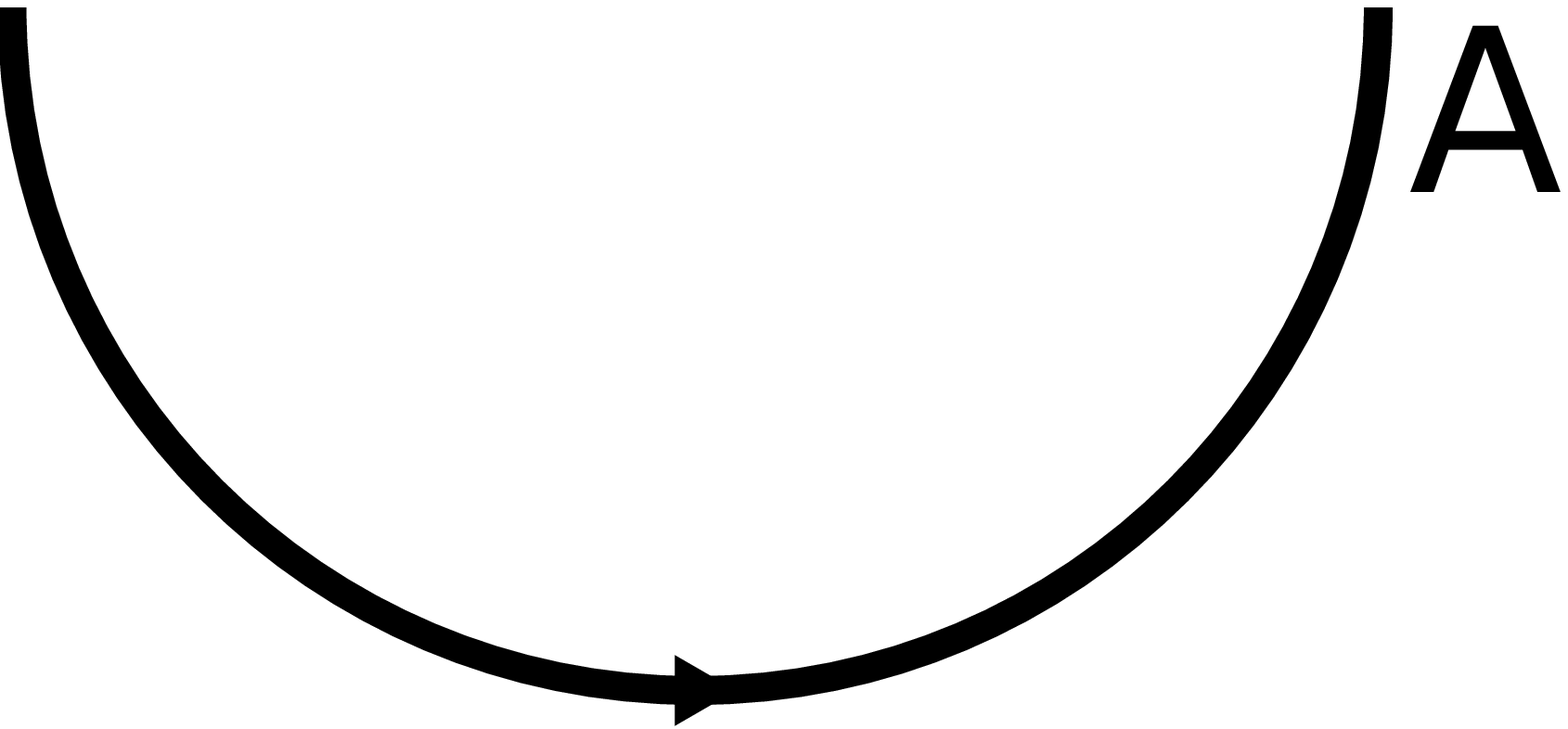}\\
$\evrm_A$ & $\coevrm_A$ & $\widetilde{\evrm_A}$ & $\widetilde{\coevrm_A}$
\end{tabular}.
\end{figure}

The morphisms satisfy straightening properties
\begin{figure}[H]
\centering
\begin{tabular}{c @{\hspace{2cm}} c}
\includegraphics[scale=0.12]{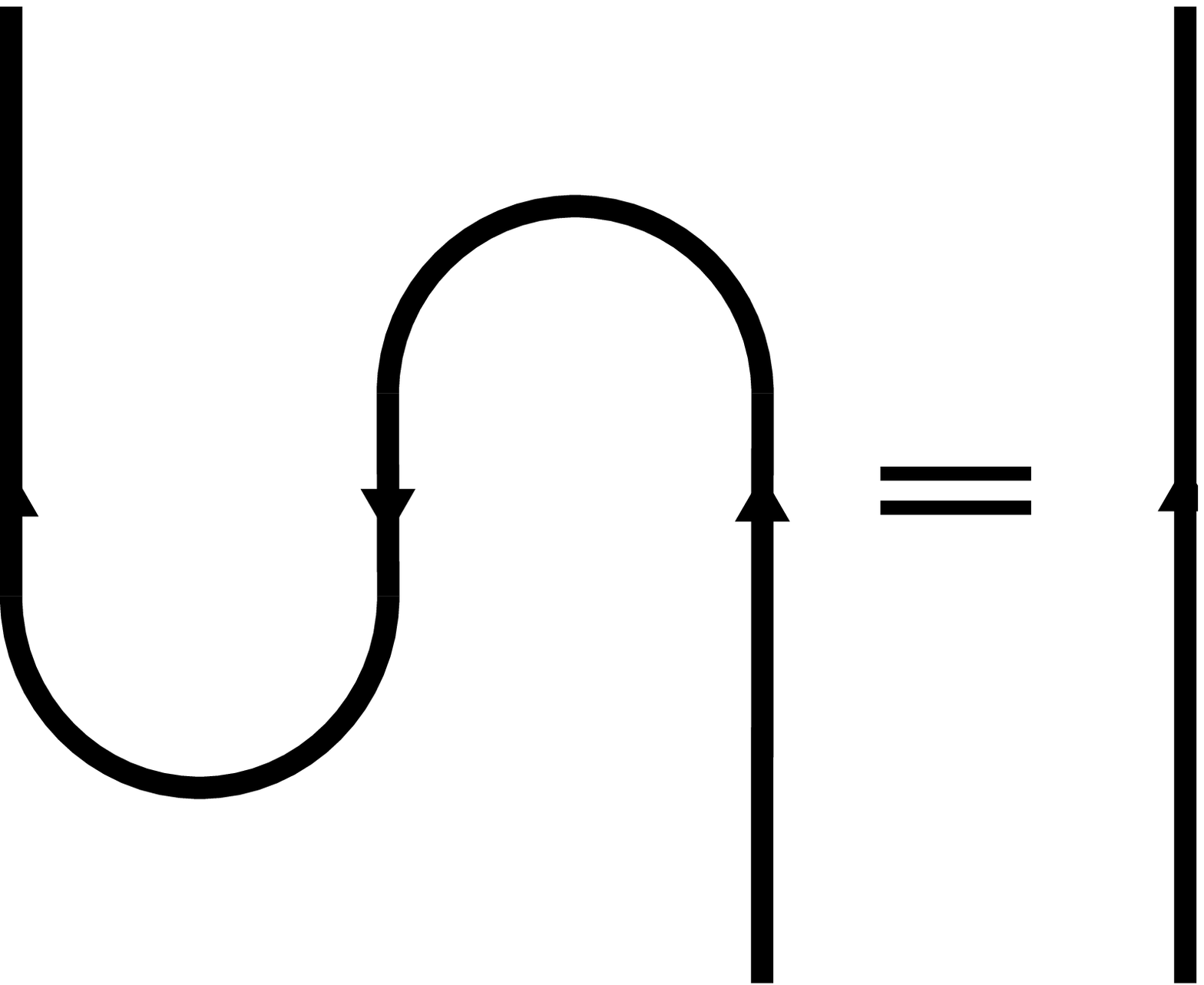} & \includegraphics[scale=0.12]{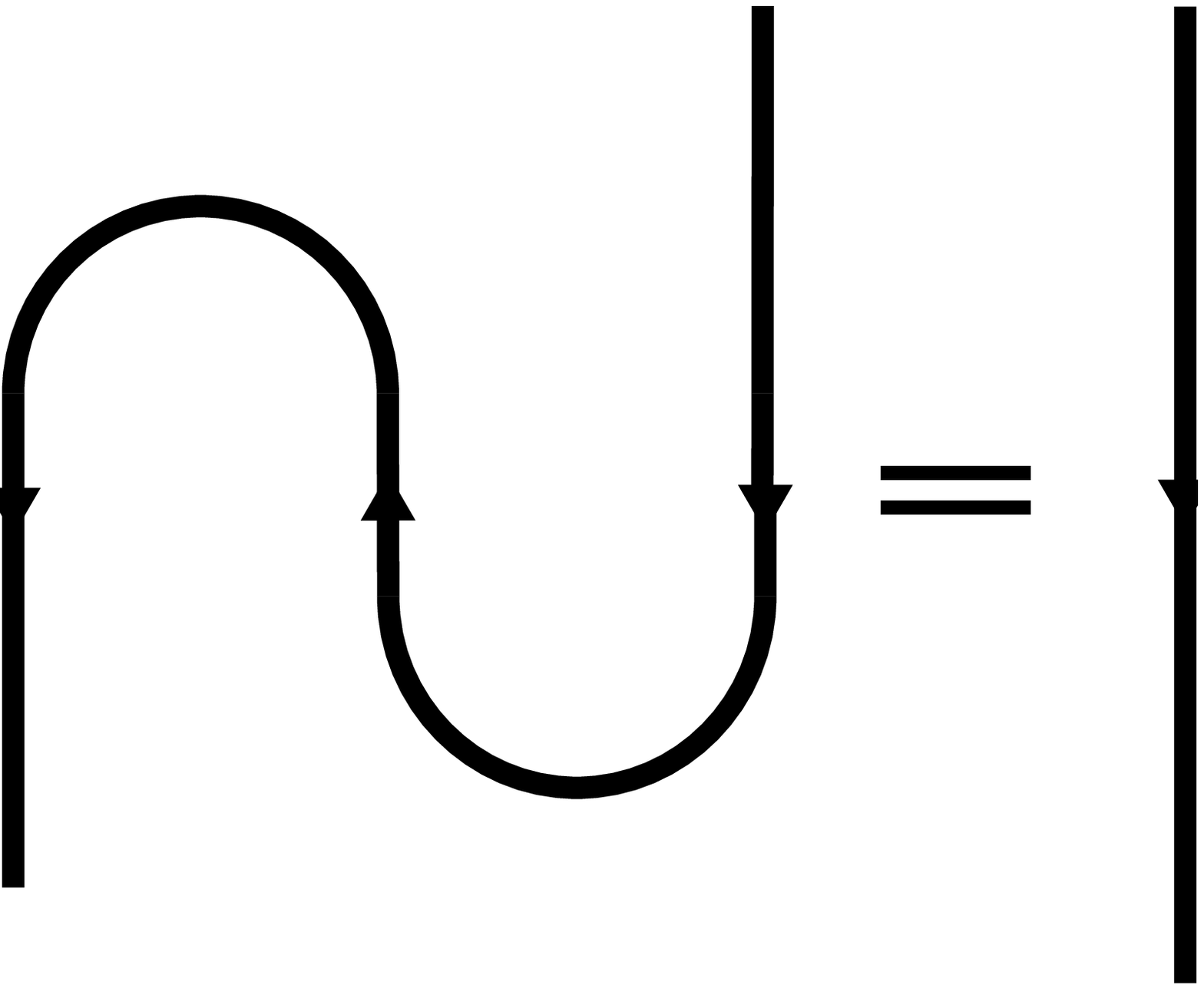}
\end{tabular}
\end{figure}
with similar pictures for left duality. A monoidal category for which every object has right and left duals is called \textit{rigid}. In a rigid category there is the obvious isomorphism $^\ast\left(A^\ast\right)\simeq A \simeq \left(^\ast A\right)^\ast$. A \textit{pivotal structure} in $(\Csf,\otimes)$ is a natural isomorphism $\pi:\id_\Csf\rightarrow \left(\bullet\right)^{\ast\ast}$. In fact any pivotal category is equivalent to a strict pivotal category, i.e. $\pi=\id_{\id_\Csf}$. It easily follows that for a strict pivotal category left and right dualities coincide. In a rigid, strictly pivotal category there are left and right traces for endomorphisms

\begin{figure}[H]
\centering
\begin{tabular}{c @{\hspace{2cm}} c}
\includegraphics[scale=0.1]{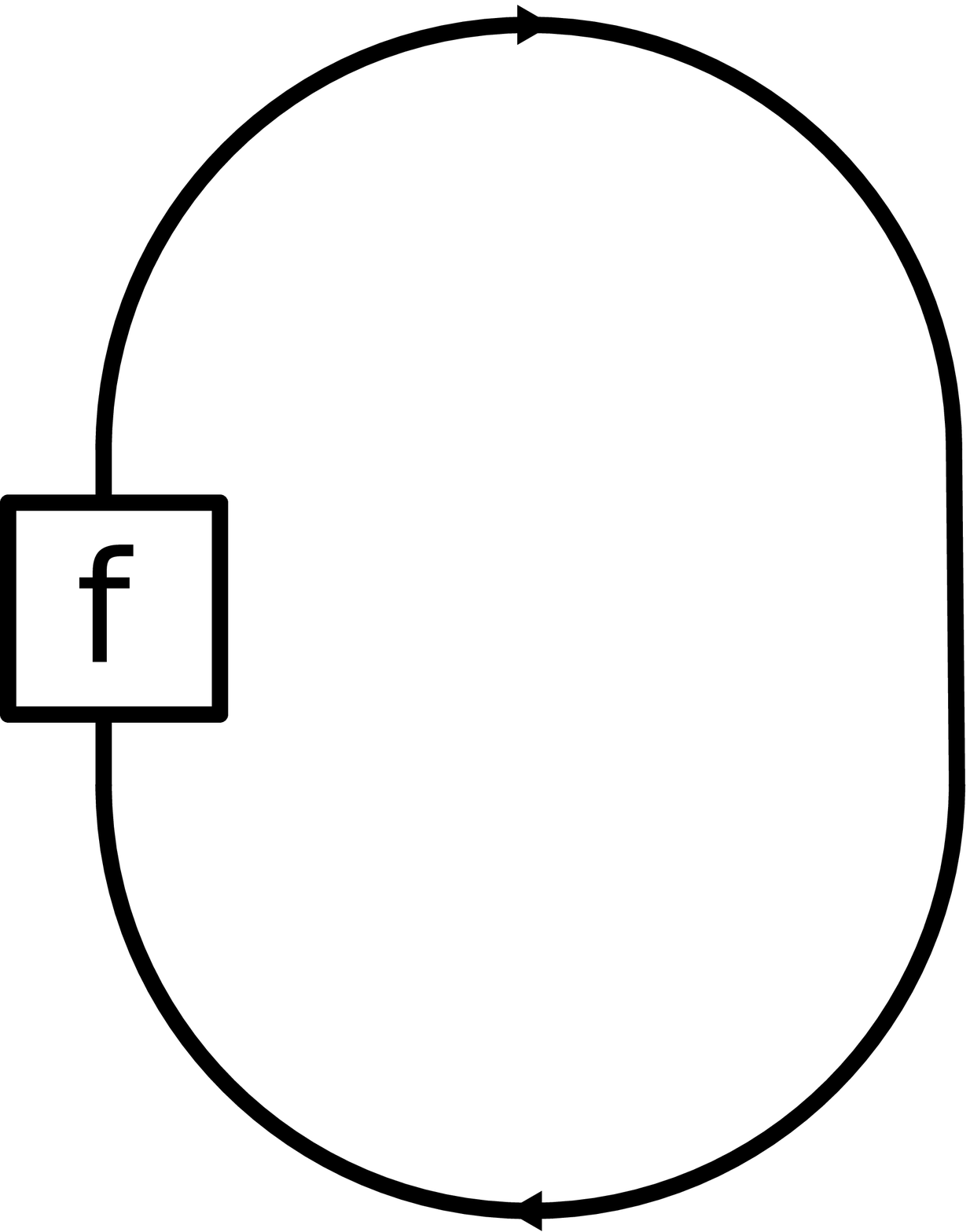} & \includegraphics[scale=0.1]{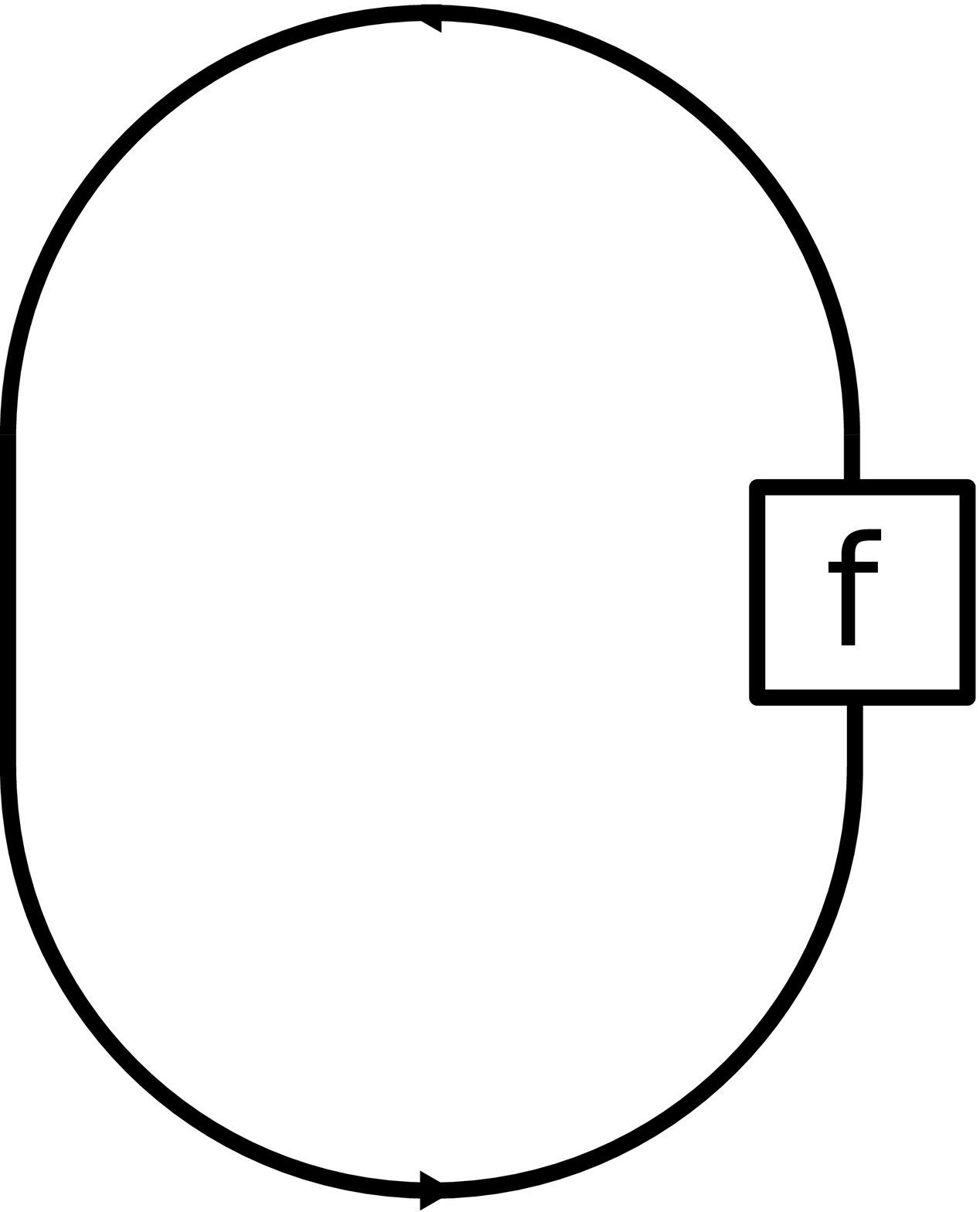}\\
$\trrm_r(f)$ & $\trrm_l(f)$.
\end{tabular}
\end{figure}
If $\trrm(f)=\trrm_r(f)=\trrm_l(f)$ holds in $\Csf$, $\Csf$ is called \textit{spherical}. We introduce the notation $d_A=\trrm(\id_A)$. A \textit{twist} on $(\Csf,\otimes,\beta)$ is a natural isomorphism $\theta:\id_\Csf\Rightarrow \id_\Csf$ satisfying $\theta_{A\otimes B}=\left(\theta_A\otimes \theta_B\right)\circ \beta_{B,A}\circ \beta_{A,B}$. The twist isomorphism is depicted as 
\begin{figure}[H]
\includegraphics[scale=0.12]{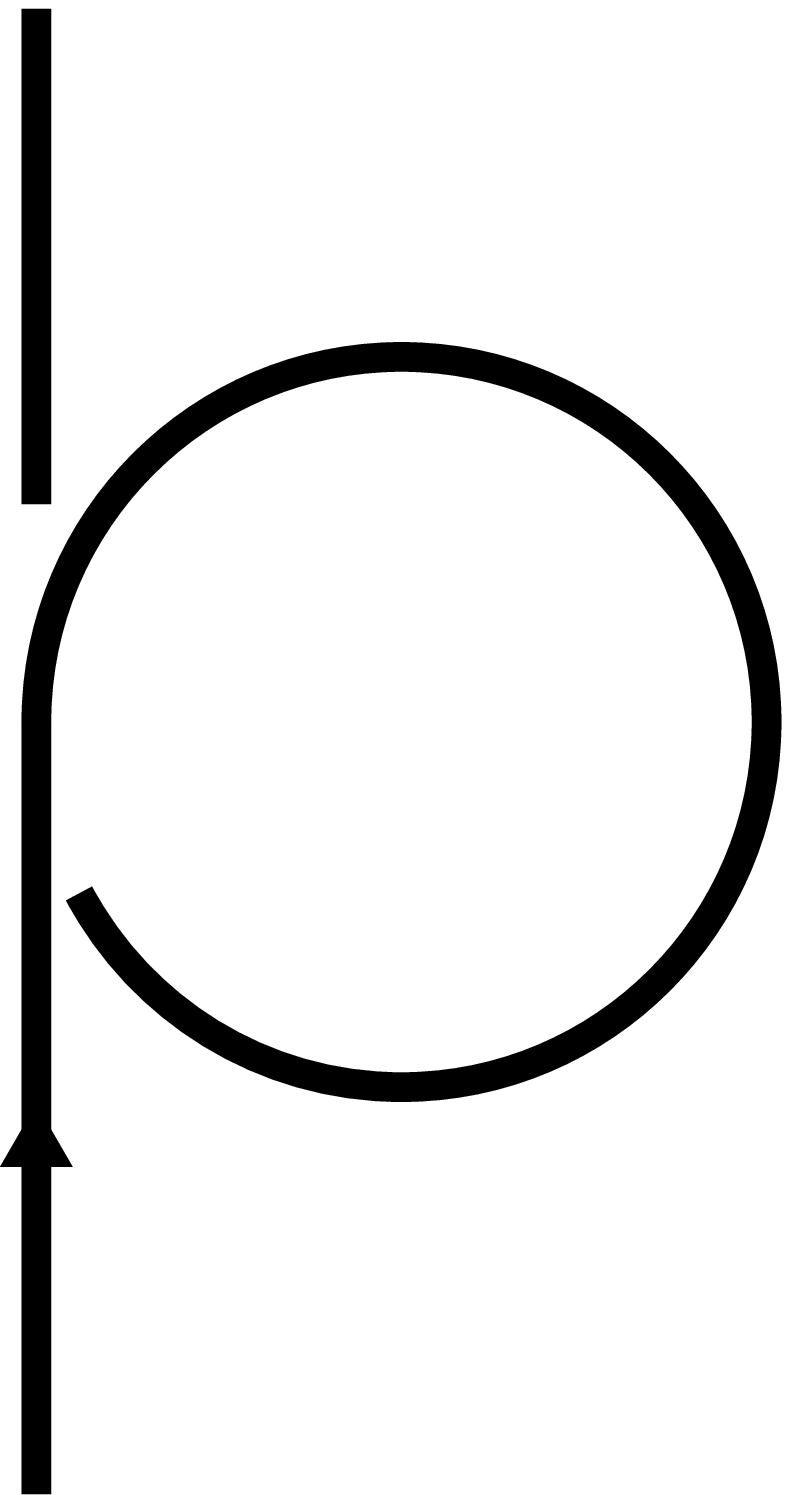}.
\end{figure}
A rigid, pivotal, braided tensor category is \textit{ribbon} if it has twist. Having defined all the necessary structure on $\Csf$ we also want to control its size. Thus we require morphism vector spaces to be finite dimensional. Recall that an object $A\in \Csf$ is \textit{simple} if $\hom_\Csf(A,A)\simeq \Kbb\, \id_\Csf$. If the set of isomorphism classes of simple objects is finite and every object is isomorphic to a direct sum of finitely many simple objects, the category is called \textit{semisimple}. A ribbon category $\Csf$ is \textit{fusion} if it is semisimple and $\mathbf{1}$ is simple. Let $\lbr U_i \rbr_{i\in \Isf(\Csf)}$, $U_i\in \Csf$, be the simple objects of $\Csf$. In string-diagrams, a label by a simple object $U_i$ will be abbreviated by just labeling the edge with $i$. Being semisimple has far reaching consequences, e.g. for any object there are maps $b_i^\alpha:A\overset{\simeq}{\rightarrow} \bigoplus_{i\in \Isf(\Csf)}U_i^{\oplus n_i}\rightarrow U_i$, with $\alpha\in \lbr 1,\dots, n_i\rbr $, where the second map is the projection to the $\alpha$'s $U_i$ summand and a dual map $b^j_\beta:U_i\rightarrow \bigoplus_{i\in \Isf(\Csf)} U_i^{\oplus n_i}\overset{\simeq}{\rightarrow }A$. These maps are dual in the following sense
\eq{
\sum_{i\in \Isf(\Csf)}\sum_{\alpha=1}^{n_i} b^i_\alpha\circ b_i^\alpha=\id_A,\qquad b_i^\alpha\circ b^j_\beta=\delta_{ij}\delta_{\alpha\beta} \id_{U_i}\quad .
}
Graphically we represent the duality as
\begin{figure}[H]
\centering
\includegraphics[scale=0.18]{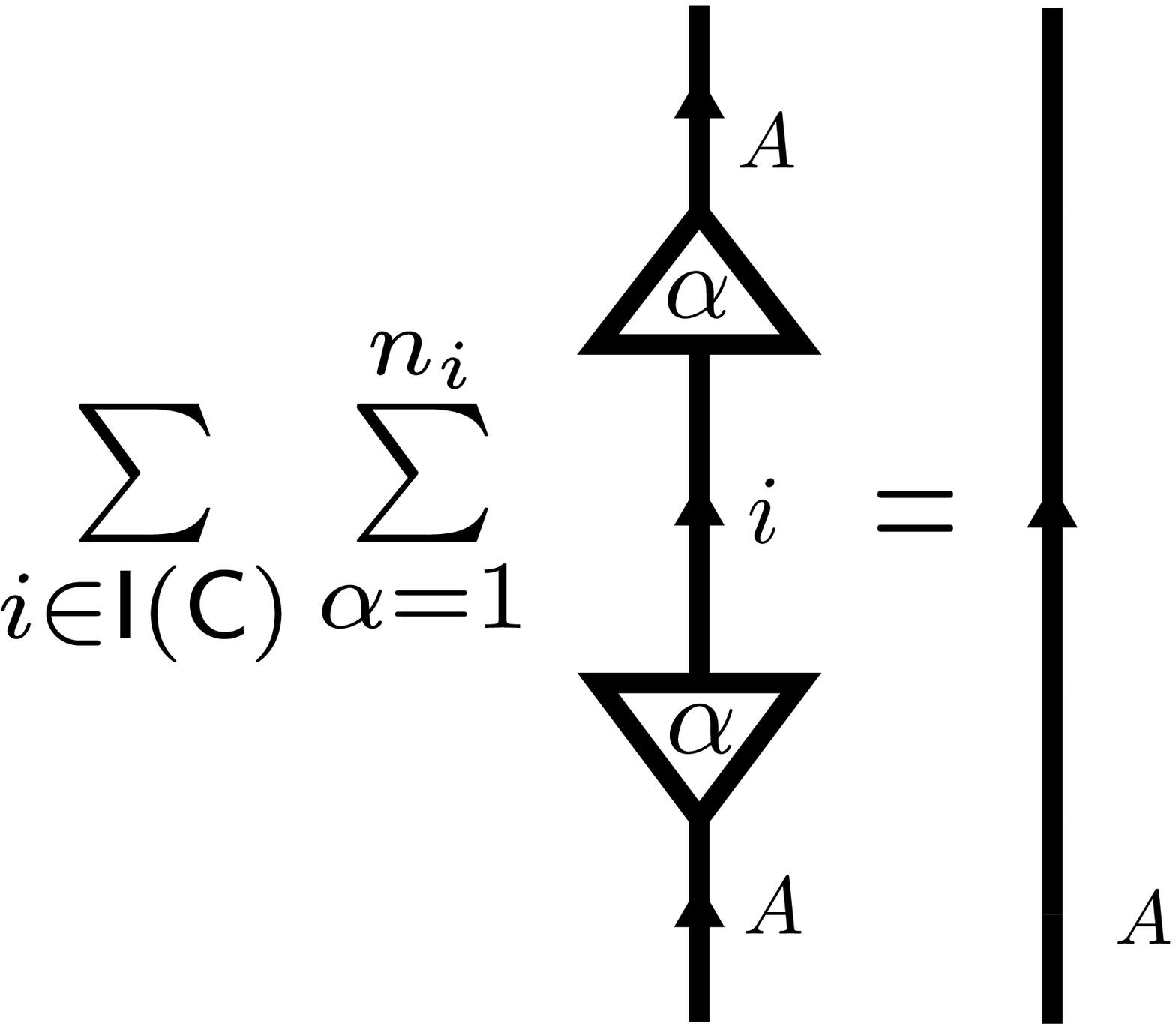}
\end{figure}
and 
\begin{figure}[H]
\centering
\includegraphics[scale=0.18]{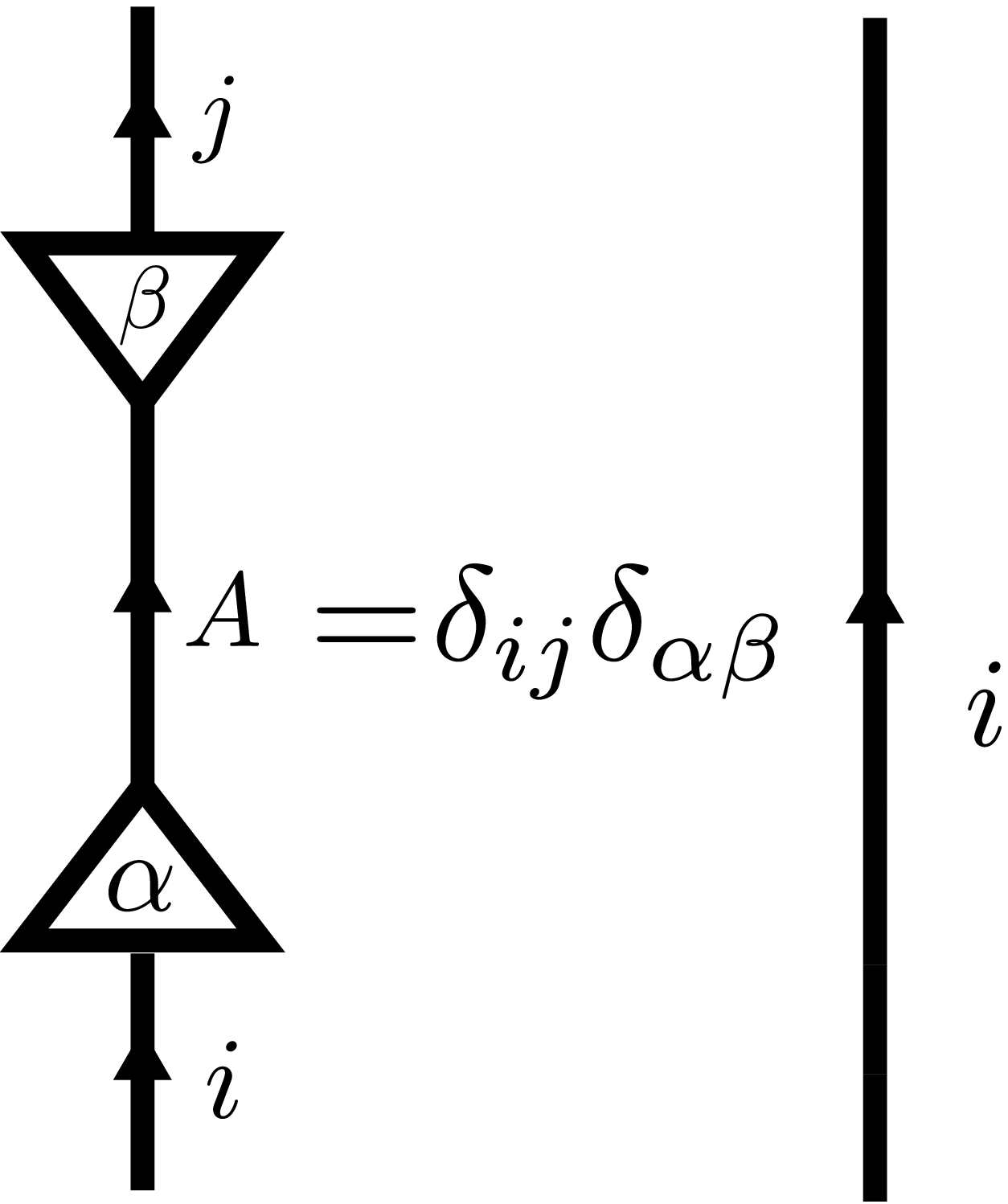}.
\end{figure}
In addition, for a semisimple category, $d_i=\trrm\left(U_i\right)\neq 0$, and the \textit{global dimension} is defined as $\Dsf^2=\sum_{i\in \Isf(\Csf)} d_i$. Besides that, we introduce a basis $\lbr \theta_{(ij);k}^\alpha\rbr_{\alpha=1,\dots, N^k_{ij}}$ for $\hom(U_i\otimes U_j,U_k)$, where $N_{ij}^k=\dim \hom(U_i\otimes U_j,U_k)$ are the fusion coefficients. There is a dual basis $\lbr \theta^{k;(ij)}_{\beta}\rbr _{\beta=1,\dots , N_{ij}^k}$, s.th. in graphical notation we have 

\begin{figure}[H]
\centering
\includegraphics[scale=0.1]{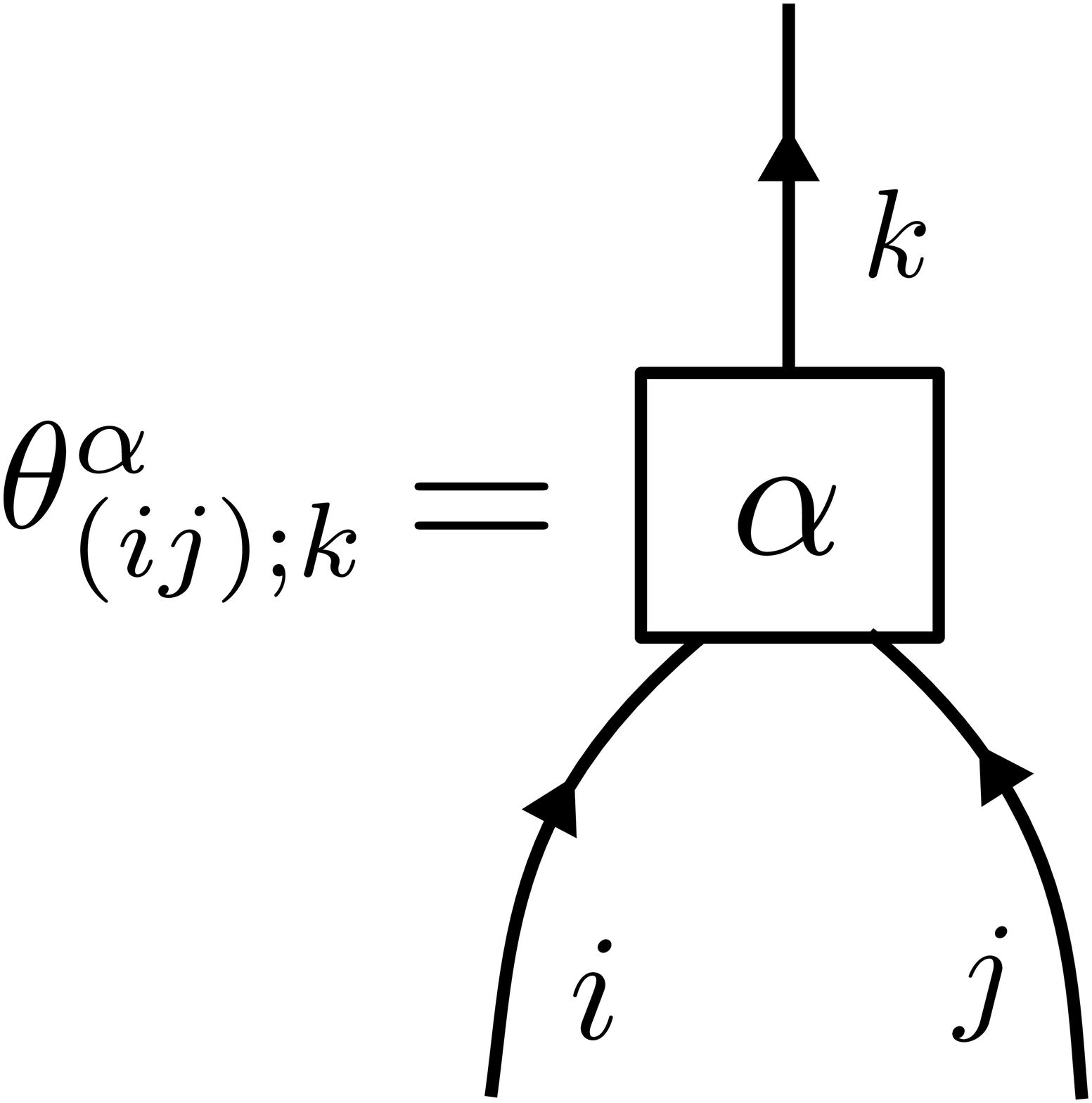}\hspace*{3cm}
\includegraphics[scale=0.1]{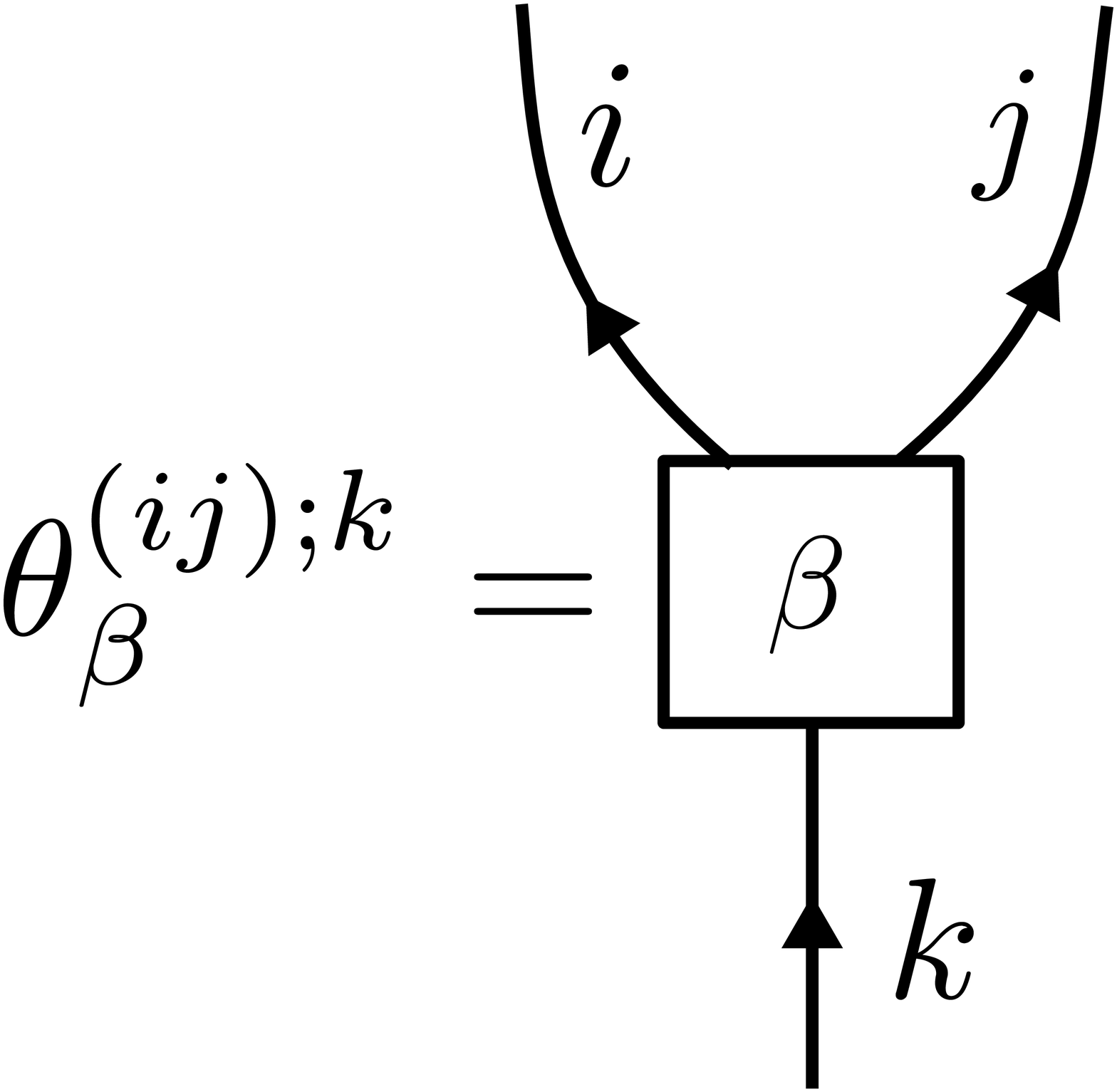}
\end{figure}
and

\begin{figure}[H]
\centering
\includegraphics[scale=0.1]{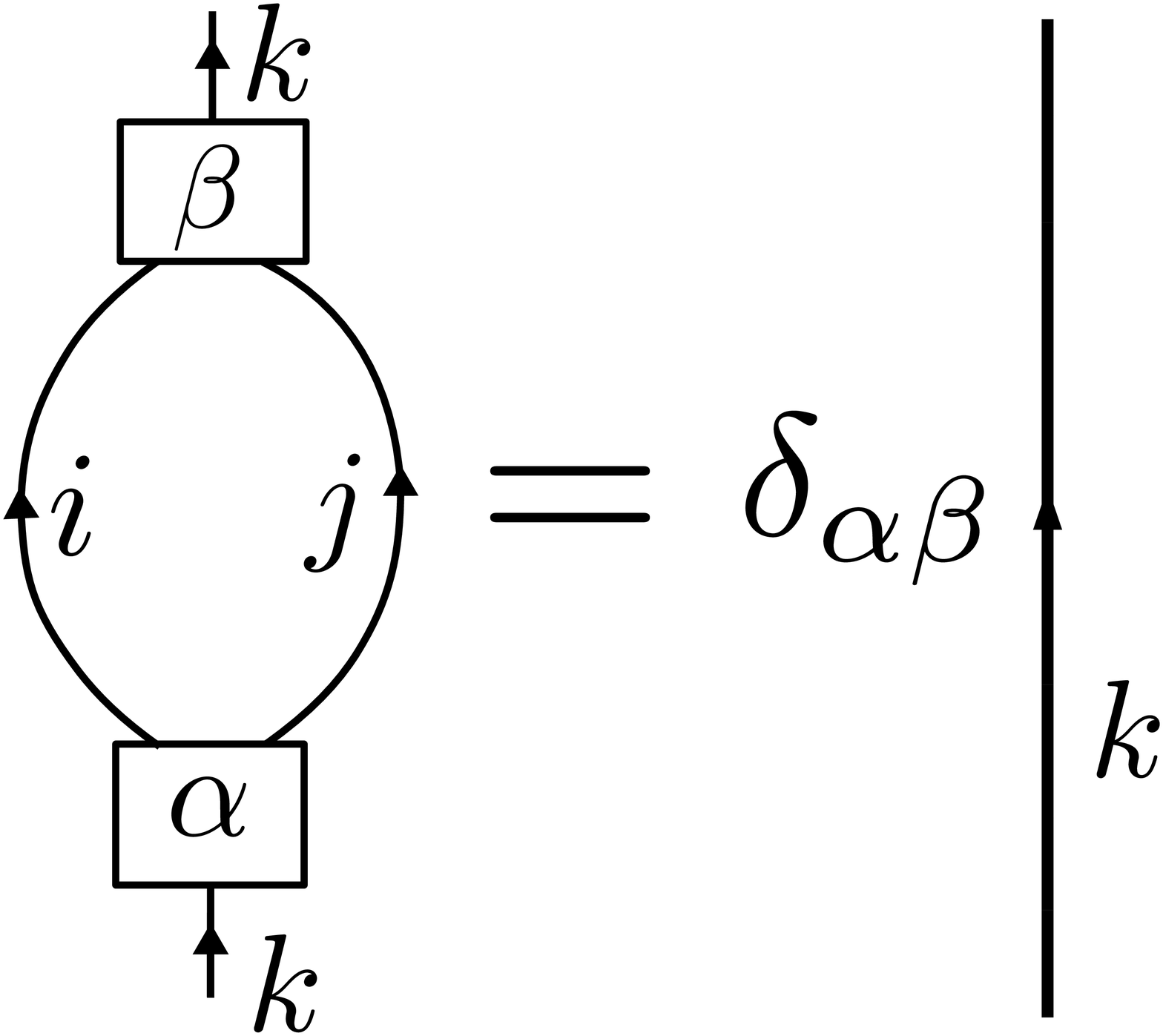}.
\end{figure}

One more ingredient for a modular tensor is needed, namely the $S$-isomorphism
\begin{figure}[H]
\centering
\includegraphics[scale=0.15]{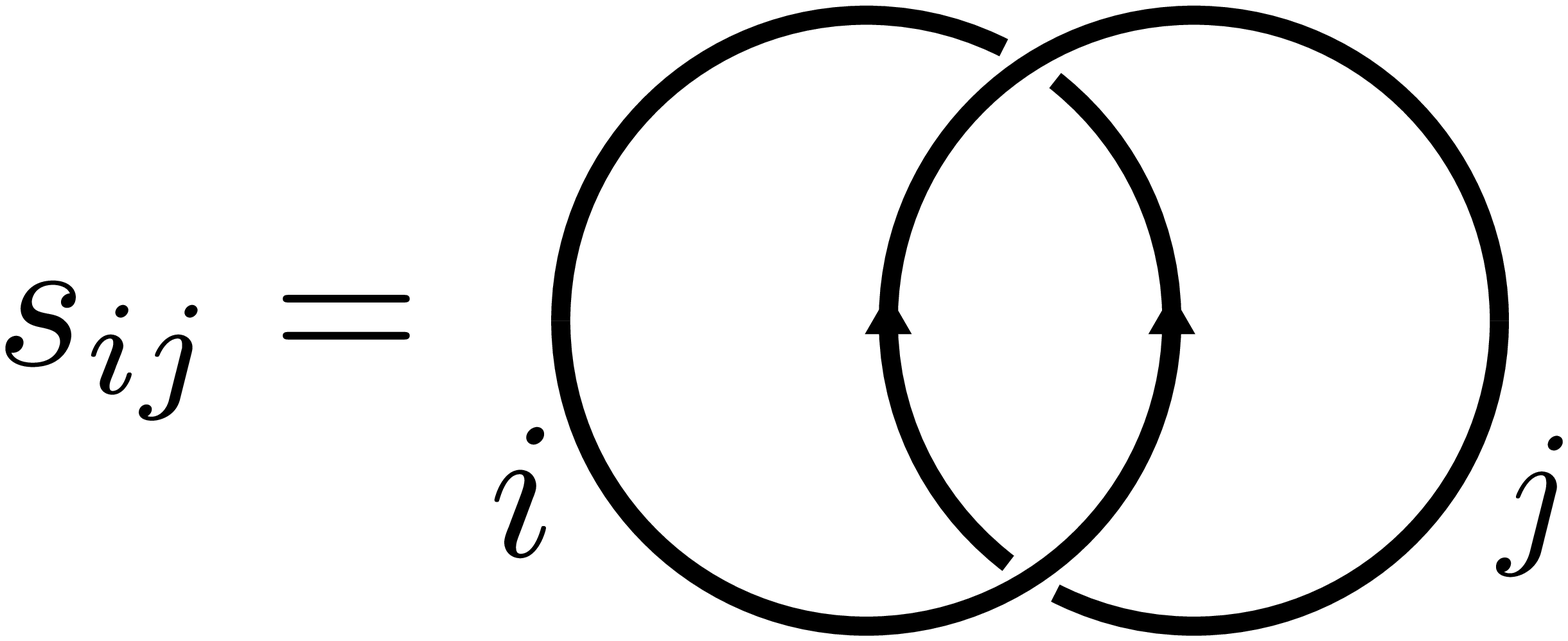}.
\end{figure}

\begin{defn} A \textit{modular tensor category} is a spherical, ribbon fusion category, s.th. $(s_{ij})$ is an invertible $|I|\times |I|$-matrix.
\end{defn}

We state one more relation, which will be used in the proof of theorem \ref{maintheo1}. A proof can be found in e.g. \cite[Corollary~3.1.11]{BakalovKirillov}.
\begin{lem}\label{circleline}
For $\Csf$ a modular tensor category and $l\in \Isf(\Csf)$ it holds
\begin{figure}[H]
\centering
\includegraphics[scale=0.15]{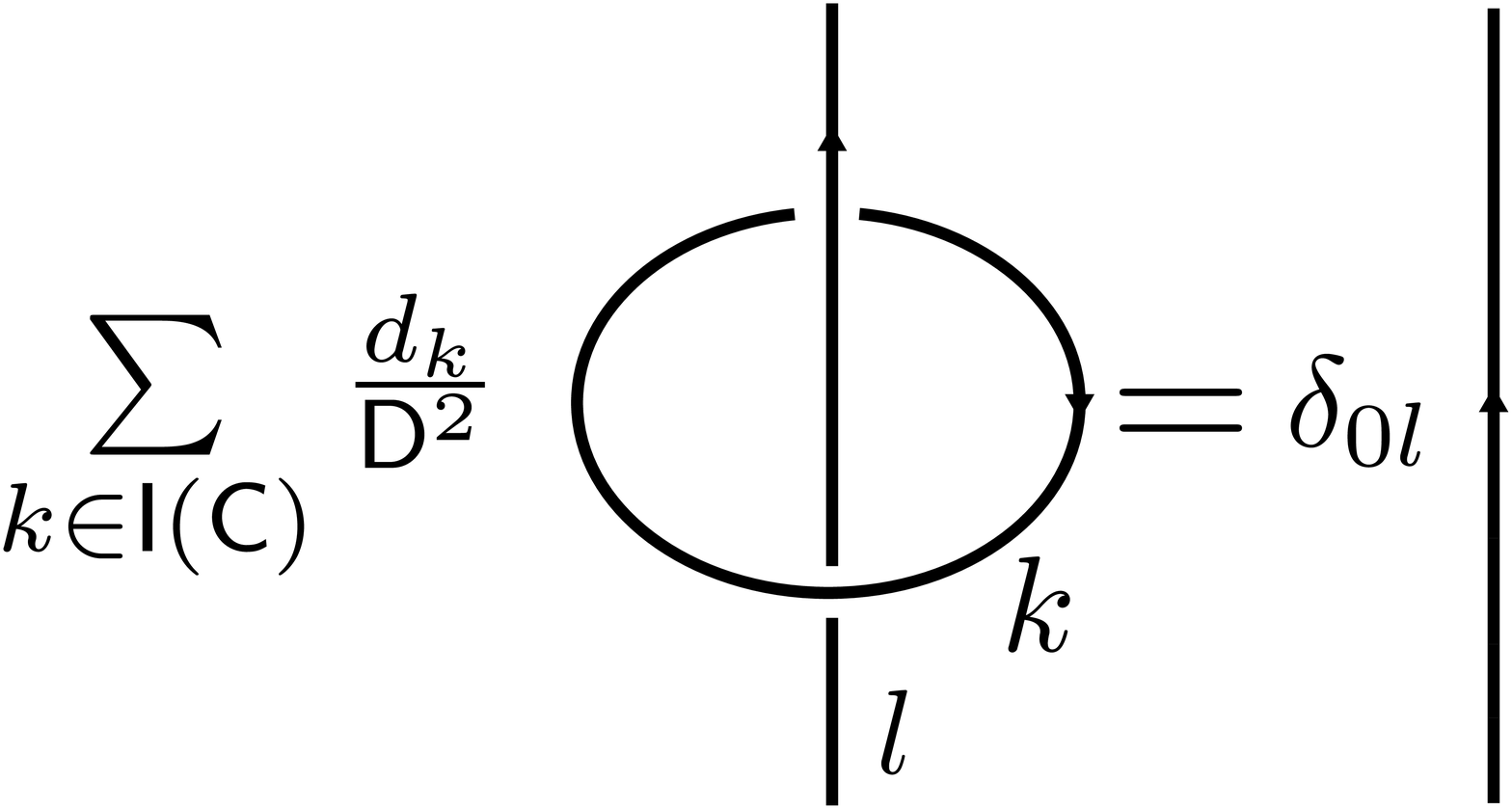}.
\end{figure}
\end{lem}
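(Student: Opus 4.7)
The plan is to reduce this standard ``killing ring'' (or $\omega$-loop) identity to the orthogonality relations for the modular $S$-matrix. Presumably the left-hand side of the pictured identity is the $l$-strand encircled by the sum $\sum_{i\in \Isf(\Csf)} d_i\cdot (\text{loop labeled } i)$, and the right-hand side is $\Dsf^2\,\delta_{l,\mathbf{1}}$ times the identity on the $l$-strand.

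First I would use that $U_l$ is simple to reduce each summand to a scalar. By Schur's lemma, for every $i\in\Isf(\Csf)$ the endomorphism of $U_l$ given by encircling the $l$-line with a single $i$-loop equals $\lambda_{i,l}\cdot \id_{U_l}$ for some $\lambda_{i,l}\in\Kbb$. To identify $\lambda_{i,l}$, I would close the $l$-line into a loop and take the trace on both sides: the closed-up left picture is the Hopf link and hence equals $s_{i,l}$ (by the defining picture of the $S$-isomorphism, together with sphericality), while the closed-up right side gives $\lambda_{i,l}\cdot d_l$. Thus $\lambda_{i,l} = s_{i,l}/d_l$.

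Summing over $i$ with weights $d_i$ and using the standard identification $d_i=s_{\mathbf{1},i}$ (obtained by evaluating the Hopf-link picture with one component labeled by the unit), the left-hand side of the lemma becomes
\[
\frac{1}{d_l}\Bigl(\sum_{i\in\Isf(\Csf)} s_{\mathbf{1},i}\, s_{i,l}\Bigr)\id_{U_l} \;=\; \frac{1}{d_l}(S^2)_{\mathbf{1},l}\,\id_{U_l}.
\]
Here is where I invoke modularity proper: invertibility of $(s_{ij})$ together with the ribbon structure yields $S^2=\Dsf^2\cdot C$, where $C$ is the charge-conjugation permutation matrix (this is a standard consequence of the Verlinde-type orthogonality relations, proved for instance in Bakalov--Kirillov). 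Since the dual of $\mathbf{1}$ is $\mathbf{1}$, one has $(S^2)_{\mathbf{1},l}=\Dsf^2\,\delta_{\mathbf{1},l}$, and the identity $d_{\mathbf{1}}=1$ gives the stated right-hand side.

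The main obstacle is purely bookkeeping of conventions: the precise orientation of the encircling loop that enters the definition of $s_{ij}$, the placement of the twist, and the exact sign in $S^2=\Dsf^2\cdot C$ in the present normalization. Once those are aligned, the proof is a direct string-diagrammatic translation of the $S$-matrix orthogonality, and nothing beyond the structural properties of $\Csf$ already recorded above is needed.
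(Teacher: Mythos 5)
Your proposal is correct and is essentially the argument behind the reference the paper gives for this lemma (\cite[Corollary~3.1.11]{BakalovKirillov}): reduce each summand to a scalar by Schur's lemma, identify that scalar as $s_{i,l}/d_l$ by tracing the Hopf link, and conclude via $s^2=\Dsf^2 C$ and $\mathbf{1}^\ast=\mathbf{1}$. The only point to watch is matching the paper's normalization of the global dimension against the $(s^2)_{\mathbf{1}\mathbf{1}}=\sum_i d_i^2$ that your computation produces, but that is a bookkeeping issue with conventions rather than a gap in the argument.
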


In $\Csf$ there exists a pairing
\eq{
\hom_\Csf(A,B)\otimes_\Kbb \hom_\Csf(B,A)&\rightarrow \Kbb\\
f\otimes g \mapsto (f,g)\equiv \trrm( g\circ f)
}
which, for $\Csf$ semisimple, is non-degenerate. We are mainly interested in morphism spaces $\hom_\Csf(\mathbf{1},\bullet)$ for which we introduce some notation.
\begin{defn} Let $A_1,\dots , A_n\in \Csf$, then we define
\eq{
\la A_1,\dots, A_n \ra\equiv \hom_\Csf(\mathbf{1},A_1\otimes \dots \otimes A_n)\; .
}
\end{defn}
\begin{lem} There is a functorial isomorphism of vector spaces $\la A_1,\dots , A_n \ra\simeq \la A_n,A_1,\dots, A_{n+1}\ra  $.
\end{lem}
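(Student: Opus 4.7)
The plan is to construct the isomorphism in two steps, using first the tensor-hom adjunctions from rigidity and then the pivotal structure to cyclically rotate the last tensor factor. Setting $Y:=A_1\otimes \cdots\otimes A_{n-1}$, it suffices to produce a natural isomorphism $\hom_\Csf(\mathbf{1},Y\otimes A_n)\simeq \hom_\Csf(\mathbf{1},A_n\otimes Y)$ in any rigid pivotal category, which then yields the claim $\la A_1,\dots,A_n\ra \simeq \la A_n,A_1,\dots,A_{n-1}\ra$.

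For the first step, the right-dual adjunction for $A_n$ gives a natural isomorphism $\hom_\Csf(\mathbf{1},Y\otimes A_n)\simeq \hom_\Csf(A_n^\ast,Y)$. The forward map sends $f$ to the morphism obtained from $f$ by bending the outgoing $A_n$-strand downward using $\evrm_{A_n}$, and the inverse is built analogously from $\coevrm_{A_n}$; the two composites equal the identity by the straightening equations displayed earlier in the excerpt. Dually, the left-dual adjunction (now using $\widetilde{\evrm}_{A_n}$ and $\widetilde{\coevrm}_{A_n}$) gives a natural isomorphism $\hom_\Csf({}^\ast A_n,Y)\simeq \hom_\Csf(\mathbf{1},A_n\otimes Y)$.

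For the second step, the pivotal structure supplies a canonical natural isomorphism $A_n^\ast\simeq {}^\ast A_n$ between the right and left duals, obtained by identifying the right dual $A_n^\ast$ with a left dual via the pivotal identification $A_n\simeq A_n^{\ast\ast}$ and the uniqueness-up-to-unique-iso of duals. Splicing the three natural isomorphisms produces the desired map $\Phi:\la A_1,\dots,A_n\ra\to \la A_n,A_1,\dots,A_{n-1}\ra$, which is an isomorphism because each factor is. Functoriality in each $A_i$ is then immediate, since every ingredient ($\evrm$, $\coevrm$, $\widetilde{\evrm}$, $\widetilde{\coevrm}$ and the pivotal iso) is a natural transformation.

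The main subtlety is the bookkeeping of which duality (left vs.\ right) is being invoked at each step, which is precisely the role of pivotality; once this is set up the remaining verifications reduce to direct applications of the zigzag axioms. Equivalently, one can read off a graphical proof: drawing $f$ as a coupon with $n$ outgoing strands, the straightening equations allow the $A_n$-strand to be cyclically rotated so as to become the leftmost strand, and the inverse of $\Phi$ is given by rotating in the opposite direction.
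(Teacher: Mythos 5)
Your proof is correct, but it takes a different route from the paper. The paper's proof is a one-line graphical construction: the isomorphism is given by braiding the $A_n$-strand past $A_1\otimes\cdots\otimes A_{n-1}$ and composing with a twist, and it is invertible simply because $\beta$ and $\theta$ are natural isomorphisms. You instead avoid the braiding entirely and argue via the duality adjunctions: bend the last strand down with $\evrm$, identify $A_n^\ast$ with ${}^\ast A_n$ using the (strict) pivotal structure, and bend back up with $\widetilde{\coevrm}$. Your argument is more general --- it proves cyclic invariance of $\la A_1,\dots,A_n\ra$ in any rigid pivotal category, with no braiding or twist needed --- at the cost of the left/right-dual bookkeeping you correctly flag; the paper's argument is shorter but uses the ribbon structure of the modular tensor category at hand. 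Since in a ribbon category the pivotal structure is induced by the braiding and twist, the two isomorphisms in fact agree, so there is no tension between the constructions. (One cosmetic point: the statement as printed contains a typo, $\la A_n,A_1,\dots,A_{n+1}\ra$ should read $\la A_n,A_1,\dots,A_{n-1}\ra$, and you have correctly interpreted it that way.)
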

\begin{proof}
A functorial isomorphism is given by
\begin{figure}[H]
\centering
\includegraphics[scale=0.12]{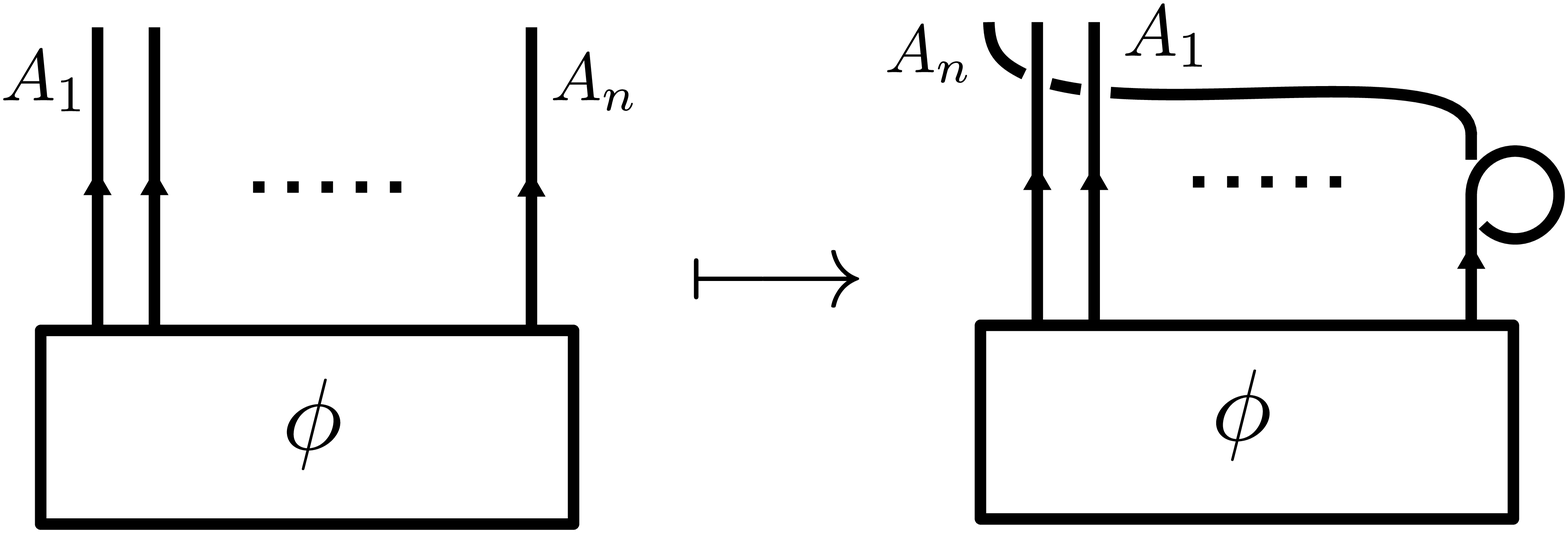}
\end{figure}

which clearly has an inverse given by composing inverses of braiding and twist.
\end{proof}

As only the cyclic order of $\la A_1,\dots, A_n \ra $ matters, instead of rectangular boxes for morphisms $\phi\in \la A_1,\dots, A_n \ra$, we introduce coupons in the graphical calculus 
\begin{figure}[H]
\centering
\includegraphics[scale=0.12]{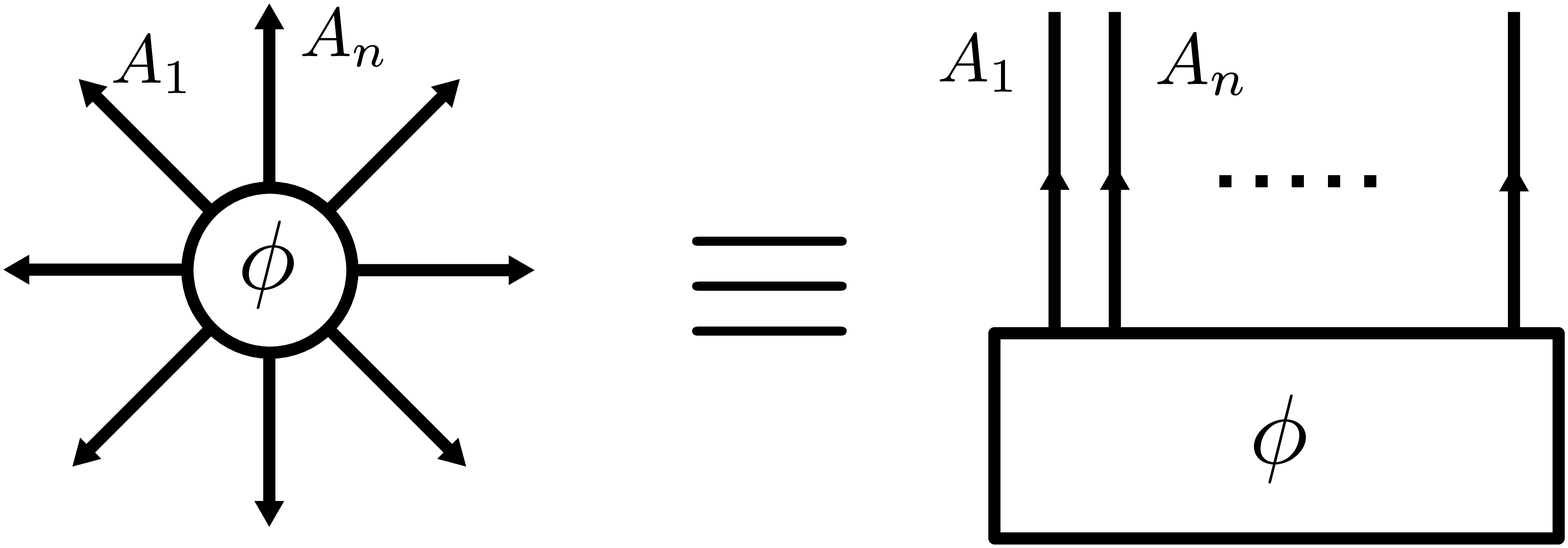}.
\end{figure}
For an arrow oriented towards the coupon with label $A$, the respective element gets replaced by $A^\ast$. Coupons can be composed with the help of the evaluation morphisms

\begin{center}
\begin{tikzpicture}
\node (couponlhs) at (0,0) {\includegraphics[scale=0.12]{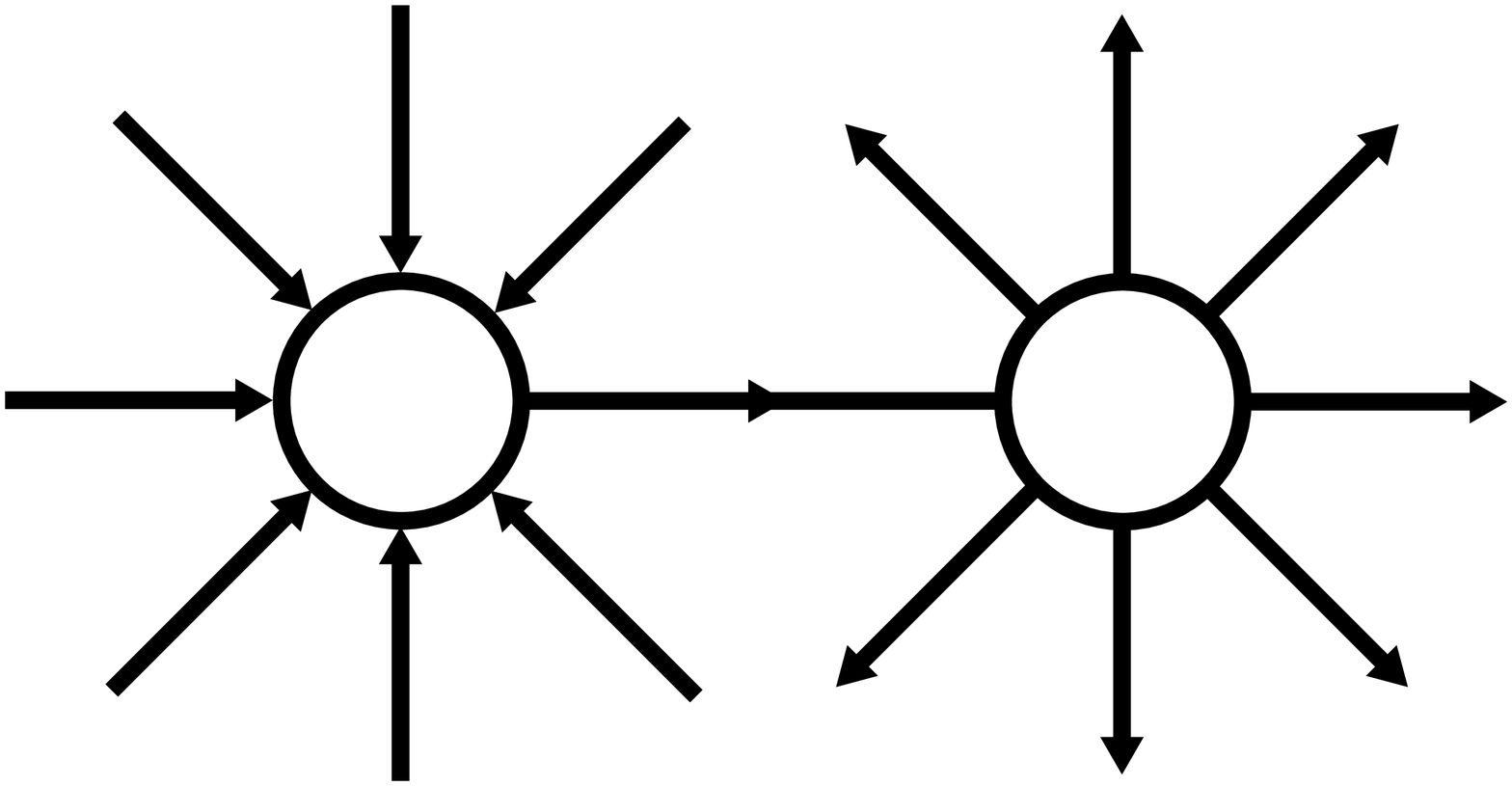}};
\node (couponrhs) at (5,0) {\includegraphics[scale=0.12]{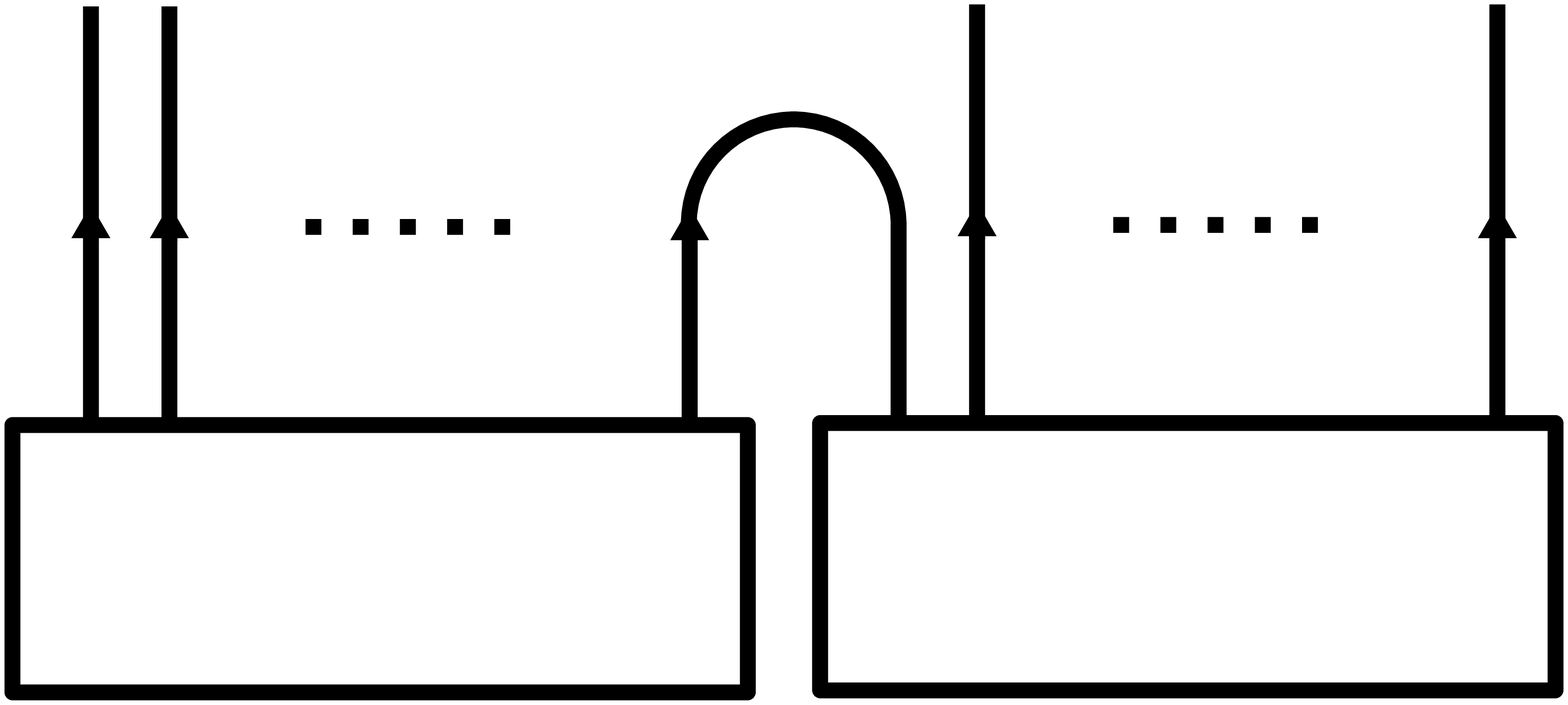}};
\node (equiv) at (2.5,0) {$\equiv$};
\node (phi) at (-0.75,0) {\scriptsize $\phi$};
\node (psi) at (0.75,0) {\scriptsize $\psi$};
\node (phi2) at (4.1,-0.5) {\scriptsize $\phi$};
\node (psi2) at (5.9,-0.5) {\scriptsize $\psi$};
\end{tikzpicture}.
\end{center}

The following lemma can be found in \cite{KirillovBalsam}
\begin{lem}
For any $A\in \Csf$ there are isomorphisms
\begin{enumerate}[label=\alph*)]
\item 

\begin{minipage}{\linewidth}
\centering
\begin{tikzpicture}
\node (lhs) at (0,0) {\includegraphics[scale=0.12]{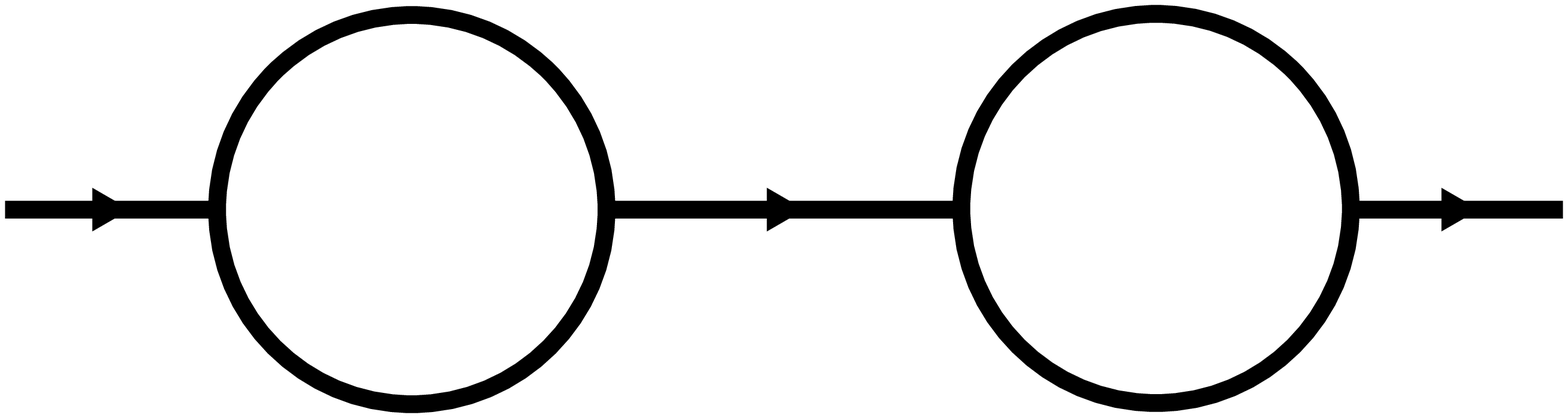}};
\node (rhs) at (5.3,0) {\includegraphics[scale=0.12]{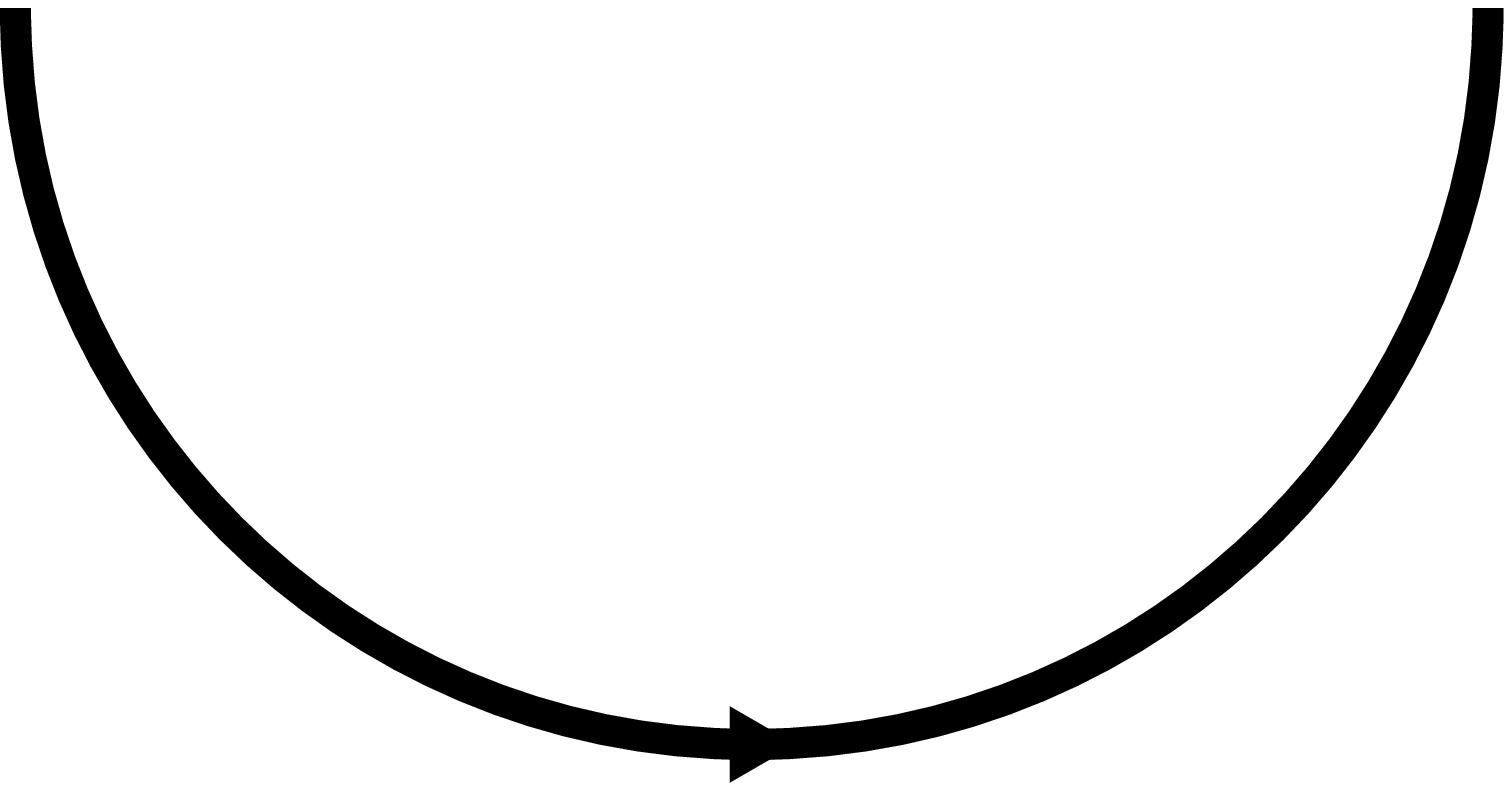}};
\node (equiv) at (2.5,0) {$=$};
\node (i) at (-1.5,0.2) {\scriptsize $i$};
\node (A) at (0,0.2) {\scriptsize $A$}; 
\node (i1) at (1.5,0.2) {\scriptsize $i$};
\node (phi) at (-0.75, 0) {\scriptsize $\phi$};
\node (psi) at (0.75,0) {\scriptsize $\psi$};
\node (pair) at (3.2,0) {$\frac{(\psi,\phi)}{d_i}$};
\node (i2) at (4.1,0.2) {\scriptsize $i$};
\end{tikzpicture}.
\end{minipage}
\item For $\lbr b_\alpha^i\rbr $ a basis in $\la i ,A^\ast_n,\dots A^\ast_1\ra $ with dual basis $\lbr b^\alpha_i\rbr $ in $\la i,A_1,\dots A_n\ra $ in the sense that $(b_\alpha^i,b^\beta_j)=\delta_{ij}\delta_{\alpha\beta}$ it holds

\begin{center}
\begin{tikzpicture}
\node (lhs) at (0,0) {\includegraphics[scale=0.12]{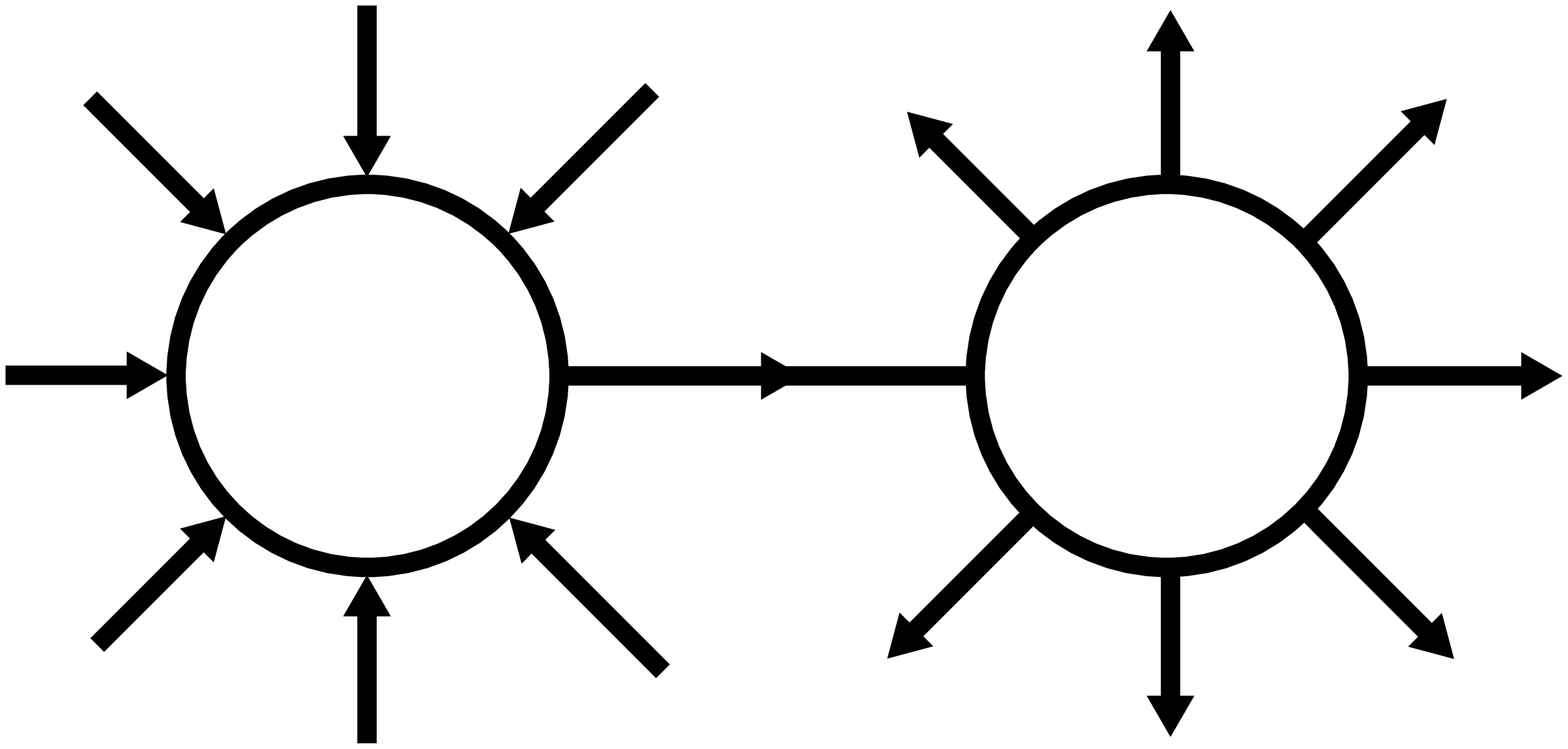}};
\node (rhs) at (5,0) {\includegraphics[scale=0.12]{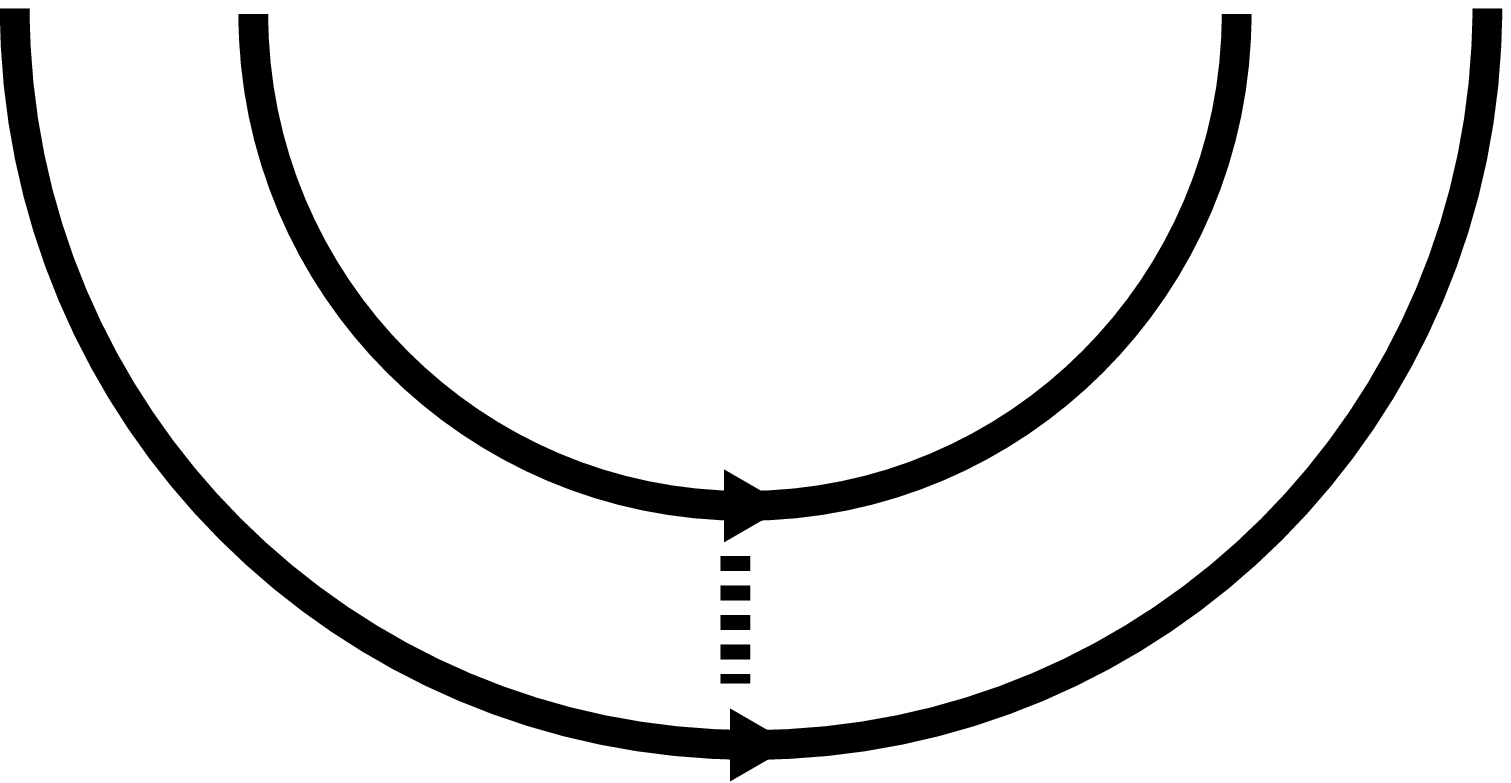}};
\node (i) at (0,0.2) {\scriptsize $i$};
\node (sum) at (-2.5,0) {$\underset{i\in \Isf(\Csf)}{\sum} d_i$};
\node (An) at (-0.2,0.7) {\scriptsize $A_n$};
\node (A1) at (-0.2,-0.7) {\scriptsize $A_1$};
\node (An2) at (0.3,0.65) {\scriptsize $A_n$};
\node (A12) at (0.4,-0.65) {\scriptsize $A_1$};
\node (=) at (2.5,0) {$=$};
\node (b) at (-0.75,0) {$b$};
\node (b2) at (0.75,0) {$b$};
\node (A111) at (3.8,0.4) {\scriptsize $A_1$};
\node (An22) at (4.7,0.4) {\scriptsize $A_n$};
\end{tikzpicture}
\end{center}
where the $b$-$b$ cupons stand for a summation 
\begin{figure}[H]
\includegraphics[scale=0.1]{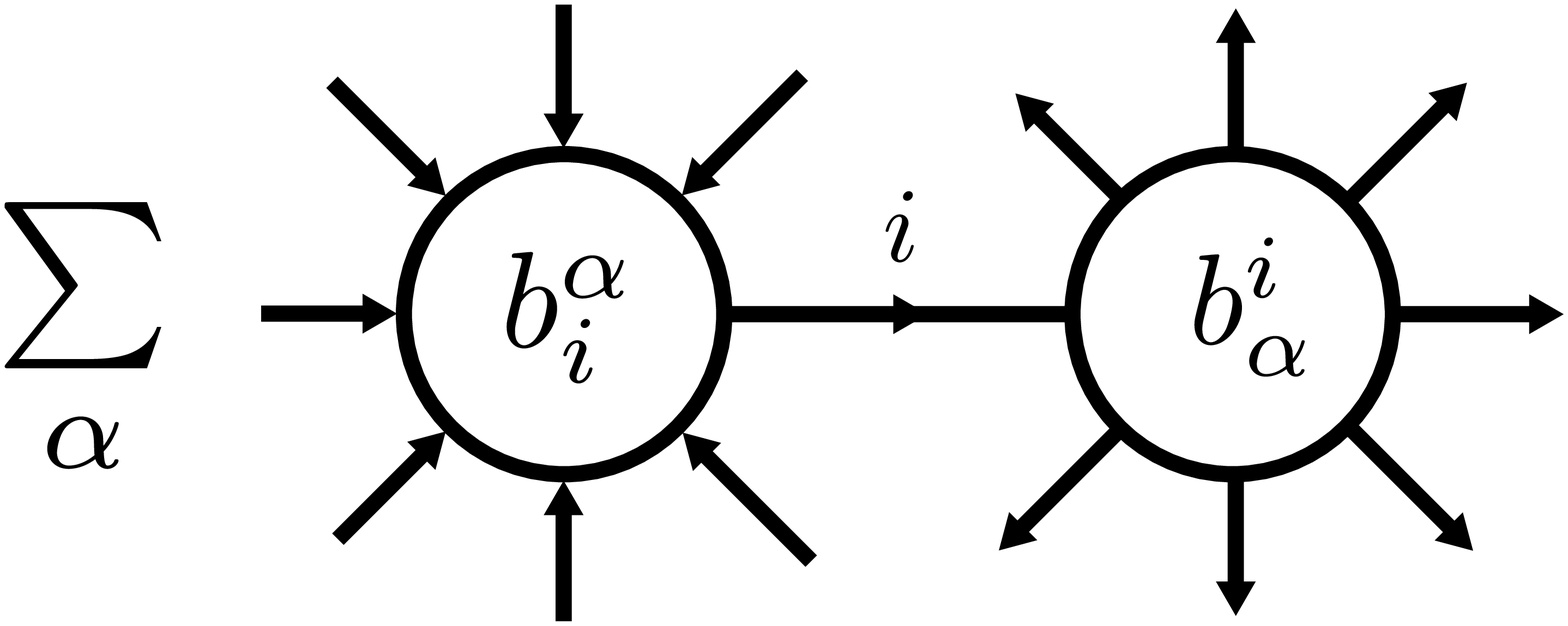}.
\end{figure}

\end{enumerate}
\end{lem}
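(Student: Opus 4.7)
The plan is to prove (a) first and then obtain (b) by pairing against a basis, the whole strategy hinging on two facts from the preliminaries: simplicity of the $U_i$ and non-degeneracy of the trace pairing $(\bullet,\bullet)$ on $\hom$-spaces.

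For (a), observe that the left-hand side is, after cutting along the top and bottom $i$-strands, an endomorphism of $U_i$. Since $U_i$ is simple, $\hom_\Csf(U_i,U_i)\simeq \Kbb\cdot \id_{U_i}$, so this endomorphism equals $c\cdot \id_{U_i}$ for some scalar $c\in\Kbb$. To compute $c$, take the (categorical) trace of both sides. The right-hand side gives $c\cdot d_i$. The left-hand side becomes a closed diagram in which the outer $i$-loop joins $\phi$ and $\psi$ through the $A$-strand; by the graphical rules this is precisely the pairing value $\trrm(\psi\circ\phi)=(\psi,\phi)$ of the two coupons viewed as dual morphisms via the cyclic rotation isomorphism of the previous lemma. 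Dividing by $d_i\neq 0$ (valid in a fusion category) yields $c=(\psi,\phi)/d_i$.

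For (b), both sides live in an appropriate $\hom$-space built from $A_1,\dots,A_n$. The natural strategy is to test them against an arbitrary basis element $b^j_\beta$ of $\la j,A_1,\dots,A_n\ra$ using the pairing from part (a): pairing the right-hand side gives $b^j_\beta$ itself, while pairing the left-hand side reduces, by (a) applied inside the $i$-loop, to $\sum_{i,\alpha} d_i\cdot \frac{(b^i_\alpha,b^j_\beta)}{d_i}\,b^\alpha_i$ (the $d_i$ out front cancels against the $1/d_i$ coming from (a)). The duality assumption $(b^i_\alpha,b^j_\beta)=\delta_{ij}\delta_{\alpha\beta}$ collapses the sum to $b^j_\beta$, matching the right-hand side. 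By non-degeneracy of the trace pairing on $\la A_1,\dots,A_n\ra$ this suffices.

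The main obstacle I expect is bookkeeping rather than conceptual: consistently tracking which strands are dualized at the coupons, preserving the cyclic order of labels $A_1,\dots,A_n$ when rotating between $\la -\ra$-spaces, and verifying that the $d_i$ weights on the left of (b) are exactly what is required to compensate the normalization $1/d_i$ produced by each application of (a). Once the graphical conventions are fixed, semisimplicity and the Schur-type argument in (a) do essentially all the work.
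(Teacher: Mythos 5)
Your proof is correct, and it follows the standard route (Schur's lemma on $\hom(U_i,U_i)$ plus a trace computation for (a), then testing (b) against morphisms out of simples and reducing to (a) via semisimplicity and the duality $(b^i_\alpha,b^\beta_j)=\delta_{ij}\delta_{\alpha\beta}$). The paper itself gives no proof of this lemma --- it simply cites \cite{KirillovBalsam} --- and your argument is essentially the one found there, so there is nothing to contrast.
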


The second relation is called \textit{completeness property} and will be used several times.
%%%%%%%%%%%%%%%%%%%%%%%%%%%%%%%%%%%%%%%%%%%%%%%%%%%%%%%%%%%%%%%%%%%%%%%%%%%%%%%%%
%%%%%%%%%%%%%%%%%%%%%%%%%%%%%%%%%%%%%%%%%%%%%%%%%%%%%%%%%%%%%%%%%%%%%%%%%%%%%%%%%

\subsection{Frobenius Algebras in Tensor Categories}

Frobenius algebras are usually defined as associative algebras on finite dimensional vector spaces having a non-degenerate bilinear form, compatible with the algebra multiplication. In this form Frobenius algebras correspond to two dimensional TFTs, see e.g. \cite{Lauda_2008}. The notion of a Frobenius algebra has an enhancement to categories and the above notion corresponds to a Frobenius algebra in $\Vectsf$, the category of finite dimensional vector spaces. 
\begin{defn} Let $(\Asf,\otimes , \mathbf{1})$ be a tensor category. A \textit{Frobenius algebra in $\Asf$ with underlying object $A$} consists of morphisms
\begin{center}
\begin{tikzpicture}
\node at (0,0) {\includegraphics[scale=0.13]{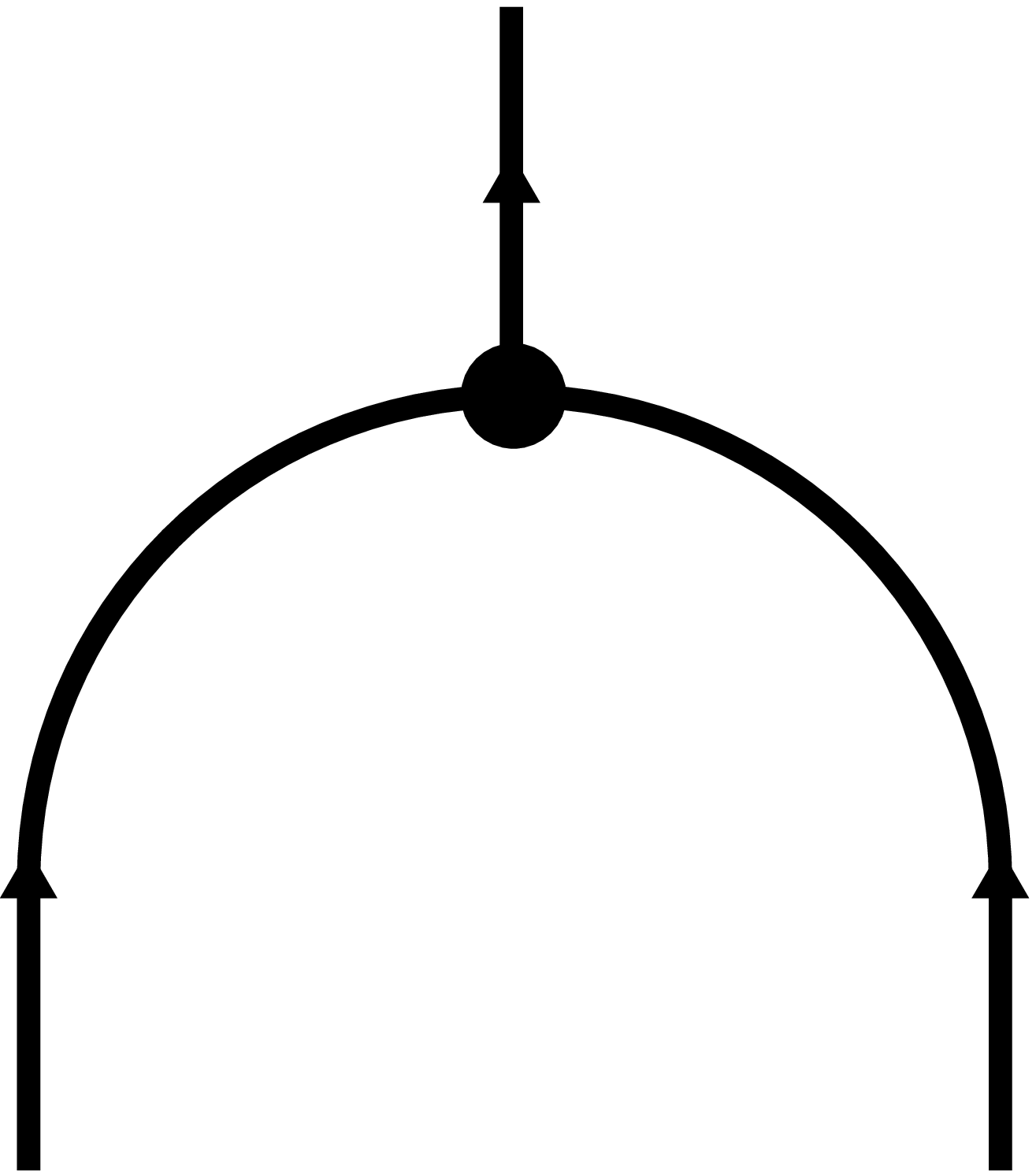}};
\node at (3,0) {\includegraphics[scale=0.13]{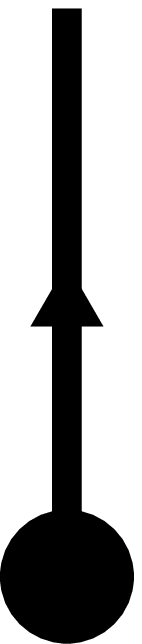}};
\node at (6,0) {\includegraphics[scale=0.13,angle=180]{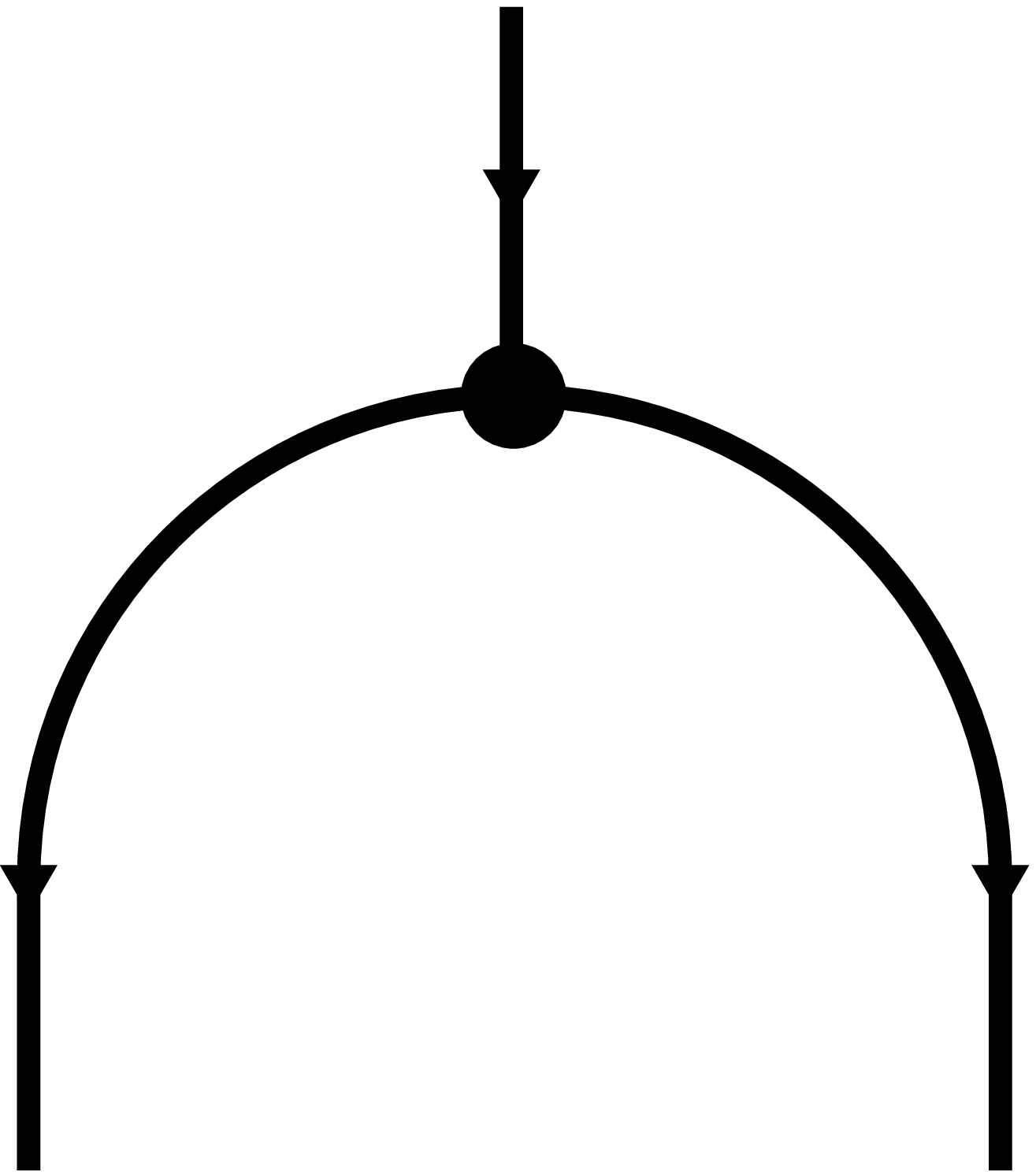}};
\node at (9,0) {\includegraphics[scale=0.13]{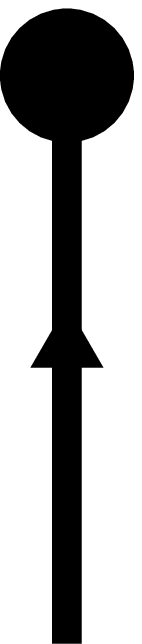}};
\node at (0,-1.5) {$m:A\otimes A\rightarrow A$};
\node at (3,-1.5) {$\eta:\mathbf{1}\rightarrow A$};
\node at (6,-1.5) {$\Delta:A\rightarrow A\otimes A $};
\node at (9,-1.5) {$\epsilon:A\rightarrow \mathbf{1}$};
\end{tikzpicture}.
\end{center}
where all strands are colored with $A$. These have to satisfy:
\begin{enumerate}[label=\Roman*)]
\item $(m,\eta)$ define an associative algebra on $A$:
\begin{center}
\begin{tikzpicture}
\node at (0,0) {\includegraphics[scale=0.12]{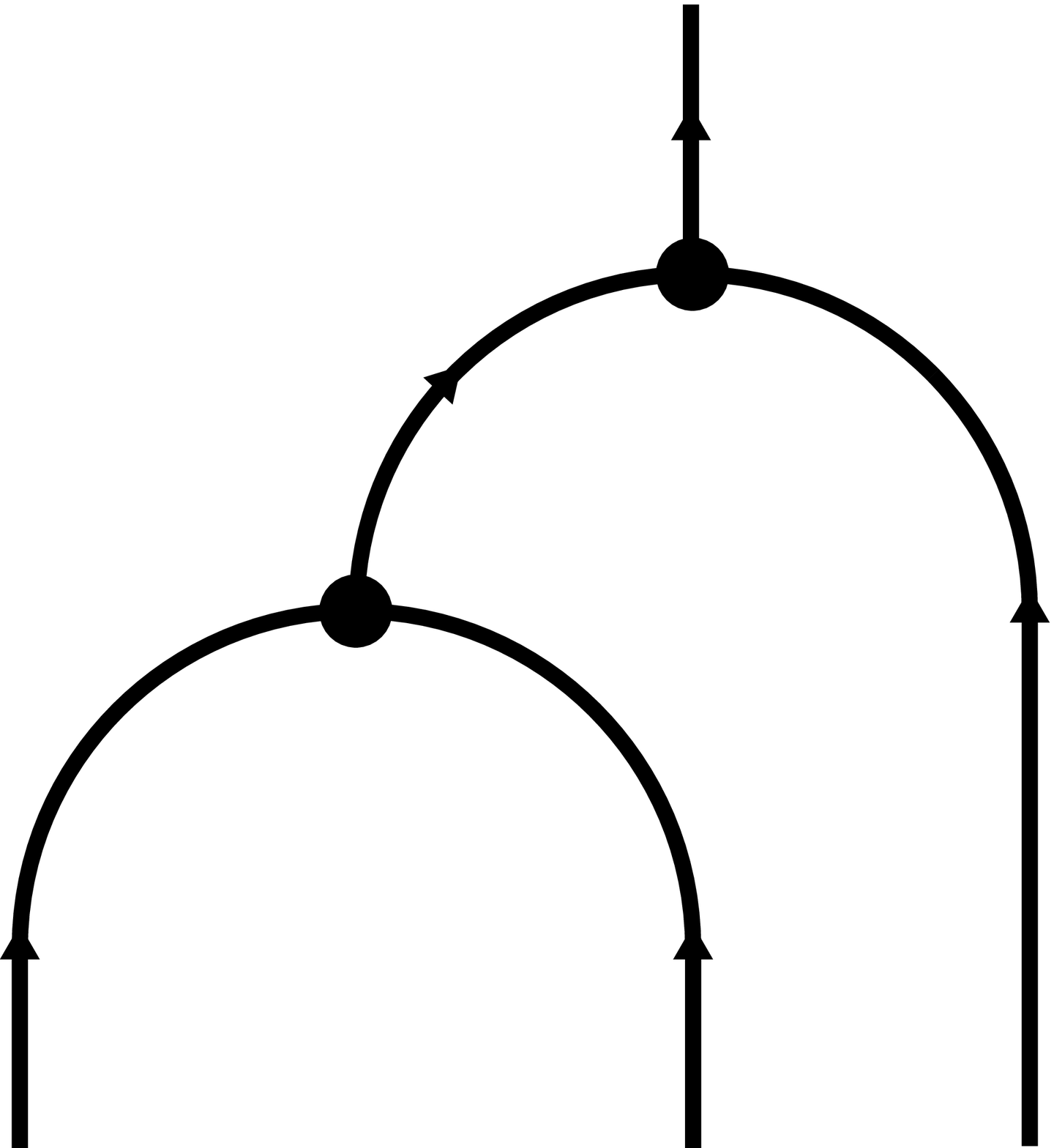}};
\node at (3,0) {\includegraphics[scale=0.12]{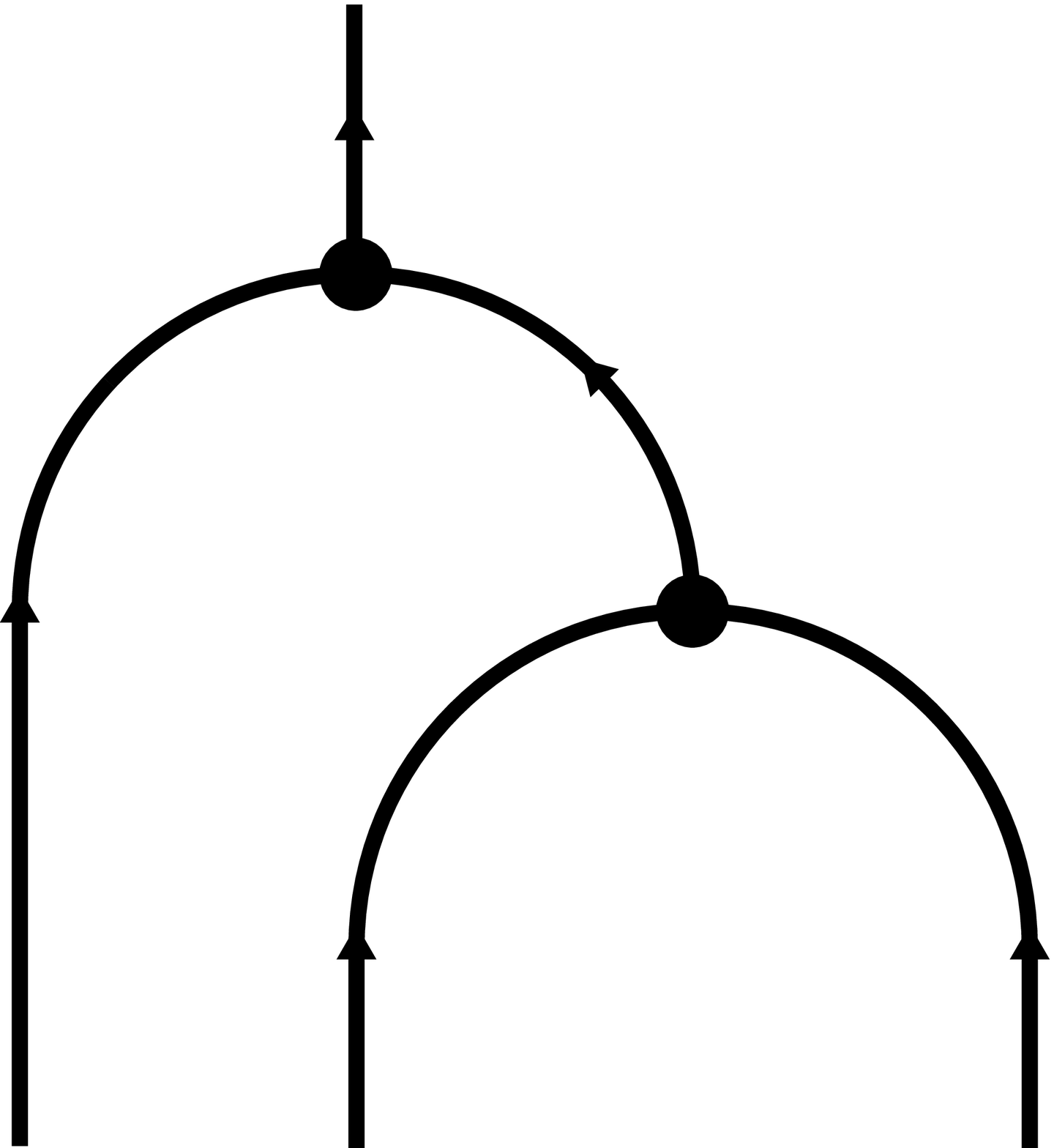}};
\node at (7,0) {\includegraphics[scale=0.12]{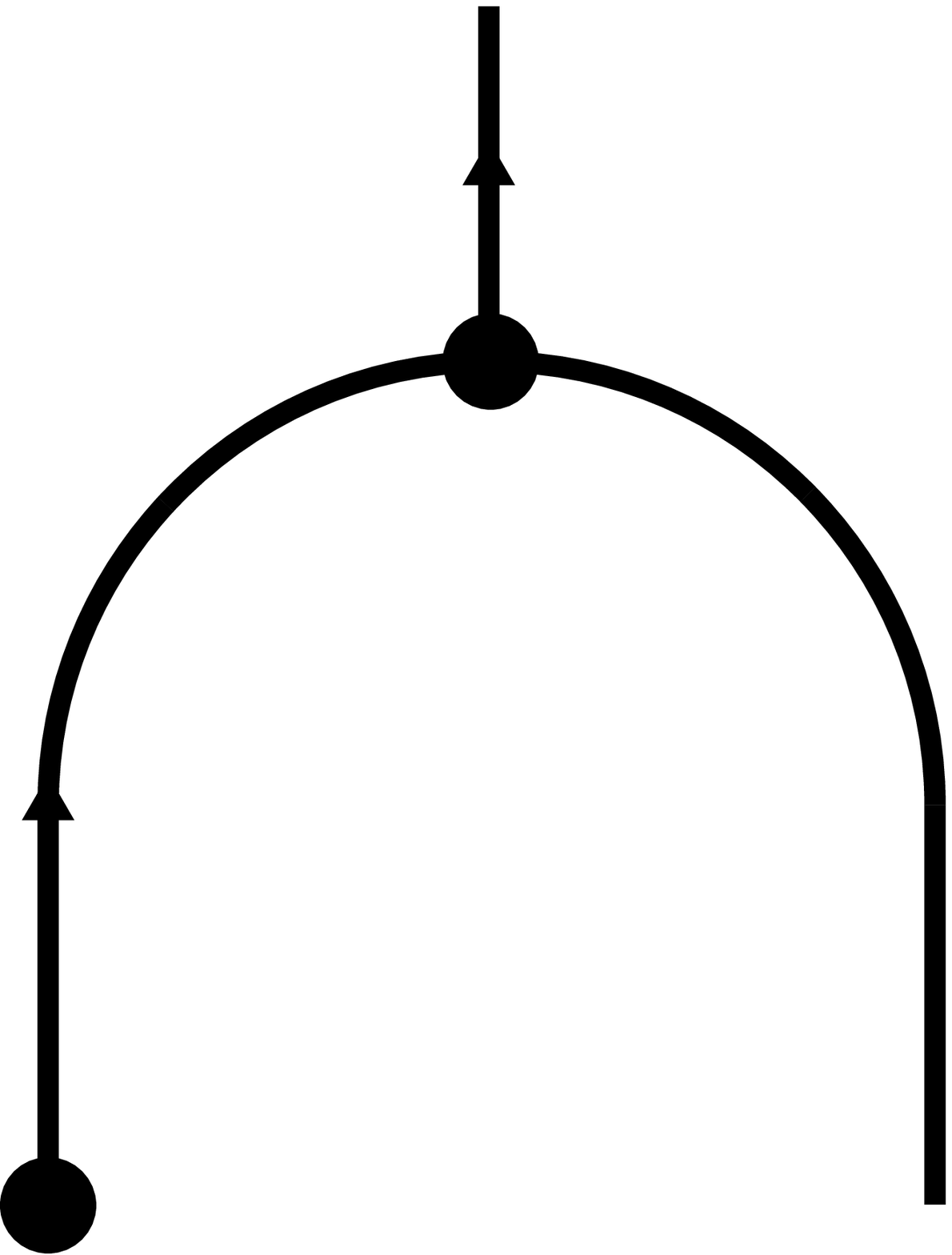}};
\node at (9,0) {\includegraphics[scale=0.12]{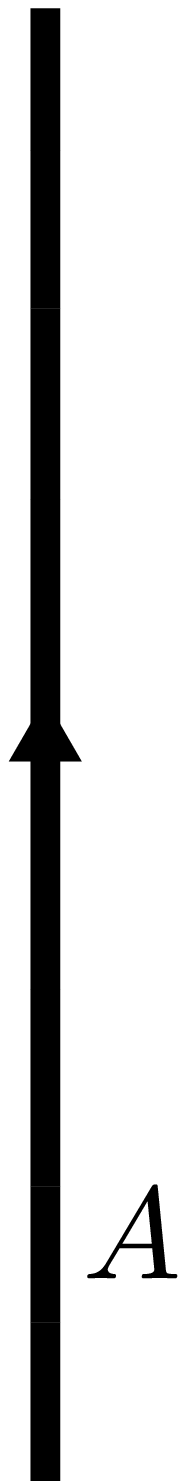}};
\node at (11,0) {\includegraphics[scale=0.12]{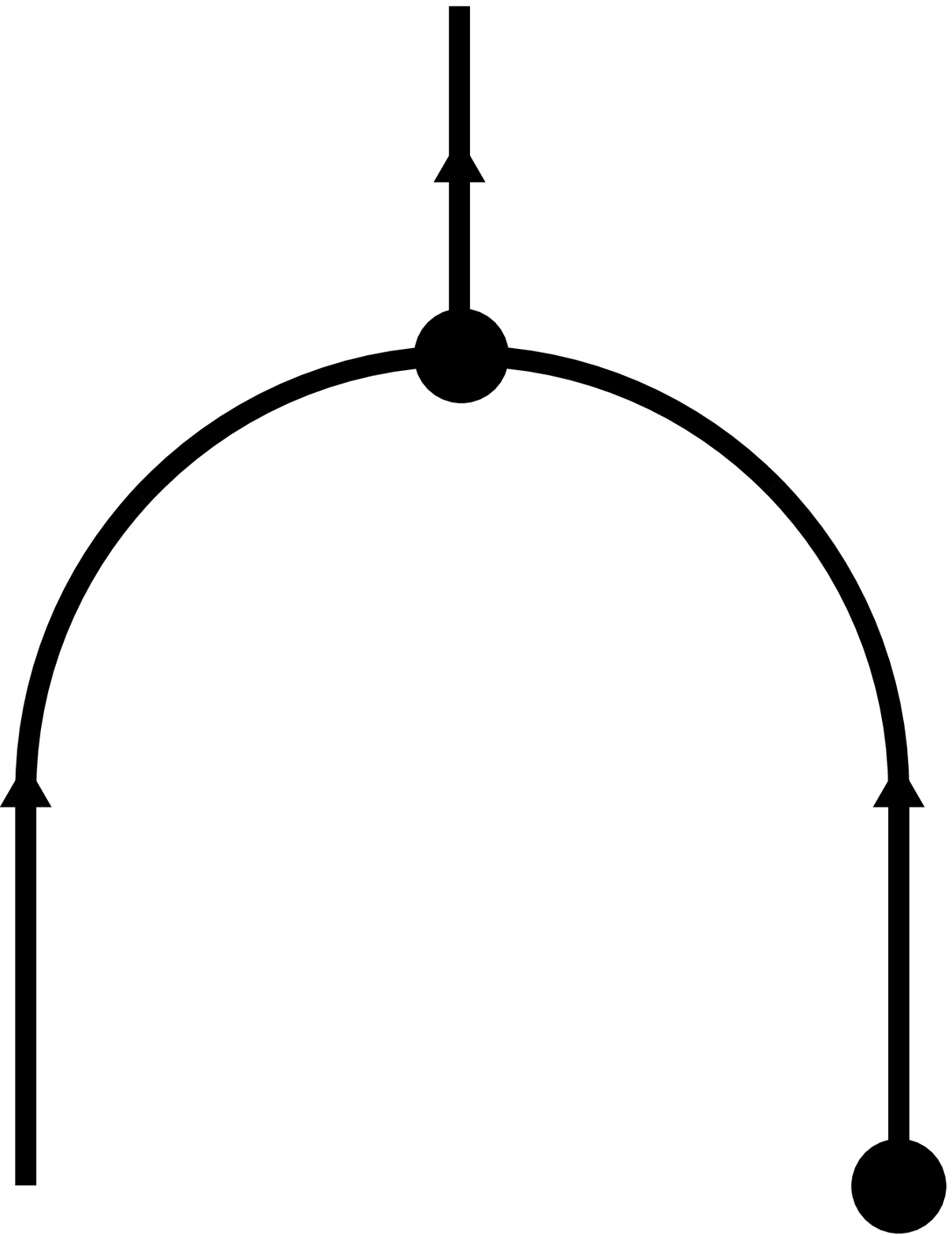}};
\node at (1.5,0) {$=$};
\node at (8.3,0) {$=$};
\node at (9.7,0) {$=$};
\end{tikzpicture}.
\end{center}
\item $(\Delta,\epsilon)$ define a coassociative coalgebra on $A$:
\begin{center}
\begin{tikzpicture}
\node at (0,0) {\includegraphics[scale=0.12,angle=180]{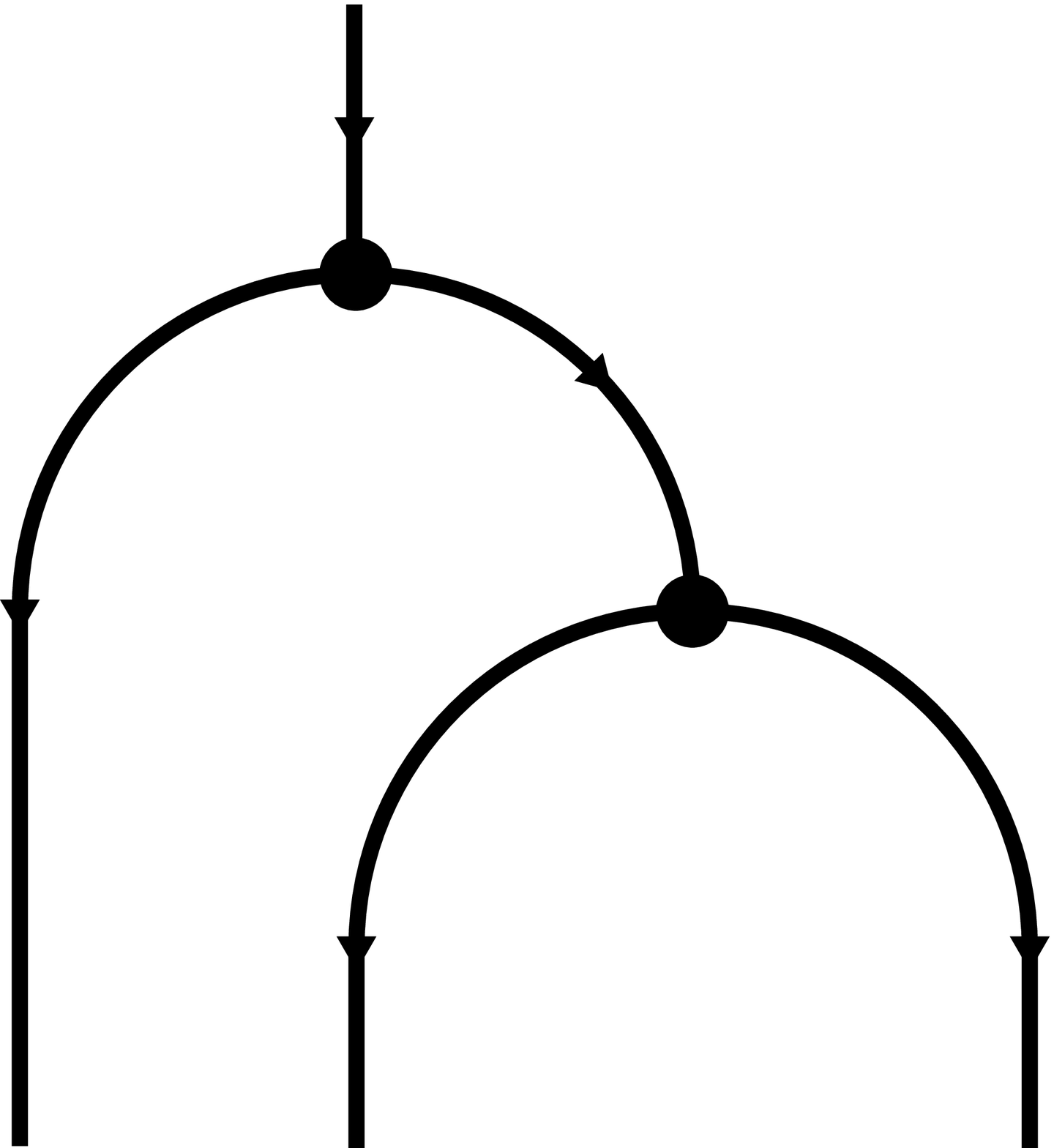}};
\node at (3,0) {\includegraphics[scale=0.12,angle=180]{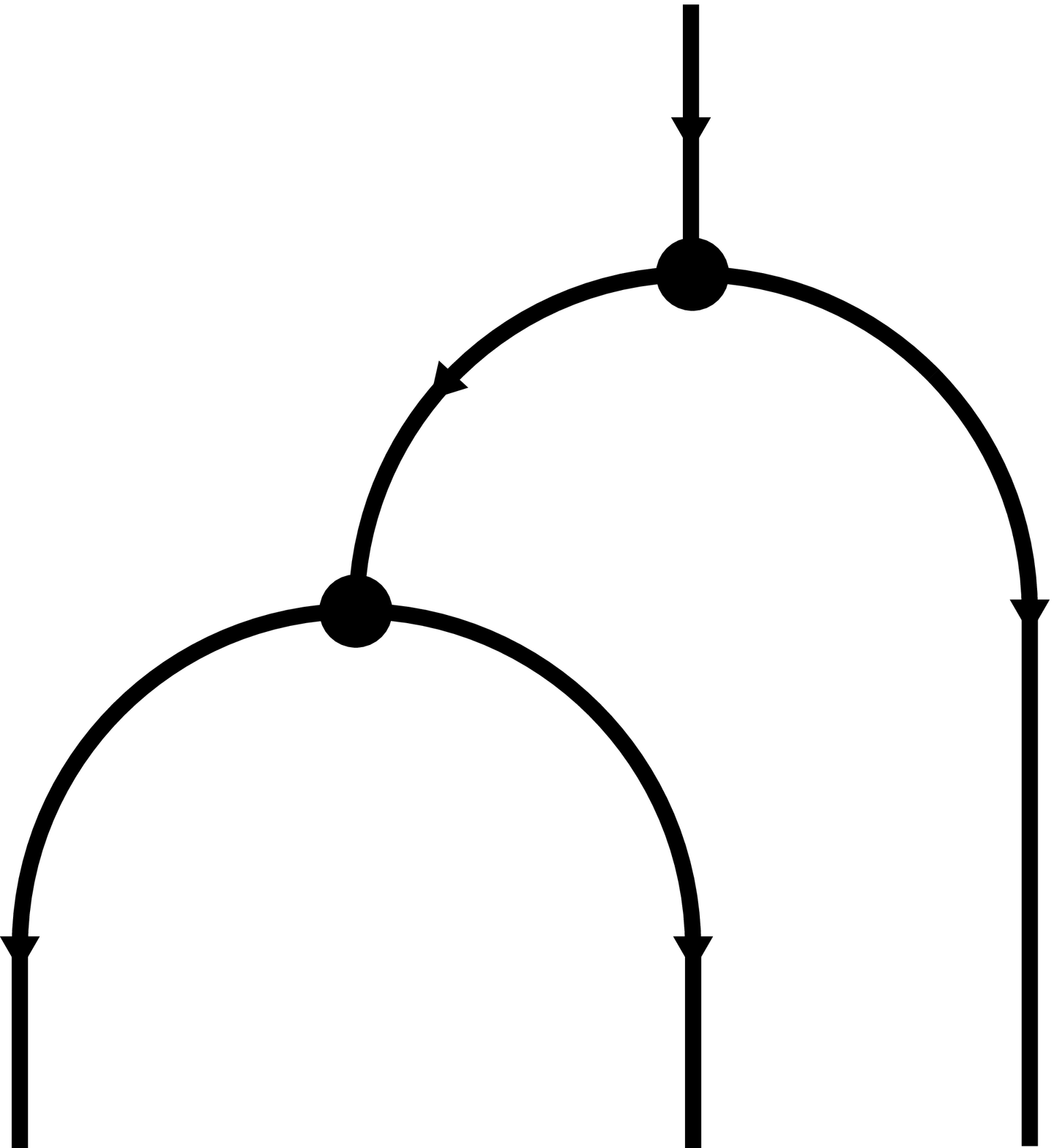}};
\node at (7,0) {\includegraphics[scale=0.12,angle=180]{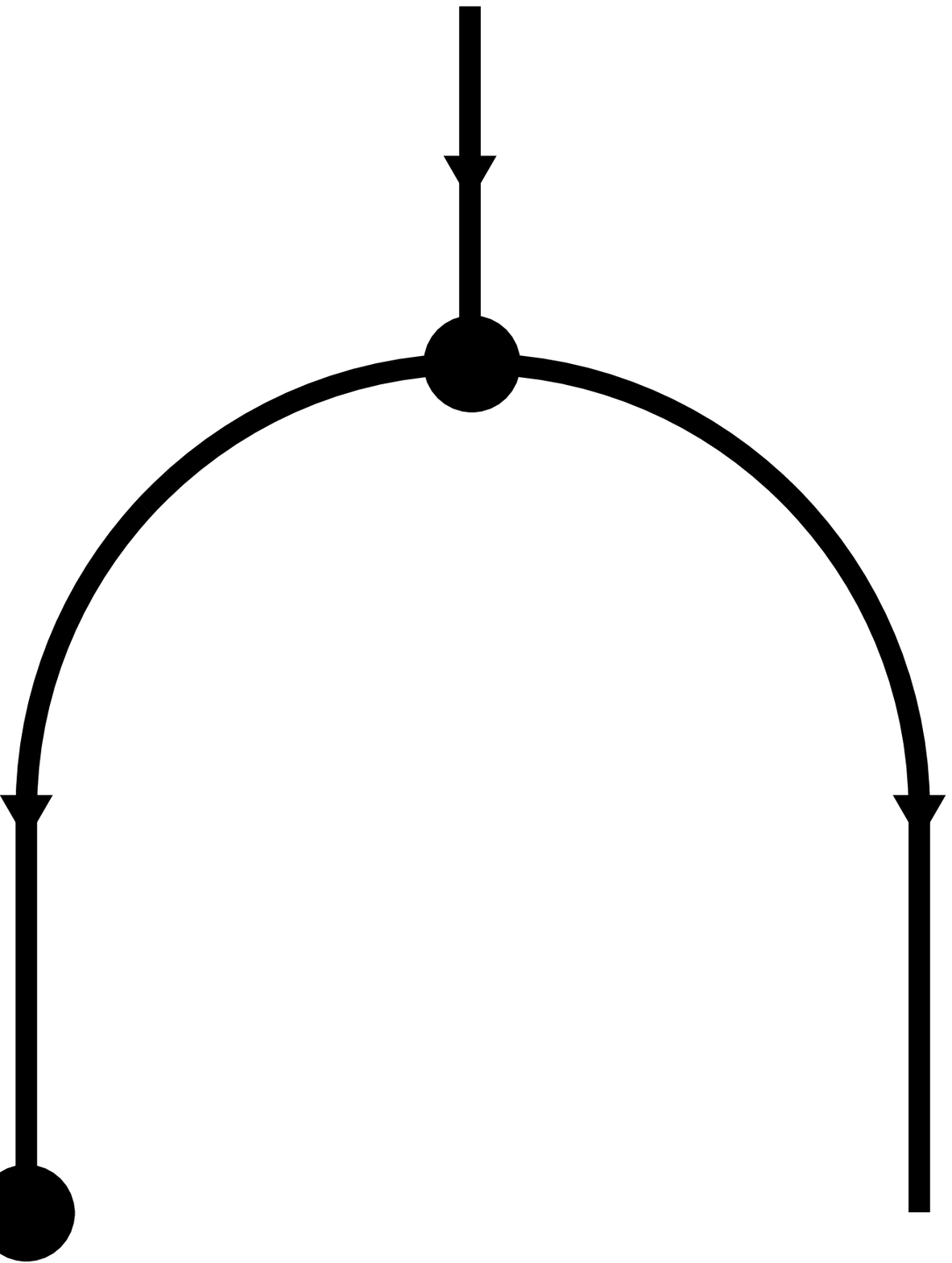}};
\node at (9,0) {\includegraphics[scale=0.12]{figure14.eps}};
\node at (11,0) {\includegraphics[scale=0.12,angle=180]{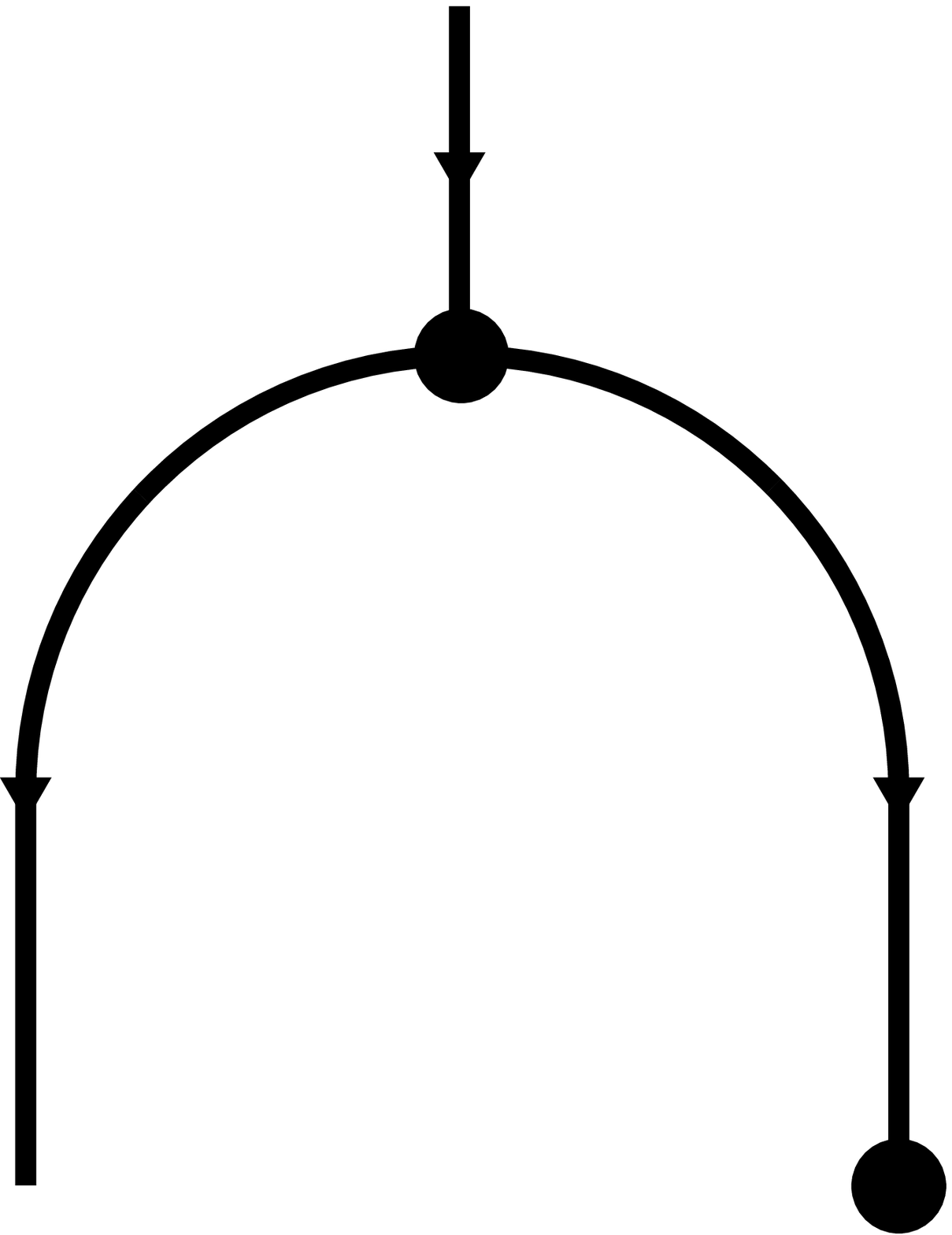}};
\node at (1.5,0) {$=$};
\node at (8.3,0) {$=$};
\node at (9.7,0) {$=$};
\end{tikzpicture}.
\end{center}
\item The \textit{Frobenius properties} hold
\begin{center}
\begin{tikzpicture}
\node at (0,0) {\includegraphics[scale=0.11]{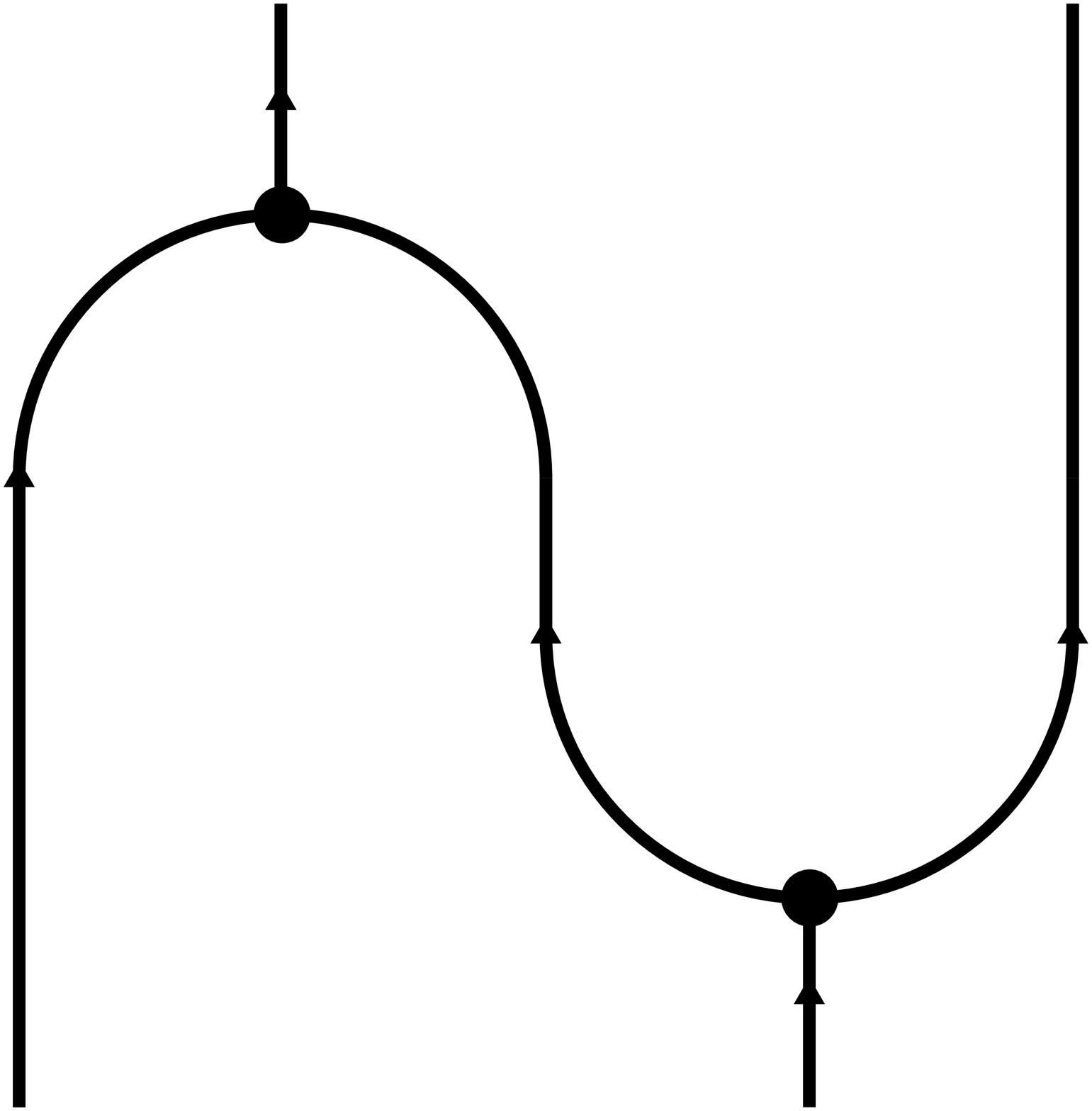}};
\node at (4,0) {\includegraphics[scale=0.11]{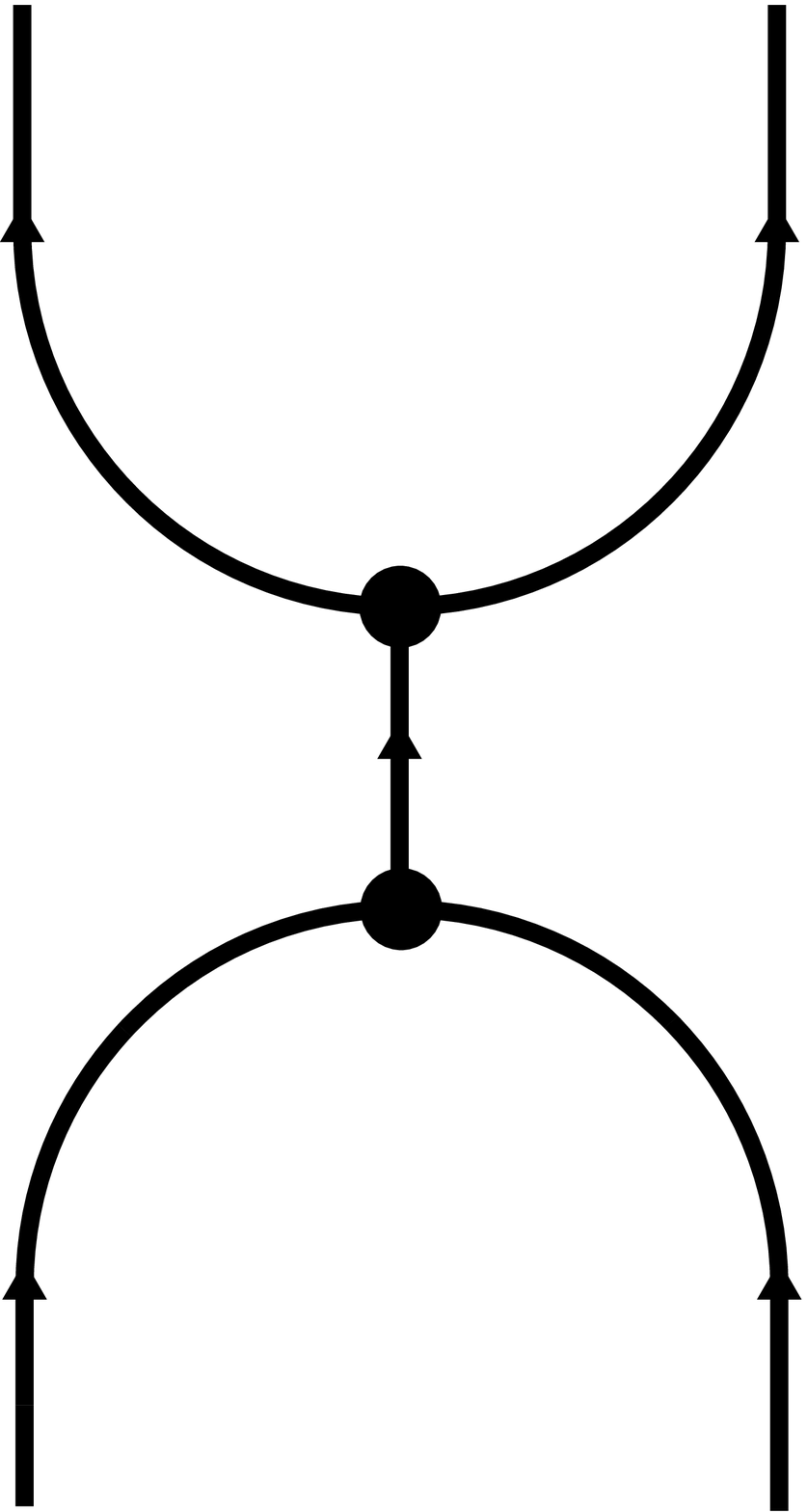}};
\node at (8,0) {\includegraphics[scale=0.11]{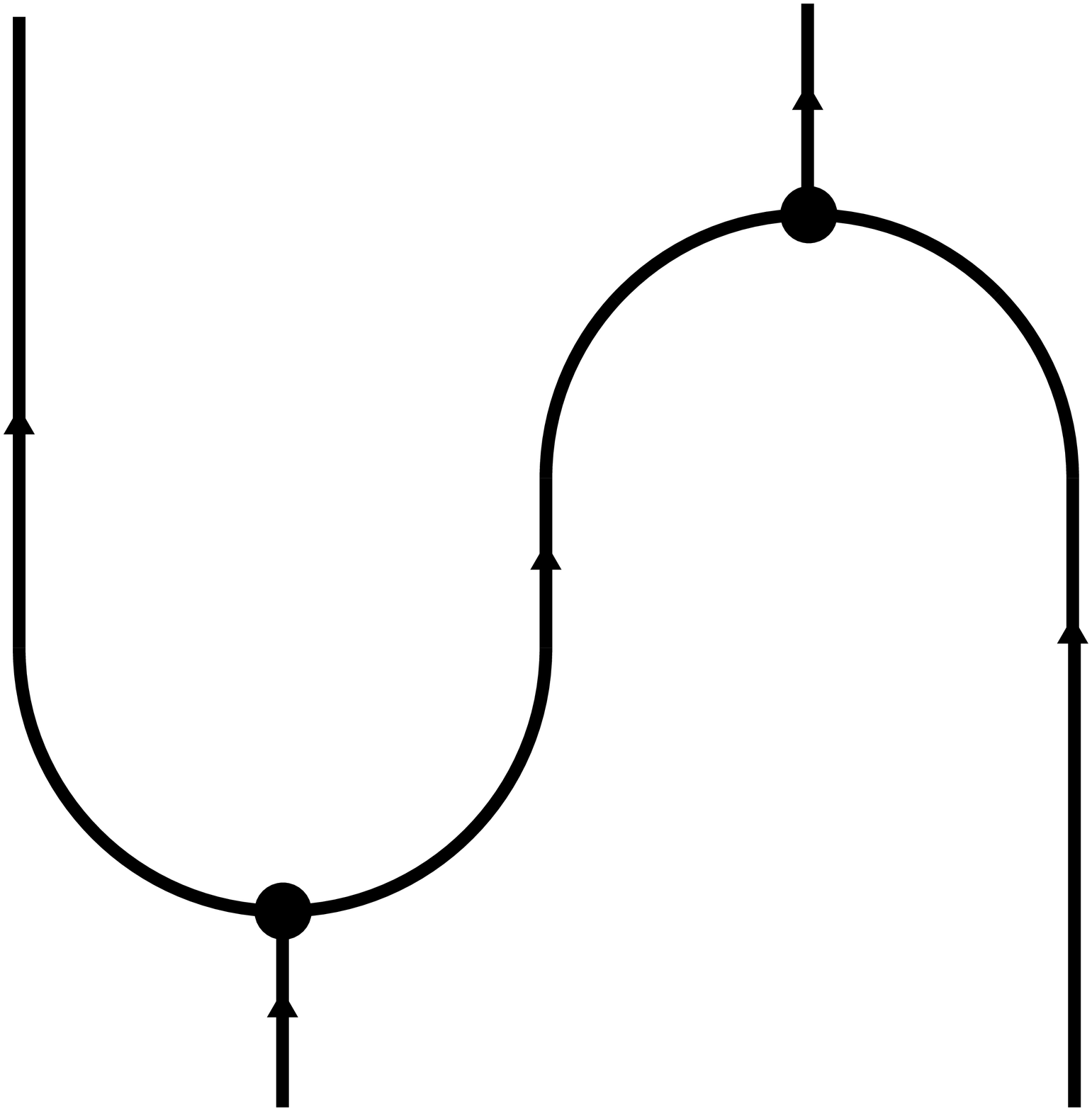}};
\node at (2,0) {$=$};
\node at (6,0) {$=$};
\end{tikzpicture}.
\end{center}
\end{enumerate}
If $\Asf$ is in addition pivotal, we can ask for $(A,m,\eta,\Delta,\epsilon)$ to be \textit{symmetric}, i.e. there is an equality of morphisms
\begin{center}
\begin{tikzpicture}
\node at (0,0) {\includegraphics[scale=0.12]{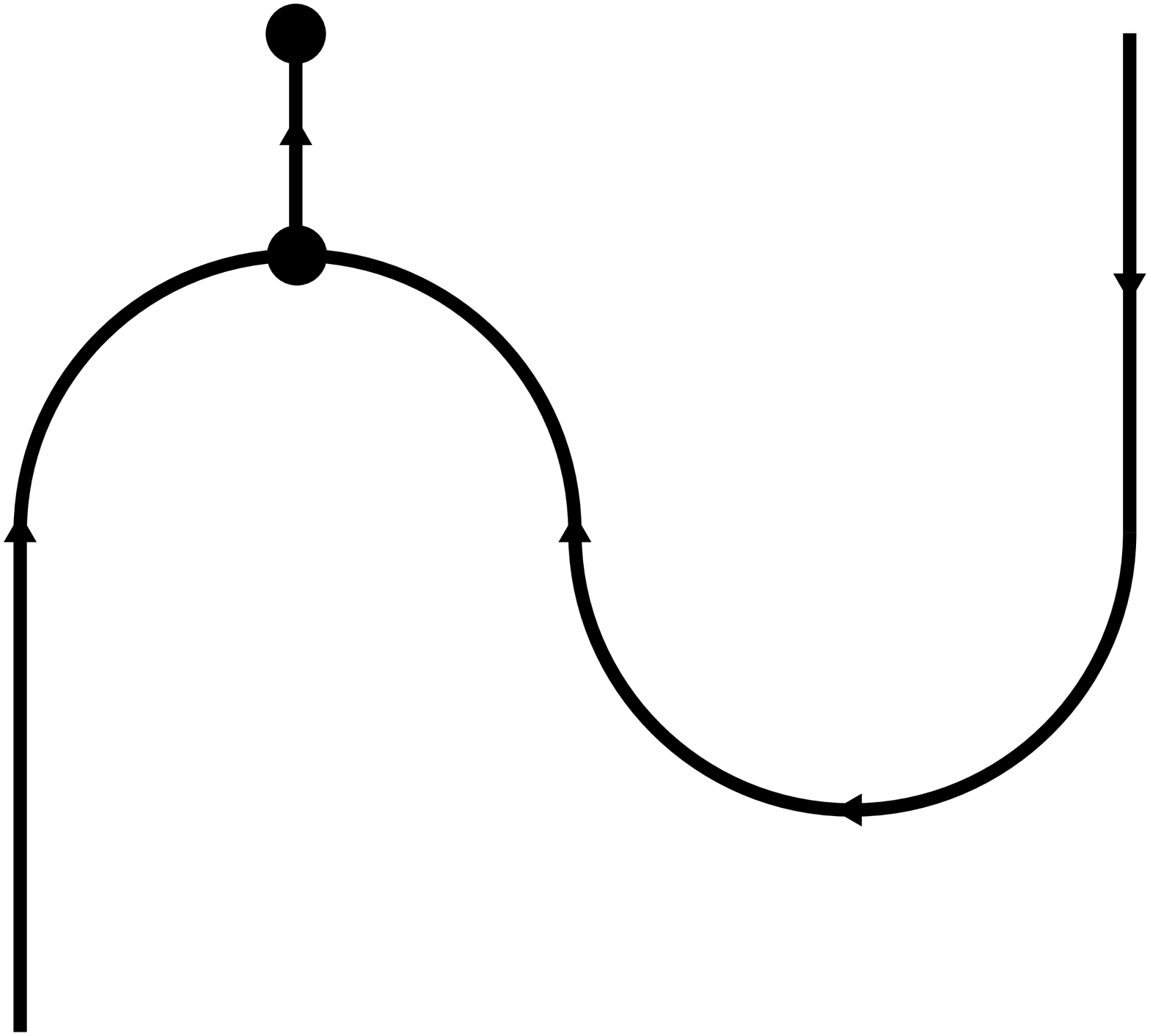}};
\node at (5,0) {\includegraphics[scale=0.12]{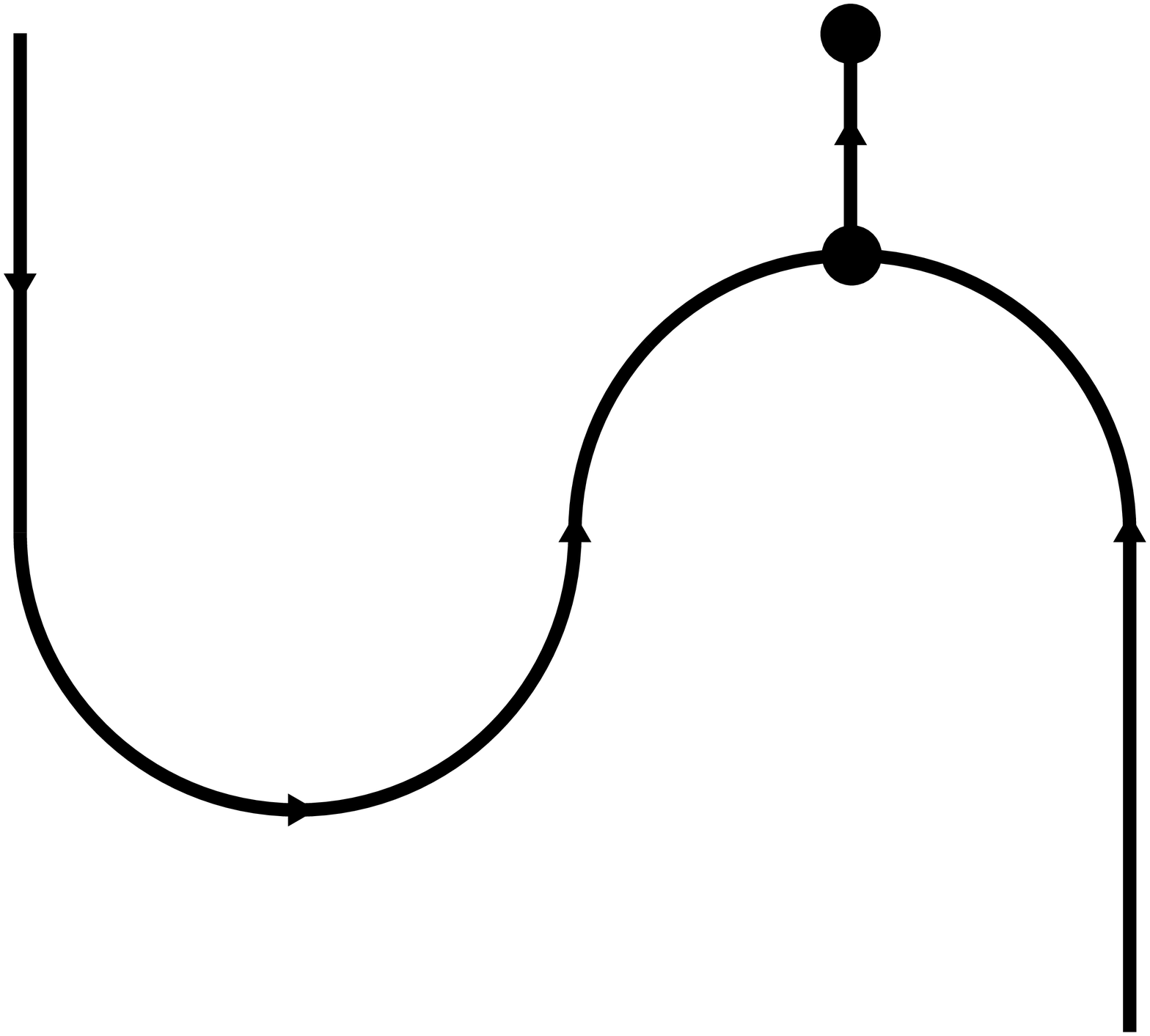}};
\node at (2.5,0) {$=$};
\end{tikzpicture}.
\end{center}
For $\Asf$ braided, we can require $(A,m,\eta,\Delta,\epsilon)$ to be (co-)commutative, i.e. 
\begin{center}
\begin{tikzpicture}
\node at (0,0) {\includegraphics[scale=0.15]{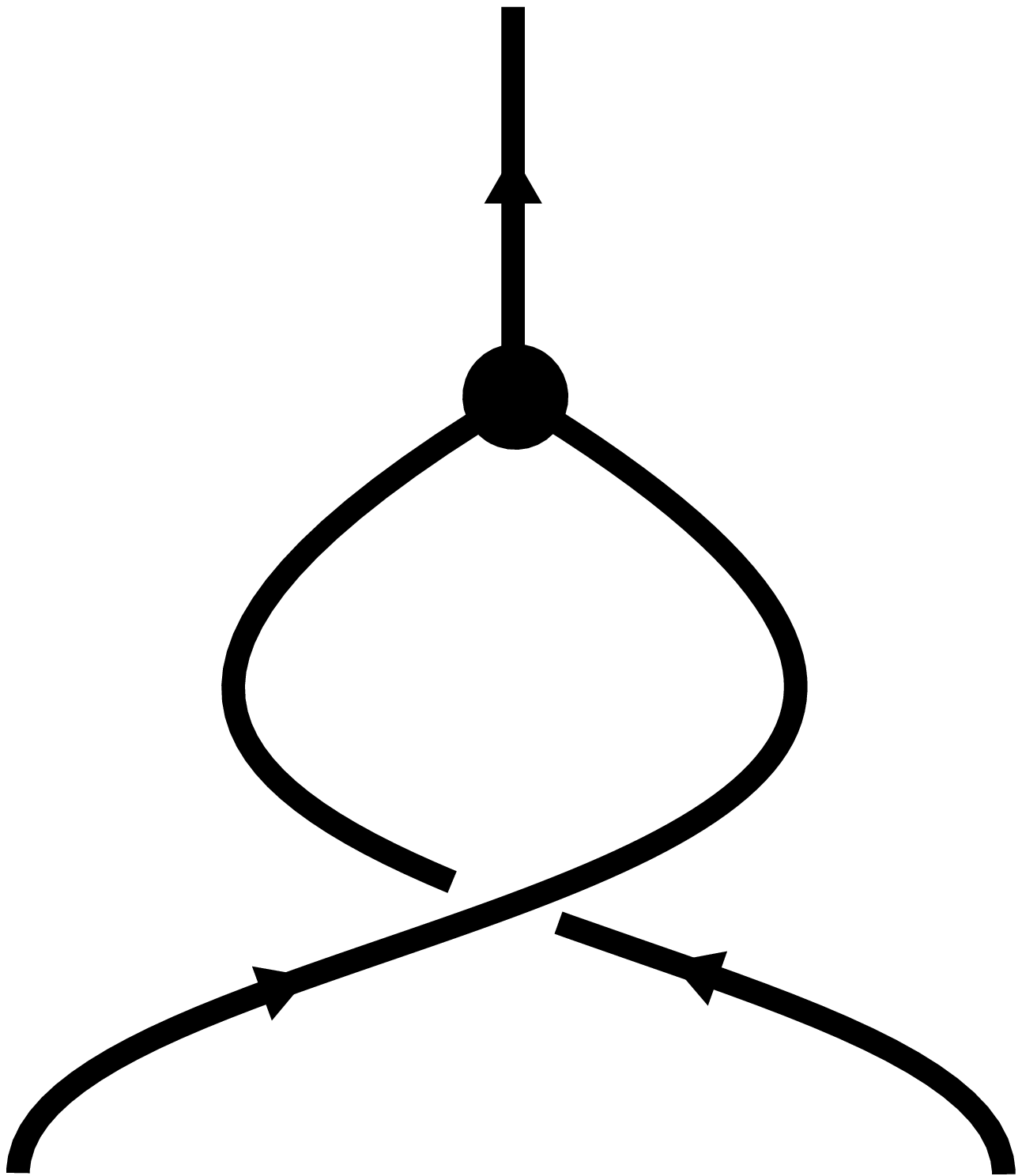}};
\node at (4,0) {\includegraphics[scale=0.15]{figure30.eps}};
\node at (9,0) {\includegraphics[scale=0.15,angle=180]{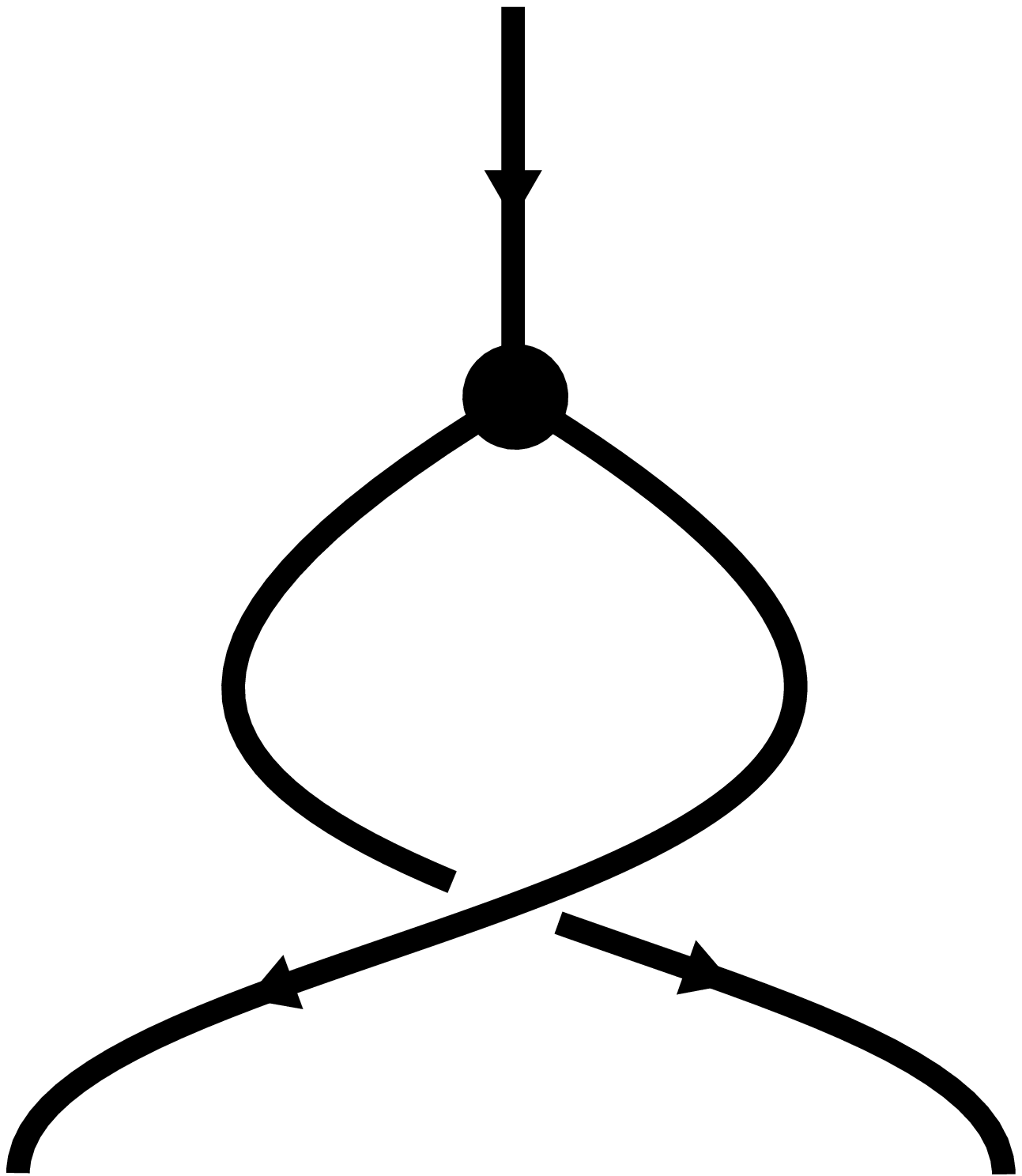}};
\node at (12,0) {\includegraphics[scale=0.15,angle=180]{figure32.eps}};
\node at (2,0) {$=$};
\node at (10.5,0) {$=$};
\end{tikzpicture}.
\end{center}
\end{defn}
For a morphism $f\in \hom_\Csf(A,B)$, if $A$ and $B$ are Frobenius algebras, one can define a \textit{Frobenius adjoint} $f^\dagger\in \hom_\Csf(B,A)$.
\begin{defn}\cite[Definition~2.17]{Kong_2009} Let $(A,m_A,\eta_A,\Delta_A,\epsilon_A)$ and $(B,m_B,\eta_B,\Delta_B,\epsilon_B)$ be Frobenius algebras in a monoidal category $\Csf$ and $f\in \hom_\Csf(A,B)$. The morphism $f^\dagger\in \hom_\Csf(B,A)$ is defined as
\begin{center}
\begin{tikzpicture}
\node at (0,0) {\includegraphics[scale=0.2]{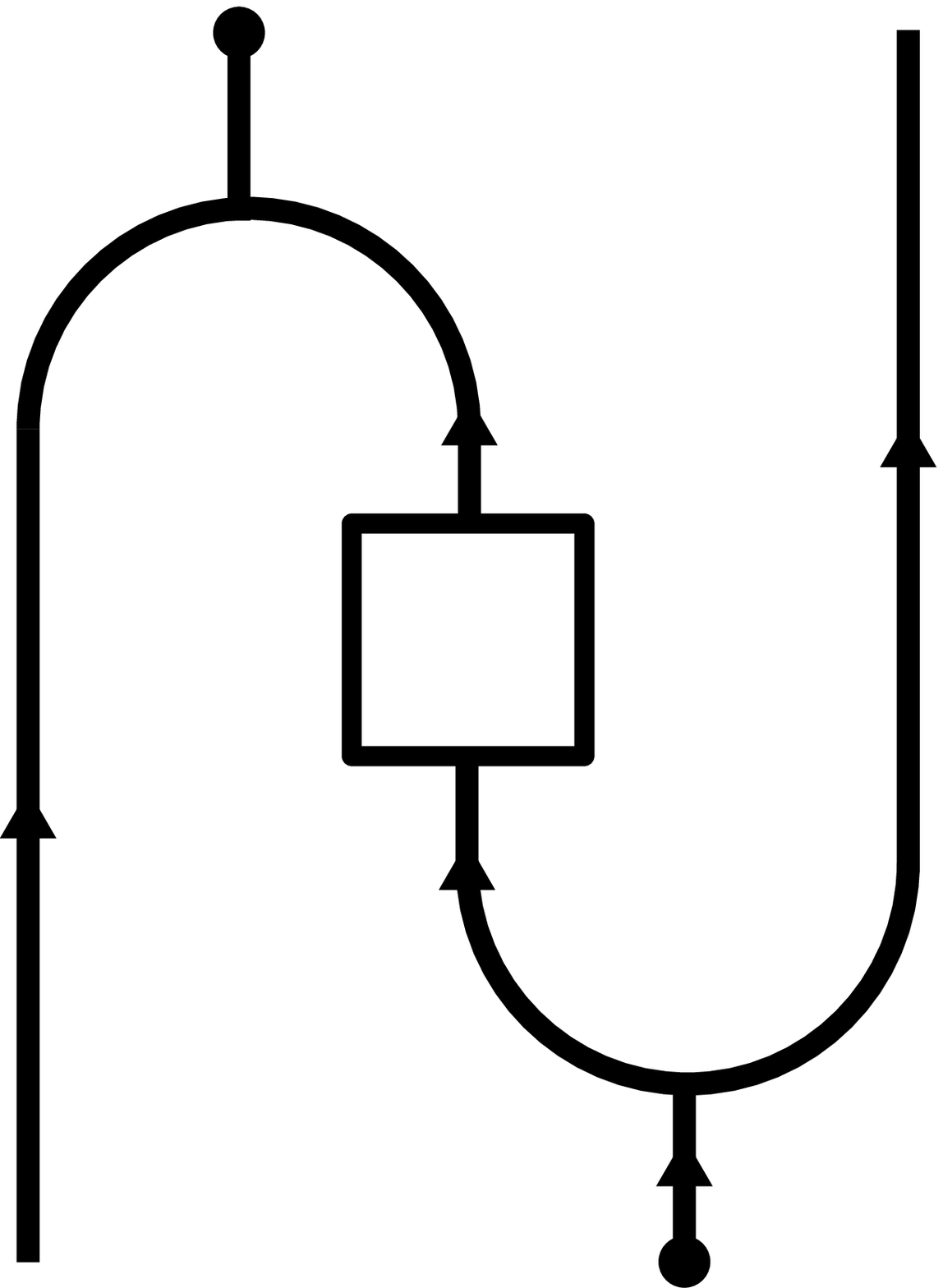}};
\node at (-2,0) {$f^\dagger$};
\node at (-1.5,0) {$=$};
\node at (0,0) {$f$};
\node at (-1.4,-1) {\scriptsize $B$};
\node at (1.3,-1) {\scriptsize $A$};
\end{tikzpicture}.
\end{center}

\end{defn}
We will use Frobenius algebras in modular tensor categories. If the category in question is the representation category of a rational vertex operator algebra, this corresponds to RCFTs very much like commutative Frobenius algebras in $\Vectsf$ correspond to oriented TFTs.
%%%%%%%%%%%%%%%%%%%%%%%%%%%%%%%%%%%%%%%%%%%%%%%%%%%%%%%%%%%%%%%%%%%%%%%%%%%%%%%%%%
%%%%%%%%%%%%%%%%%%%%%%%%%%%%%%%%%%%%%%%%%%%%%%%%%%%%%%%%%%%%%%%%%%%%%%%%%%%%%%%%%%
\subsection{Drinfeld-Center}

For $\Bsf$ a strictly monoidal category, its Drinfeld center $\Zsf(\Bsf)$ has objects $(B,\beta_{B,\bullet})$, where $\beta_{B,\bullet}:B\otimes \bullet\Rightarrow \bullet\otimes B$ is a natural isomorphism called \textit{half braiding} s.th. 
\eq{
\beta_{A,B\otimes C}=(\id_B\otimes \beta_{A,C})\circ (\beta_{A,B}\otimes \id_C)\; . 
}
Morphisms in $\Zsf(\Bsf)$ are morphisms $f\in \hom_\Bsf(A,B)$ s.th. 
\eq{
\beta_{B,C}\circ (f\otimes \id_C)=(\id_C\otimes f)\circ \beta_{A,C}\quad .
}
The Drinfeld center becomes a monoidal category with tensor product
\eq{
(A,\beta_{A,\bullet})\otimes (B,\beta_{B,\bullet})=(A\otimes B,\beta_{A\otimes B,\bullet}),\qquad
\beta_{A\otimes B,C}=(\beta_{A,C}\otimes \id_B)\circ (\id_A\otimes \beta_{B,C})\, . 
}
Note that $\Bsf$ does not need to be braided. But $\Zsf(\Bsf)$ is naturally braided with braiding given by
\eq{
\beta^{\Zsf(\Bsf)}_{(A,\beta_{A,\bullet}),(B,\beta_{B,\bullet})}=\beta_{A,B}\quad .
}
There exists an obvious forgetful functor $F:\Zsf(\Bsf)\rightarrow \Bsf$ forgetting the half braiding. For $\Bsf$ a modular tensor category it is shown in \cite{Kong_2009} that the adjoint of the forgetful functor reads
\eq{
L:\Bsf &\rightarrow \Zsf(\Bsf)\\
B &\mapsto \left( L(B)=\bigoplus_{i\in \Isf(\Bsf)} B\otimes U_i^\ast \otimes U_i, \beta^{ou}_{L(B),\bullet}\right) 
}
with over-under half braiding 
\eq{
\beta^{ou}_{L(B),A}=\bigotimes_{i\in \Isf(\Csf)} (\beta_{B,A}\otimes \id_{U_i^\ast\otimes U_i})\circ (\id_B\otimes \beta_{U_i^\ast,A}\otimes \id_{U_i})\circ (\id_{B\otimes U_i^\ast}\otimes \beta^{-1}_{U_i,A})
}
which has graphical representation

\begin{center}
\begin{tikzpicture}
\node at (0,0) {\includegraphics[scale=0.15]{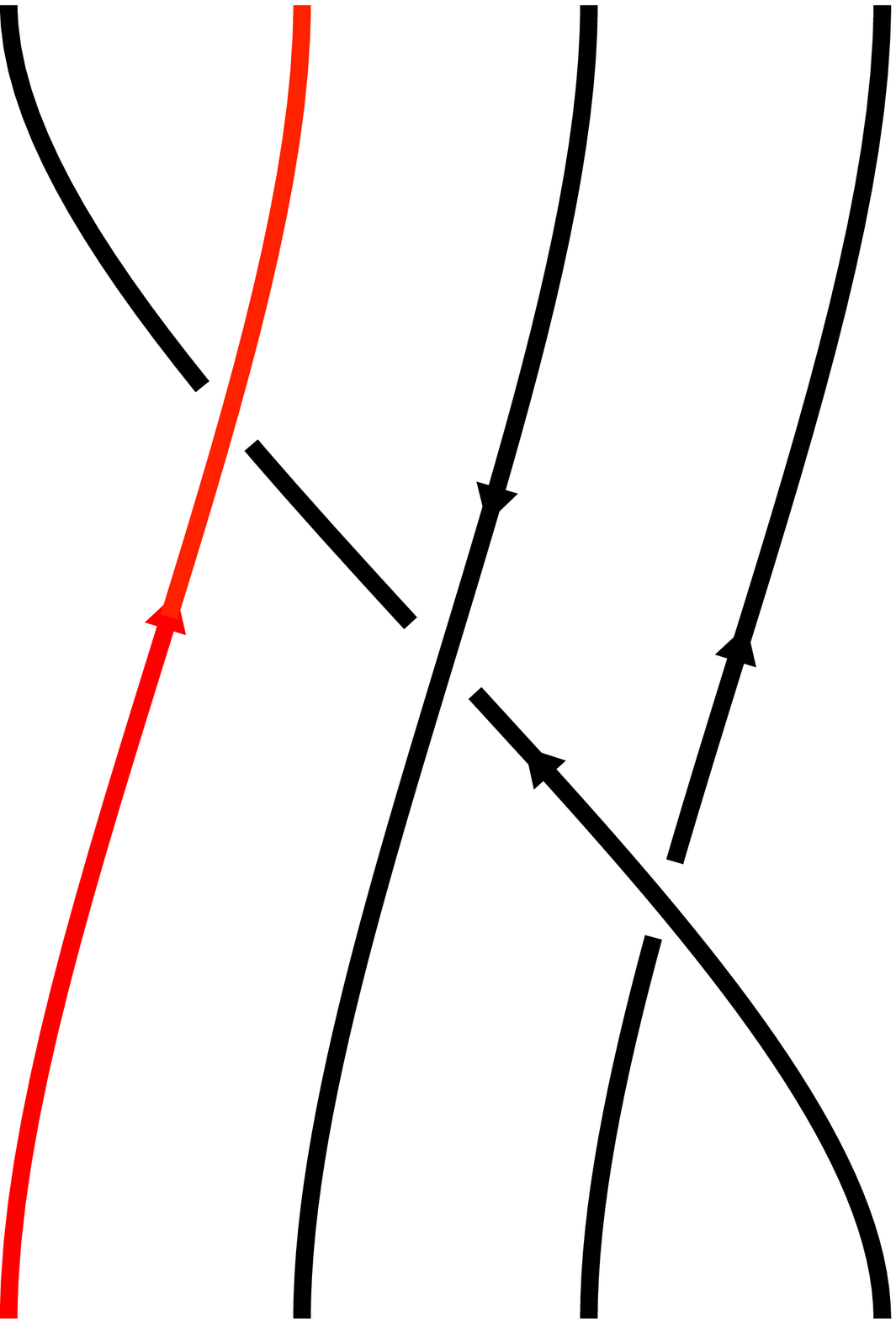}};
\node at (-3.5,0) {$\beta^{ou}_{L(B),A}$};
\node at (-2.3,0) {$=$};
\node at (-1.7,0) {$\begin{aligned}\bigoplus_{i\in \Isf(\Csf)}\end{aligned}$};
\node at (-0.9,-1.5) {$\color{red} B$};
\node at (-0.2,-1.5) {$i$};
\node at (0.6,-1.5) {$i$};
\node at (1.4,-1.5) {$A$};
\end{tikzpicture}.
\end{center}
On morphisms the functor is defined as $L(f)=\bigoplus_{i\in \Isf(\Csf)} f\otimes \id_{{U_i^\ast}\otimes U_i}$. Note that this is a faithful functor \cite[Lemma~2.22]{Kong_2009}, but not a tensor functor, since it doesn't map the identity on $\Bsf$ to the identity of $\Zsf(\Bsf)$. Nevertheless, it transports Frobenius algebras from one category to the other preserving symmetry. 
\begin{prop}\cite[Proposition~2.25]{Kong_2009}
For $A$ a Frobenius algebra in $\Bsf$, the object $L(A)$ has the structure of a Frobenius algebra in $\Zsf(\Bsf)$. In addition, $A$ is symmetric, if and only if $L(A)$ is symmetric.
\end{prop}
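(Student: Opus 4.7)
The plan is to construct the Frobenius structure on $L(A)$ explicitly from that of $A$ combined with the coend-like structure of $L(\mathbf{1})=\bigoplus_i U_i^{\ast}\otimes U_i$, and then verify the axioms by graphical calculus, finally dealing with symmetry.

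First I would define the four structure morphisms. The unit is the obvious composite $\eta_{L(A)}:\mathbf{1}\xrightarrow{\eta_A} A\simeq A\otimes U_0^{\ast}\otimes U_0 \hookrightarrow L(A)$, where $U_0=\mathbf{1}$, and dually $\epsilon_{L(A)}$ is $\epsilon_A$ on the $U_0$-summand and zero otherwise. For the multiplication, on the $(i,j)$-summand of $L(A)\otimes L(A)$ I would use $m_A$ on the two copies of $A$ (after a braiding-like rearrangement moving one $A$ past $U_i^{\ast}\otimes U_i$) and then contract $U_i\otimes U_j\to U_k$ via the basis $\{\theta^\alpha_{(ij);k}\}$, paired with its dual on $U_k^{\ast}\to U_j^{\ast}\otimes U_i^{\ast}$, summing over $k$ and $\alpha$. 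The comultiplication is defined dually using $\Delta_A$ and the inverse fusion basis. In pictures these maps are the natural $Y$-shaped string-net combining fusion of the simples with the (co)product of $A$.

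Next I would verify that $m_{L(A)}, \Delta_{L(A)},\eta_{L(A)},\epsilon_{L(A)}$ are morphisms in $\Zsf(\Bsf)$, i.e., intertwine the over-under half braiding $\beta^{ou}$. This amounts to showing that threading an arbitrary object $X$ through the relevant string-net gives the same picture whether one first multiplies and then half-braids, or first half-braids both factors and then multiplies. The half-braiding is built purely from the ambient braiding of $\Bsf$ on $U_i, U_i^{\ast}$ and $A$, so naturality of the braiding together with the fact that the fusion bases $\{\theta^\alpha_{(ij);k}\}$ are $\Bsf$-morphisms lets one slide $X$-strands across. The unit and counit being half-braiding-compatible reduces to the trivial half-braiding on $\mathbf{1}=U_0\otimes U_0^{\ast}$.

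Then I would check the Frobenius axioms. Associativity of $m_{L(A)}$ and coassociativity of $\Delta_{L(A)}$ decompose into the associativity of $m_A$ (on the $A$-strands) plus a triple-fusion identity for the basis $\theta^\alpha_{(ij);k}$, which is just a two-step completeness/duality identity. The unit and counit laws follow at once because $U_0=\mathbf{1}$ makes the $U_i^{\ast}\otimes U_i$-part collapse. The Frobenius identities
\[
(\id\otimes m)\circ(\Delta\otimes \id)=\Delta\circ m=(m\otimes \id)\circ(\id\otimes \Delta)
\]
similarly reduce, on each summand, to the Frobenius identity for $A$ together with the completeness property applied to the $U_i$-legs (sum over intermediate simples pairs a fusion with its dual fusion and reproduces the identity on $U_i\otimes U_j$).

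Finally, for the symmetry statement I would show that the Frobenius-symmetry morphism of $L(A)$, built from $\epsilon_{L(A)}\circ m_{L(A)}$ and duality in $\Zsf(\Bsf)$, decomposes on the $U_i^{\ast}\otimes U_i$-summand into $\epsilon_A\circ m_A$ composed with the standard pivotal isomorphism, tensored with the canonical self-duality of $U_i^{\ast}\otimes U_i$. Thus the two sides of the symmetry equation for $L(A)$ differ exactly by the two sides of the symmetry equation for $A$, on each simple summand. Since the $U_0=\mathbf{1}$ summand already detects equality, $L(A)$ is symmetric iff $A$ is. The main obstacle is the bookkeeping in step two: checking half-braiding compatibility of $m_{L(A)}$ requires carefully tracking over-under crossings through the fusion coupons, and this is where the choice of the \emph{over-under} half braiding (rather than under-over) is essential, so the picture must be drawn with the same convention used to define $\beta^{ou}$.
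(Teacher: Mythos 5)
The paper does not actually prove this proposition: it is imported wholesale from \cite[Proposition~2.25]{Kong_2009}, where the argument factors through the abstract statement that $L$ is a \emph{Frobenius functor} (simultaneously lax and colax monoidal, with $\phi^L,\phi^L_{\mathbf 1},\psi^L,\psi^L_{\mathbf 1}$ satisfying the compatibility making any Frobenius functor carry Frobenius algebras to Frobenius algebras), plus a short graphical check for the symmetry claim. Your proposal is a legitimate alternative route: you build $m_{L(A)},\Delta_{L(A)},\eta_{L(A)},\epsilon_{L(A)}$ by hand from the fusion bases $\theta^\alpha_{(ij);k}$ and verify the axioms summand by summand. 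This is really an unpacking of the same structure --- your multiplication is $L(m_A)\circ\phi^L$ in disguise, with $\phi^L$ exactly as drawn later in Section~5 of this paper --- so what the cited proof buys is that associativity, the unit laws and the Frobenius identities follow once and for all from the Frobenius-functor compatibilities, whereas your route has to re-derive each of them graphically. Your treatment of the remaining points (half-braiding compatibility via naturality of $\beta$ and the fact that the fusion coupons are $\Bsf$-morphisms; symmetry detected summand-wise, with the $U_0$-summand giving the converse direction, which is a concrete form of the faithfulness of $L$) matches the structure of the cited argument.

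The one place where your sketch is genuinely at risk is normalization. The completeness relation in this category carries quantum-dimension factors ($\sum_{i,\alpha} d_i\, b^\alpha_i\circ b^i_\alpha=\id$), and correspondingly the colax structure used for $\Delta_{L(A)}$ and $\epsilon_{L(A)}$ is not the naive dual basis: the paper's $\psi^L$ carries the prefactor $\tfrac{d_id_j}{d_k\Dsf^2}$ and $\psi^L_{\mathbf 1}=\Dsf^2\,\id$. If you take ``the inverse fusion basis'' without these weights, the (co)unit laws and the Frobenius identities fail by factors of $d_i$ and $\Dsf^2$. Relatedly, you should check that your unit (the inclusion of $\eta_A$ into the $U_0$-summand alone) really is the two-sided unit for your multiplication --- units are unique, so this must be reconciled with $L(\eta_A)\circ\phi^L_{\mathbf 1}$; with the conventions of this paper it is, but only after the $\alpha$-sums in $\phi^L_{\mathbf 1}$ are seen to collapse to the $i=0$ term. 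Make the weights explicit and the argument goes through.
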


For finite categories $\Asf$, $\Bsf$ there exist a tensor product of categories. 
\begin{defn} The \textit{Deligne tensor product} $\Asf\boxtimes \Bsf$ has objects finite sums 
\eq{
\bigoplus A_i\boxtimes B_i,\quad A_i\in \Asf,\, B_i\in \Bsf
}
and morphism spaces 
\eq{
\hom_{\Asf\boxtimes \Bsf}(A_1\boxtimes B_1,A_2\boxtimes B_2)=\hom_\Asf(A_1,A_2)\otimes_\Kbb \hom_\Bsf(B_1,B_2)\quad.
}
\end{defn}
If $\Asf,\Bsf$ are fusion, their tensor product is also fusion with representatives for simple objects given by $U_i\boxtimes V_j$ where $U_i$ ($V_j$) are representatives of simple objects in $\Asf$ ($\Bsf$). For $\Asf,\Bsf$ braided tensor categories, $\Asf\boxtimes \Bsf$ is also a braided tensor category with tensor product 
\eq{
(A_1\boxtimes B_1)\otimes (A_2\boxtimes B_2)\equiv (A_1\otimes A_2)\boxtimes (B_1\otimes B_2)
}
and braiding
\eq{
\beta^{\Asf\boxtimes \Bsf}_{(A_1\boxtimes B_1),(A_2\boxtimes B_2)}=\beta_{A_1,A_2}\boxtimes \beta_{B_1,B_2}\; .
}
For a finitie ribbon category $\Csf$, let $\ov{\Csf}$ be the category with the same objects and morphisms, but with inverse braiding and twist.
In its most general form (even dropping semisimplicity) the following theorem is proven in \cite{SHIMIZU2019106778}. For semisimple categories, a proof is given in \cite{Muger}.
\begin{theo}\cite[Theorem~3.3]{SHIMIZU2019106778}\cite{Muger}
Let $\Csf$ be a finite ribbon category. If $\Csf$ is modular, there is a braided equivalence
\eq{
\Csf\boxtimes \ov{\Csf}&\rightarrow \Zsf(\Csf)\\
(A\boxtimes B)& \mapsto (A\otimes B,\beta^{ou}_{A\otimes B,\bullet})\quad .
}
Conversely, if $\Csf\boxtimes \ov{\Csf}\simeq \Zsf(\Csf)$ are braided equivalent, $\Csf$ is modular.
\end{theo}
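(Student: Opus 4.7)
The plan is to construct the candidate functor $F : \Csf\boxtimes \ov{\Csf} \to \Zsf(\Csf)$ explicitly, then verify step by step that it is braided monoidal and finally an equivalence. On objects set $F(A\boxtimes B) = (A\otimes B, \beta^{ou}_{A\otimes B,\bullet})$, with the over-under half-braiding using the braiding $\beta$ of $\Csf$ on the $A$-strand and its inverse (the braiding of $\ov{\Csf}$) on the $B$-strand; on morphisms $F(f\boxtimes g) = f\otimes g$. First I would verify that $\beta^{ou}_{A\otimes B,\bullet}$ satisfies the half-braiding hexagon and that $f\otimes g$ intertwines the two half-braidings; both are direct consequences of the hexagon axiom and naturality of $\beta$ in $\Csf$. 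For the monoidal constraint one takes the middle-swap isomorphism $\id_{A_1}\otimes \beta^{-1}_{A_2,B_1}\otimes \id_{B_2}$, and compatibility with the braidings reduces to the identity $\beta^{\Zsf(\Csf)}_{F(A_1\boxtimes B_1),F(A_2\boxtimes B_2)} = F(\beta_{A_1,A_2}\boxtimes \beta^{-1}_{B_1,B_2})$, which is a short string-diagram chase on four strands.

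The substantive work is showing $F$ is an equivalence. For fully-faithfulness, simple objects of $\Csf\boxtimes \ov{\Csf}$ are $U_i\boxtimes U_j$, and I would compute $\hom_{\Zsf(\Csf)}(F(U_i\boxtimes U_j), F(U_k\boxtimes U_l))$ as the subspace of morphisms $U_i\otimes U_j \to U_k\otimes U_l$ in $\Csf$ that intertwine the two over-under half-braidings. Testing the intertwiner condition against a simple object $U_m$, inserting the completeness property, and invoking Schur's lemma collapses this hom space to $\delta_{ik}\delta_{jl}\,\Kbb\,\id$, matching $\hom_{\Csf\boxtimes \ov{\Csf}}(U_i\boxtimes U_j, U_k\boxtimes U_l)$. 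Lemma \ref{circleline} is precisely the ingredient that makes the off-diagonal contributions vanish: summing $d_m$ times a loop of $U_m$ wrapped around a strand collapses that strand to the unit, producing a non-zero result only on the diagonal.

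Essential surjectivity is the main obstacle and is where modularity is truly used. The image of $F$ is a full, semisimple, braided subcategory of $\Zsf(\Csf)$ with global dimension $\Dsf^4$. Since $\Csf$ is modular, the global dimension of $\Zsf(\Csf)$ is also $\Dsf^4$ (a standard consequence of invertibility of $(s_{ij})$, again ultimately resting on Lemma \ref{circleline}); therefore the image leaves no room to miss any simple object, and $F$ is essentially surjective. Without modularity, the M\"uger center of $\Csf$ is non-trivial, $\dim \Zsf(\Csf)$ strictly exceeds $\Dsf^4$, and $F$ fails to be essentially surjective, which shows both the necessity of the hypothesis and the mechanism of obstruction.

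For the converse, a braided equivalence $\Csf\boxtimes \ov{\Csf}\simeq \Zsf(\Csf)$ forces $\dim \Zsf(\Csf) = \Dsf^4$, and by the same dimension characterization this implies invertibility of $(s_{ij})$, i.e.\ modularity of $\Csf$. Thus the two directions are two sides of the dimension identity $\dim \Zsf(\Csf) = \Dsf^4$, with the forward direction building the equivalence functor concretely via $\beta^{ou}$ and the backward direction extracting modularity from dimension counting alone.
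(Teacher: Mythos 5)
The paper does not prove this theorem; it imports it from M\"uger and Shimizu, so there is no in-paper argument to compare against and your proposal has to stand on its own. The forward direction of your sketch follows the standard (M\"uger-style) route and is essentially sound: the functor $F(A\boxtimes B)=(A\otimes B,\beta^{ou}_{A\otimes B,\bullet})$, full faithfulness on simples via the averaging projector and the killing-ring identity of Lemma \ref{circleline}, and essential surjectivity by comparing $\dim(\Csf\boxtimes\ov{\Csf})=\Dsf^4$ with $\dim\Zsf(\Csf)$.

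There is, however, a genuine gap in the converse and in your ``mechanism of obstruction''. The identity $\dim\Zsf(\Csf)=(\dim\Csf)^2=\Dsf^4$ is M\"uger's theorem for \emph{every} spherical fusion category of nonzero global dimension; it does not require, and is not equivalent to, invertibility of $(s_{ij})$. So your claims that $\dim\Zsf(\Csf)$ strictly exceeds $\Dsf^4$ for non-modular $\Csf$, and that a braided equivalence ``forces'' modularity through a dimension characterization, are false: dimension counting cannot distinguish the modular case. What actually fails without modularity is \emph{full faithfulness}: a nontrivial transparent simple $U$ yields $F(U\boxtimes\mathbf{1})\cong F(\mathbf{1}\boxtimes U)$ in $\Zsf(\Csf)$, so $F$ conflates non-isomorphic simples (the off-diagonal hom spaces do not vanish because Lemma \ref{circleline} is unavailable). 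The correct converse argument is instead: $\Zsf(\Csf)$ is always modular for $\Csf$ spherical fusion, so a braided equivalence makes $\Csf\boxtimes\ov{\Csf}$ modular; since the M\"uger center of $\Csf\boxtimes\ov{\Csf}$ contains $\Csf'\boxtimes\mathbf{1}$ where $\Csf'$ is the M\"uger center of $\Csf$, triviality of the former forces $\Csf'\simeq\Vectsf$, i.e.\ $\Csf$ is modular. With the converse rerouted through the M\"uger center (and the dimension identity in the forward direction attributed correctly), the proof is complete.
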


This implies in particular that the finite set of simple objects in $\Zsf(\Csf)$ is given by $(U_i\otimes U_j,\beta^{ou}_{ij,\bullet})$, for $i,j\in \Isf(\Csf)$.
%%%%%%%%%%%%%%%%%%%%%%%%%%%%%%%%%%%%%%%%%%%%%%%%%%%%%%%%%%%%%%%%%%%%%%%%%%%%%%%%%%
%%%%%%%%%%%%%%%%%%%%%%%%%%%%%%%%%%%%%%%%%%%%%%%%%%%%%%%%%%%%%%%%%%%%%%%%%%%%%%%%%%

\section{Cardy Algebras}\label{sec2}

\subsection{Motivation from Physics}

In a series of paper \cite{Huang:2005gz} \cite{Kong:2006wb} \cite{Huang:2006ar} \cite{Huang9964} \cite{huang1997intertwining} \cite{Huang:2002mx}\cite{huang2005differential}\cite{huang2004open} Huang and Kong gave a rigorous formulation of genus 0,1 two dimensional open-close conformal field theory in the language of partial operads. A textbook account of the results appeared in \cite{Huangbook}. The major outcome can be described in purely categorical terms and is given by the notion of a Cardy algebra. The abstract formulation and its relation to sewing constraints was developed in \cite{Kong_2009,Kong_2014}. Though we don't need it in the core of the paper, we still spend the next paragraph giving some intuition from CFT for the abstract formulation about to come.

One way of formalizing two dimensional CFT is given by vertex operator algebras (VOA) which describe the chiral and antichiral symmetry algebras in full CFT. Roughly speaking, a VOA encodes the operator-state correspondence and operator product expansions (OPEs). It has an underlying graded state space $V$ and a vertex operator map $Y:V\rightarrow \Endrm (V)[\![z^{-1},z]\!]$, where $z$ is a (formal) coordinate on the complex plane. There is a well studied notion of representations of VOAs, including fully reducible and irreducible representations. Under the assumption that $V$ is a \textit{rational} VOA,  \footnote{For the precise definition see e.g. \cite[Theorem~3.9]{doi:10.1142/S0219199705001799}.} its representation category $\Rcal_V$ is a modular tensor category. Assuming that a CFT at hand has chiral symmetry algebra $V^L$ and antichiral symmetry algebra $V^R$, both of which are rational, the closed state space decomposes into a sum $\Hcal_{cl}=\bigoplus_{ij} N_{ij} H^L_i\times  H^R_j$, where $H^L_i$, $H^R_j$ are the simple representations in $\Rcal_{V^L}$ and $\Rcal_{V^R}$, respectively. Hence it is naturally an object in $\Rcal_{V^L}\boxtimes \Rcal_{V^R}$. From a physics perspective the crucial object to compute are correlation functions, which in our situation split into products of chrial and antichiral correlation functions. It is well know that chiral correlation functions have an expansion in terms of so called \textit{conformal blocks} which contain all the information about conformal weights and insertion points of the chiral insertions. For a vector space $V$ and a complex number $z$ we denote 
\eq{
V\lbr z\rbr =\lbr \sum_{n\in \mathbb{Q}} v_n z^n\, \middle| \, v_n\in V\rbr 
}
for the space of fractional power series with coefficients in $V$. Three point conformal blocks $\Bcal(v_1,v_2,v_3)$ on the sphere for field insertions $v_i\in H_i^L$ at $(z_3,z_2,z_1)=(\infty,z,0)$ can be described in terms of \textit{intertwining operators} 
\eq{
\Ycal_{12}^3(\bullet, z):H_1\rightarrow \hom(H_2,H_3)\lbr z\rbr
}
satisfying
\eq{
\Bcal(v_1,v_2,v_3)=\la v_3,\Ycal(v_1,z)v_2\ra_{H_3}\, ,
}
where $\la \, \bullet\, ,\, \bullet\, \ra_{H_3}$ is a well defined invariant inner product on $H_3$\footnote{We again refer to e.g. \cite[section~3]{Huang:2005gz} for the precise details of invariant bilinear forms on representations of VOAs.}. The map $\Ycal_{12}^3(\bullet, z)$ is said to be of type $\binom{H_3}{H_1 H_2}$ and the dimension of the vector space of intertwiners of type $\binom{H_k}{H_i\, H_j}$ are precisely the fusion rules $N_{ij}^k$. Intertwining operators have an algebra structure, the so called \textit{intertwining operator algebra (IOA)}. The upshot is that the OPE algebra in the CFT can be conveniently casted in the form of intertwining operators and the state space $\Hcal_{cl}$ inherits an algebra structure in $\Rcal_{V^L}\boxtimes \Rcal_{V^R}$ from the tensor product of chiral and antichiral IOAs. Under the assumption that $\Hcalcl$ carries a non degenerate invariant bilinear form (which we assume in the presentation of intertwiners above already) $\Hcal_{cl}$ becomes a \textit{commutative Frobenius algebra with trivial twist in $\Rcal_{V^L}\boxtimes \Rcal_{V^R}$}. The genus one enhancement is possible if it is in fact a modular invariant Frobenius algebra, a notion we discuss shortly in categorical terms. So far the discussion was solely for closed states.

Including boundaries one also has to consider open states, which due to their localization on some boundary, have only half of the symmetry of closed states. Hence a boundary CFT with symmetry algebra $V$ has an open state space $\Hcal_{op}=\bigoplus_i N_i\, H_i$. Following the same lines as in the closed case, the open state space becomes a symmetric Frobenius algebra in $\Rcal_V$. Note that it will in general not be commutative owing to the fact that field insertions on an interval can't be interchanged along the interval. 

Lastly boundary and bulk fields should interact, i.e. there are bulk-boundary OPEs for bulk fields approaching the boundary. This should correspond to a map $\iota_{cl-op}:\Hcal_{cl}\rightarrow \Hcal_{op}$ satisfying certain compatibility relations. For this to work, we have to assume that left and right symmetry algebra of the closed theory agree. First of all, $\iotaclop$ should be an algebra map, since taking first bulk OPEs and then approaching the boundary and taking bulk-boundary OPEs better give the same as first approaching the boundary and taking bulk-boundary OPEs followed by taking boundary OPEs. Next it should be compatible with boundary OPEs and lastly it should commute with boundary OPEs as bulk fields can be transported along the bulk as shown in figure \ref{closedthroughopen}.

\begin{figure}[H]
\centering
\begin{tikzpicture}
\node at (0,0) {\includegraphics[scale=0.2]{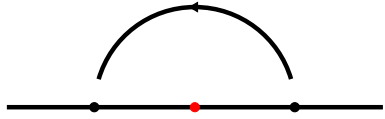}};
\end{tikzpicture}
\caption{The closed field insertion is moved along the half circle through the bulk past the open insertion (red dot).}
\label{closedthroughopen}
\end{figure}

Dually we could consider a map $\iota^\ast_{cl-op}:\Hcal_{op}\rightarrow \Hcal_{cl}$ mapping open field insertions to boundary states in the closed theory. Though only discussed heuristically we note that this can be made entirely concrete. It is shown e.g. in \cite[Proposition~2.8]{Kong_2008} that modulo technicalities $\iota^\ast_{cl-op}$ constructs Ishibashi states. It is well known that Ishibashi states in general don't correspond to true boundary states. Only linear combinations of Ishibashi states satisfying the \textit{Cardy condition} are valid boundary states due to open-close duality. 

\subsection{Categorical Definition of Cardy Algebras}

As stated in section \ref{sec1}, for a modular tensor category $\Csf$, there is a braided equivalence $\Csf\boxtimes \ov{\Csf}\simeq \Zsf(\Csf)$. As an open-closed CFT necessarily has coincident left and right symmetry algebras, the appropriate representation category for the closed theory to take place, is $\Rcal_{V\otimes V}\simeq \Rcal_V\boxtimes \ov{\Rcal_V}$. To match the description for string-net spaces we formulate Cardy algebras in terms of $\Zsf(\Rcal_V)$ instead of $\Rcal_V\boxtimes \ov{\Rcal_V}$.

\begin{defn}\cite[Definition~3.7]{Kong_2009}
Let $\Csf$ be a modular tensor category. A \textit{$(\Csf|\Zsf(\Csf))$-Cardy algebra} $(\Hcalcl,\Hcalop,\iotaclop)$ is the data of 
\begin{enumerate}[label=\Alph*)]
\item a commutative symmetric Frobenius algebra $(\Hcalcl,\mcl,\etacl,\Dcl,\epcl)$ in $\Zsf(\Csf)$.
\item a symmetric Frobenius algebra $(\Hcalop,\mop,\etaop,\Dop,\epop)$ in $\Csf$.
\item a morphism $\iotaclop\in \hom_{\Zsf(\Csf)}(\Hcalcl,L(\Hcalop))$.
\end{enumerate}
This has to satisfy the following conditions
\begin{enumerate}[label=\Roman*)]
\item $\Hcalcl$ has to be \textit{modular}, i.e. there is the equality

\begin{center}
\begin{tikzpicture}
\node at (0,0) {\includegraphics[scale=0.2]{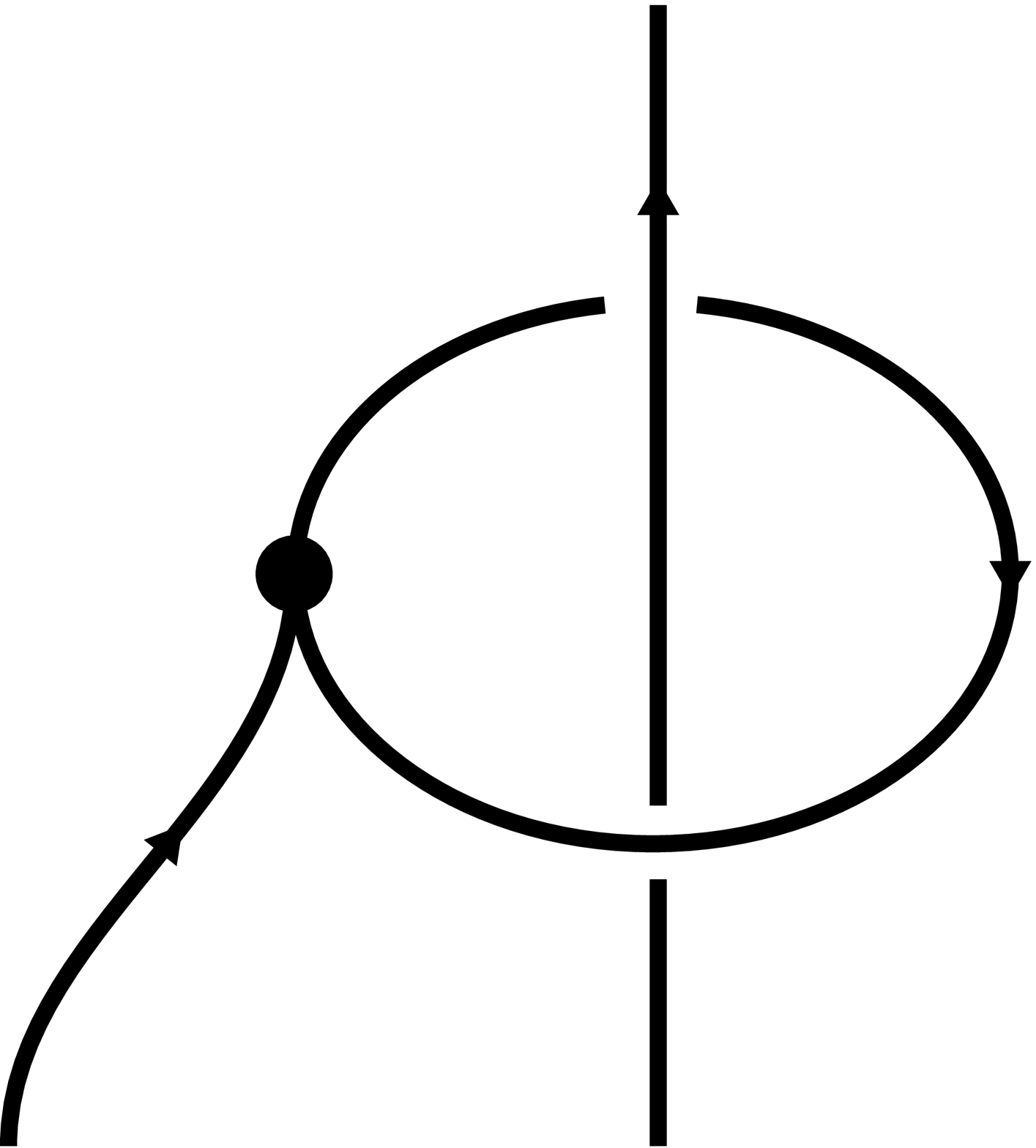}};
\node at (5,0) {\includegraphics[scale=0.2]{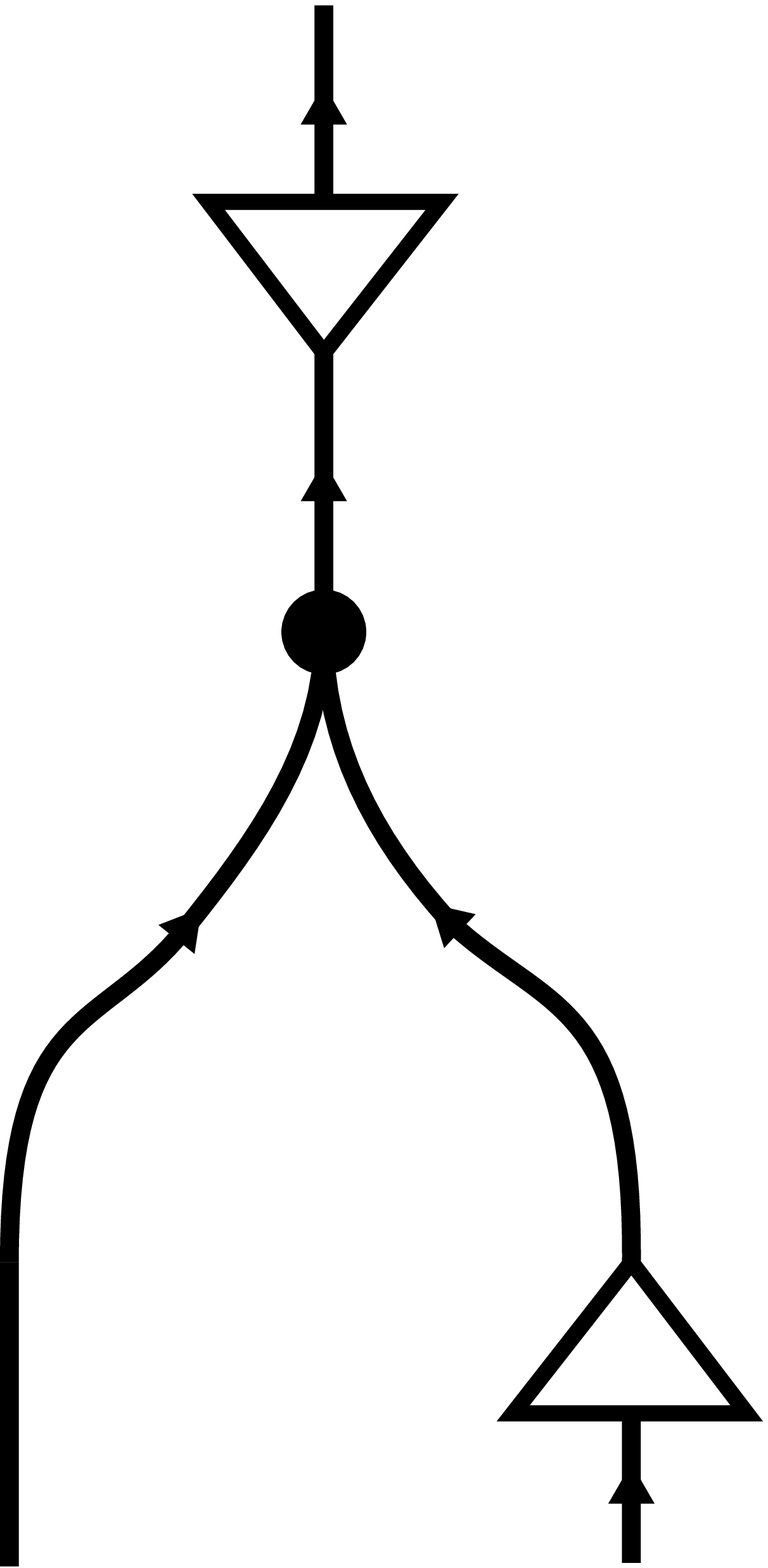}};
\node at (2.3,0) {$=$};
\node at (-2,-1.5) {$\Hcalcl$};
\node at (1,-1.5) { $i\otimes j$};
\node at (-2,0) {$\begin{aligned}\frac{d_id_j}{\Dsf^2}\end{aligned}$};
\node at (4.2,-1.8) {$\Hcalcl$};
\node at (6.3,-2.5) {$i\otimes j $};
\node at (5.8,-1.85) {\scriptsize $\alpha$};
\node at (5.2,0.9) {$\Hcalcl$};
\node at (5.4,2.2) {$i\otimes j$};
\node at (4.82,1.72) {\scriptsize $\alpha$};
\node at (3,0) {$\begin{aligned}\sum_{\alpha} \end{aligned}$};
\node at (6.2,-1.1) {$\Hcalcl$};
\end{tikzpicture}.
\end{center}
\item $\iotaclop$ is an algebra homomorphism.
\item The \textit{center condition} holds:
\begin{center}
\begin{tikzpicture}
\node at (0,0) {\includegraphics[scale=0.2]{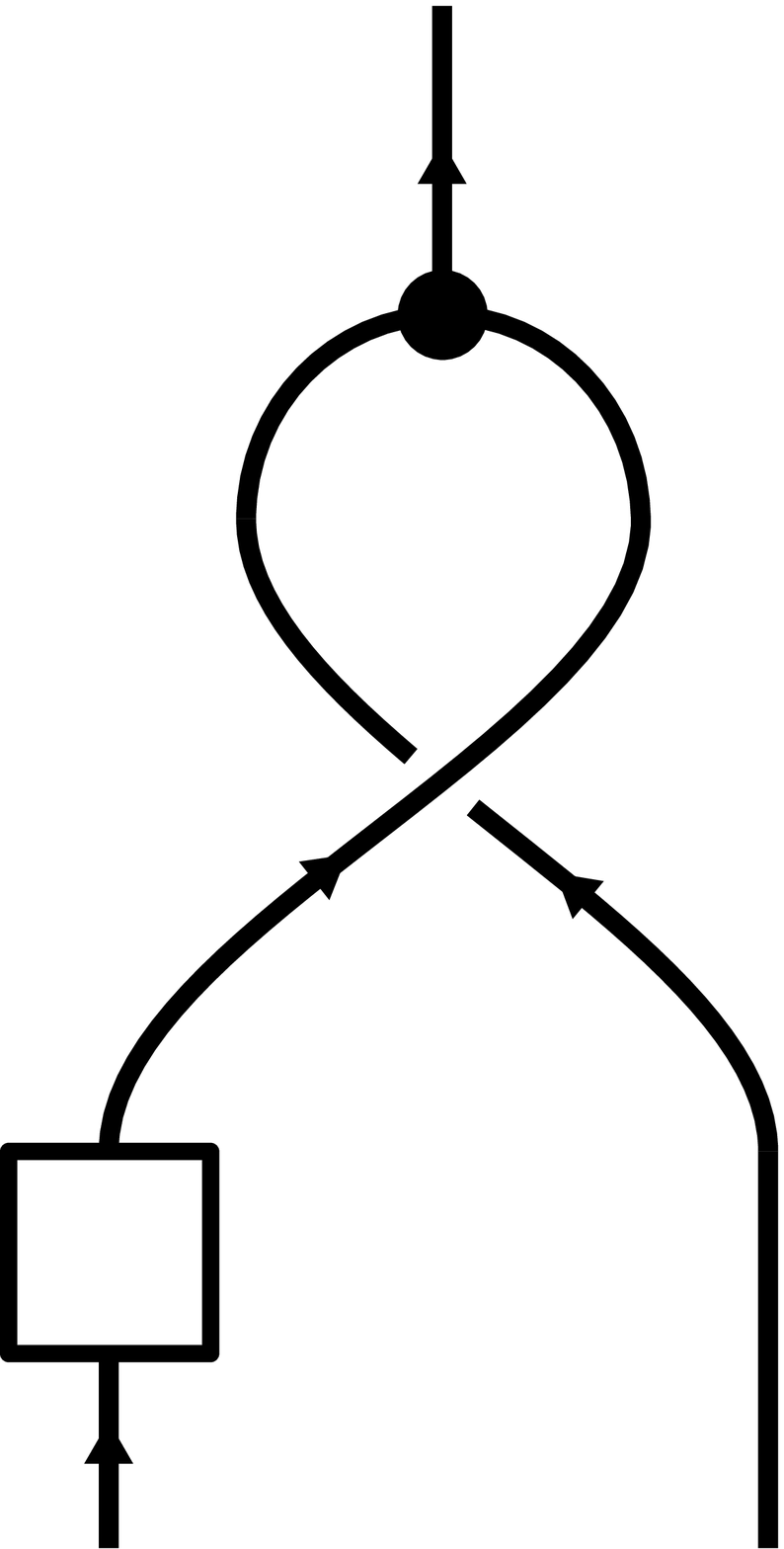}};
\node at (4,0) {\includegraphics[scale=0.2]{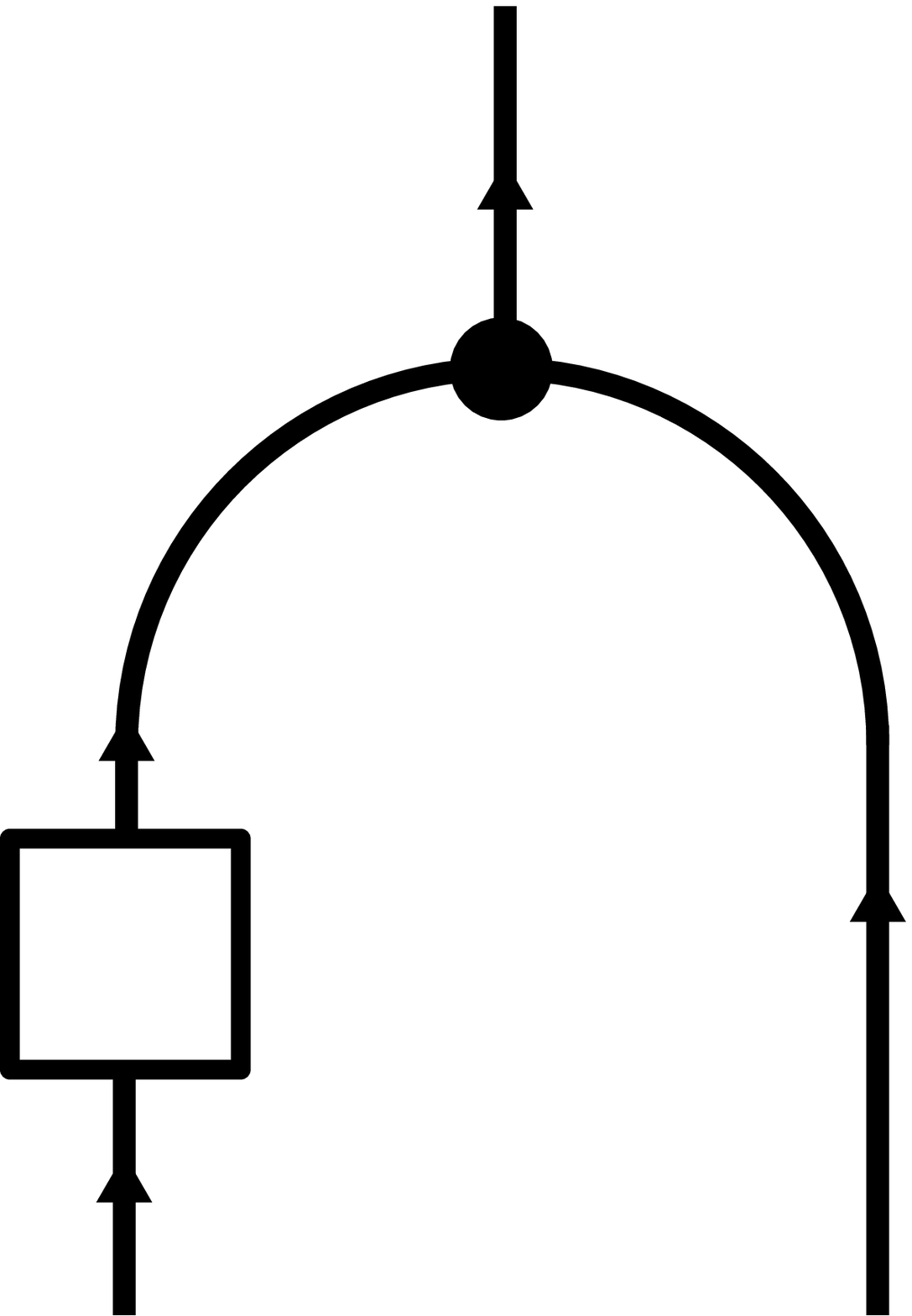}};
\node at (2,0) {$=$};
\node at (-1.4,-2.1) {$\Hcalcl$};
\node at (2,-2) {$L(\Hcalop)$};
\node at (1,2) {$L(\Hcalop)$};
\node at (-0.8,-1.45) {$\iota$};
\node at (3.1,-0.8) {$\iota$};
\node at (3.5,-1.6) {$\Hcalcl$};
\node at (6,-1.6) {$L(\Hcalop)$};
\node at (5,1.5) {$L(\Hcalop)$};
\end{tikzpicture}.
\end{center}
\item The \textit{Cardy condition} holds:
\begin{center}
\begin{tikzpicture}
\node at (0,0) {\includegraphics[scale=0.2]{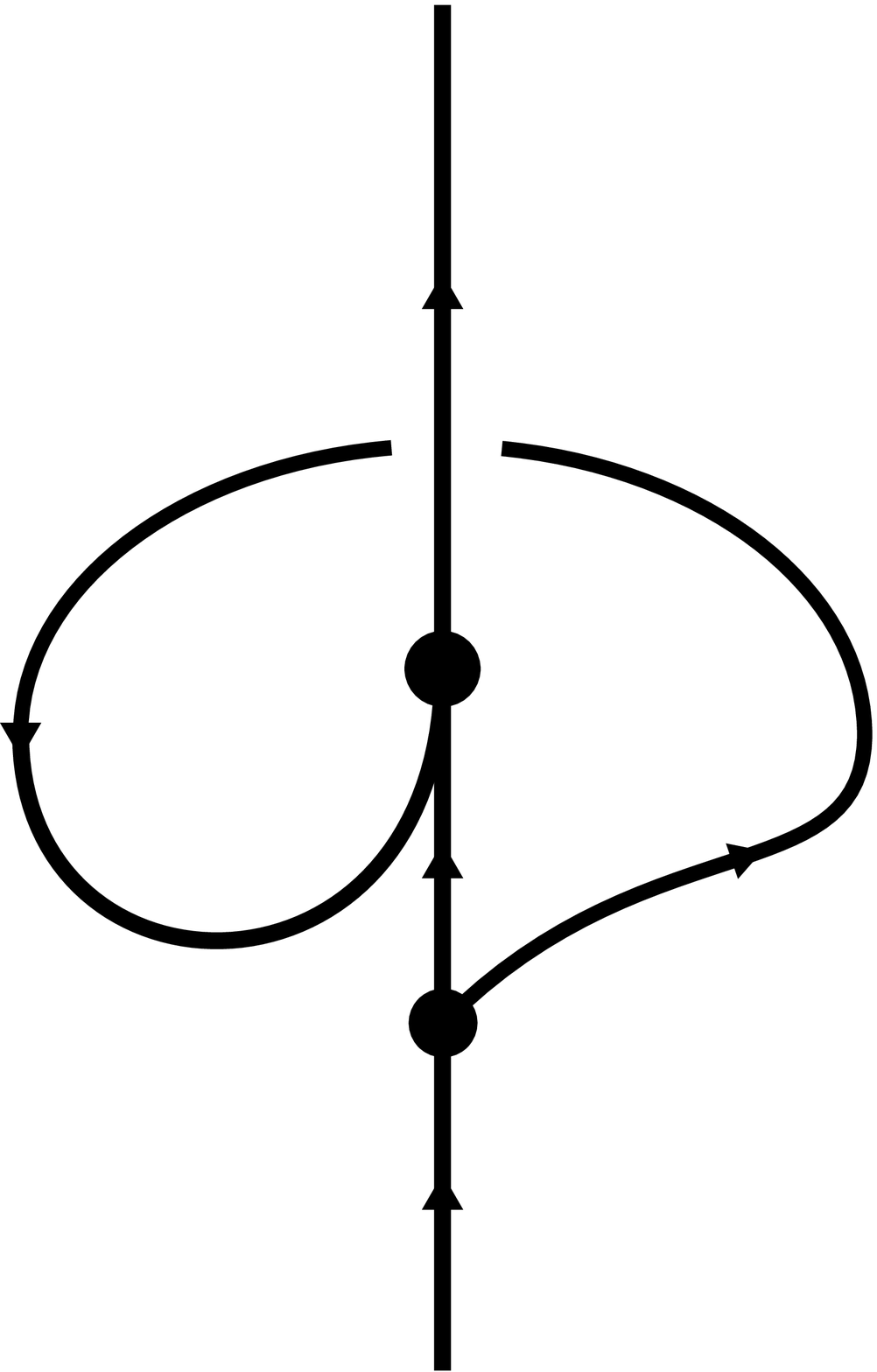}};
\node at (4,0) {\includegraphics[scale=0.2]{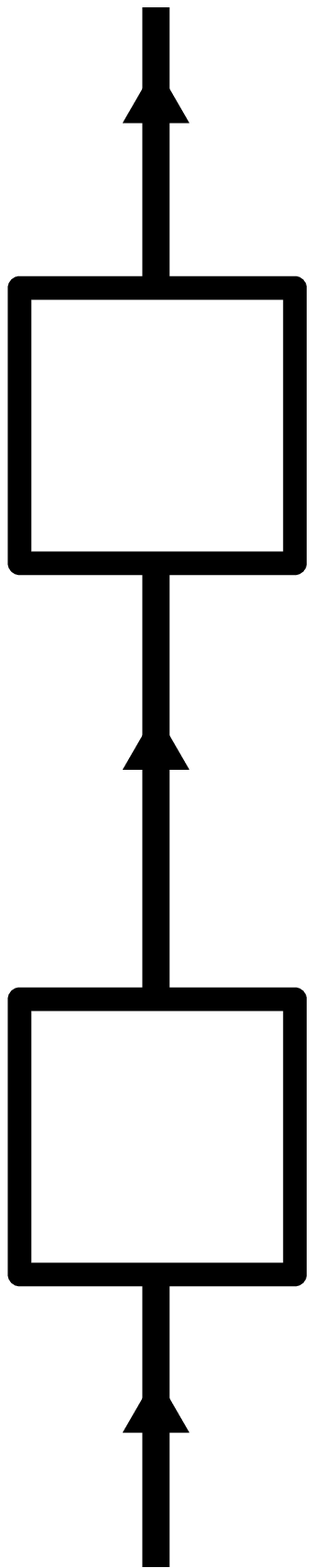}};
\node at (2,0) {$=$};
\node at (1,-2) {$L(\Hcalop)$};
\node at (5,-1.5) {$L(\Hcalop)$};
\node at (4,-0.75) {$\iota^\dagger$};
\node at (4,0.78) {$\iota$};
\node at (5,1.5) {$L(\Hcalop)$};
\node at (4.5,-0.1) {$\Hcalcl$};
\end{tikzpicture}.
\end{center}
\end{enumerate}
\end{defn}

\begin{defn} A morphism between $(\Csf|\Zsf(\Csf))$-Cardy algebras $(\Hcalcl,\Hcalop,\iotaclop)$ and $(\Gcalcl,\Gcalop,\iotaclop^\prime)$ is a pair of maps $f_{cl}\in \hom_{\Zsf(\Csf)}(\Hcalcl,\Gcalcl)$, $f_{op}\in \hom_\Csf(\Hcalop,\Gcalop)$ s.th.:
\begin{enumerate}[label=\Roman*)]
\item Both, $f_{cl}$ and $f_{op}$, are homomorphisms of Frobenius algebras.
\item The following diagram commutes
\begin{center}
\begin{tikzcd}
\Hcalcl\ar[rr, "f_{cl}"]\ar[dd, "\iotaclop"] & & \Gcalcl\ar[dd, "\iotaclop^\prime"]\\
& & \\
L(\Hcalop) \ar[rr, "L(f_{op})"'] & & L(\Gcalop)\; .
\end{tikzcd}
\end{center}

\end{enumerate}
\end{defn}
Using the map $(\bullet)^\dagger$, it is not hard to show that any morphism of Frobenius algebras has an inverse (see \cite[Lemma~2.18]{Kong_2009}). Thus, any morphism of Cardy algebras is in fact an isomorphism.
%%%%%%%%%%%%%%%%%%%%%%%%%%%%%%%%%%%%%%%%%%%%%%%%%%%%%%%%%%%%%%%%%%%%%%%%%%%%%%%%%%
%%%%%%%%%%%%%%%%%%%%%%%%%%%%%%%%%%%%%%%%%%%%%%%%%%%%%%%%%%%%%%%%%%%%%%%%%%%%%%%%%%

\section{String-Net Spaces}\label{sec3}

String-net spaces can be seen as a higher genus enhancement of graphical calculus for spherical categories which reduces to the usual graphical calculus on every embedded disk. For a surface $S$ with boundary $\p S$, let $\Gamma\subset S$ be an embedded, finite, oriented graph, which we always consider up to isotopy. Intersection points of $\Gamma$ with $\partial S$ are required to be vertices of valence one of $\Gamma$. For an oriented edge $\mathbf{e}$ the same edge with reversed orientation is denoted by $\ov{\mathbf{e}}$. 
\begin{defn} Let $\Csf$ be a spherical fusion category. A $\Csf$-coloring of $\Gamma$ consists of two parts: First, an assignment of an object $A(\mathbf{e})\in \Csf$ to any oriented edge $\mathbf{e}$ s.th. $A(\ov{\mathbf{e}})=A(\mathbf{e})^\ast$. Second, to a vertex $v$ with incident edges $\mathbf{e}_1,\dots, \mathbf{e}_n$, taken in counterclockwise order, the $\Csf$-coloring assigns an element
\eq{
\phi_v\in \la A(\mathbf{o}_1),\dots , A(\mathbf{o}_n)\ra , 
}
where $\mathbf{o}_i$ is the edge $\mathbf{e}_i$ oriented away from $v$. Hence, if an edge is oriented towards a vertex and colored with $A$, the morphism  corresponding to the vertex is an element in $\la \cdots A^\ast\cdots \ra $. As an example, consider a vertex $v$ with 6 incident edges $\lbr \mathbf{e}_1,\cdots, \mathbf{e}_6\rbr$, two of which are incoming, the rest outgoing. A $\Csf$-coloring is then given by

\begin{center}
\begin{tabular}{m{4cm} m{7cm}}
\includegraphics[scale=0.15]{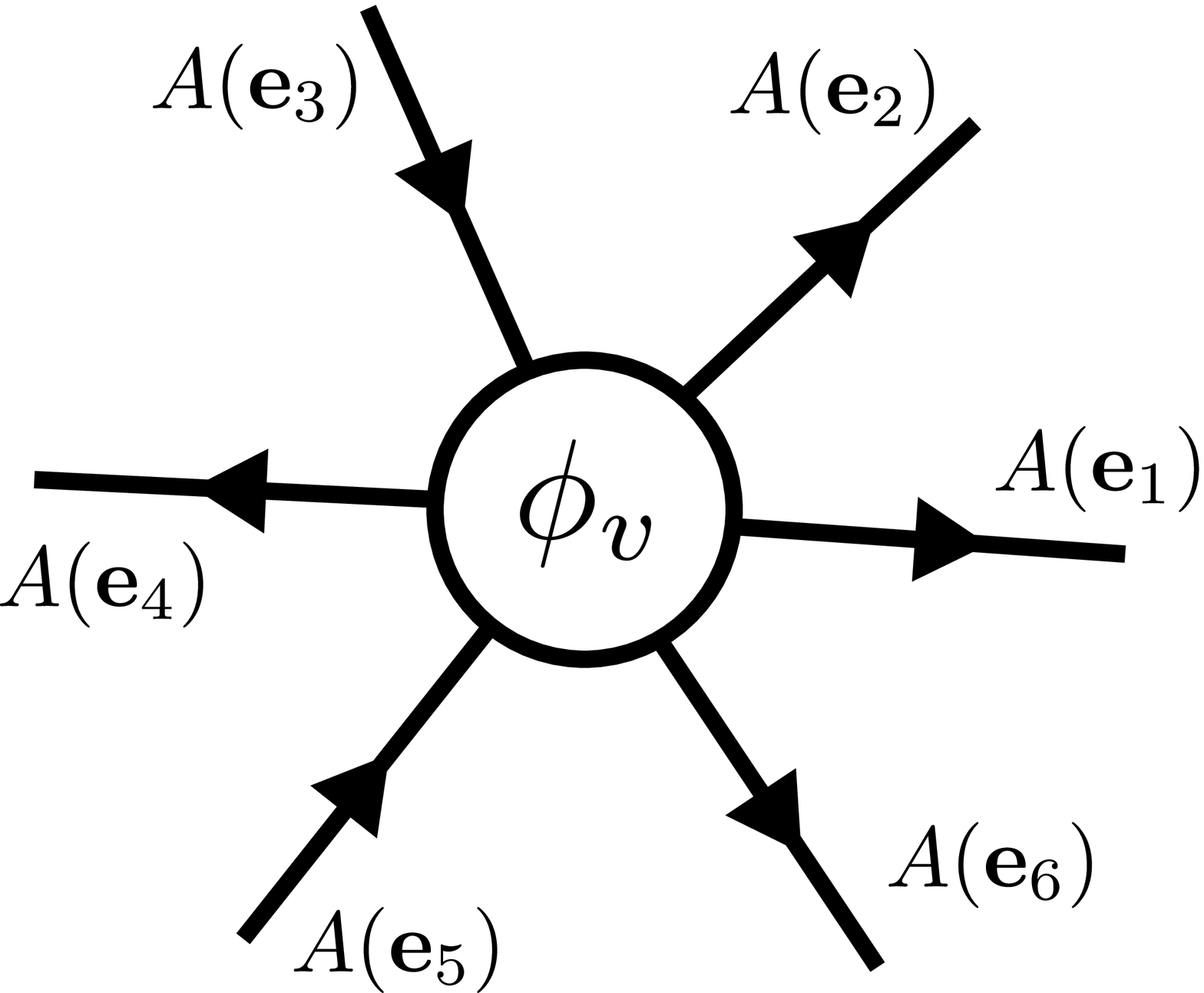} & $\phi_v\in \la A(\mathbf{e}_1),A(\mathbf{e}_2),A(\mathbf{e}_3)^\ast,A(\mathbf{e}_4),A(\mathbf{e}_5)^\ast,A(\mathbf{e}_6)\ra $.
\end{tabular}
\end{center}

An \textit{isomorphism of $\Csf$-colorings} is a collection of isomorphisms $f_\mathbf{e}:A(\mathbf{e})\rightarrow B(\mathbf{e})$ respecting orientations and mapping $\phi_v=f\circ \phi^\prime_v$. Here, $\phi_v$, $\phi^\prime_v$ are the maps assigned to the vertex $v$ by the two $\Csf$-colorings. The \textit{boundary value} for a $\Csf$-colored graph is a tuple $\left(\lbr p_1,\cdots , p_n\rbr ,\, \lbr A_1,\cdots, A_n\rbr \right)$, with $\lbr p_1,\cdots, p_n\rbr\simeq \lbr \Gamma\cap \p S\rbr $, and $A_i$ is the $\Csf$-color of the edge incident to the boundary vertex $v_i$, which corresponds to the intersection point $p_i$.
\end{defn}

Colored vertices have the graphical representation as coupons introduced in section \ref{sec1}. Let $\Gamma$ be an embedded graph as above and $D\subset S$ be an embedded closed disk whose boundary $\partial D$ is transversal to $\Gamma$. Let $\lbr A(\mathbf{e}_1),\dots , A(\mathbf{e}_n)\rbr $ be the colors of edges of $\Gamma$ intersecting $\p D$, taken in counterclockwise order. Then there is a unique surjective evaluation map \cite[Theorem~2.3]{2011arXiv1106.6033K}
\eq{
\la \bullet \ra_D:\Gamma\cap D\rightarrow \la A(\mathbf{o}_1),\dots, A(\mathbf{o}_n)\ra , 
}
satisfying a list of natural properties explicitly given in \cite[Theorem~2.3]{2011arXiv1106.6033K}. Here, similar to the definition of a $\Csf$-coloring, we use the notation $\mathbf{o}_i$ for the edge $\mathbf{e}_i$ with orientation towards the boundary $\p D$. The properties stated in \cite[Theorem~2.3]{2011arXiv1106.6033K} include local relations $\Gamma_1\cap D= \Gamma_2\cap D$, which have to be understood in the sense that $\la \Gamma_1\cap D \ra_D=\la \Gamma_2\cap D\ra_D$. One of these relations is e.g. 

\begin{center}
\begin{tikzpicture}
\node at (0,0) {\includegraphics[scale=0.2]{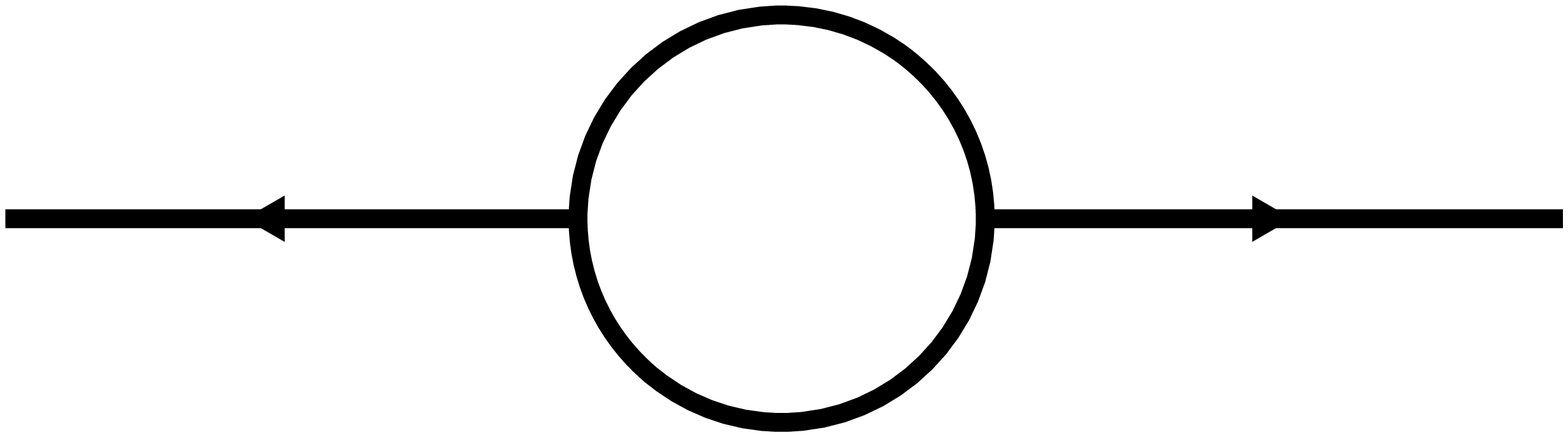}};
\node at (6,0) {\includegraphics[scale=0.2]{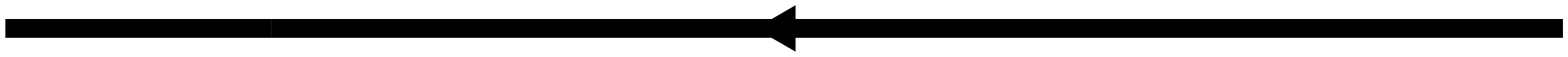}};
\node at (3,0) {$=$};
\node at (0,0) {$\coevrm_A$};
\node at (-2,-0.3) {$A$};
\node at (2,-0.3) {$A$};
\node at (3.8,-0.3) {$A$};
\end{tikzpicture}.
\end{center}
Another property is e.g. the fact, that vertices lying in a disk can always be merged into a single coupon.

\begin{defn} To any surface $S$ there is an associated vector space
\eq{
V\Graphrm(S, \mathbf{A})&=\text{formal finite $\Kbb$-linear combinations of colored graphs with }\\
&\phantom{=================}\text{boundary value $\mathbf{A}$} 
}
\end{defn}

\begin{defn} Let $\Gamma_i$ be $\Csf$-colored graphs and $x_i\in \mathbb{K}$, let $\mathbf{\Gamma}=\sum x_i \Gamma_i \in V\Graphrm(S,\mathbf{A})$. The vector $\mathbf{\Gamma}$ is called a \textit{null graph}, if there exists an embedded disk $D\subset S$ intersecting $\mathbf{\Gamma}$ transversally, s.th. $\Gamma_i|_{S\backslash D}=\Gamma_j|_{S\backslash D}$ and
\eq{
\la \mathbf{\Gamma}\ra_D=\sum x_i \la \Gamma_i \ra_D=0\quad .
}
The vector space of all null graphs with fixed boundary value will be denoted $N\Graphrm(S,\mathbf{A})$. 
\end{defn}

\begin{defn} The \textit{string-net space} on a surface with boundary value $\mathbf{A}$ is defined to be 
\eq{
H(S,\mathbf{A})=\frac{V\Graphrm(S,\mathbf{A})}{N\Graphrm(S,\mathbf{A})}\quad .
}
\end{defn}

So far boundary conditions are just sets of points on the boundary labeled by objects of $\Csf$. Of course boundary conditions should be subject to some natural relation as described in \cite[section~6]{2011arXiv1106.6033K}, which turn them into a category of boundary conditions. Let $N$ be an oriented one dimensional manifold and $\lbr p_1,\dots, p_n\rbr \subset N$ a finite subset of points. Let $\Bsf(N)$ be the category with objects $\left( \lbr p_1,\dots, p_n\rbr , \lbr B_1,\dots , B_n\rbr \right)\equiv \mathbf{B}$, where $B_i\in \Csf$ and morphism spaces are given by
\eq{
\hom_{\Bsf(N)}(\mathbf{B},\mathbf{B}^\prime )\equiv H(N\times I; \mathbf{B}^\ast, \mathbf{B}^\prime),
}
where we denote $\mathbf{B}^\ast=\left(\lbr p_1,\dots, p_n\rbr , \lbr B_1^\ast,\dots , B_n^\ast\rbr \right)$. Composition of morphisms is given by stacking cylinders on top of each other and concatenating string-nets across the internal copy of $N$, followed by a rescaling of the cylinder to unit length. The category of boundary values is defined to be the Karoubi envelope of $\Bsf(N)$, which by abuse of notation will be denoted by the same symbol. This category has all the nice properties to be expected, e.g. $\Bsf(N)\simeq \Bsf(N^\prime)$ for $N\simeq N^\prime $ and $\Bsf(N\sqcup N^\prime)\simeq \Bsf(N)\boxtimes \Bsf(N^\prime)$.

\begin{lem}\cite[Theorem~6.4]{2011arXiv1106.6033K} There are equivalences of categories $\Bsf(S^1)\simeq \Zsf(\Csf)$ and $\Bsf(\R)\simeq \Csf$.
\end{lem}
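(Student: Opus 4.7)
The plan is to construct explicit functors realizing both equivalences, with the disk evaluation map $\la \bullet \ra_D$ from Section \ref{sec3} as the main workhorse. The key preliminary observation is that for an embedded disk $D$ covering the entire surface, $\la \bullet \ra_D$ identifies $H(D, \mathbf{A}) \simeq \la A_1, \ldots, A_n \ra$ where the boundary labels are read counterclockwise with appropriate orientations.

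For $\Bsf(\R) \simeq \Csf$, first define a functor $\Psi: \Bsf(\R) \to \Csf$ sending $\mathbf{B} = (\{p_1 < \cdots < p_n\}, \{B_1, \ldots, B_n\})$ to $B_1 \otimes \cdots \otimes B_n$, and sending a string-net in $H(\R \times I; \mathbf{B}^*, \mathbf{B}')$ to the corresponding morphism via disk evaluation, since $\R \times I$ is topologically a disk with the boundary values distributed around its boundary. Conversely, define $\Phi: \Csf \to \Bsf(\R)$ on objects by $A \mapsto (\{0\}, \{A\})$ and on morphisms by inserting the morphism into a single-strand string-net on the strip. Functoriality of $\Psi$ is inherited from the gluing compatibility of disk evaluation, and then $\Psi \circ \Phi = \id_\Csf$ while $\Phi \circ \Psi \simeq \id_{\Bsf(\R)}$ via the string-nets that merge multiple boundary points into one; these are isomorphisms in $\Bsf(\R)$ by the completeness property.

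For $\Bsf(S^1) \simeq \Zsf(\Csf)$, the cylinder $S^1 \times I$ is not simply connected, so string-nets can wind around the non-contractible direction and this extra freedom is what produces the half-braiding. Define $G: \Bsf(S^1) \to \Zsf(\Csf)$ as follows: for $\mathbf{A}$ an object of $\Bsf(S^1)$, cut $S^1$ at a chosen basepoint disjoint from $\mathbf{A}$ to obtain an object of $\Bsf(\R)$, and apply $\Psi$ to produce the underlying object $G(\mathbf{A}) \in \Csf$. The half-braiding $\beta_{G(\mathbf{A}), X}: G(\mathbf{A}) \otimes X \to X \otimes G(\mathbf{A})$ is given by the string-net on the cylinder in which a strand colored by $X$ is transported around past the basepoint; compatibility with the tensor product of $\Zsf(\Csf)$ and naturality follow from isotopy invariance. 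Conversely, I would use the adjoint functor $L: \Csf \to \Zsf(\Csf)$ from Section \ref{sec1}: every object of $\Zsf(\Csf)$ is a retract of some $L(X)$, and $L(X) = \bigoplus_i X \otimes U_i^* \otimes U_i$ has an evident realization in $\Bsf(S^1)$ via single-point objects together with idempotents built from simple loops around the cylinder.

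The main obstacle will be establishing that $G$ is fully faithful and essentially surjective after Karoubi completion. Essential surjectivity requires showing that the idempotents in $\Endrm_{\Bsf(S^1)}((\{p\}, \{X\}))$ arising from loops winding around the cylinder split off every half-braided structure on retracts of $X$, which amounts to identifying string-net morphism spaces on the cylinder with tube-algebra matrix coefficients. This step leans crucially on modularity of $\Csf$ and on Lemma \ref{circleline} to decompose identity loops as sums over simple objects $U_i$. Full faithfulness then reduces to computing $H(S^1 \times I; \mathbf{A}^*, \mathbf{A}')$ by cutting along a longitudinal arc, applying the disk evaluation on the resulting strip together with completeness, and matching the result basis-by-basis with $\hom_{\Zsf(\Csf)}(G(\mathbf{A}), G(\mathbf{A}'))$.
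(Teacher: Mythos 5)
First, note that the paper does not prove this lemma at all: it is quoted verbatim from \cite[Theorem~6.4]{2011arXiv1106.6033K}, so there is no internal proof to compare against. Your proposal therefore has to be judged on its own. The $\Bsf(\R)\simeq\Csf$ half is essentially right and is the standard argument: the strip is a disk, so the evaluation map $\la\bullet\ra_D$ identifies $\hom_{\Bsf(\R)}(\mathbf{B},\mathbf{B}')$ with the corresponding $\hom_\Csf$ space, and merging boundary points gives the inverse up to isomorphism.

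The $\Bsf(S^1)\simeq\Zsf(\Csf)$ half has a genuine gap in the definition of the forward functor $G$. You propose to send $\mathbf{A}$ to its underlying tensor product in $\Csf$ equipped with a half-braiding obtained by transporting a strand around the cylinder. But a pre-Karoubi object such as $(\lbr p\rbr,\lbr X\rbr)$ does not carry a canonical half-braiding, and it does not correspond to an object of the form $(X,\beta)$: its endomorphism algebra $H(S^1\times I;X^\ast,X)$ is the tube algebra of $X$, which is strictly larger than $\Endrm_\Csf(X)$. Already for $X=\mathbf{1}$ the cylinder endomorphism space is spanned by the loops labelled by the simple objects $U_i$ and has dimension $|\Isf(\Csf)|$, whereas any object of $\Zsf(\Csf)$ with underlying object $\mathbf{1}$ has a one-dimensional endomorphism space; so the $G$ you define cannot be fully faithful. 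The correct statement is that $(\lbr p\rbr,\lbr X\rbr)$ corresponds to $L(X)=\bigoplus_i X\otimes U_i^\ast\otimes U_i$ (this is exactly what the paper's subsequent lemma $\hat{H}^s(D,L(A))=\hom_\Csf(\mathbf{1},A)$ and the discussion of the projector $P$ encode), and half-braided objects only appear as images of idempotents in the tube algebra after Karoubi completion. Your ``conversely'' paragraph contains the right idea, but it is inconsistent with your definition of $G$; the functor must be defined on the Karoubi envelope by sending the image of a tube-algebra idempotent to the corresponding direct summand of $L(X)$, with the half-braiding read off from how the idempotent interacts with strands crossing the cylinder. A secondary point: the equivalence $\Bsf(S^1)\simeq\Zsf(\Csf)$ holds for any spherical fusion category and does not ``lean crucially on modularity''; the tube-algebra computation uses semisimplicity, sphericality and the completeness relation, not invertibility of the $S$-matrix, so invoking Lemma \ref{circleline} here is misplaced.
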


Using this, one can give an enhancement of string-net spaces taking excited states\footnote{For the terminology we refer to \cite{2011arXiv1106.6033K}.} on the boundary of a surface into account. Let $S$ be a surface with boundary $\p S$ and $\mathbf{B}\in \Bsf(\p S)$. The extended string-net space is defined as the quotient 
\eq{
\hat{H}^s(S,\mathbf{B})=\frac{V\Graphrm (S,\mathbf{B})}{N\Graphrm (S,\mathbf{B})}
}
where 
\eq{
V\Graphrm (S,\mathbf{B})&=\text{formal vector space of finite $\Kbb$-linear combinations of pairs}\, (f,\Gamma) \\
&\phantom{==} \text{with}\, \Gamma \, \text{a graph on } S \, \text{with boundary value}\, \mathbf{A} \\
&\phantom{==}\text{and}\,  f\in \hom_{\Bsf(\p S)}(\mathbf{A},\mathbf{B})
}
and 
\eq{
N\Graphrm(S,\mathbf{B})&=\text{subspace of null graphs under local relations as before plus}\\
 &\phantom{==} \text{relation} \, (f\gamma,\Gamma)=(f,\gamma\Gamma)\, 
 \text{where} \, \gamma\in \hom_{\Bsf(\p S)}(\mathbf{B},\mathbf{B}^\prime), \\
 &\phantom{==} f\in \hom_{\Bsf(\p S)}(\mathbf{B}^\prime, \mathbf{A})
}
The following is a result of a series of papers \cite{KirillovBalsam}\cite{balsam2010turaevviro}\cite{balsam2010turaevviro} and \cite[Theorem~7.3]{2011arXiv1106.6033K}.
\begin{theo}\label{Stringnetvectorspace}
Let $S$ be a compact oriented surface of genus $g$ with boundary parameterized circles and objects $\mathbf{A}=\lbr A_1,\dots, A_{|\pi_0(\p S)|}=A_n\rbr $ objects in $\Zsf(\Csf)$. Then there are isomorphisms
\eq{
\hat{H}^s(S,\mathbf{A})\simeq Z_{TV,\Csf}(S,\mathbf{A})\simeq Z_{RT,\Zsf(\Csf)}(S,\mathbf{A})= \hom_{\Zsf(\Csf)}(\mathbf{1},A_1\otimes \dots \otimes A_n \otimes \left(L\right)^g)
}
where in the last vector space $L=\bigotimes_{i\in \Isf(\Zsf(\Csf))} U_i\otimes U_i^\ast $.
\end{theo}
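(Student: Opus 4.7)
The plan is to establish the chain of isomorphisms in three stages. First, for $\hat{H}^s(S,\mathbf{A}) \simeq Z_{TV,\Csf}(S,\mathbf{A})$, I would fix a sufficiently fine triangulation $T$ of $S$ compatible with the boundary markings of $\mathbf{A}$ and consider its Poincar\'e-dual trivalent graph $\Gamma_T^\vee$. The TV state space is spanned by $\Csf$-colorings of $\Gamma_T^\vee$ assigning simple objects to edges and basis morphisms of fusion spaces to vertices, quotiented by the Pachner $2$-$2$ and $1$-$3$ moves. Each such coloring determines a class in $\hat{H}^s(S,\mathbf{A})$, yielding a map $Z_{TV,\Csf}(S,\mathbf{A}) \to \hat{H}^s(S,\mathbf{A})$ that is well defined because the Pachner moves are local disk relations and hence null graphs. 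Surjectivity follows by isotoping any $\Csf$-colored graph into a regular neighborhood of $\Gamma_T^\vee$ and applying the completeness property stated in Section \ref{sec1} to expand arbitrary colorings as finite sums of simple-object colorings.

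Second, for $Z_{TV,\Csf}(S,\mathbf{A}) \simeq Z_{RT,\Zsf(\Csf)}(S,\mathbf{A})$, I would invoke the Balsam--Kirillov--Turaev--Virelizier comparison: when $\Csf$ is spherical fusion, $\Zsf(\Csf)$ is modular and the TV state spaces for $\Csf$ coincide with the RT state spaces for $\Zsf(\Csf)$. The essential geometric input is that a triangulation of $S$ together with the thickening $S\times I$ produces a $\Zsf(\Csf)$-colored ribbon graph whose RT evaluation reproduces the TV partition function, which on state spaces descends to the desired isomorphism. Third, to identify $Z_{RT,\Zsf(\Csf)}(S,\mathbf{A})$ with $\hom_{\Zsf(\Csf)}(\mathbf{1},A_1\otimes\cdots\otimes A_n\otimes L^g)$, I would use the standard genus-$g$ handlebody presentation of $S$: cutting along $g$ non-separating meridians and applying Lemma~\ref{circleline} inserts, at each handle, precisely the regular object $L = \bigoplus_{i\in\Isf(\Zsf(\Csf))} U_i\otimes U_i^\ast$.

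The principal obstacle is the treatment of the \emph{extended} boundary data. String-net boundary values in $\hat{H}^s$ are objects $\mathbf{B}\in\Bsf(\partial S)\simeq \Zsf(\Csf)^{\boxtimes n}$ (after Karoubi completion), whereas the TV and RT constructions are classically phrased for simple labels. One must therefore check that the identification constructed above is compatible with idempotent splittings. Since all three functors are additive and idempotent-complete in the boundary data, it suffices to construct the isomorphism on simple boundary labels and then extend by naturality on both sides. This extension step, while conceptually clean, is where the technical care resides and is precisely what is carried out in \cite{2011arXiv1106.6033K}. With these three pieces assembled, naturality in $\mathbf{A}$ and the mapping-class-group equivariance follow from the categorical definitions.
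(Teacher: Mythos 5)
The paper does not prove this theorem: it is stated with the preamble ``The following is a result of a series of papers'' and is imported wholesale from \cite{KirillovBalsam}, \cite{balsam2010turaevviro} and \cite[Theorem~7.3]{2011arXiv1106.6033K}, so there is no in-paper argument to compare yours against. Your sketch does follow the strategy of those references (dual graph of a triangulation for $\hat{H}^s \simeq Z_{TV}$, the Balsam--Kirillov/Turaev--Virelizier comparison for $Z_{TV}\simeq Z_{RT}$, and a handle decomposition for the final hom-space description), and your remark about the Karoubi-completed boundary category $\Bsf(\p S)\simeq \Zsf(\Csf)$ correctly isolates where the ``extended'' string-net space and the projector $P$ enter.

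Two points in your outline are genuinely incomplete rather than merely deferred. First, for $\hat{H}^s(S,\mathbf{A})\simeq Z_{TV,\Csf}(S,\mathbf{A})$ you establish that the map from TV colorings to string-net classes is well defined and surjective, but you never address injectivity. This is the hard half of \cite[Theorem~7.3]{2011arXiv1106.6033K}: one must show that the subspace $N\Graphrm(S,\mathbf{A})$ of null graphs is generated exactly by the local disk relations corresponding to the Pachner moves, i.e.\ that no further relations are imposed globally on the surface. Completeness gives you a spanning set, not a basis, so without an independent dimension count (or the inverse map built from the TV partition function on $S\times I$) the argument only yields a surjection. Second, in the last step the genus contribution $L^{\otimes g}$ does not come from Lemma~\ref{circleline}: it comes from the gluing/factorization axiom of the RT theory together with semisimplicity, which inserts a sum over simple intermediate labels $\bigoplus_{i\in\Isf(\Zsf(\Csf))}U_i\otimes U_i^\ast$ at each of the $g$ cuts. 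Lemma~\ref{circleline} is the killing-ring identity used later in the proof of Theorem~\ref{maintheo1}, and invoking it here conflates two different mechanisms. (Incidentally, the $\bigotimes$ in the paper's formula for $L$ should be a $\bigoplus$; your version is the correct one.)
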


The following lemma is straightforward. 
\begin{lem} Let $D$ be a closed disk and $A\in \Csf$. Then $\hat{H}^s(D,L(A))=\hom_\Csf(\mathbf{1},A)$.
\end{lem}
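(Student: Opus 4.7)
The plan is to invoke Theorem \ref{Stringnetvectorspace} directly and then reduce via the adjunction defining $L$. The closed disk $D$ is a compact oriented surface of genus $g=0$ with a single parameterized boundary circle $\partial D = S^1$, and under the equivalence $\Bsf(S^1)\simeq \Zsf(\Csf)$ the object $L(A)\in \Zsf(\Csf)$ specifies a legitimate boundary value for $D$. Feeding this data into Theorem \ref{Stringnetvectorspace} (with the genus zero factor $L^0=\mathbf{1}$ disappearing from the expression) produces the isomorphism
$$\hat{H}^s(D, L(A)) \simeq \hom_{\Zsf(\Csf)}(\mathbf{1}, L(A)).$$

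To finish, I would apply the defining adjunction between $L$ and the forgetful functor $F:\Zsf(\Csf)\rightarrow \Csf$. Since $F$ just forgets the half braiding, it sends the unit of $\Zsf(\Csf)$ to the unit of $\Csf$, so the adjunction yields
$$\hom_{\Zsf(\Csf)}(\mathbf{1}, L(A)) \simeq \hom_\Csf(F(\mathbf{1}), A) = \hom_\Csf(\mathbf{1}, A),$$
which is the desired conclusion.

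There is essentially no obstacle here, which is why the paper qualifies the lemma as straightforward: the argument is a one-line combination of the string-net/Reshetikhin--Turaev dictionary (Theorem \ref{Stringnetvectorspace}) with the universal property of the induction functor $L$. The only bookkeeping step worth mentioning is checking that treating $L(A)$ as an object of $\Bsf(\partial D)$ rather than of $\Zsf(\Csf)$ does not introduce any discrepancy, but this is exactly the content of the equivalence $\Bsf(S^1)\simeq \Zsf(\Csf)$ recalled immediately before the statement of Theorem \ref{Stringnetvectorspace}.
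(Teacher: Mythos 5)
Your argument is correct and is evidently the intended one: the paper itself offers no proof, merely calling the lemma straightforward, and the natural route is exactly the one you take --- specialize Theorem \ref{Stringnetvectorspace} to genus zero with a single boundary circle to get $\hat{H}^s(D,L(A))\simeq \hom_{\Zsf(\Csf)}(\mathbf{1},L(A))$, then use that $L$ is (two-sided, by \cite{Kong_2009}) adjoint to the forgetful functor $F$ with $F(\mathbf{1}_{\Zsf(\Csf)})=\mathbf{1}_\Csf$ to land in $\hom_\Csf(\mathbf{1},A)$. The only point worth making explicit is that the adjunction is being used in the direction $\hom_{\Zsf(\Csf)}(Z,L(A))\simeq\hom_\Csf(F(Z),A)$, i.e.\ $L$ as right adjoint of $F$, which holds here because $L$ is a two-sided adjoint for a modular tensor category.
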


Let $P\in \hom_{\Bsf(S^1)}(A,A)$ be the string-net

\begin{figure}[H]
\centering
\begin{tikzpicture}
\node at (0,0) {\includegraphics[scale=0.2]{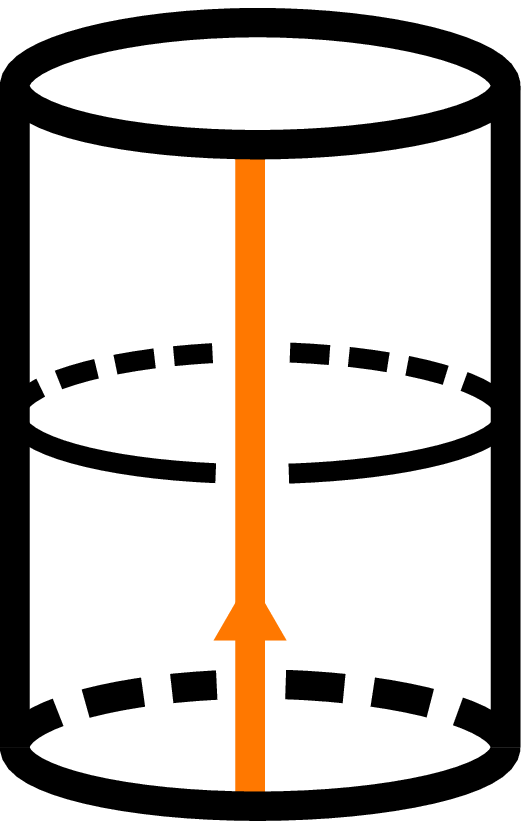}};
\node at (4,0) {\includegraphics[scale=0.2]{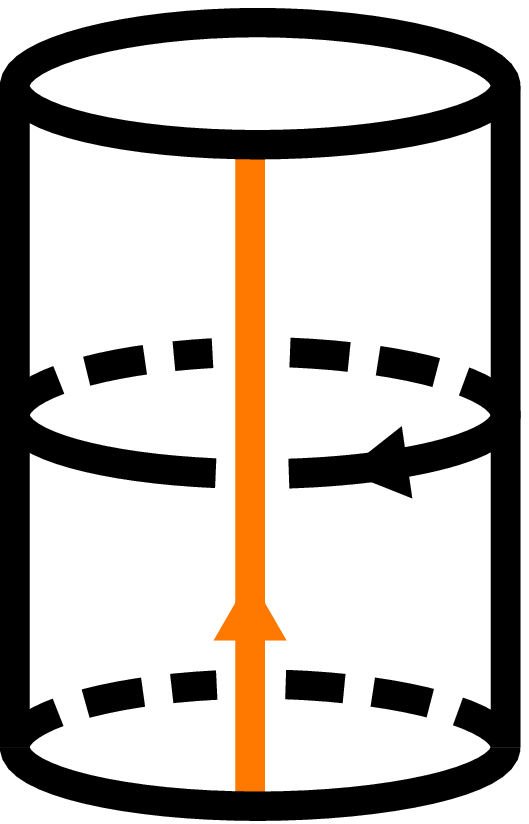}};
\node at (1.3,0) {$=$};
\node at (2.3,0) {$\begin{aligned} \sum_{k\in \Isf(\Csf)}\frac{d_k}{\Dsf^2}\end{aligned}$};
\node at (4.3,-0.3) {\scriptsize $k$};
\end{tikzpicture}
\caption{Drinfeld-center projector}
\label{projector}
\end{figure}
which we call the \textit{projector} and the circle winding around the circumference we call the \textit{projector circle}. In \cite{2011arXiv1106.6033K} it is shown, that the extended string-net space is the image of the projector $P$ in the string-net space. Graphically this means, that elements of the extended string-net spaces are represented by string-nets with additional projectors $P$ placed at each boundary component, as shown in figure \ref{stringnet}.

\begin{figure}[H]
\centering
\begin{tikzpicture}
\node at (0,0) {\includegraphics[scale=0.2]{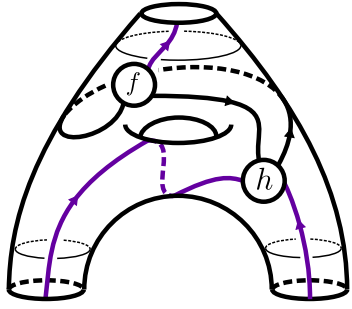}};
\end{tikzpicture}
\caption{Example of a projected string-net on a genus 1 surface.}
\label{stringnet}
\end{figure}

\section{World Sheets, Sewing Constraints and the Block Functor}\label{sec4}

\subsection{The Category of Open-Closed World Sheets} 

In order to define a consistent system of correlators we have to give an appropriate category of open-closed world sheets for which our construction computes correlators. Since we are considering open-closed world sheets, this needs a fair bit of data. Luckily an appropriate category is defined in \cite{Fjelstad:2006aw,Kong_2014} and the first part of this section recalls this definition as well as the notion of sewing constraints given in \cite[section~3.2]{Kong_2014}. Most of the problems concerning open-closed world sheets is caused by properly disentangling open and closed boundaries, which leads to the orientation double. 

\begin{defn} An \textit{open-closed world sheet} is the data 
\eq{
\hat{S}=(\tilde{S},\iota_S,B^i_S,B^o_S,\orrm_S,\ordrm)
}
where 
\begin{enumerate}[label=\Alph*)]
\item $\tilde{S}$ is an oriented topological surface with boundary $\p \tilde{S}$.
\item $\iota_S$ is an orientation reversing involution whose fixed point set is a submanifold. The quotient $S=\tilde{S}/\lbr x\sim  \iota_S(x)\rbr $ is a manifold and $\pi_S:\tilde{S}\rightarrow S $ is a $\z_2$-bundle. Thus, $\tilde{S}$ is the orientation double of $S$.
\item $B^i_S$, $B^o_S$ is a disjoint partition of $\pi_0(\partial \tilde{S})$ in incoming and outgoing boundary components which is fixed by $\iota_S$. Fixed points of the induced map $\iota_S:\pi_0(\partial\tilde{S})\rightarrow \pi_0(\partial \tilde{S})$ are called open boundaries. The set of open boundaries is denoted $B_{op}$ and its complement in $\pi_0(\partial \tilde{S})$ is $B_{cl}$.
\item $\orrm_S:S\rightarrow \tilde{S}$ is a global section of $\pi_S$.
\item $\delta_S:\p \tilde{S}\rightarrow S^1$ is a boundary parameterization being a homeomorphism on every connected component s.th. $\delta_S\circ \iota_S(y)=\ov{\delta_S(y)}$, where $\ov{\bullet}$ denotes complex conjugation and $y\in \partial\tilde{S}$. For a fixed point $b\in \pi_0(\tilde{S})$ of $\iota_S$, it has to hold $\delta|_b^{-1}(S^1\cap \Hbb)=\im(\orrm)|_b$.
\item $\ordrm:\pi_0(\p \tilde{S})\rightarrow \lbr 1, \dots, |\pi_0(\tilde{S})|\rbr $ is an ordering function of boundary components for which we first demand that $\ordrm(o)<\ordrm(c)$ for $o\in B_{op}$ and $c\in B_{cl}$. Second, for a connected set $P\subset \p S$ having non-trivial intersection with a physical boundary (see the next paragraph) and $\tilde{P}\subset \p\tilde{S}_{op}$ with $\pi_S(\tilde{P})\subset P$ there has to exist an $n\in \lbr 1,\dots , |\pi_0(\tilde{S})|\rbr$ s.th. $\ordrm(\tilde{P})=\lbr n,n+1,\dots, n+|\tilde{P}|-1\rbr $ where the ordering of components is cyclically along the orientation of $P$.
\end{enumerate}
\end{defn}

\begin{defn} A \textit{sewing} of a world sheet $\hat{S}$ has data 
\begin{enumerate}[label=\Alph*)]
\item a subset $S_B\subset B_S^i\times B^o_S$ s.th. if $(i,o)\in S_B$ there are no elements $(i,o^\prime)$ or $(i^\prime,o)$ in $S_B$.
\item for $(i,o)\in S_B$ it follows that $(\iota_S(i),\iota_S(o))\in S_B$.
\item either $(i,o)\in B^i_{op}\times B^0_{op}$ or $(i,o)\in B_{cl}^{i;l,r}\times B_{cl}^{o;l,r}$.
\end{enumerate}
The \textit{sewn world sheet} $\widehat{\Scal(S)}$ has 
\begin{enumerate}[label=\Roman*)]
\item $\widetilde{\Scal(S)}=\tilde{S}/\sim$, where $\delta|_i^{-1}(z)\sim \delta^{-1}|_o\left(\ov{-z}\right)$. Let $p_\Scal:\tilde{S}\rightarrow \widetilde{\Scal(S)}$ be the projection.
\item condition B) ensures that there is a well defined involution $\iota_{\Scal(S)}$, defined via $\iota_{\Scal(S)}\circ p_\Scal=p_\Scal\circ \iota_S$.
\item $B^i_{\Scal(S)}=\lbr i\in B_S^i| (i,\bullet)\notin S_B\rbr $ and $B^o_{\Scal(S)}=\lbr o\in B_S^o| (\bullet,o)\notin S_B\rbr $
\item $\orrm_{\Scal(S)}$ is the unique section whose image in $\widetilde{\Scal(S)}$ is the image of $\pi_{\Scal(S)}\circ \orrm_S$.

\end{enumerate}

\end{defn}

There is an additional requirement on the ordering function. For the details we refer to \cite{Kong_2014}. Note that the glueing defined above gives in addition a glueing projection $\Scal_S:S\rightarrow \Scal(S)$ given by $\Scal_S=\pi_{\Scal(S)}\circ  \pi_\Scal\circ \orrm_S$. We will be mainly concerned with $S$ instead of its orientation double $\tilde{S}$. Its boundary components decompose into three different types: open, closed and physical.
\begin{figure}[h]
\centering
\includegraphics[scale=0.15]{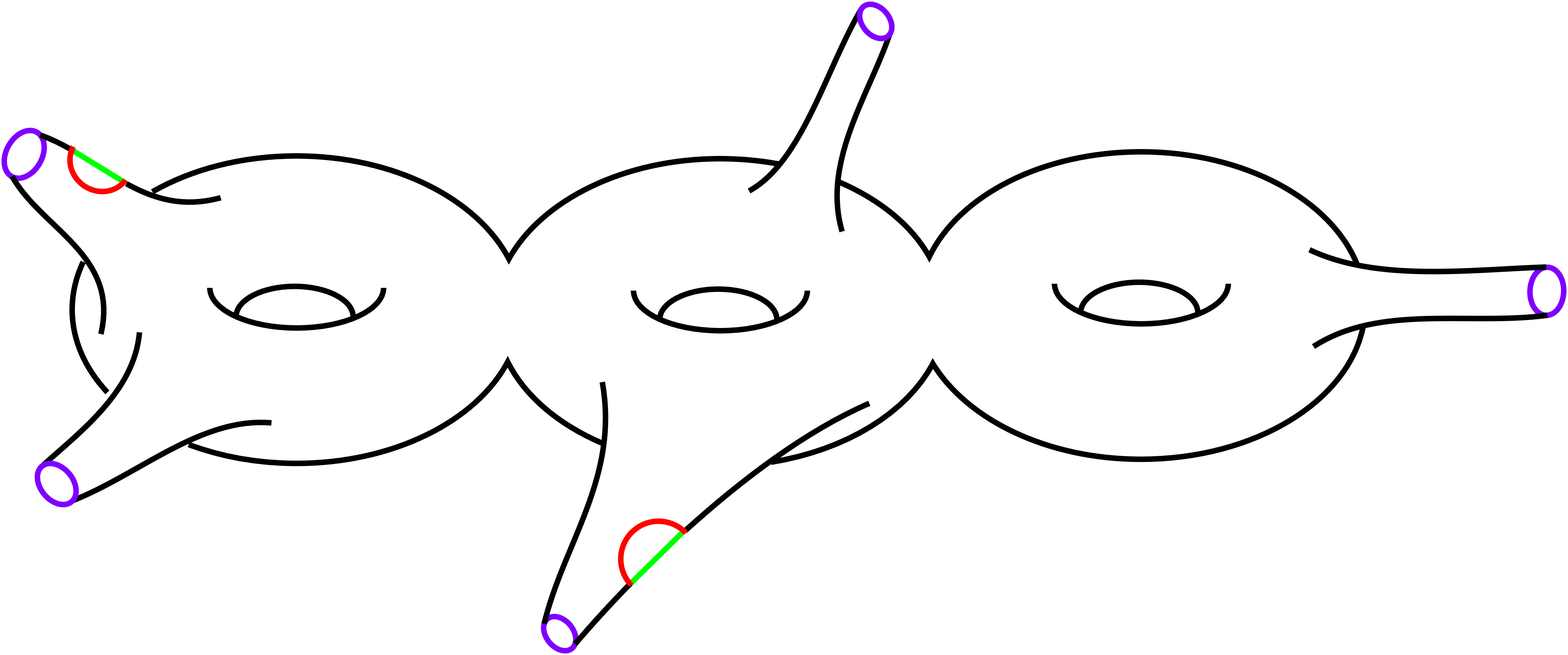}
\caption{The quotient surface of a genus 3 open-closed world sheet with closed boundaries shown in purple. Open boundaries are colored green and physical boundaries are red.}
\end{figure}

\begin{enumerate}[label=\roman*)]
\item Closed boundaries: A point $p\in \p S$ is on a closed boundary if its preimages under $\pi_S$ lie on different connected components of $\p \tilde{S}$. This implies that connected components of closed state boundaries are homeomorphic to $S^1$. Their preimages are pairs $(b_{cl},\iota_S(b_{cl}))$ of connected components of boundaries in $\tilde{S}$ and $\orrm_S$ identifies them with one of the two boundaries.
\item Open boundaries: A point $p\in \p S$ is on an open boundary if its preimages under $\pi_S$ are on the same connected component in $\p \tilde{S}$. Hence its preimage is on a component $b_{op}\in B_{op}$. Since $\iota_S$ was orientation reversing it acts on $b_{op}$ as a reflection. A reflection on $S^1$ has two fixed points and connected components of open boundaries map to one of the open intervals stretching between the fixed points.

\item Physical boundaries: $p\in \p S$ is on a physical boundary if its preimage is on the fixed point set of $\iota_S$. In particular, preimages of physical boundaries aren't boundary components of $\tilde{S}$ except for the fixed points of $\iota_S$ on open components of $\pi_0(\tilde{S})$. Rather they correspond to curves on $\tilde{S}$ s.th. cutting $\tilde{S}$ along the curves results in two copies of $S$ mapped to each other by the involution.
\end{enumerate}

\begin{defn} A \textit{homeomorphism of world sheets} is a homeomorphism $F:\tilde{S}\rightarrow \tilde{T}$ s.th. 
\eq{
F\circ \iota_S=\iota_T\circ F,\quad \delta_T\circ F=\delta_S,\quad F(B^{i;o}_S)=B^{i,o}_T,\quad F\left(\im(\orrm_S)\right)=\im(\orrm_T)
}
\end{defn}
The last point implies in particular, that $f$ steps down to a homeomorphism $f:S\rightarrow T$ preserving all types of boundaries.

\begin{defn} The \textit{category of world sheets $\WSsf$} has objects world sheets and morphisms $\hom_\WSsf(\hat{S},\hat{T})$ are given by pairs $(\Scal,F)$ where $\Scal$ is a sewing of $\hat{S}$ and $F:\widetilde{\Scal(S)}\rightarrow \tilde{T}$ is a homeomorphism of world sheets. For the definition of the composition we refer to \cite{Kong_2014}.
\end{defn}

$\WSsf$ is a symmetric monoidal category with the usual tensor product given by disjoint union. In addition two morphisms $(\Scal_1,F_1)$, $(\Scal_2,F_2)$ in $\WSsf$ are \textit{homotopic} if $\Scal_1=\Scal_2$ and $F_1$, $F_2$ are isotopic maps.

\begin{defn} Let $\Funsf_\otimes (\WSsf,\Vectsf)$ be the category of symmetric monoidal functors assigning the same map to homotopic sewings. Morphisms are monoidal natural transformations.
\end{defn}

\subsection{Generating Set and Sewing Constraint Relations}\label{relations}

The category $\WSsf$ has a set of generating world sheets $\lbr S_i\, |\, i\in G\rbr $ which we give in appendix A. It is generating in the sense that for any other world sheet $S$, there exists a list of generating world sheets $S_1,\dots , S_n$ and a morphism $(\Scal,F):S_1\otimes \dots \otimes S_n\rightarrow S$. None of this data needs to be unique. The generating set allows to reduce the discussion of functors and natural transformations almost completely to the generating set and a set of relations among them. To be precise, consider triples of generating data $(S,\lbr S_i,\rbr ,(\Scal,F))$ and functors $\Phi,\Psi\in \Funsf_\otimes(\WSsf,\Vectsf)$. In addition assume that $\Psi((\Scal,F))$ is an invertible linear map. Consider a collection of linear maps 
\eq{
\Gcal_i:\Psi(S_i)\rightarrow \Phi(S_i),\quad i\in G
}
defined for the generating set. To any world sheet $S$ one can associate the map
\eq{
\Gcal(S)\equiv \Phi((\Scal,F)\circ (\gcal_{i_1}\otimes \dots \otimes \gcal_{i_r})\circ \Psi((\Scal,F))^{-1}
}
where $(\Scal,F):S_{i_1}\otimes \dots \otimes S_{i_r}\rightarrow S$ is the morphism from the generating property. Next, there are 32 important different glueings of world sheets. We present them in terms of 32 relations $\lbr R_i\rbr $, where the lhs $R_{i,l}$, and rhs $R_{i,r}$, of the $i$-th relation are different glueings of the same underlying manifold, or together indicate an action of the mapping class group (R24, R25 and R32). In the following we present the relations in terms of the corresponding quotient surfaces of world sheets (cf. \cite{Kong_2014}). Red curves indicate how the world sheet displayed is glued from easier parts. The blue flag on glueing curves indicates the direction of glueing. For the part containing the flag, an incoming boundary is glued. In the figures blue boundaries denote in-boundaries and green ones correspond to out-boundaries.
\begin{enumerate}[label=\Roman*)]
\item \textbf{Open Relations:}

\begin{center}
\begin{tikzpicture}
\node (R1l) at (0,0) {\includegraphics[scale=0.12]{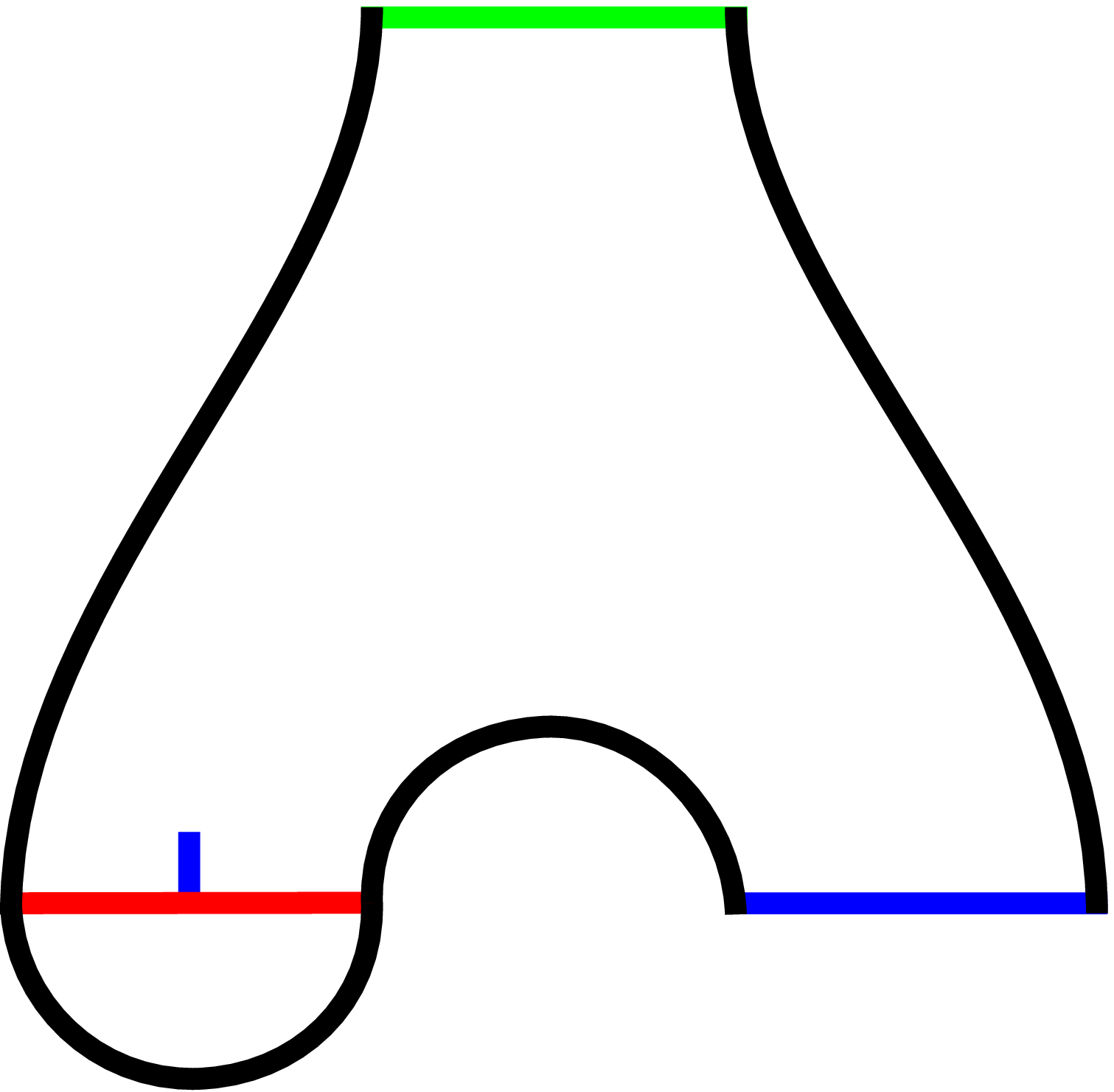}};
\node (R1r) at (3,0) {\includegraphics[scale=0.12]{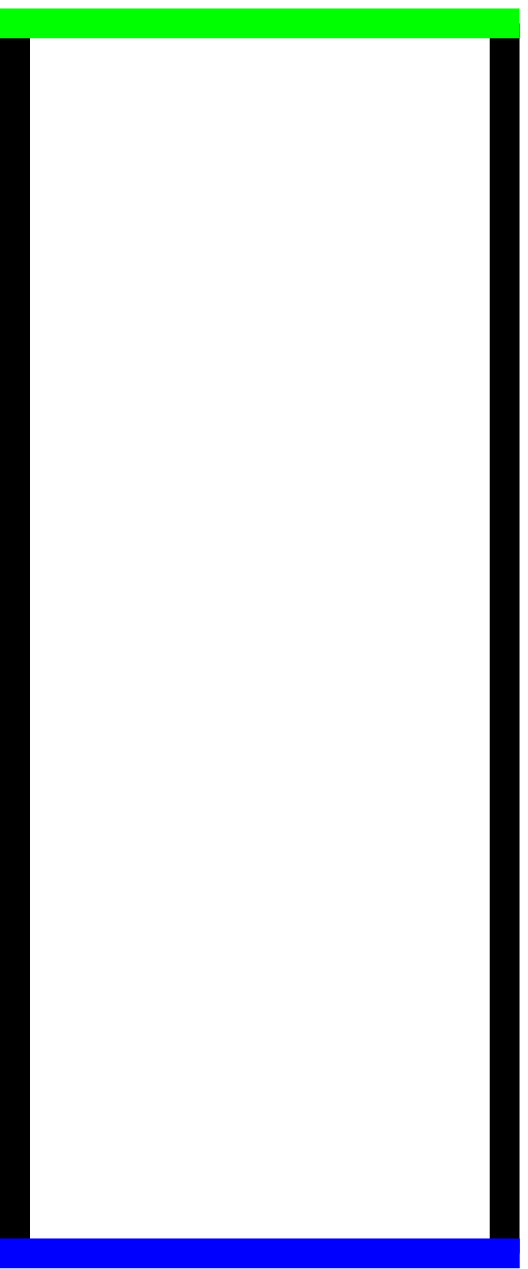}};
\node (R2l) at (7,0) {\includegraphics[scale=0.12]{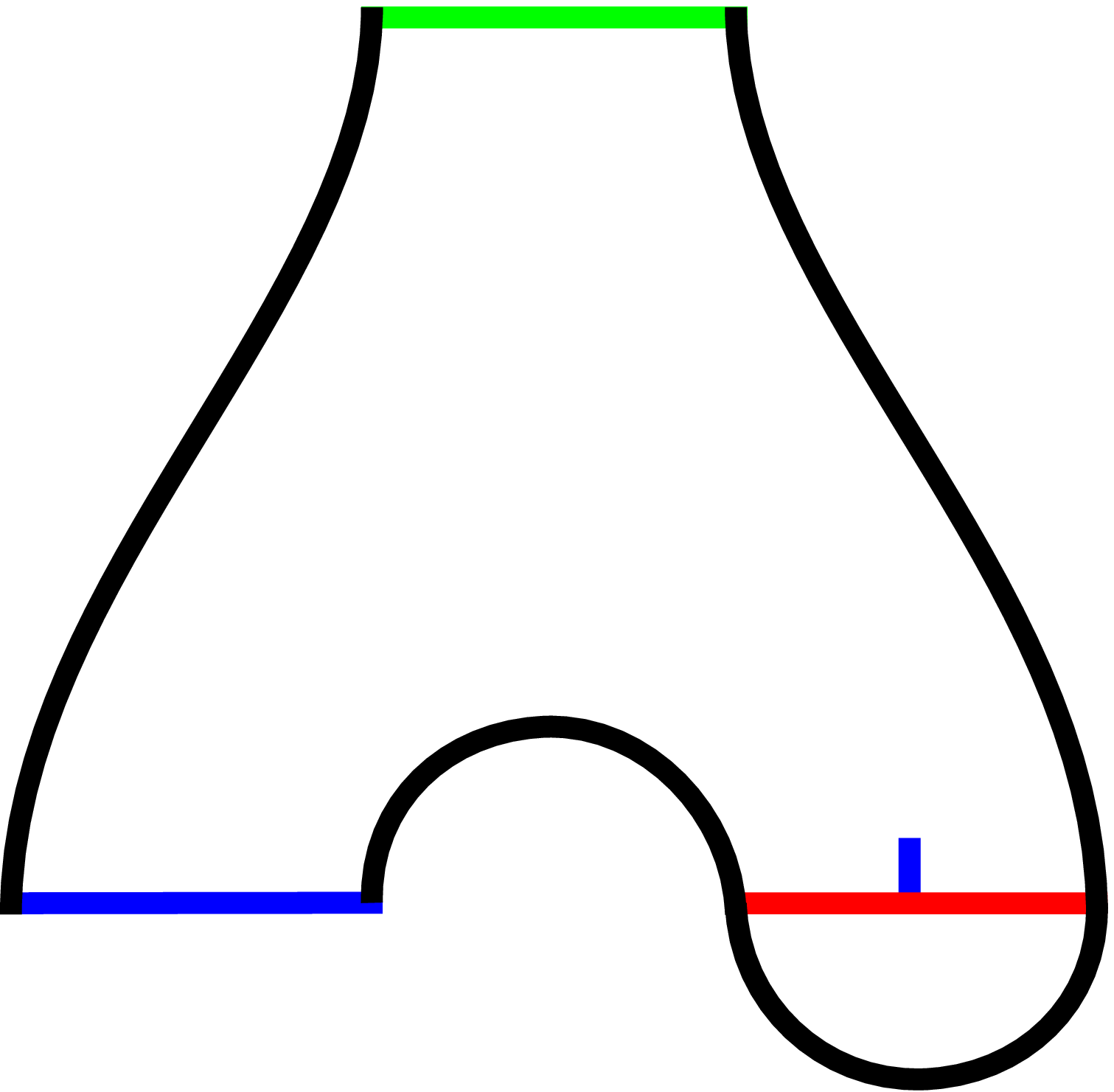}};
\node (R2r) at (10,0) {\includegraphics[scale=0.12]{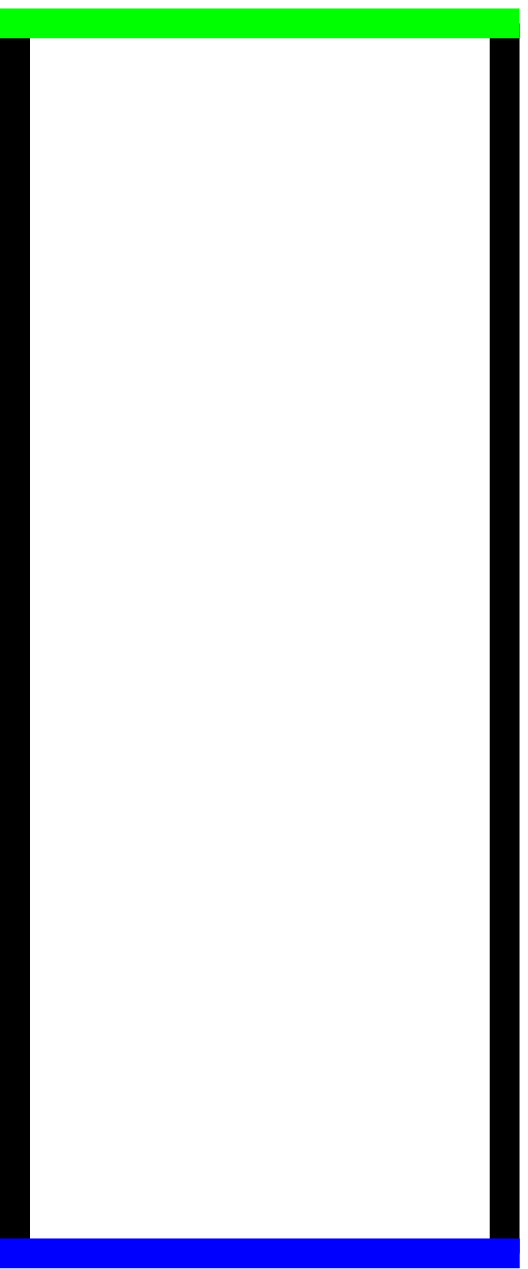}};
\node (=) at (1.7,0) {$\longleftrightarrow$};
\node (=) at (8.7,0) {$\longleftrightarrow$};
\node at (-1.2,1) {R1)};
\node at (5.8,1) {R2)};
\end{tikzpicture}
\end{center}

\begin{center}
\begin{tikzpicture}
\node (R1l) at (0,0) {\includegraphics[scale=0.12,angle=180]{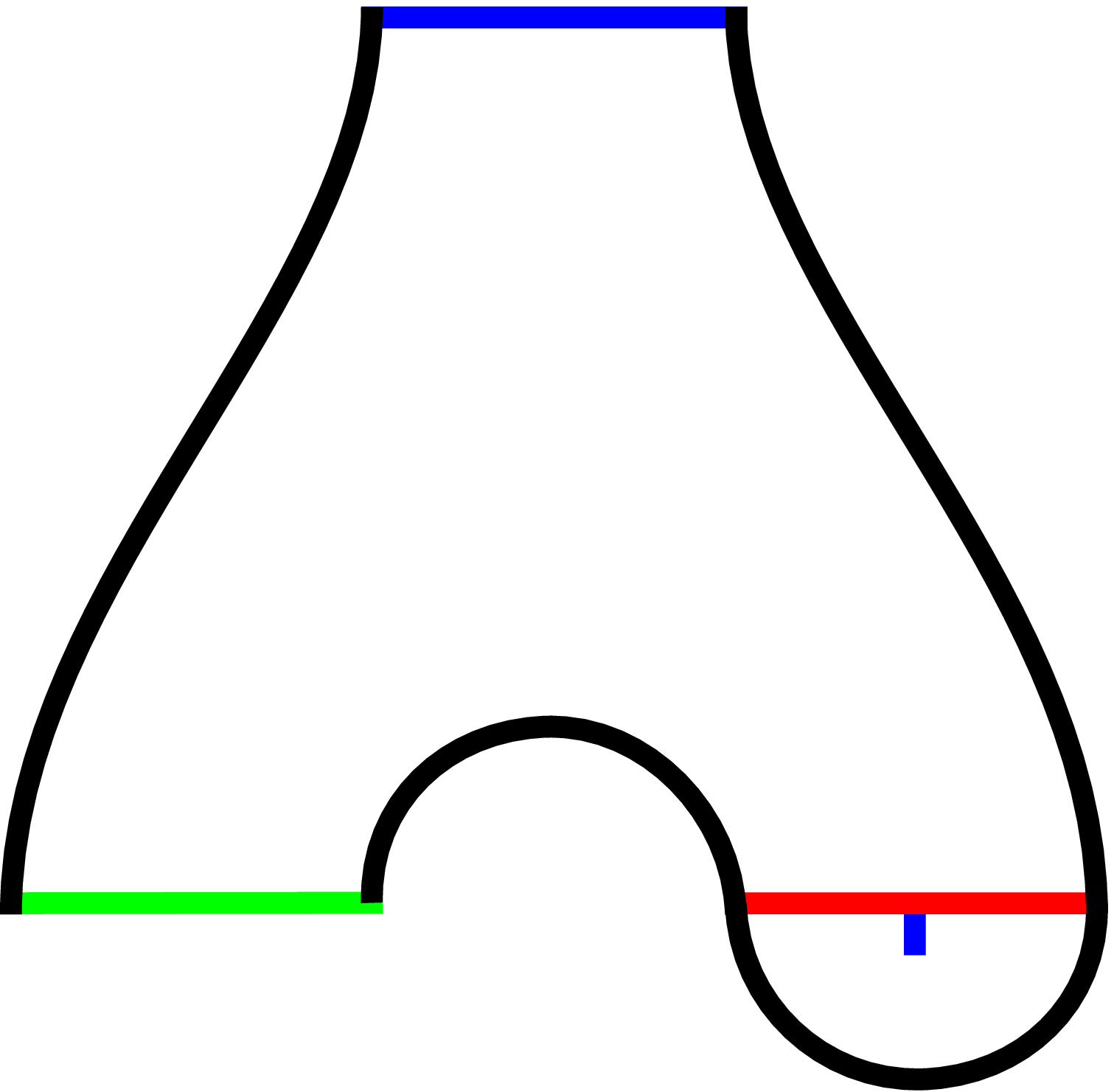}};
\node (R1r) at (3,0) {\includegraphics[scale=0.12]{figure150.eps}};
\node at (7,0) {\includegraphics[scale=0.12,angle=180]{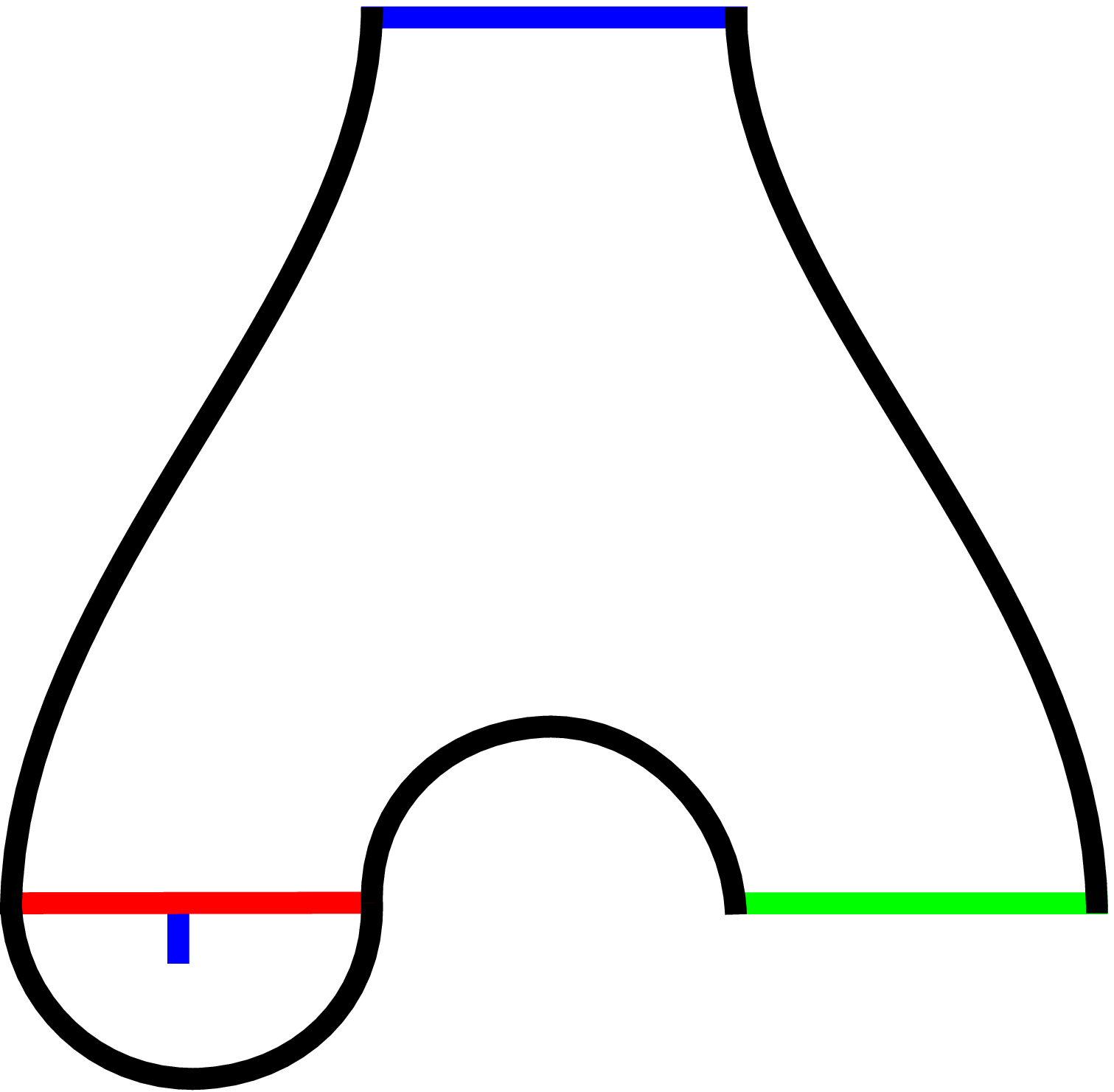}};
\node (R2r) at (10,0) {\includegraphics[scale=0.12]{figure150.eps}};
\node (=) at (1.7,0) {$\longleftrightarrow$};
\node (=) at (8.7,0) {$\longleftrightarrow$};
\node at (-1.2,1) {R3)};
\node at (5.8,1) {R4)};
\end{tikzpicture}
\end{center}

\begin{center}
\begin{tikzpicture}
\node (R1l) at (0,0) {\includegraphics[scale=0.12]{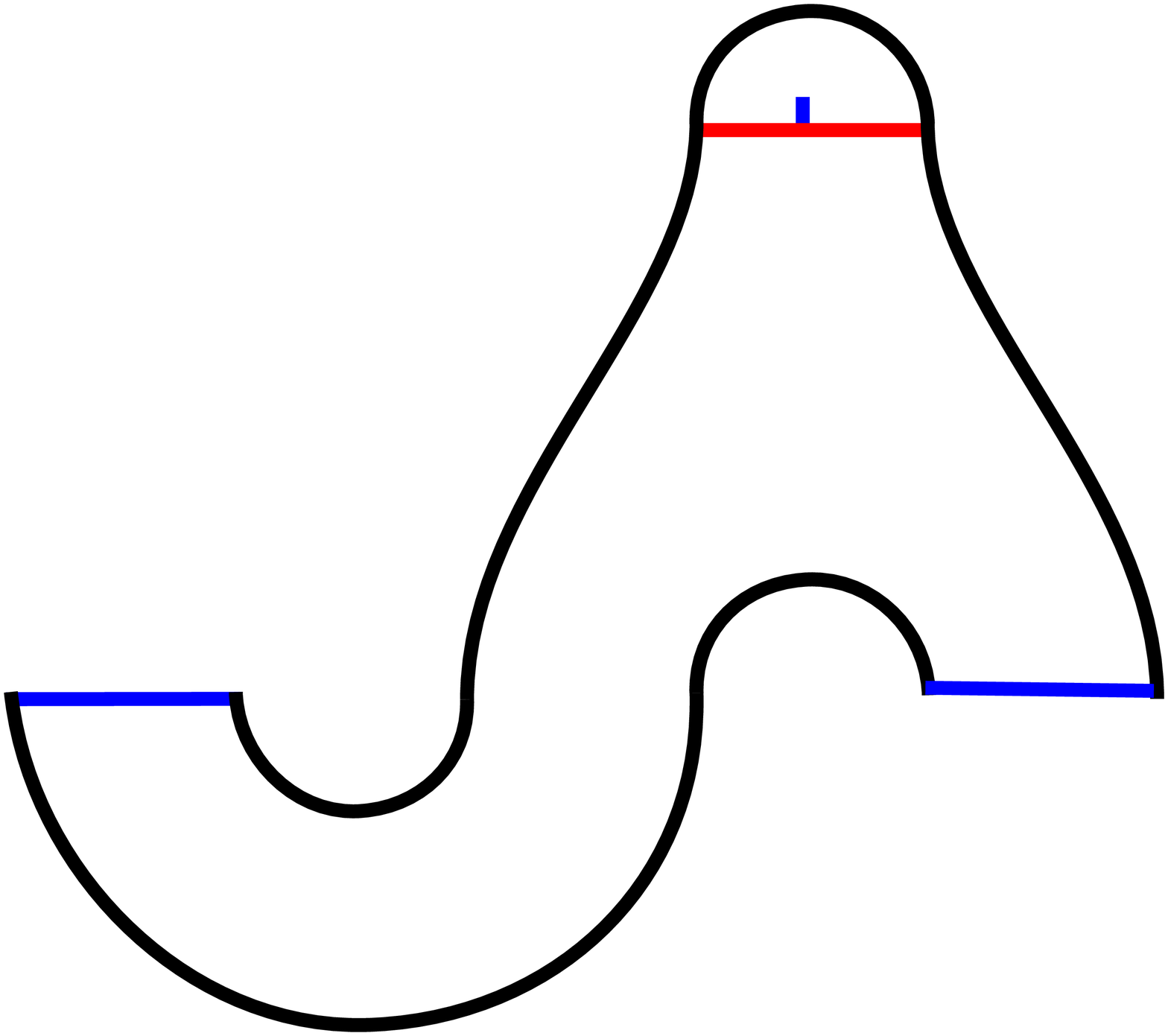}};
\node (R1r) at (5,0) {\includegraphics[scale=0.12]{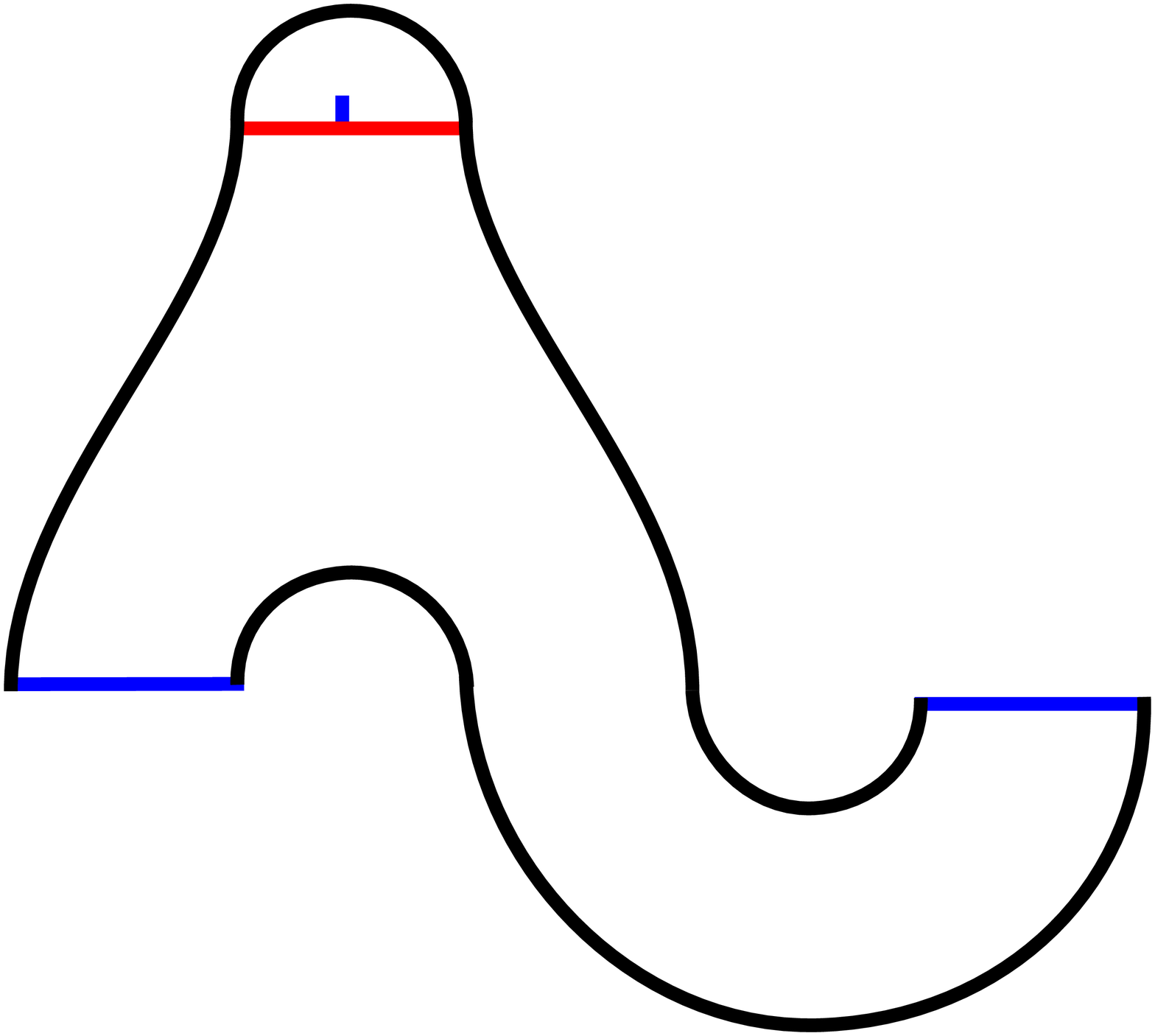}};
\node (=) at (2.5,0) {$\longleftrightarrow$};
\node at (-1.5,1) {R5)};
\end{tikzpicture}
\end{center}

\begin{center}
\begin{tikzpicture}
\node (R1l) at (0,0) {\includegraphics[scale=0.12]{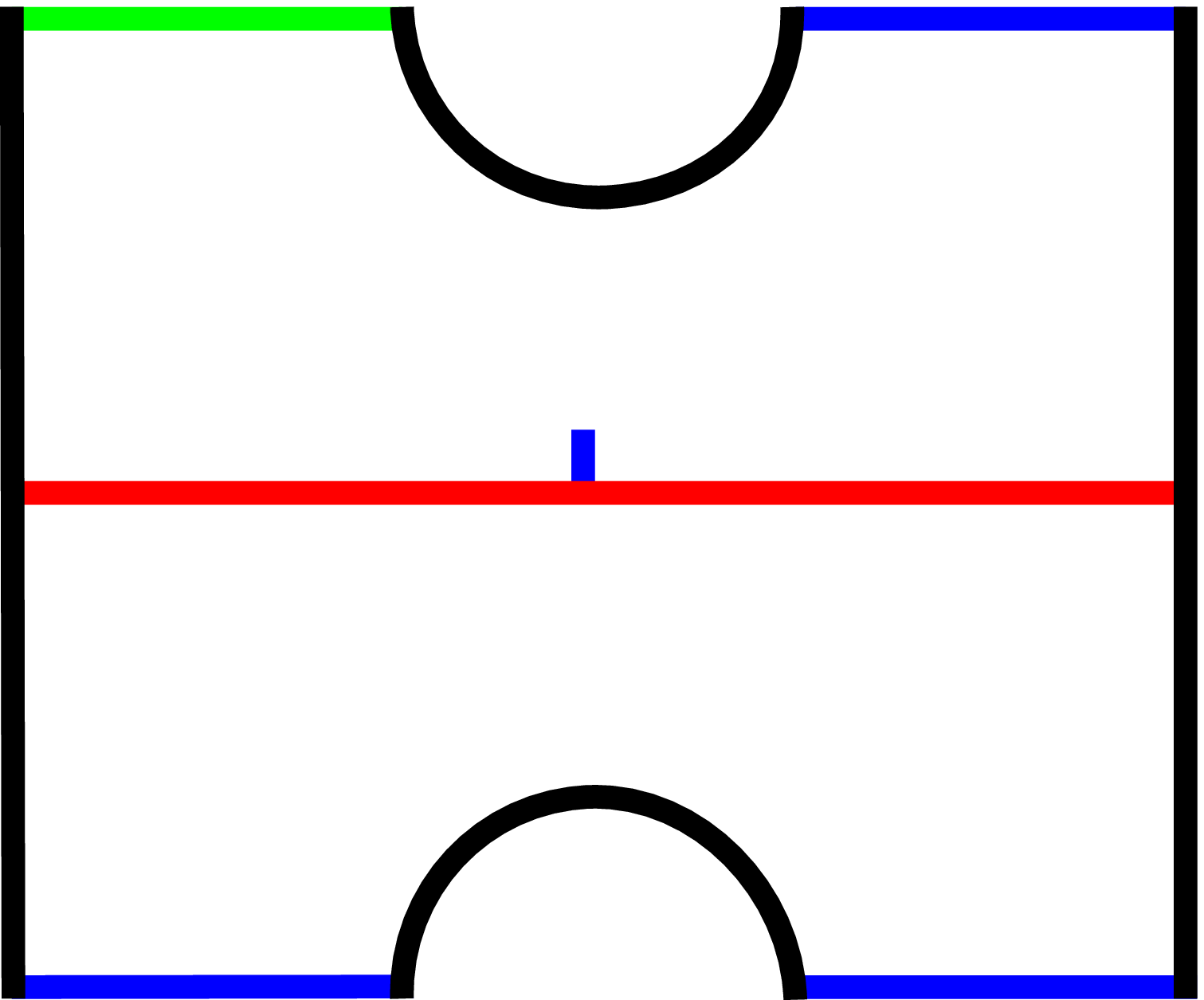}};
\node (R1r) at (3,0) {\includegraphics[scale=0.12]{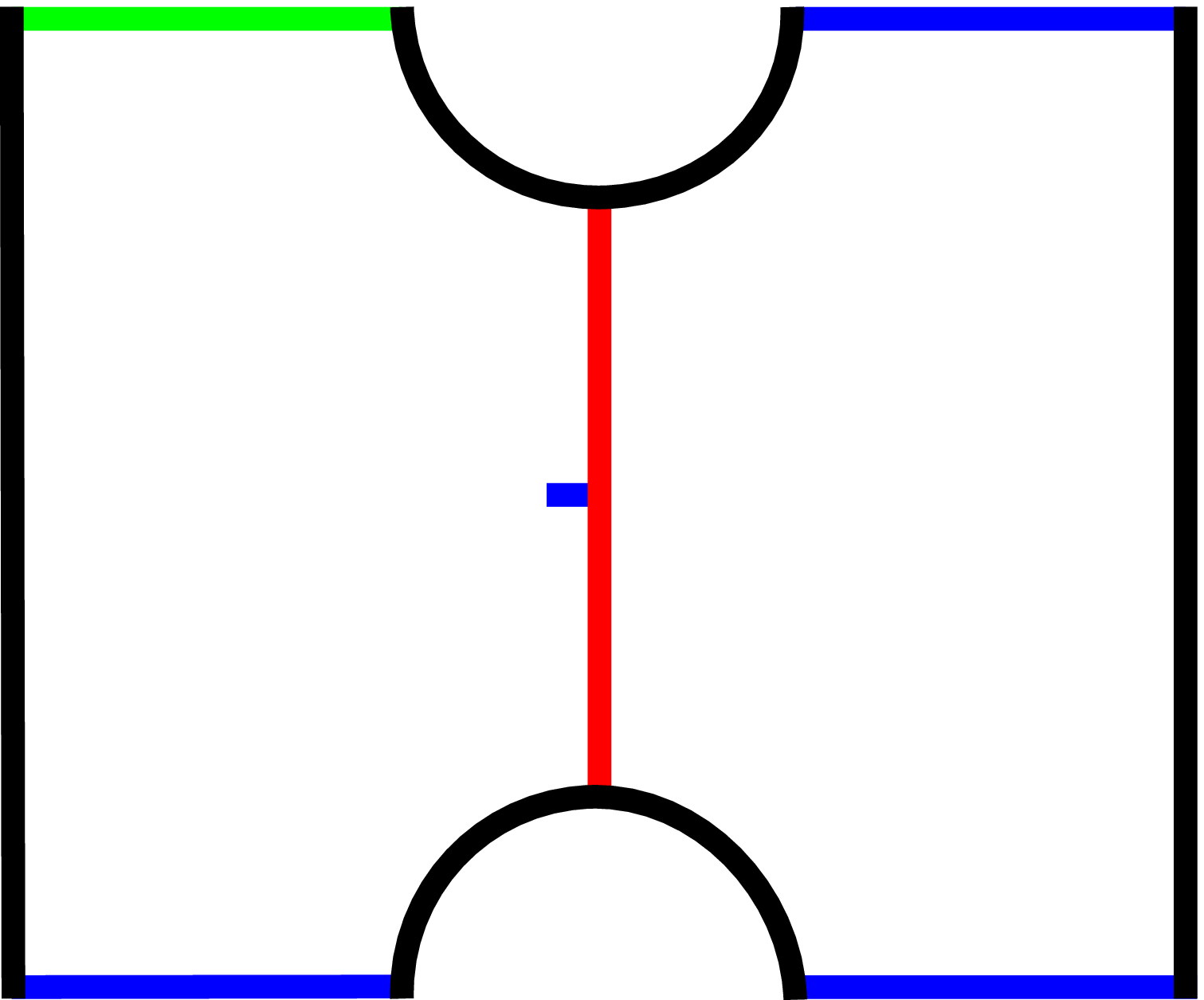}};
\node (R2l) at (7,0) {\includegraphics[scale=0.12]{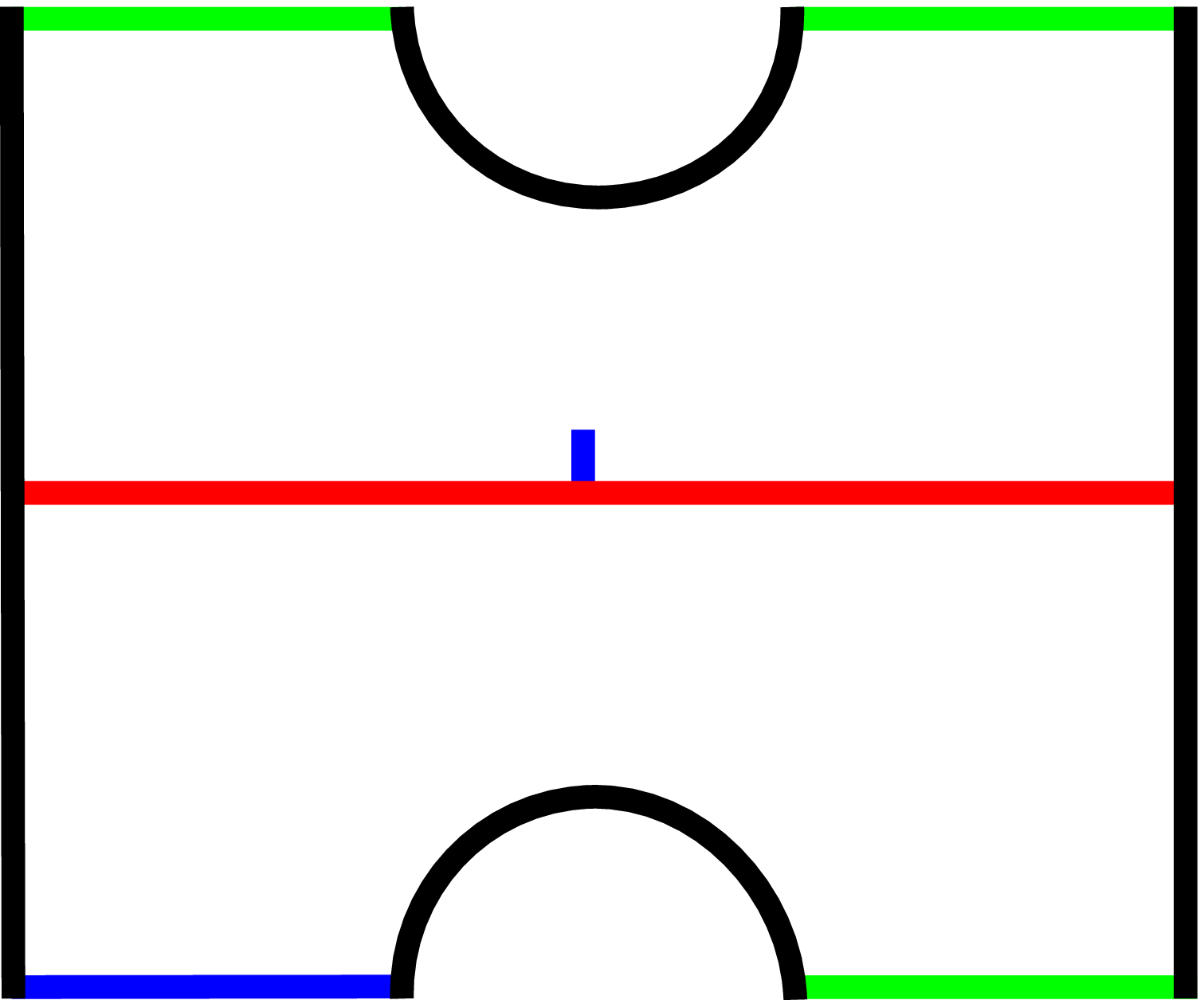}};
\node (R2r) at (10,0) {\includegraphics[scale=0.12]{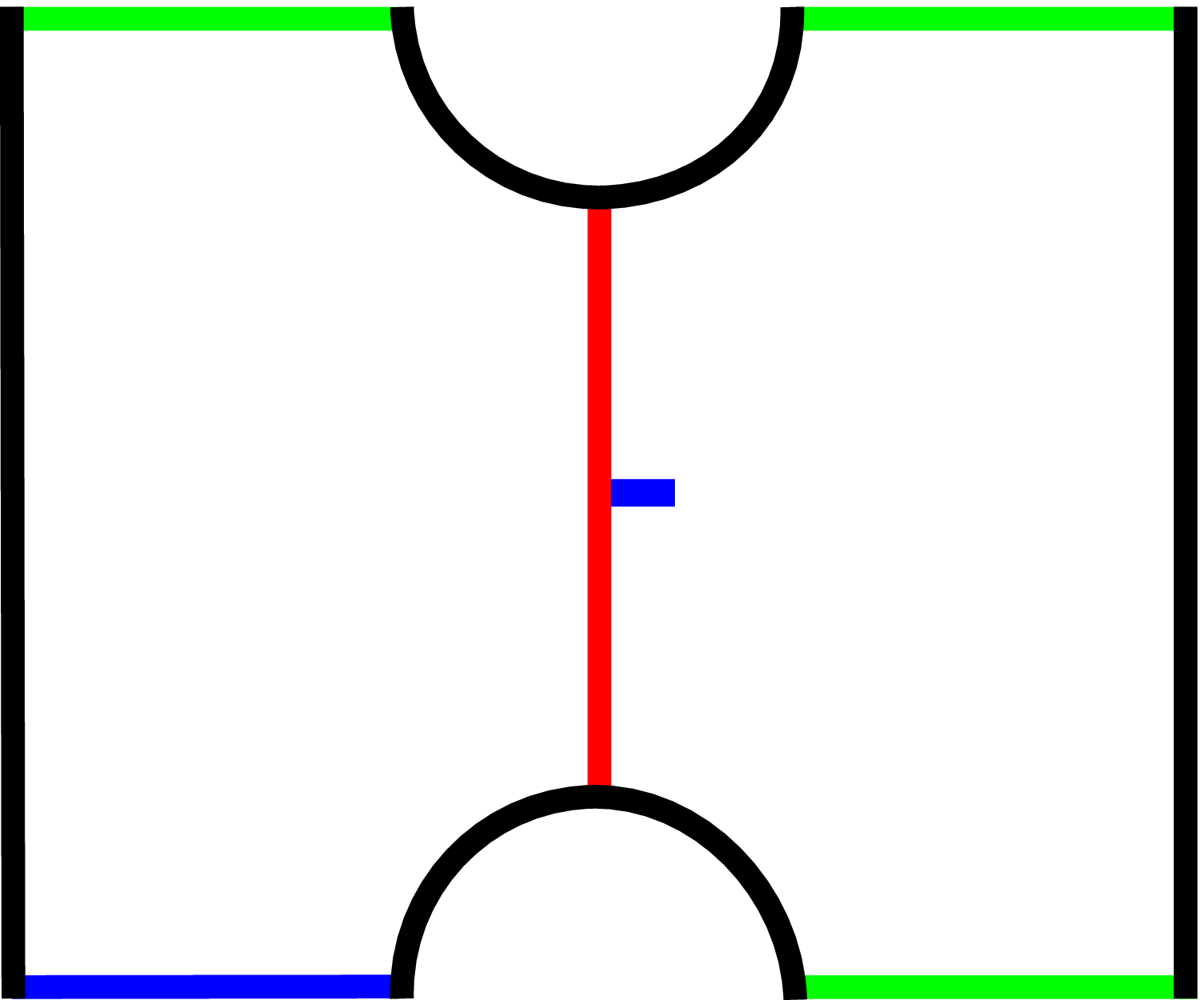}};
\node (=) at (1.5,0) {$\longleftrightarrow$};
\node (=) at (8.5,0) {$\longleftrightarrow$};
\node at (-1.5,1) {R6)};
\node at (5.5,1) {R7)};
\end{tikzpicture}
\end{center}

\begin{center}
\begin{tikzpicture}
\node (R1l) at (0,0) {\includegraphics[scale=0.12]{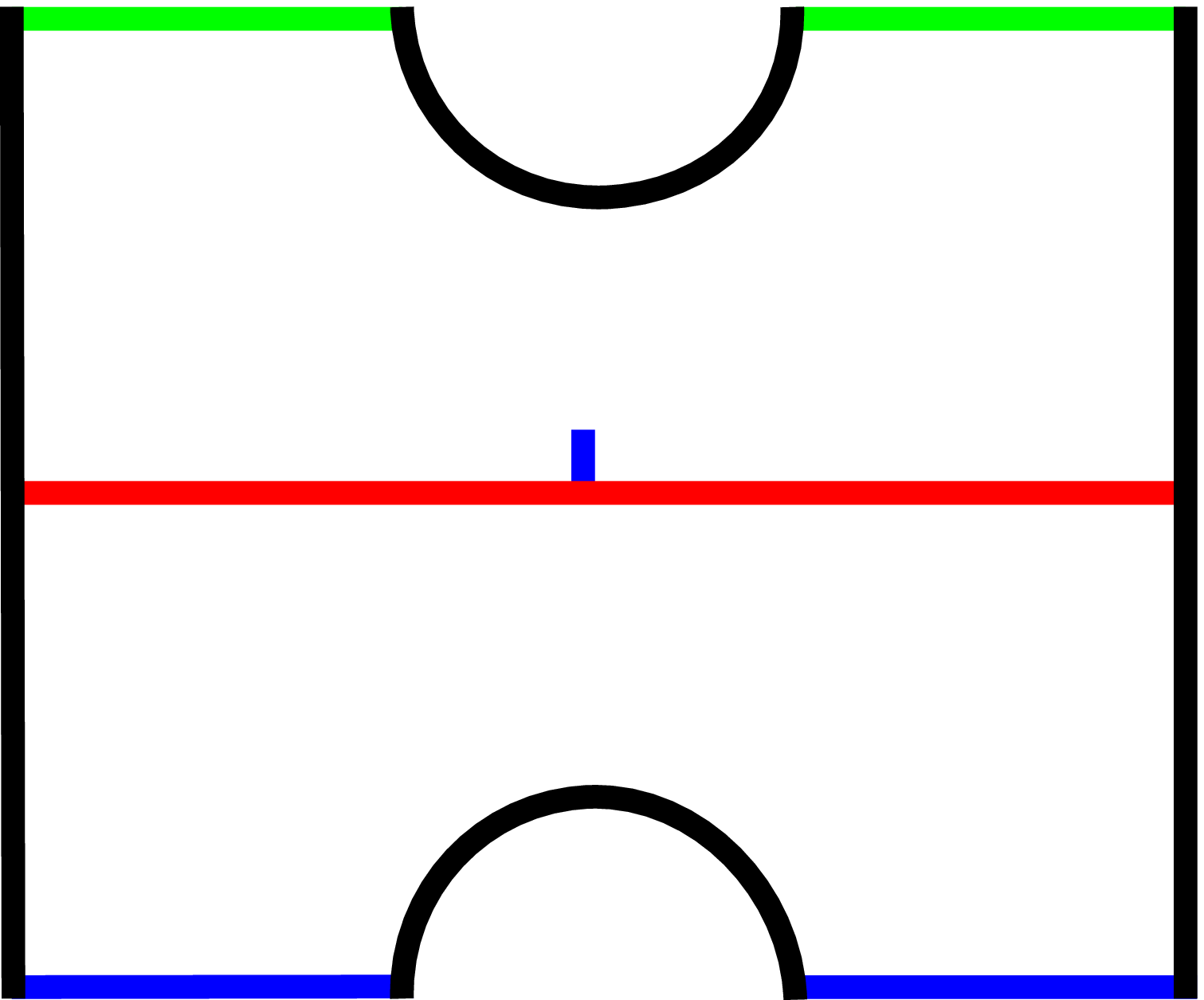}};
\node (R1r) at (3,0) {\includegraphics[scale=0.12]{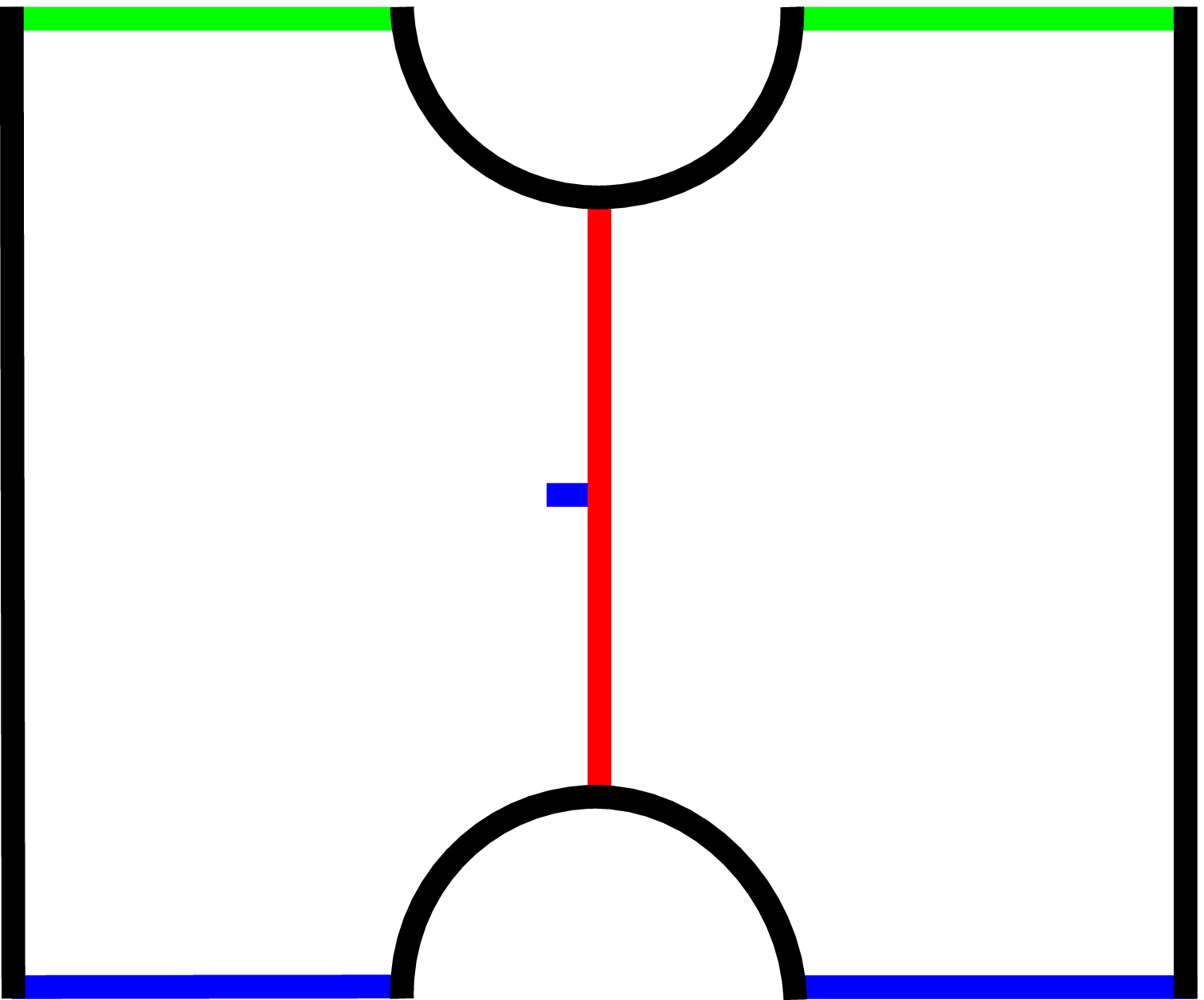}};
\node (R2l) at (7,0) {\includegraphics[scale=0.12]{figure159.eps}};
\node (R2r) at (10,0) {\includegraphics[scale=0.12]{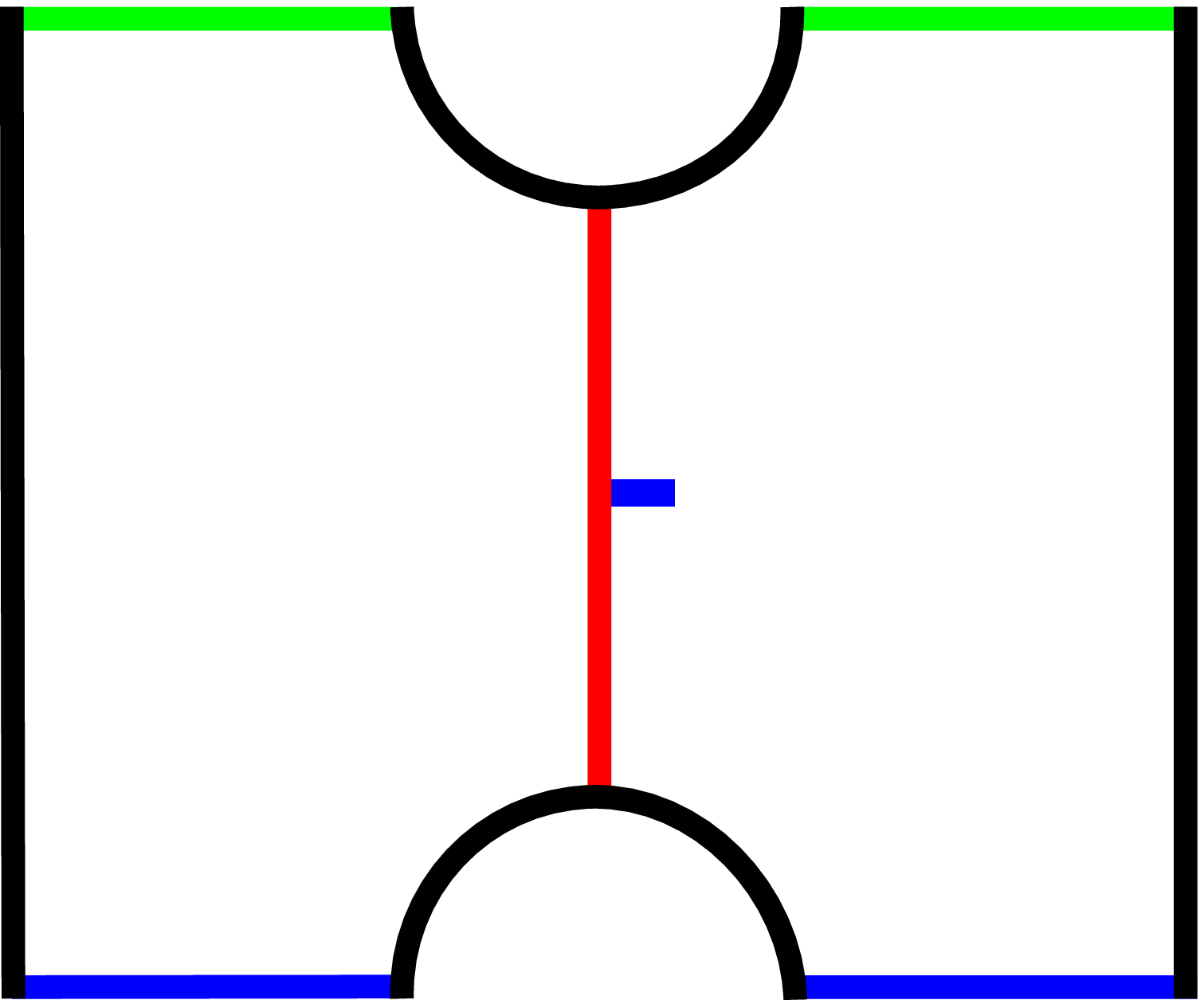}};
\node (=) at (1.5,0) {$\longleftrightarrow$};
\node (=) at (8.5,0) {$\longleftrightarrow$};
\node at (-1.5,1) {R8)};
\node at (5.5,1) {R9)};
\end{tikzpicture}
\end{center}

\begin{center}
\begin{tikzpicture}
\node (R1l) at (0,0) {\includegraphics[scale=0.12]{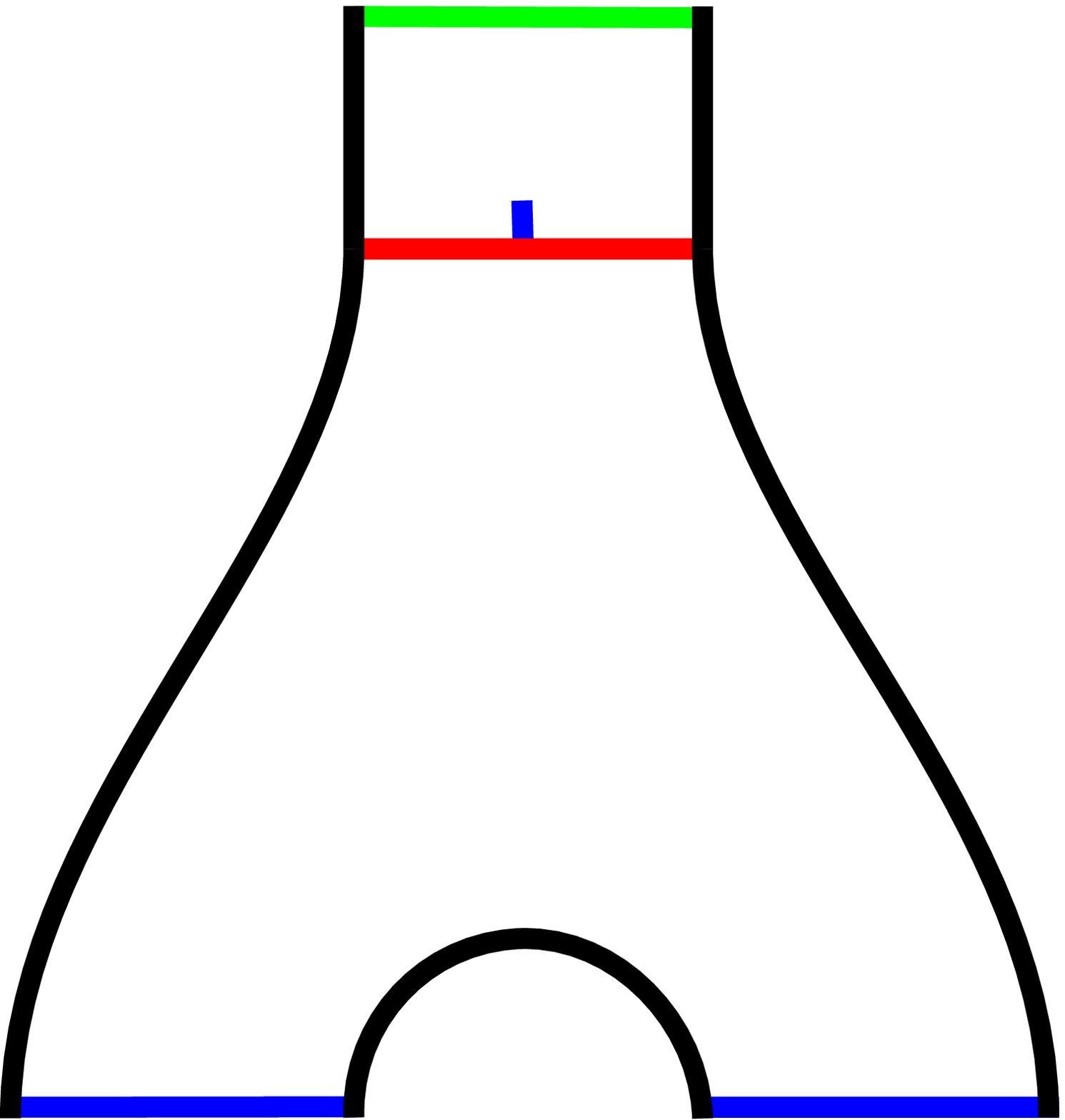}};
\node (R1r) at (3,0) {\includegraphics[scale=0.12]{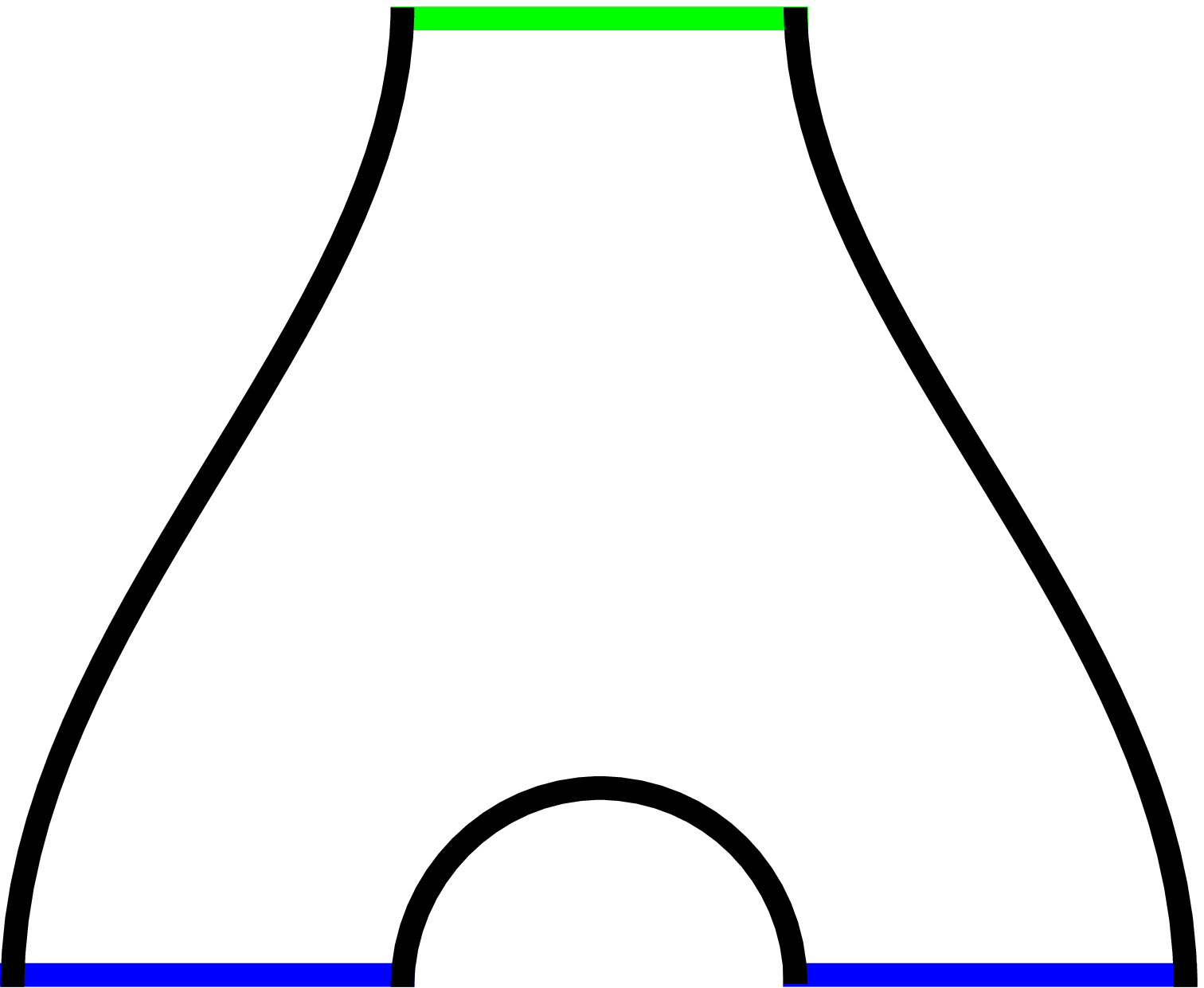}};
\node (R2l) at (7,0) {\includegraphics[scale=0.12, angle=180]{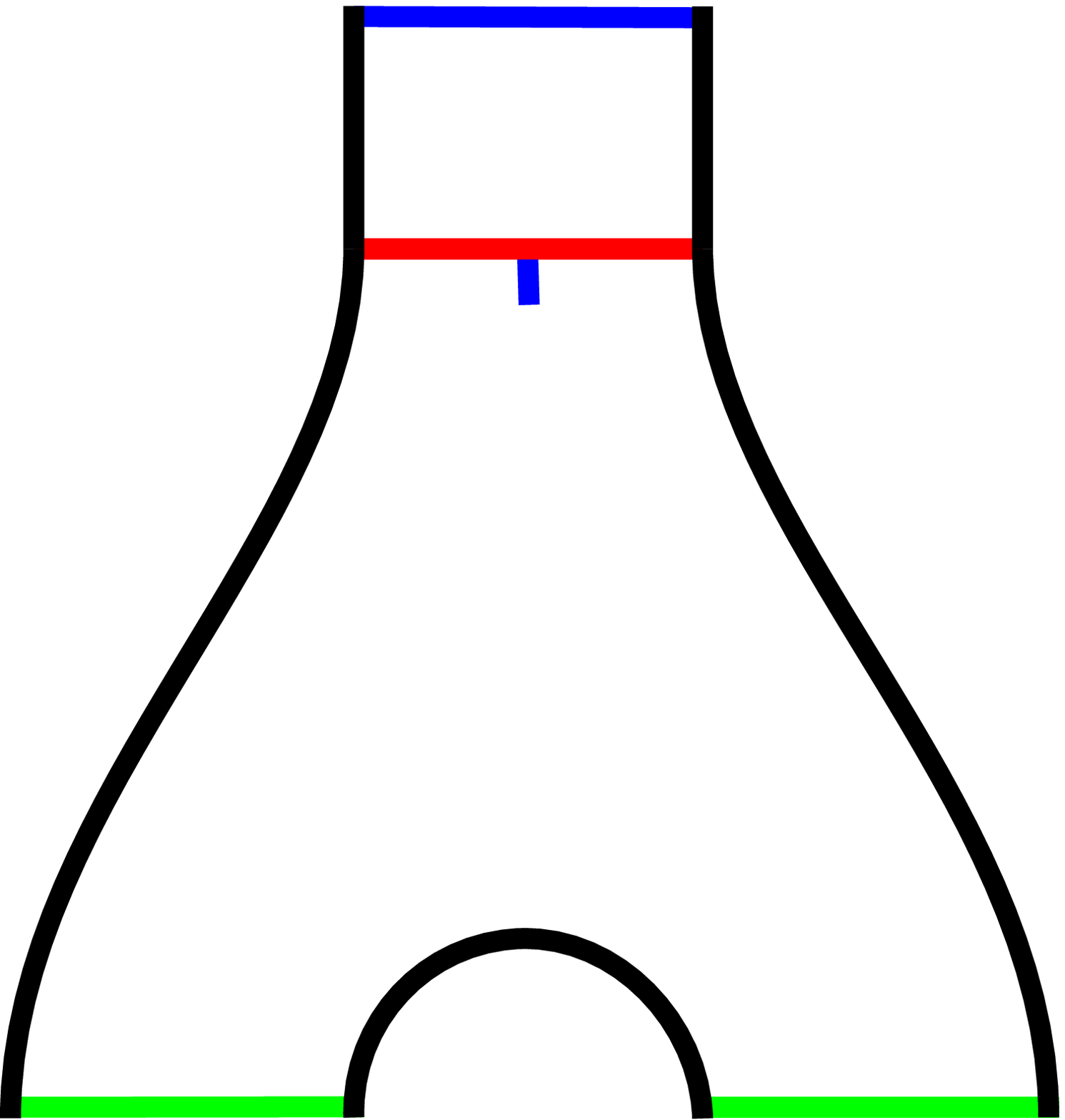}};
\node (R2r) at (10,0) {\includegraphics[scale=0.12,angle=180]{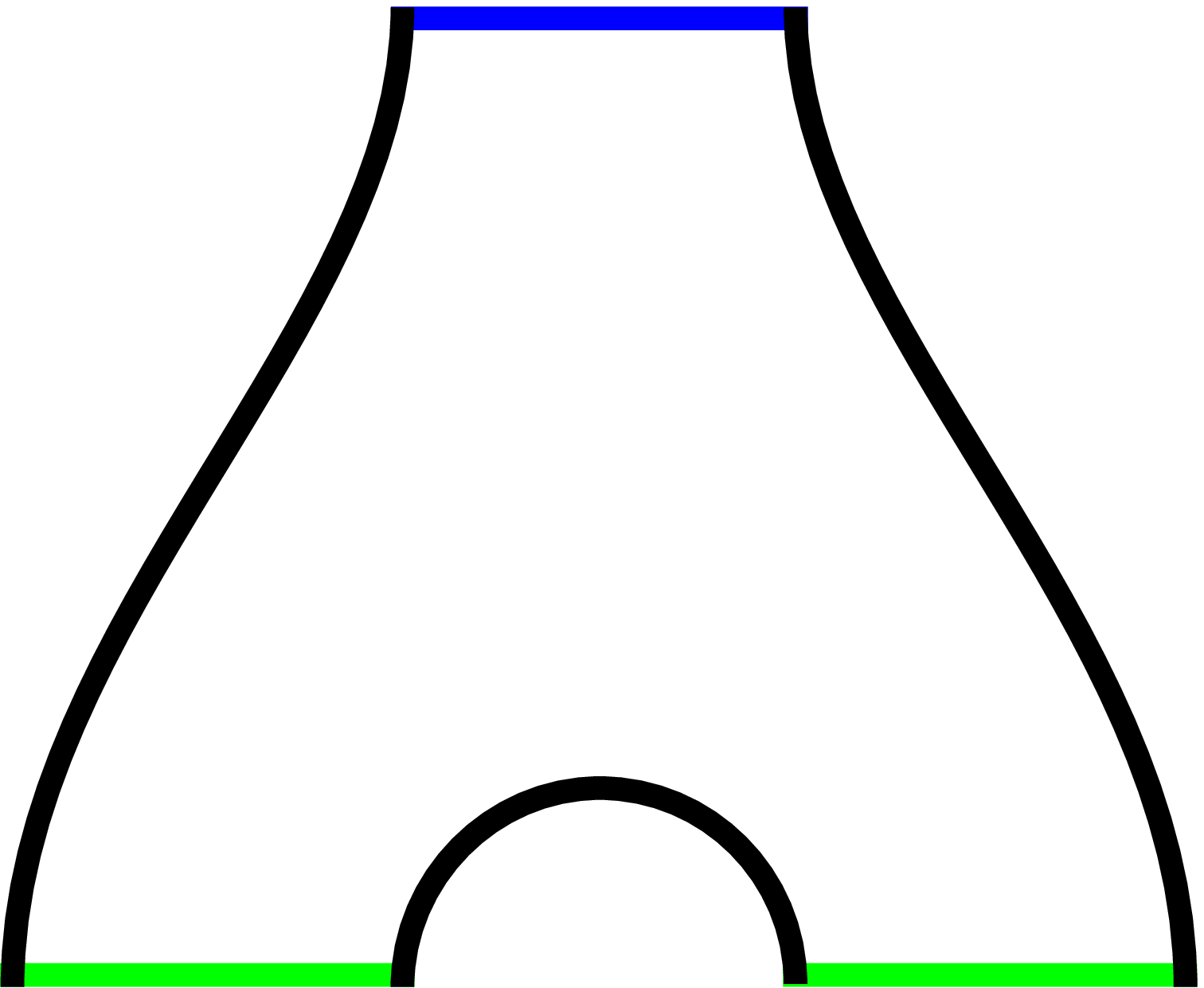}};
\node (=) at (1.5,0) {$\longleftrightarrow$};
\node (=) at (8.5,0) {$\longleftrightarrow$};
\node at (-1.5,1) {R10)};
\node at (5.2,1) {R11)};
\end{tikzpicture}
\end{center}

\begin{center}
\begin{tikzpicture}
\node (R1l) at (0,0) {\includegraphics[scale=0.12]{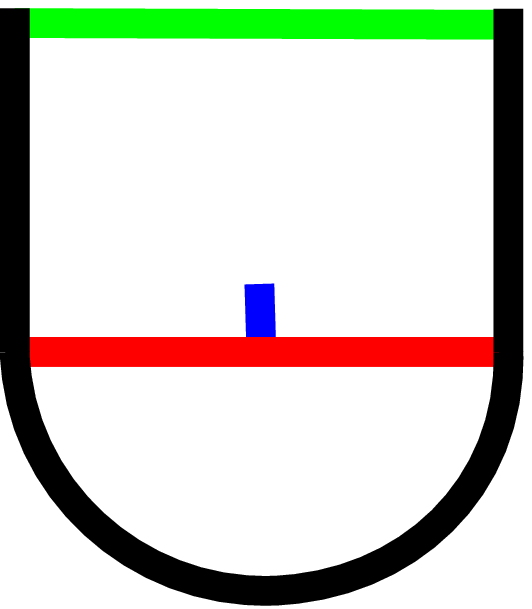}};
\node (R1r) at (3,0) {\includegraphics[scale=0.12]{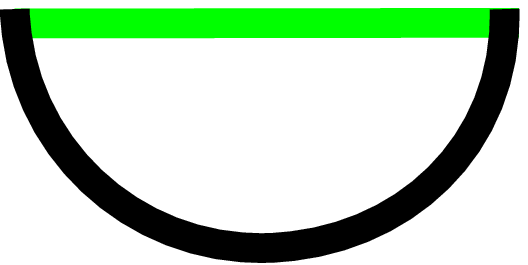}};
\node (R2l) at (7,0) {\includegraphics[scale=0.12, angle=180]{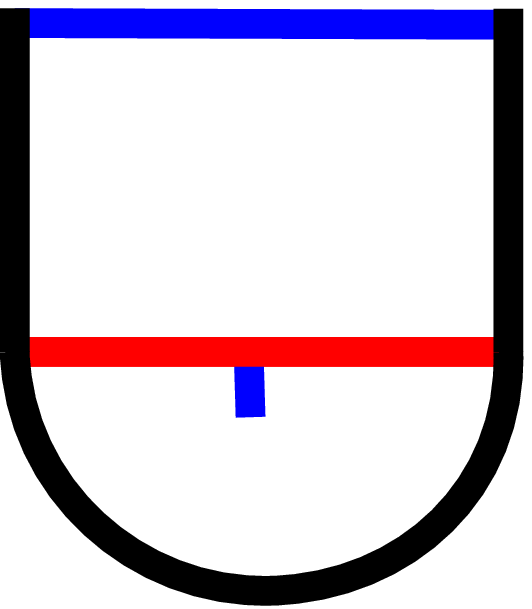}};
\node (R2r) at (10,0) {\includegraphics[scale=0.12,angle=180]{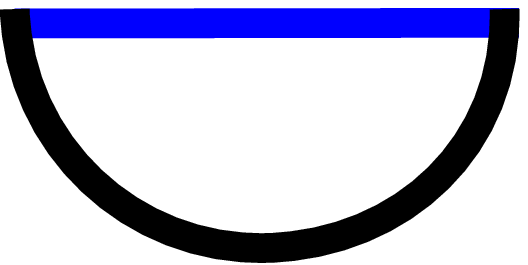}};
\node (=) at (1.5,0) {$\longleftrightarrow$};
\node (=) at (8.5,0) {$\longleftrightarrow$};
\node at (-2,1) {R12)};
\node at (5,1) {R13)};
\end{tikzpicture}
\end{center}

\item \textbf{Closed Relations:}

\begin{center}
\begin{tikzpicture}
\node (R1l) at (0,0) {\includegraphics[scale=0.12,angle=180]{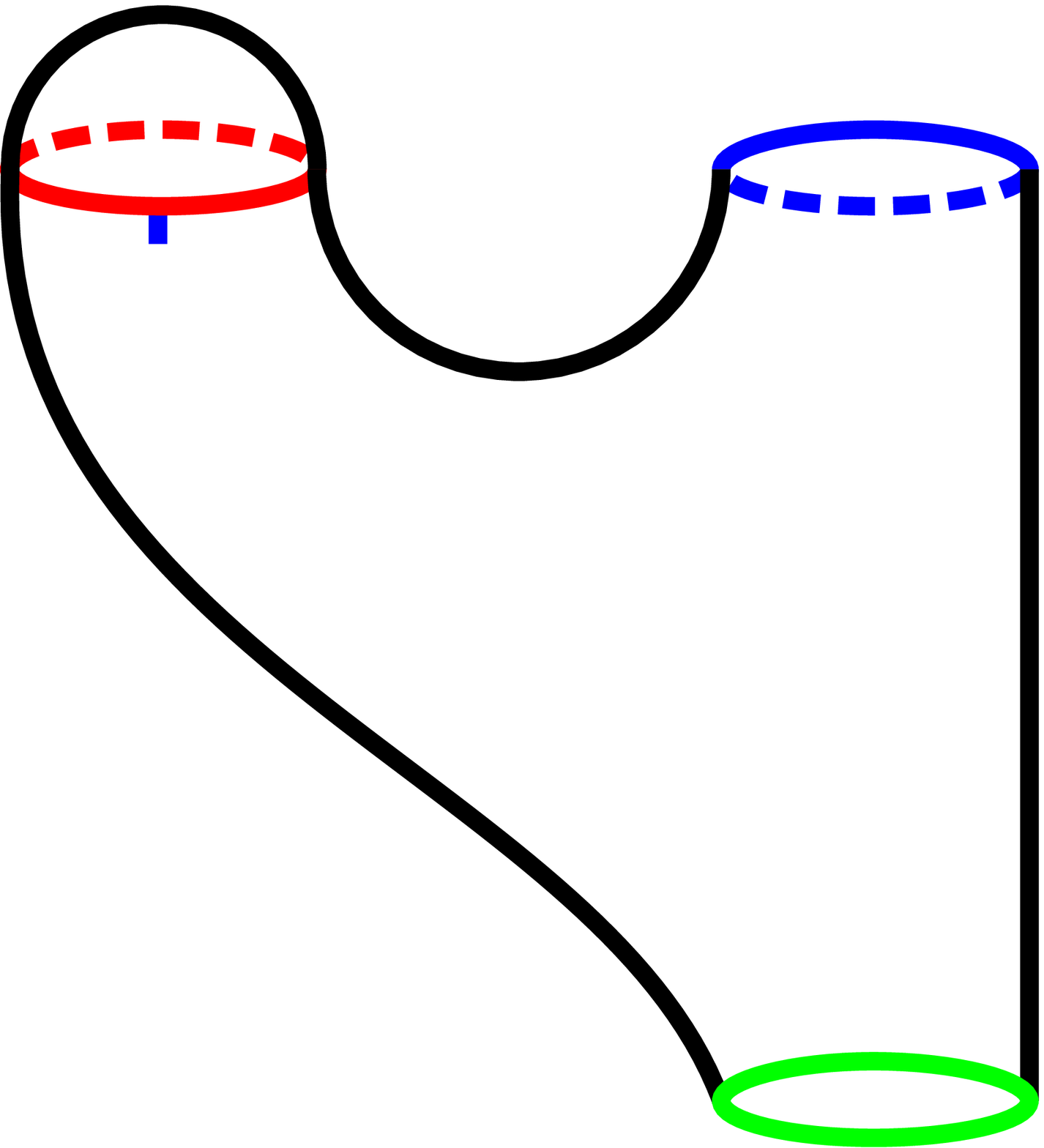}};
\node (R1r) at (3,0) {\includegraphics[scale=0.12]{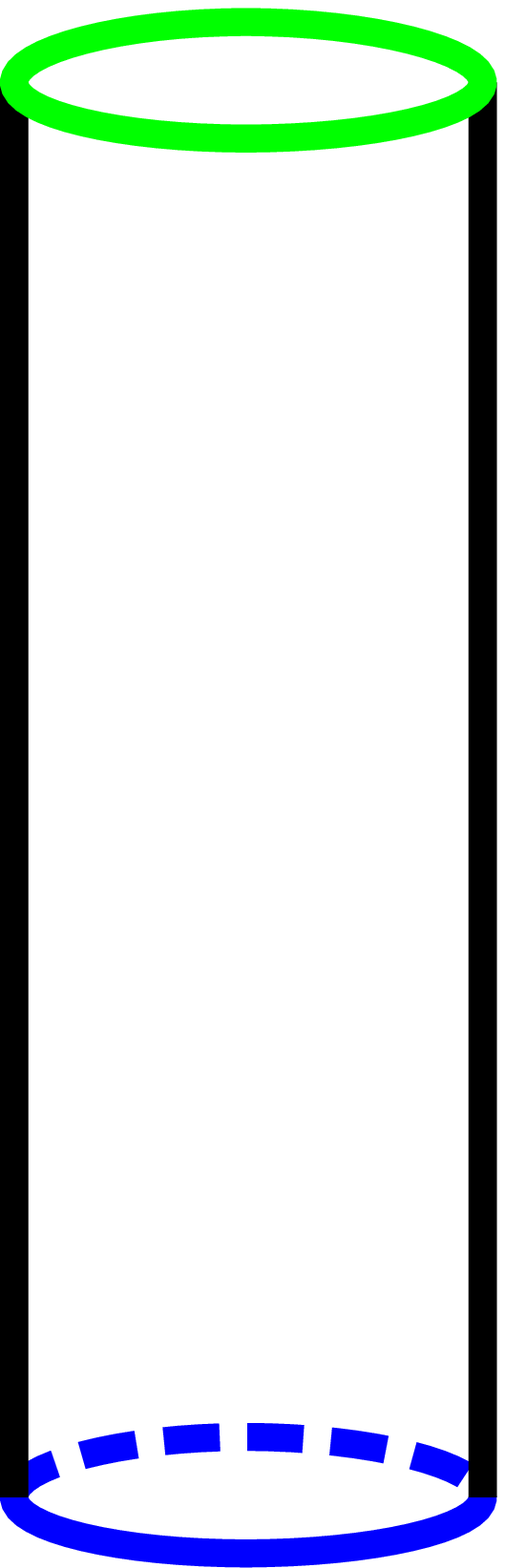}};
\node (R2l) at (7,0) {\includegraphics[scale=0.12]{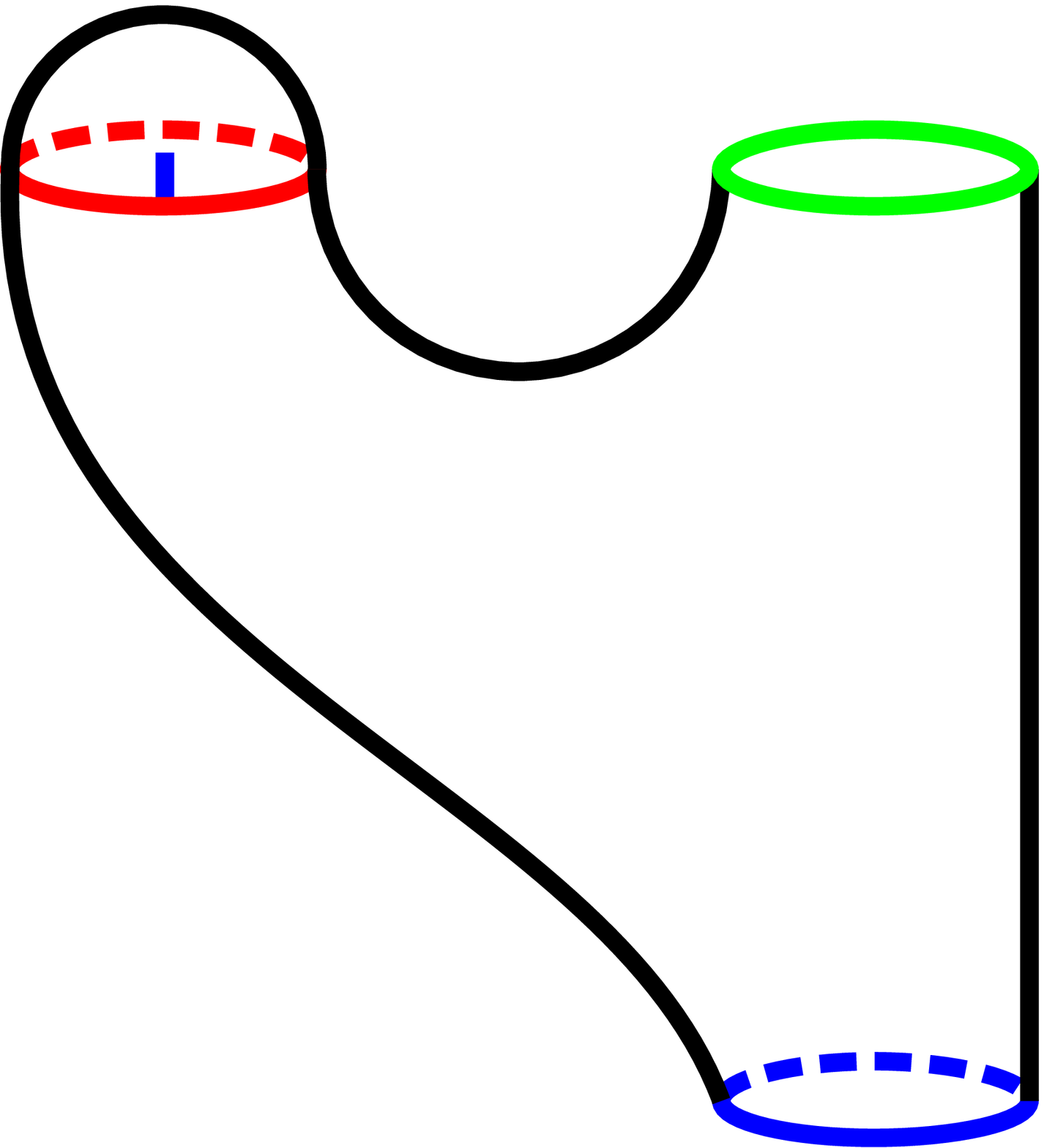}};
\node (R2r) at (10,0) {\includegraphics[scale=0.12]{figure171.eps}};
\node (=) at (1.7,0) {$\longleftrightarrow$};
\node (=) at (8.7,0) {$\longleftrightarrow$};
\node at (-1.8,1) {R14)};
\node at (5.2,1) {R15)};
\end{tikzpicture}
\end{center}

\begin{center}
\begin{tikzpicture}
\node (R1l) at (0,0) {\includegraphics[scale=0.12]{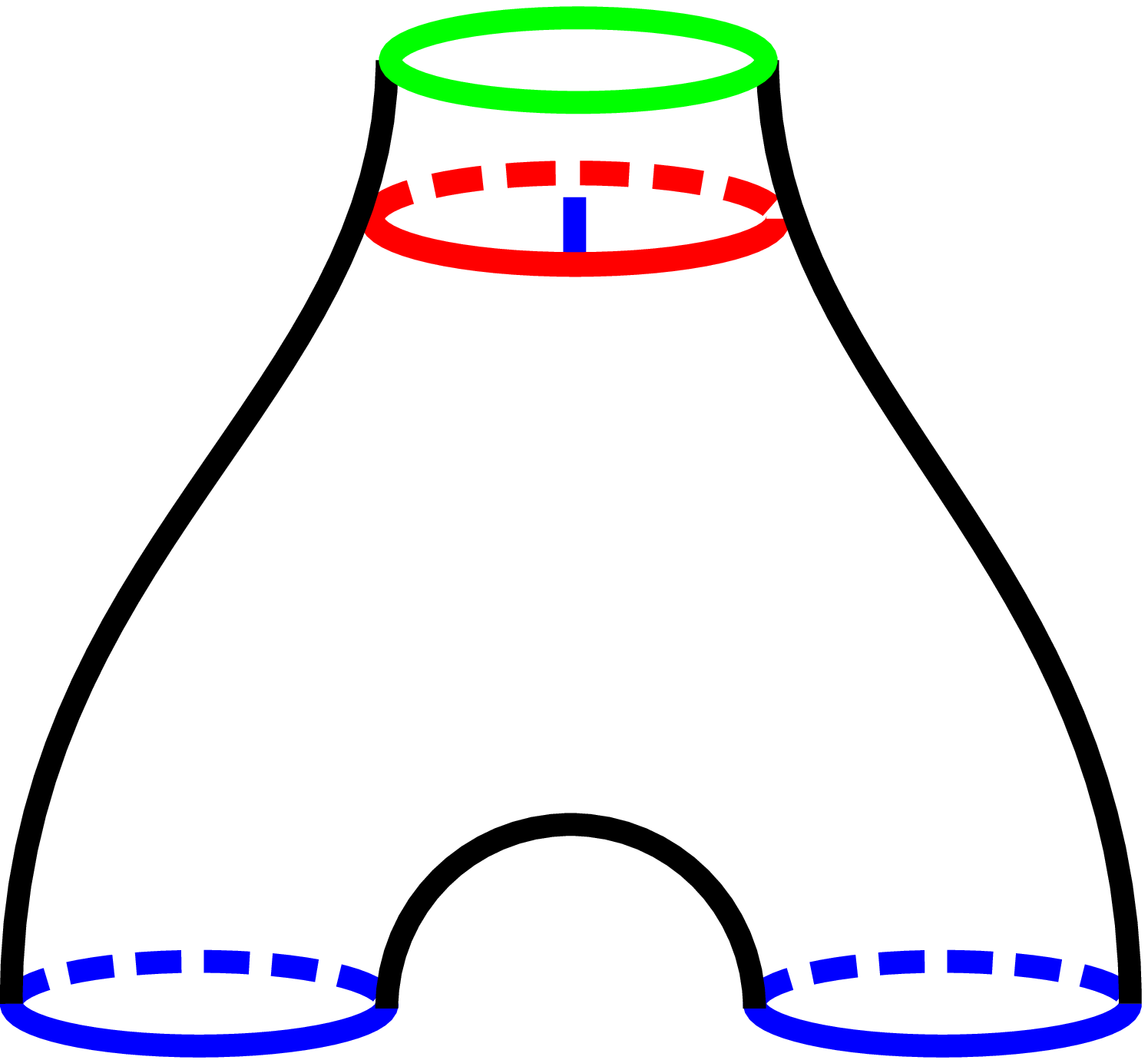}};
\node (R1r) at (3,0) {\includegraphics[scale=0.12]{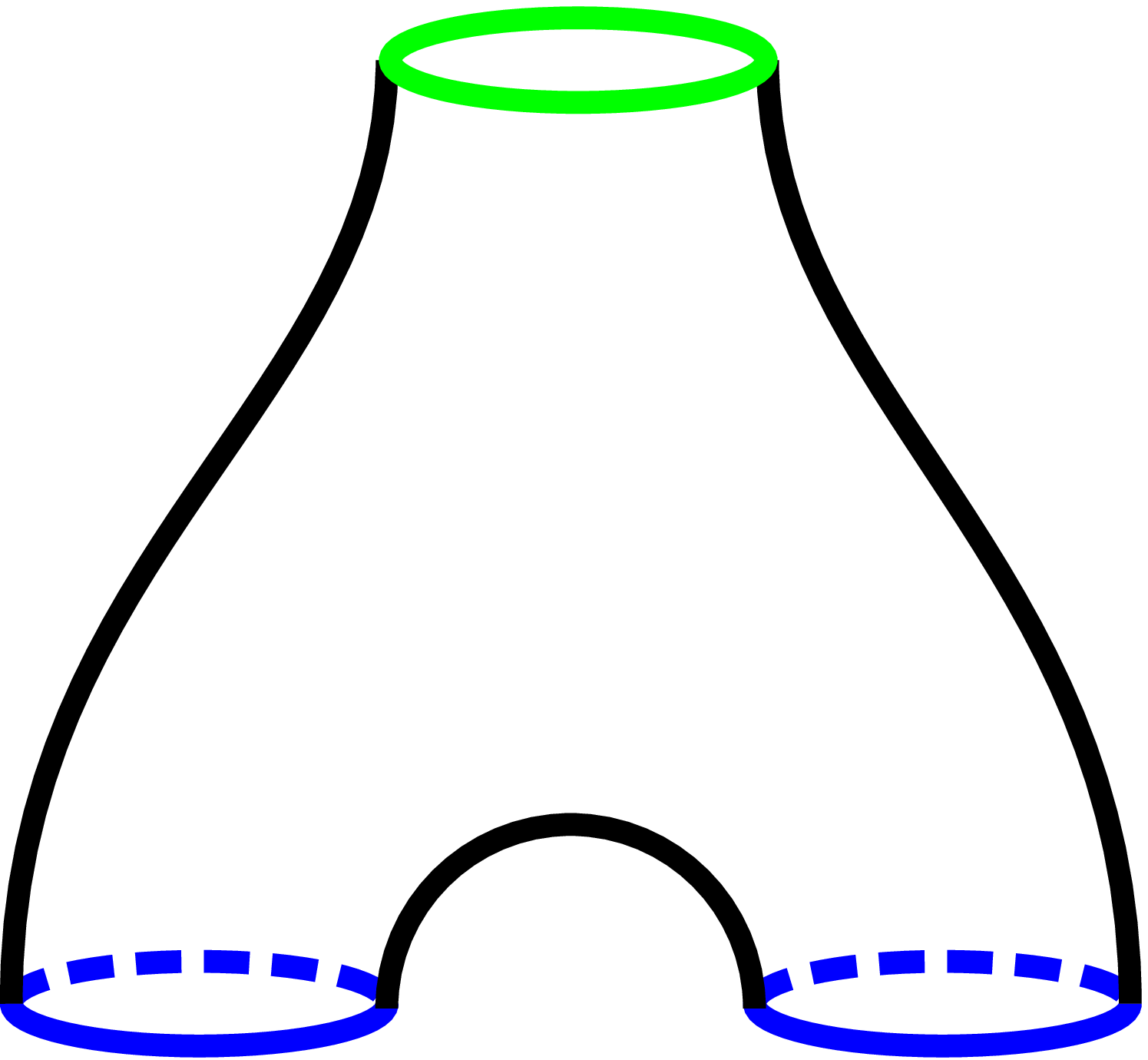}};
\node (R2l) at (7,0) {\includegraphics[scale=0.12, angle=180]{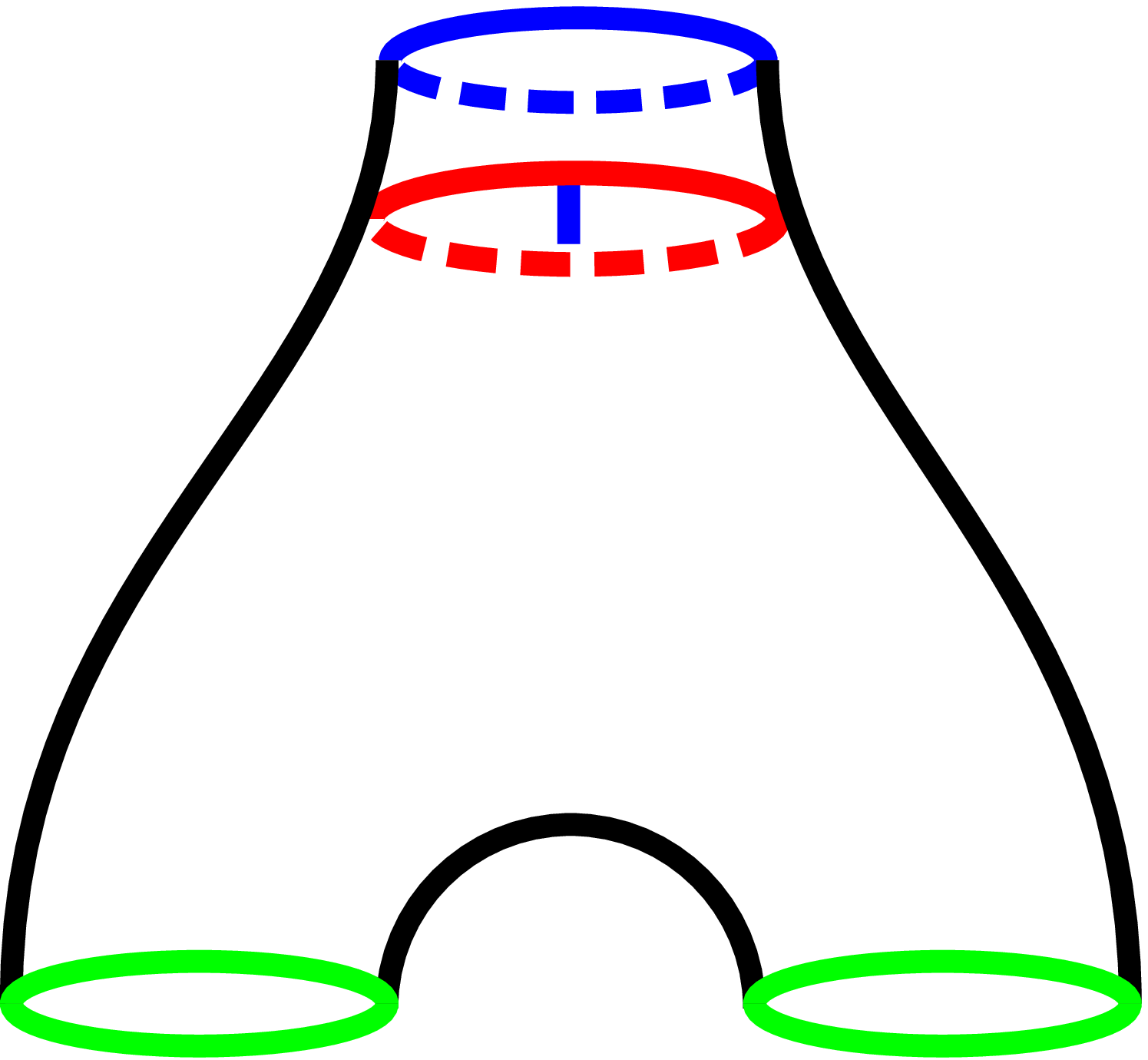}};
\node (R2r) at (10,0) {\includegraphics[scale=0.12,angle=180]{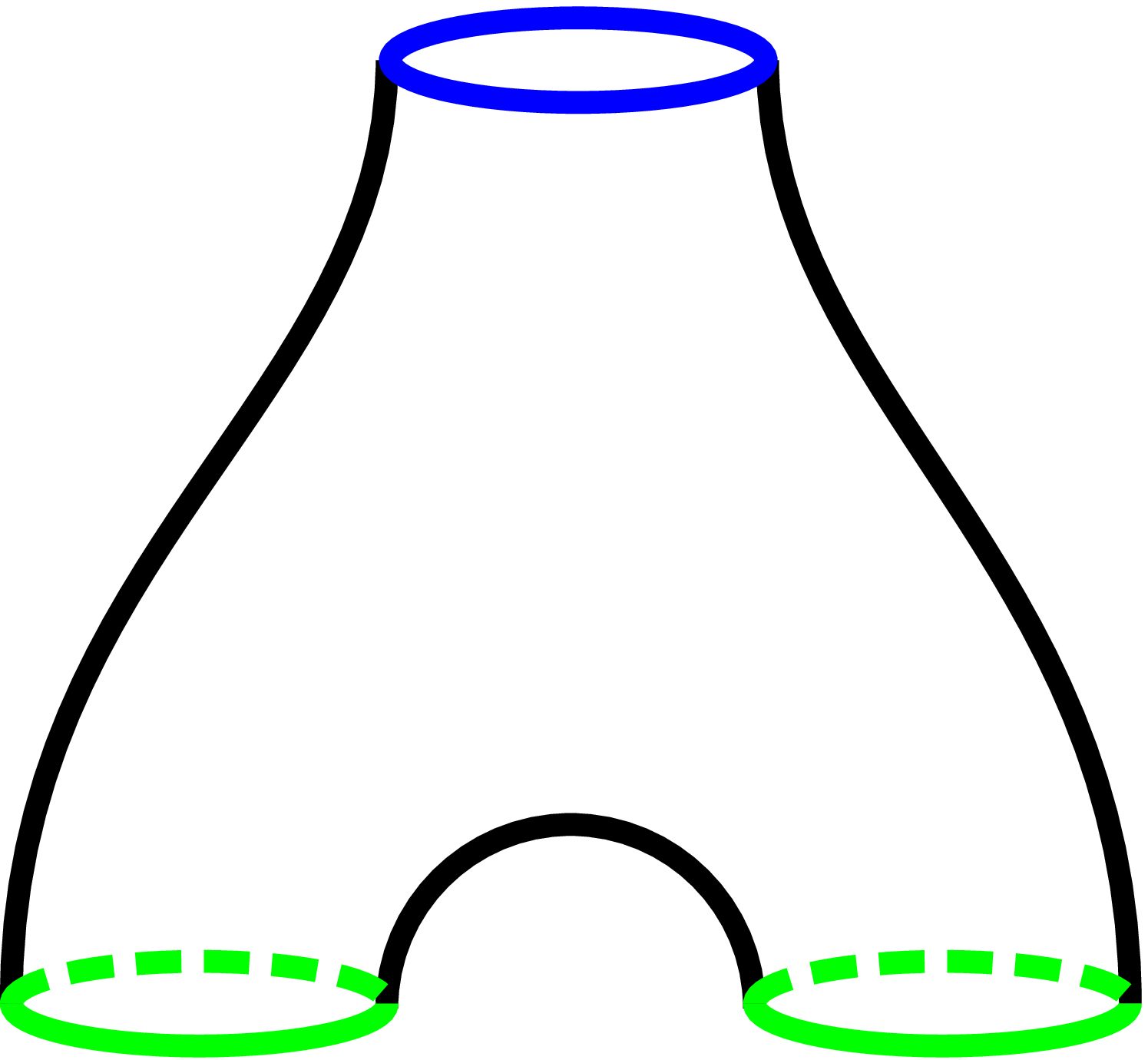}};
\node (=) at (1.7,0) {$\longleftrightarrow$};
\node (=) at (8.7,0) {$\longleftrightarrow$};
\node at (-1.8,1) {R16)};
\node at (5.2,1) {R17)};
\end{tikzpicture}
\end{center}

\begin{center}
\begin{tikzpicture}
\node (R1l) at (0,0) {\includegraphics[scale=0.12]{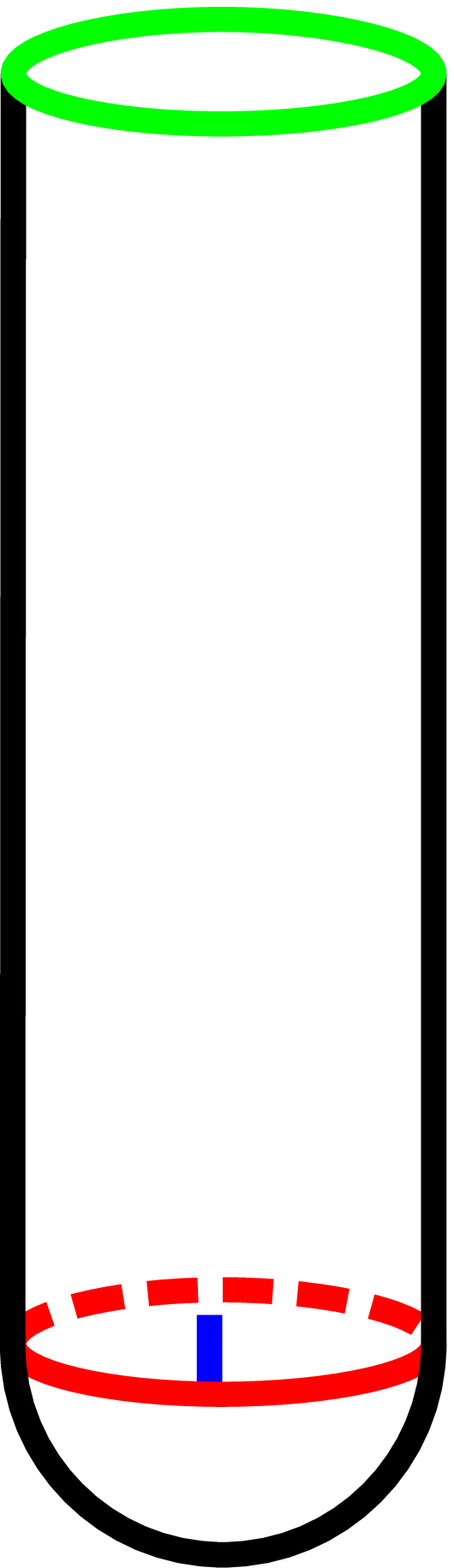}};
\node (R1r) at (3,0) {\includegraphics[scale=0.12]{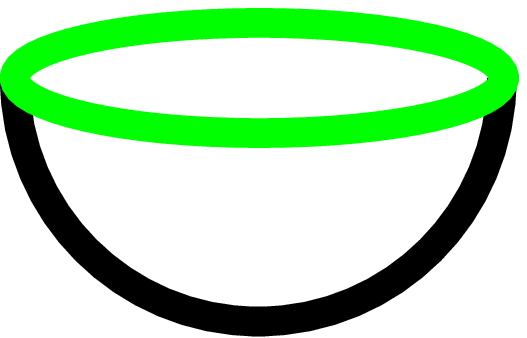}};
\node (R2l) at (7,0) {\includegraphics[scale=0.12]{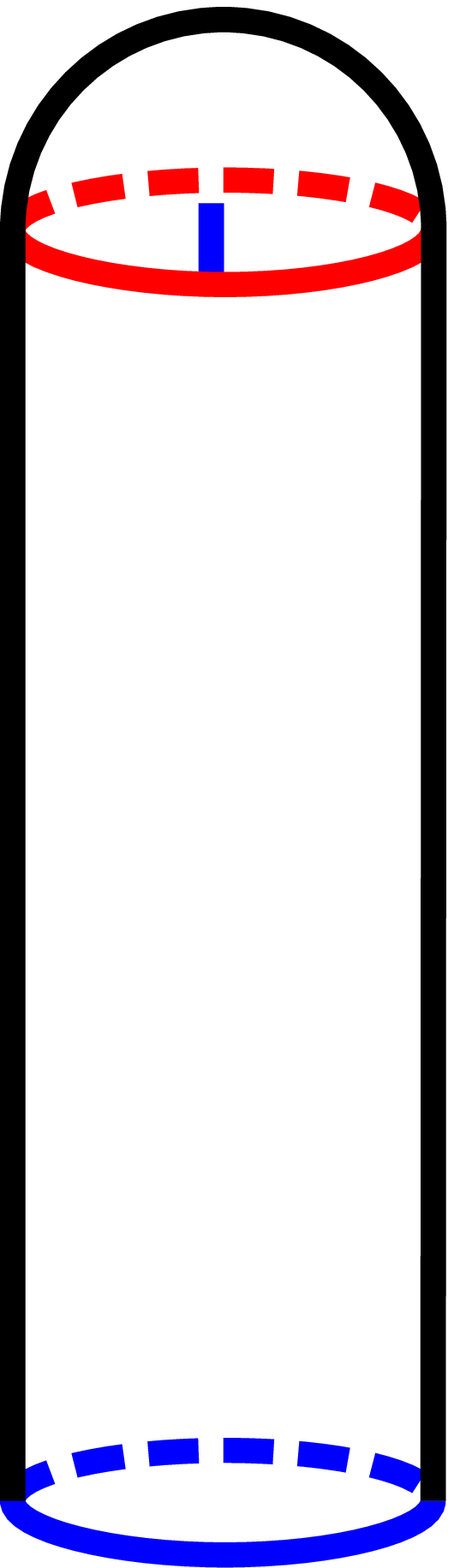}};
\node (R2r) at (10,0) {\includegraphics[scale=0.12]{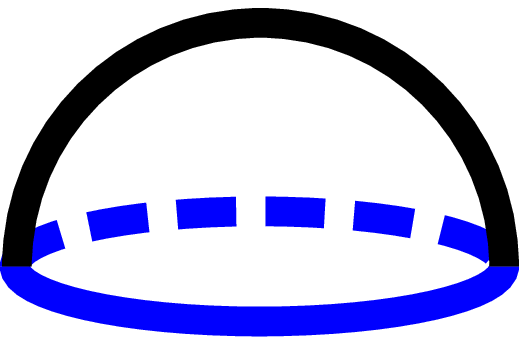}};
\node (=) at (1.7,0) {$\longleftrightarrow$};
\node (=) at (8.7,0) {$\longleftrightarrow$};
\node at (-2.3,1) {R18)};
\node at (4.7,1) {R19)};
\end{tikzpicture}
\end{center}

\begin{center}
\begin{tikzpicture}
\node (R1l) at (0,0) {\includegraphics[scale=0.12]{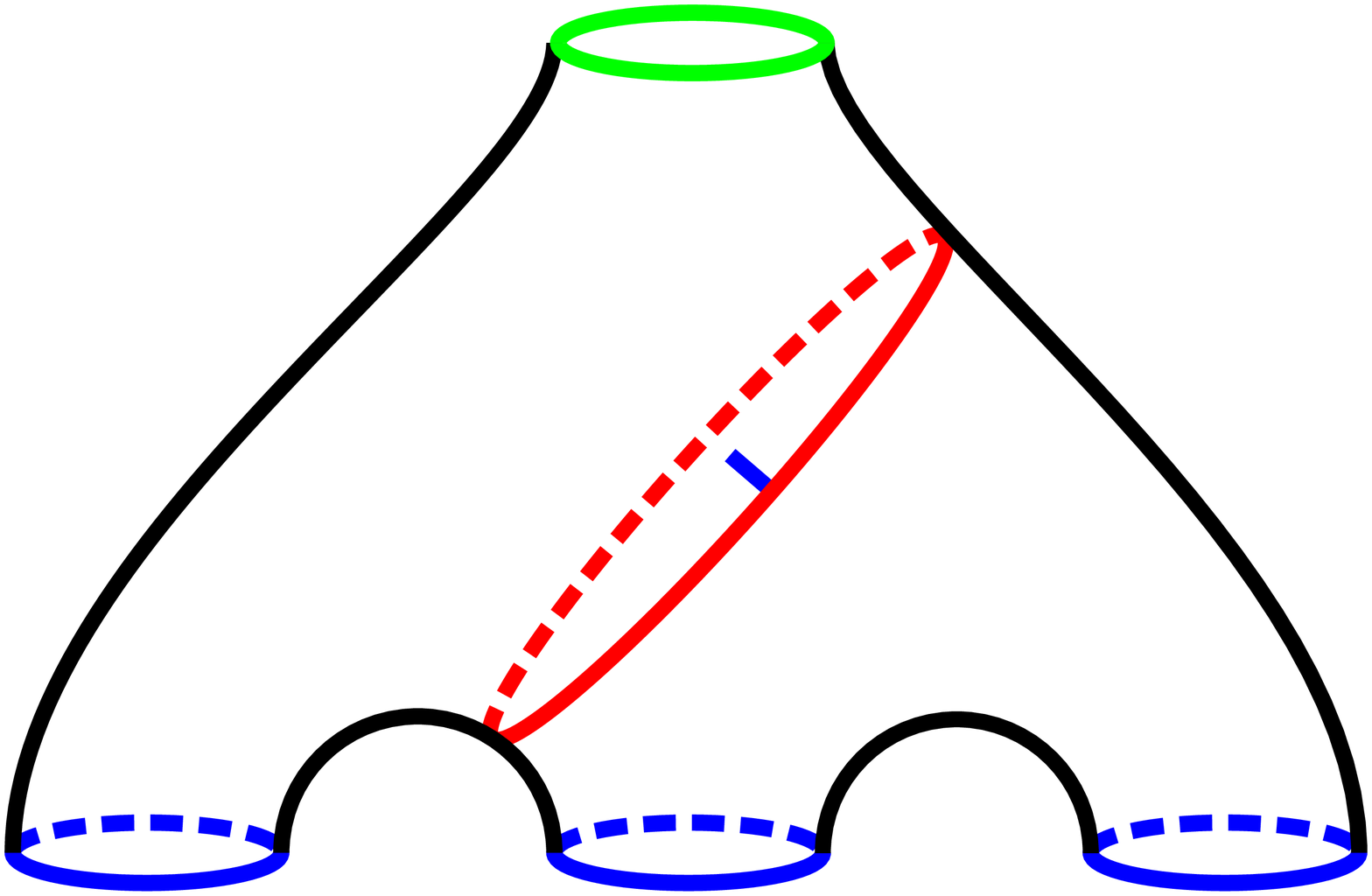}};
\node (R1r) at (4,0) {\includegraphics[scale=0.12]{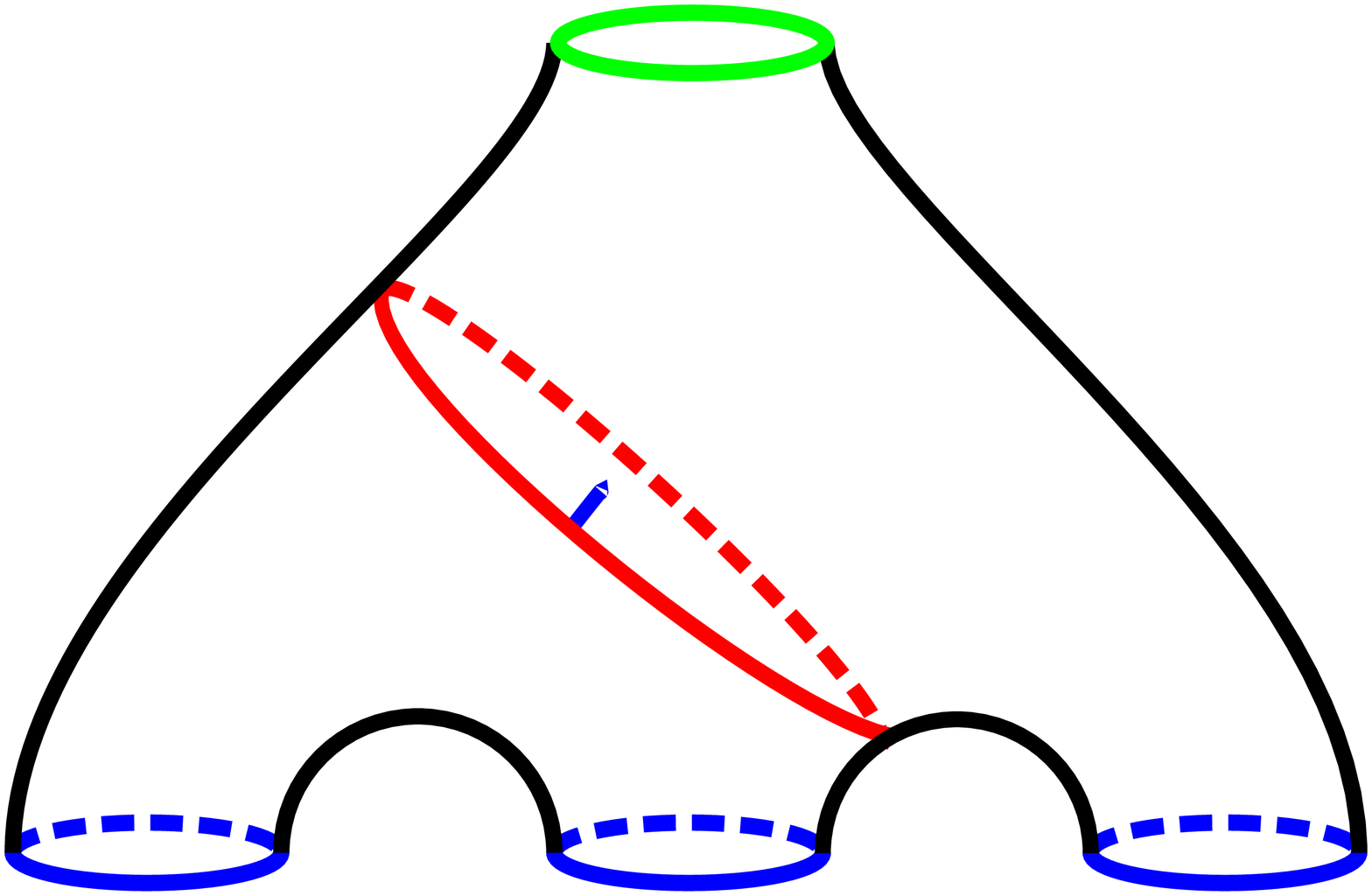}};
\node (=) at (1.7,0) {$\longleftrightarrow$};
\node at (-1.8,1) {R20)};
\end{tikzpicture}
\end{center}

\begin{center}
\begin{tikzpicture}
\node (R2l) at (0,0) {\includegraphics[scale=0.12,angle=180]{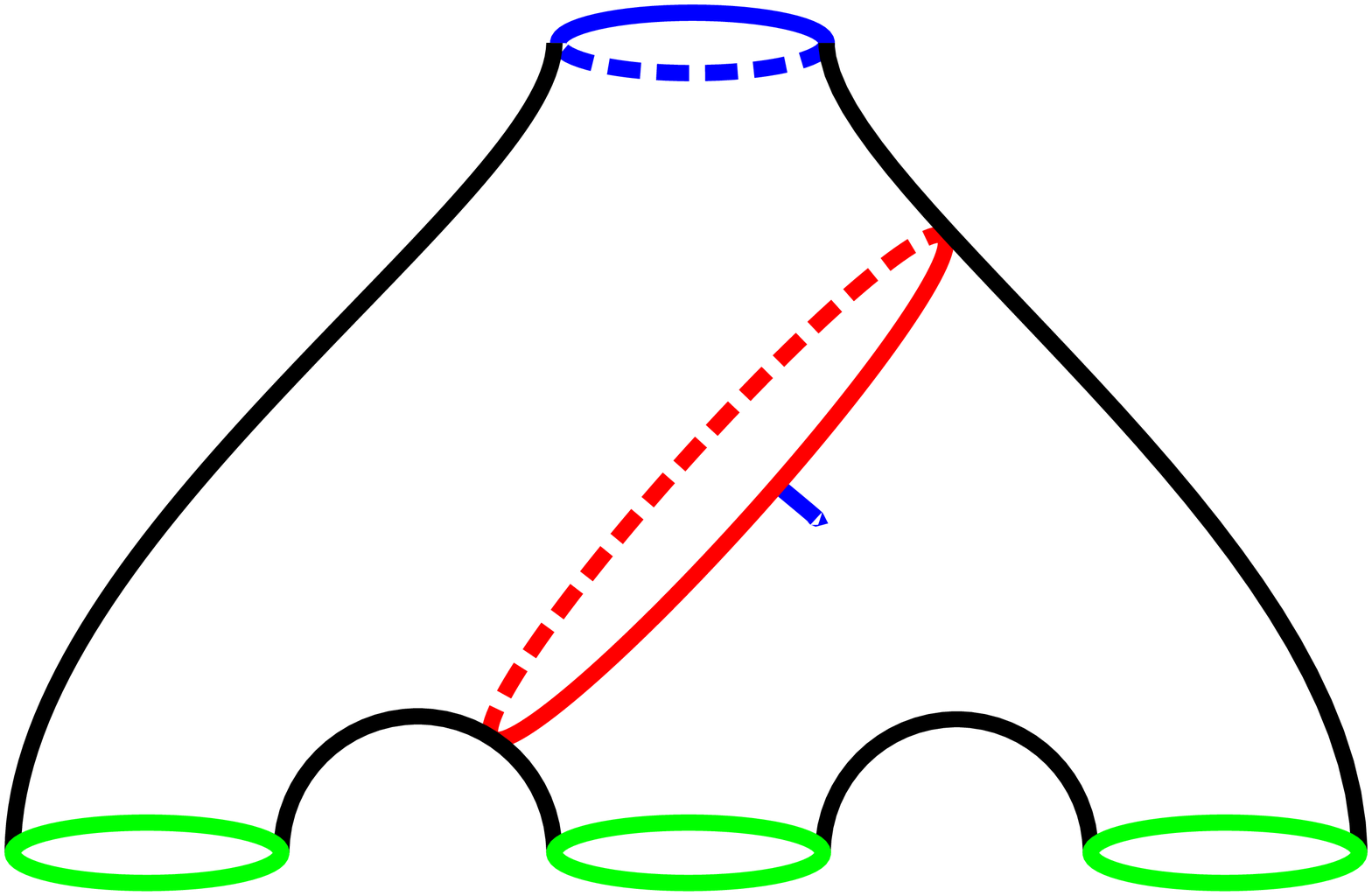}};
\node (R2r) at (4,0) {\includegraphics[scale=0.12,angle=180]{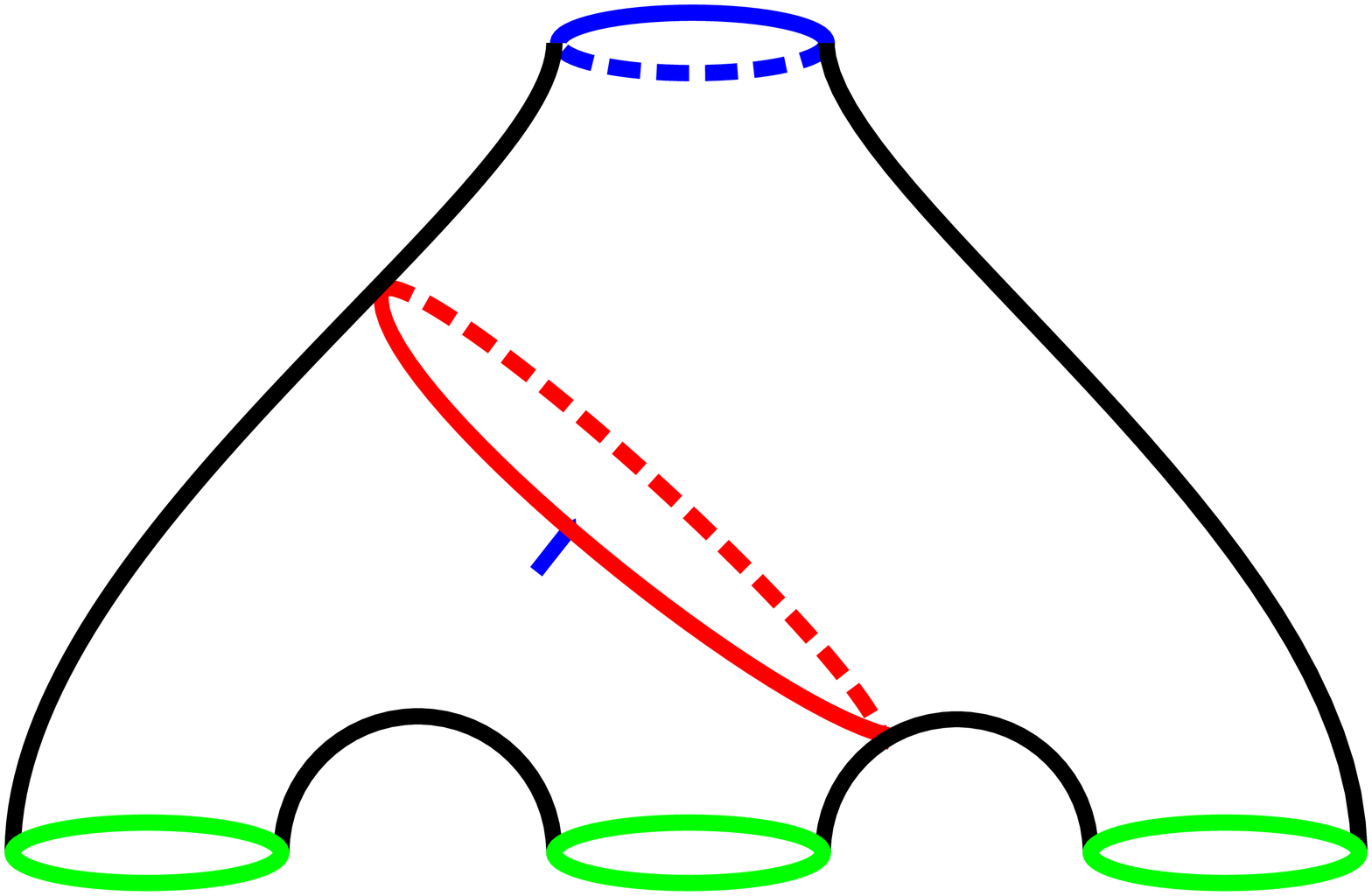}};
\node (=) at (1.7,0) {$\longleftrightarrow$};
\node at (-2,1) {R21)};
\end{tikzpicture}
\end{center}

\begin{center}
\begin{tikzpicture}
\node (R1l) at (0,0) {\includegraphics[scale=0.12]{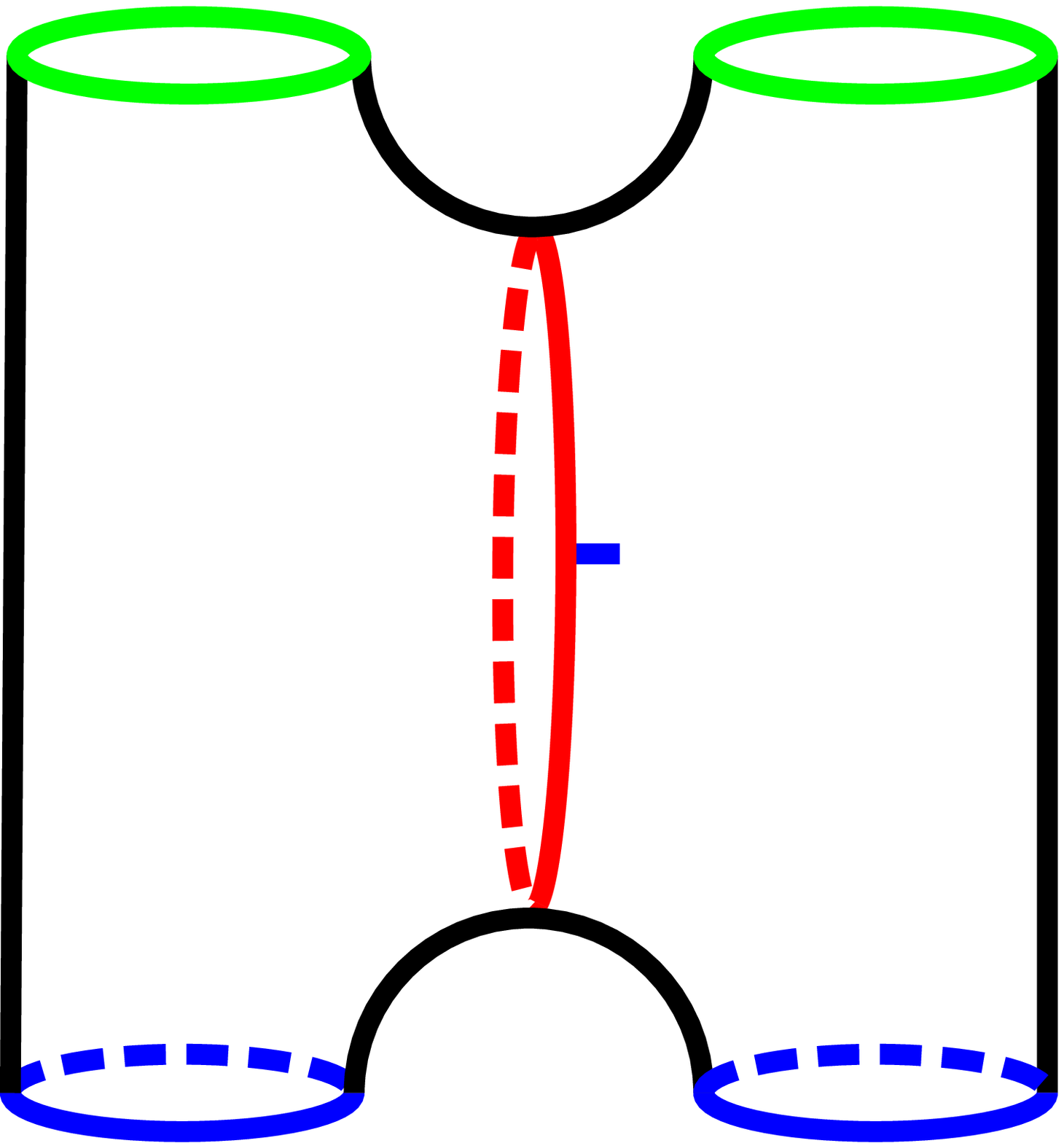}};
\node (R1r) at (3,0) {\includegraphics[scale=0.12]{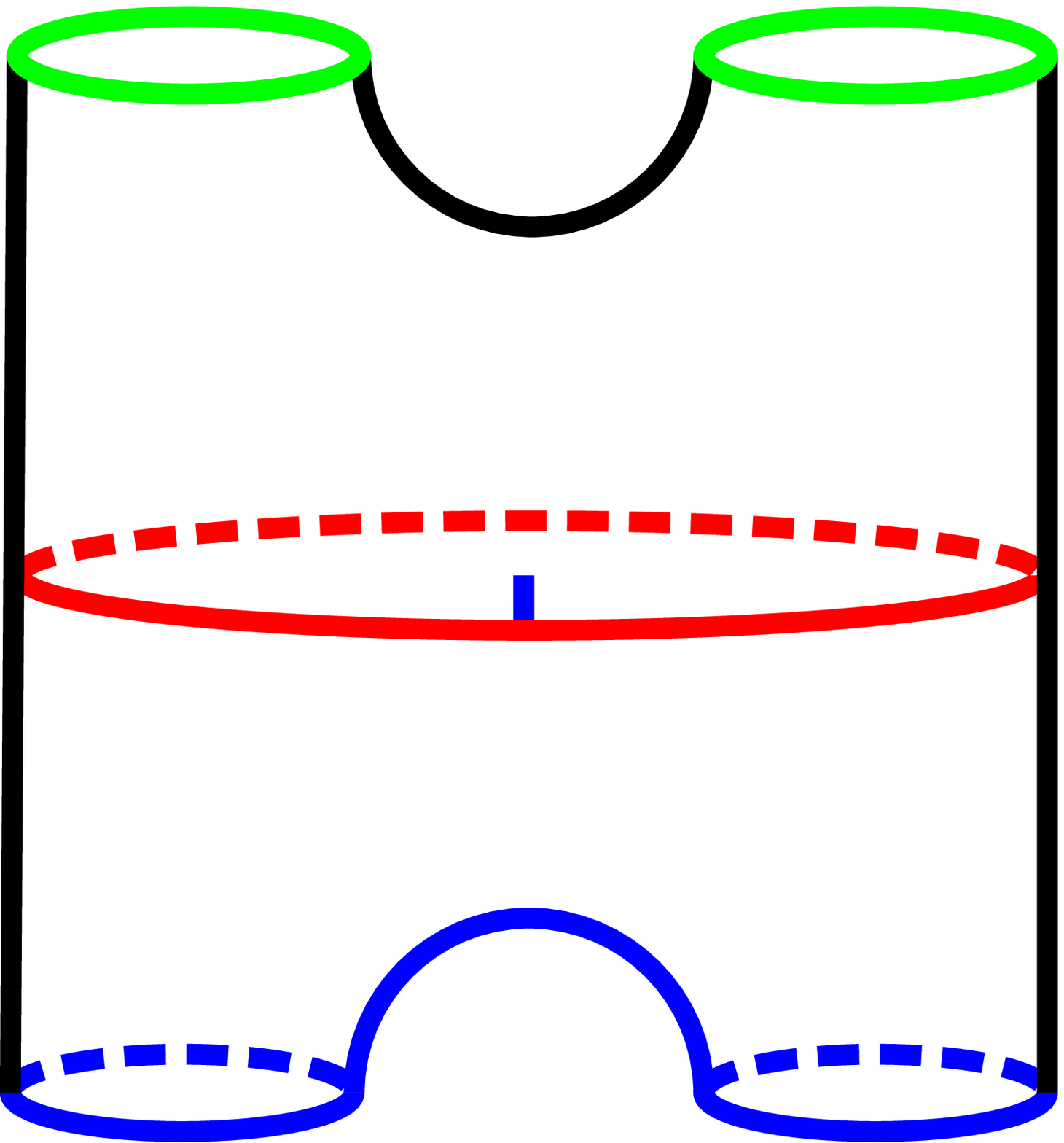}};
\node (R2l) at (7,0) {\includegraphics[scale=0.12]{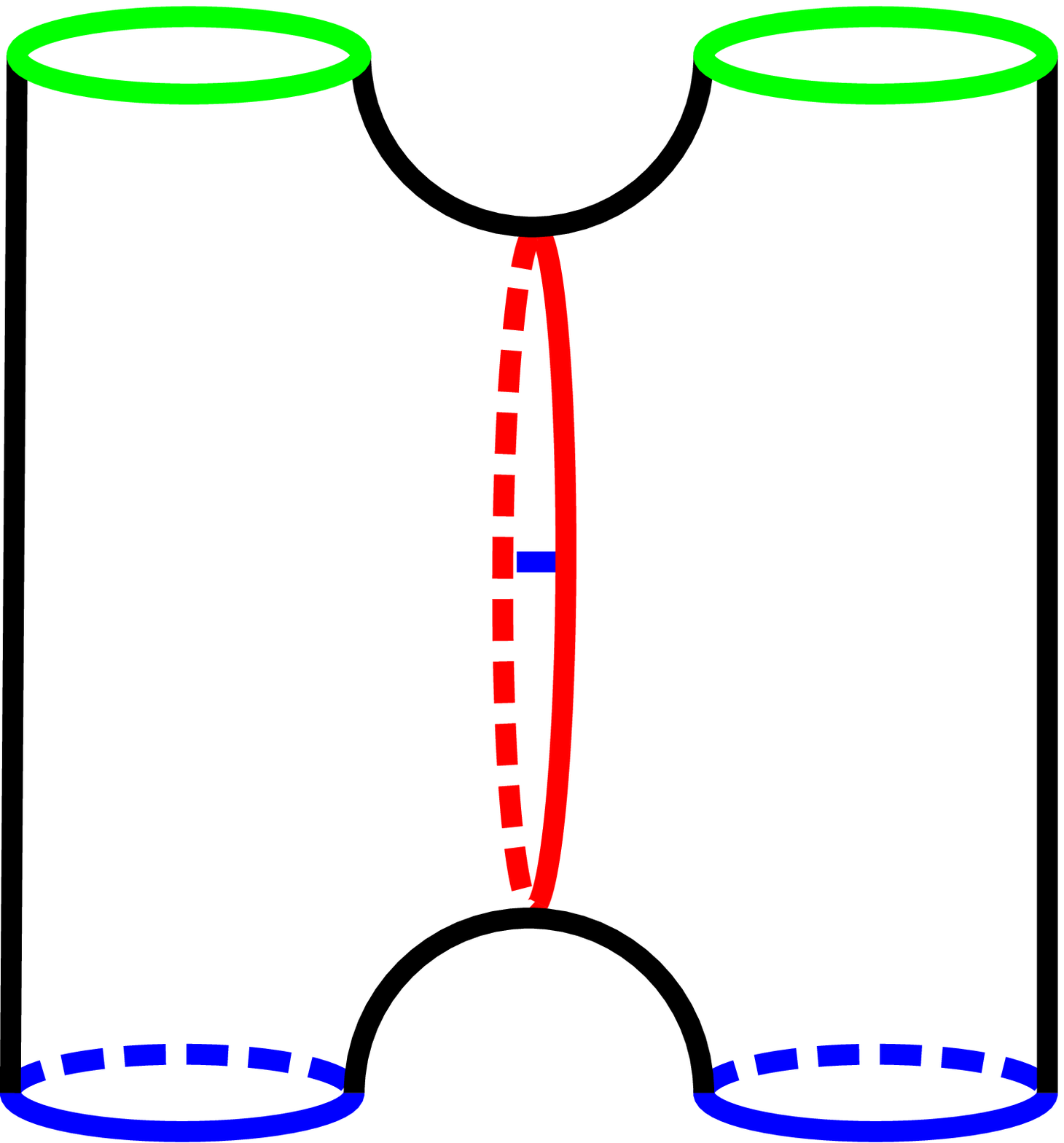}};
\node (R2r) at (10,0) {\includegraphics[scale=0.12]{figure186.eps}};
\node (=) at (1.7,0) {$\longleftrightarrow$};
\node (=) at (8.7,0) {$\longleftrightarrow$};
\node at (-1.8,1) {R22)};
\node at (5.2,1) {R23)};
\end{tikzpicture}
\end{center}
In the picture of the Dehn-twist and braid move, the red dashed lines are not glueing lines, but auxiliary curves to display the action of the elements of the mapping class group corresponding to the moves.
\begin{center}
\begin{tikzpicture}
\node (R1l) at (0,0) {\includegraphics[scale=0.12]{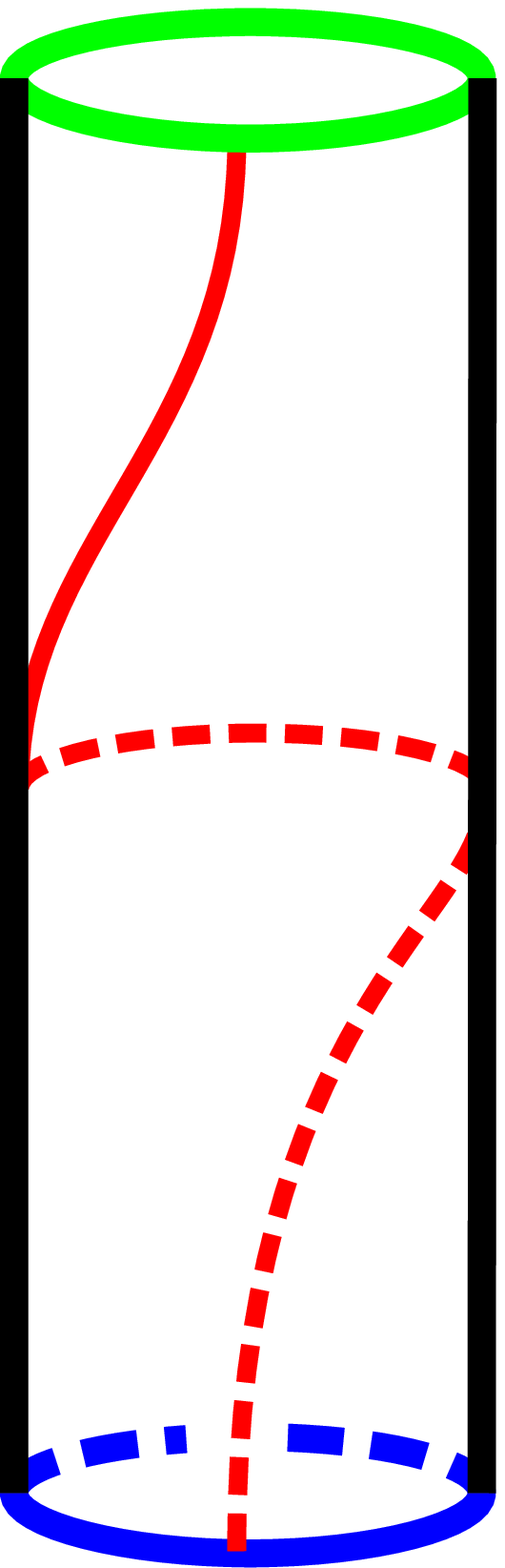}};
\node (R1r) at (3,0) {\includegraphics[scale=0.12]{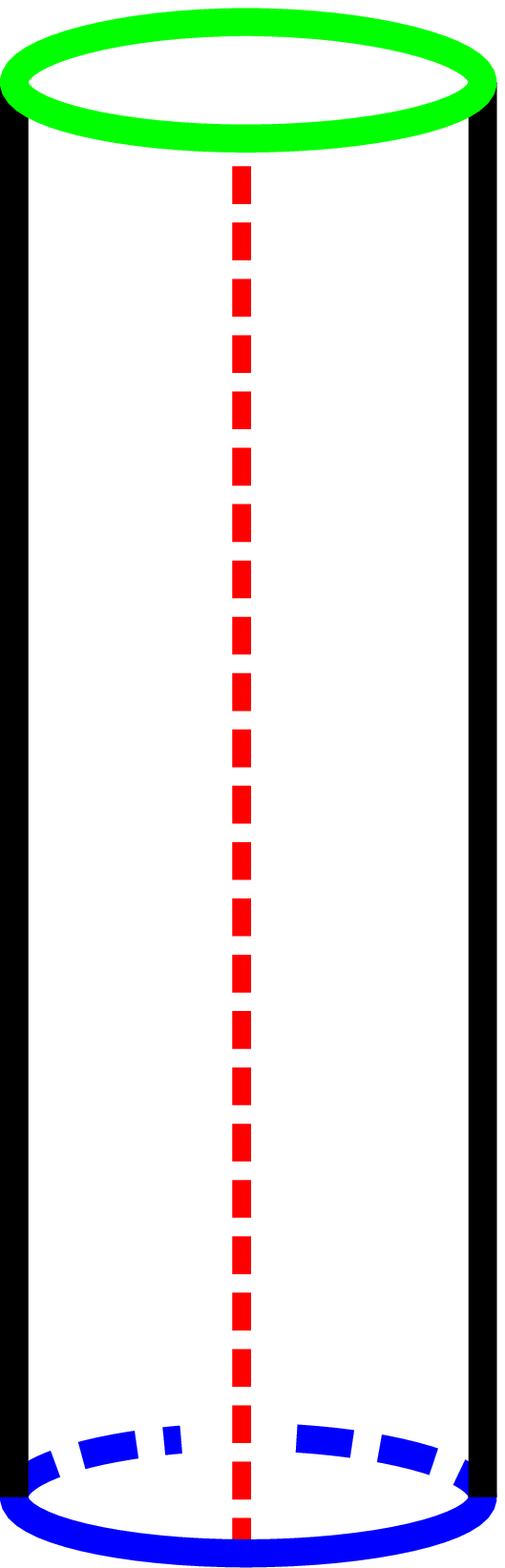}};
\node (R2l) at (7,0) {\includegraphics[scale=0.12]{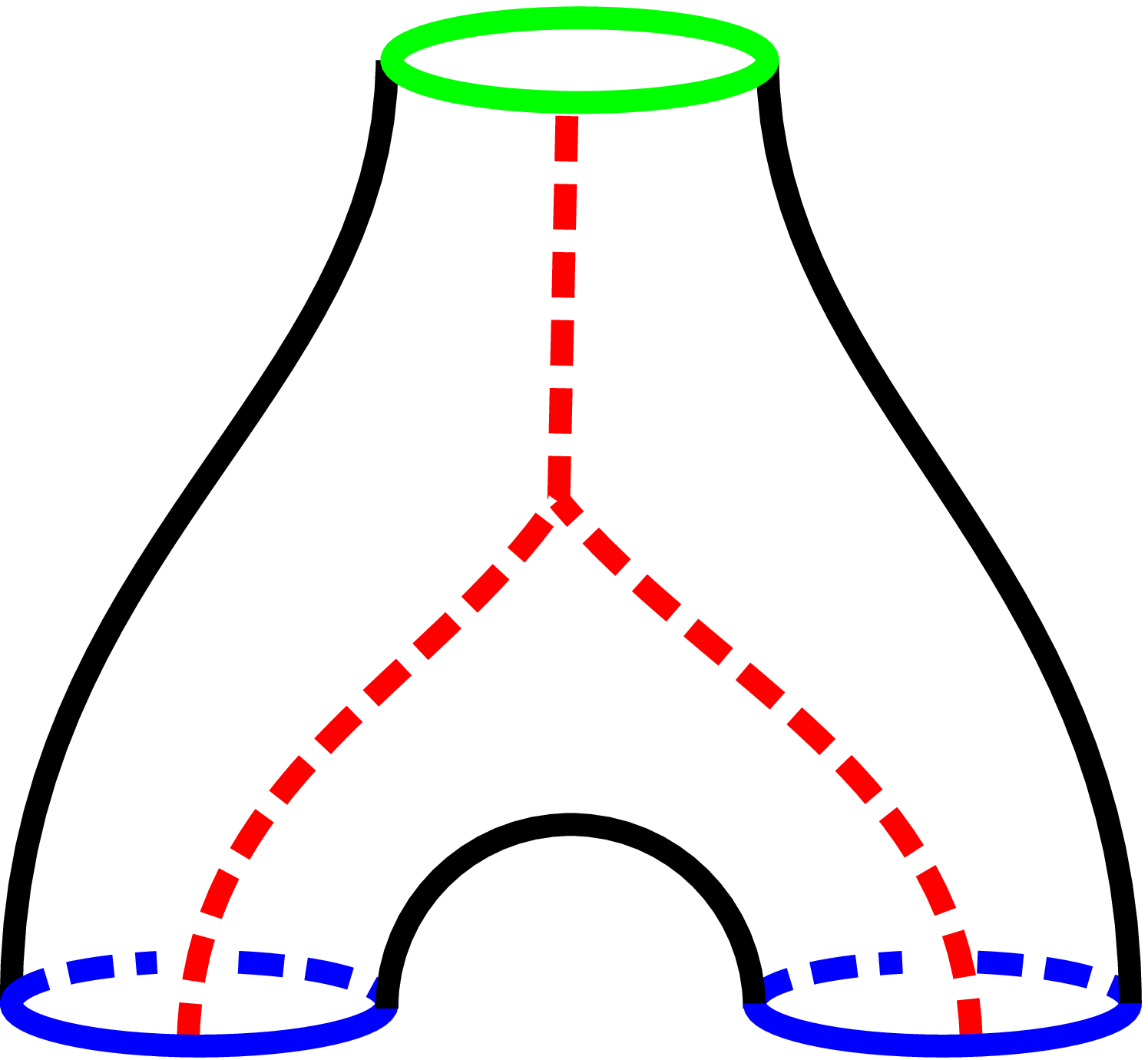}};
\node (R2r) at (10,0) {\includegraphics[scale=0.12]{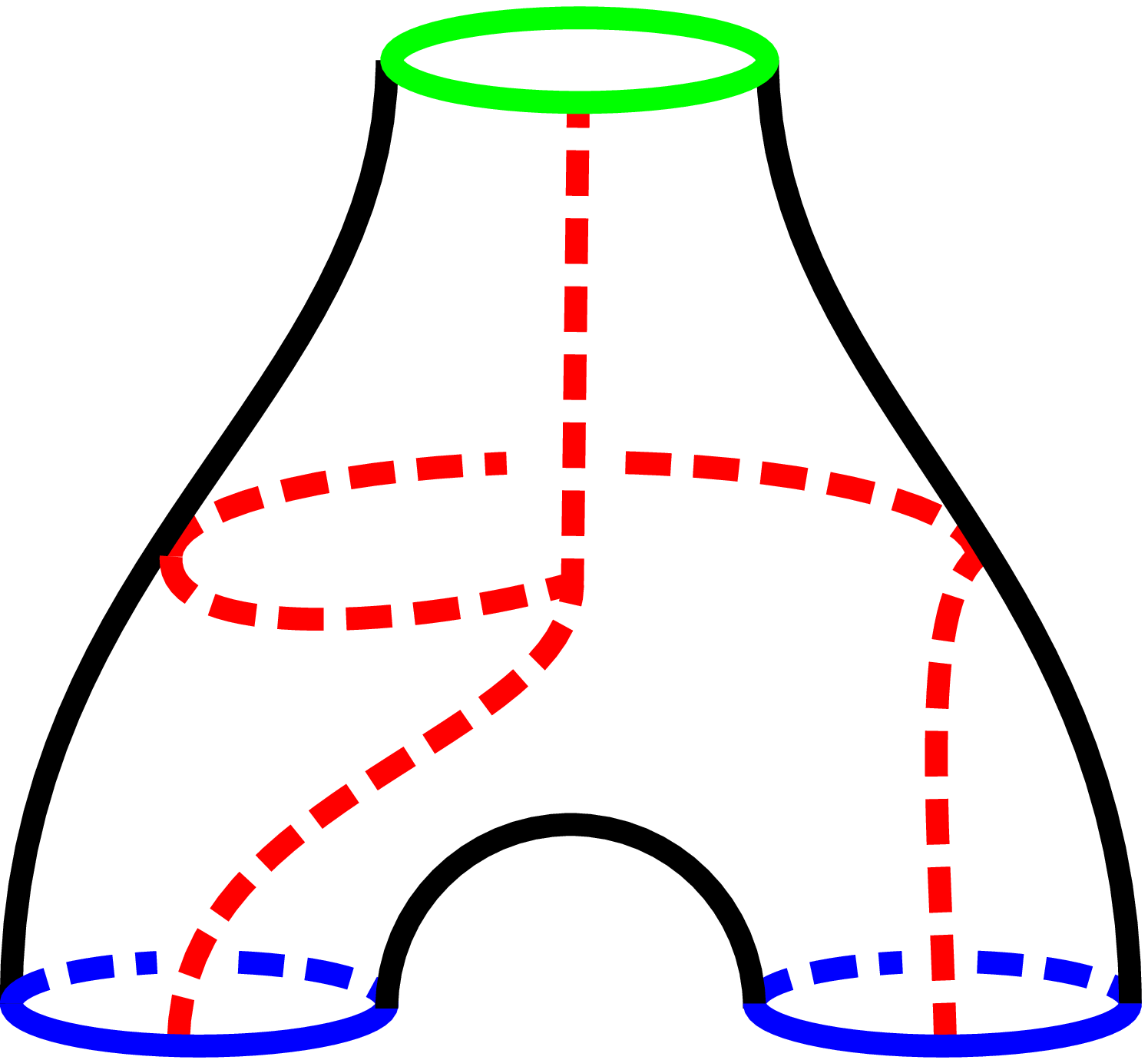}};
\node (=) at (1.7,0) {$\longleftrightarrow$};
\node (=) at (8.7,0) {$\longleftrightarrow$};
\node at (-1.8,1) {R24)};
\node at (5.2,1) {R25)};
\end{tikzpicture}
\end{center}

\item \textbf{Open-Closed Relations}

\begin{center}
\begin{tikzpicture}
\node (R1l) at (0,0) {\includegraphics[scale=0.12]{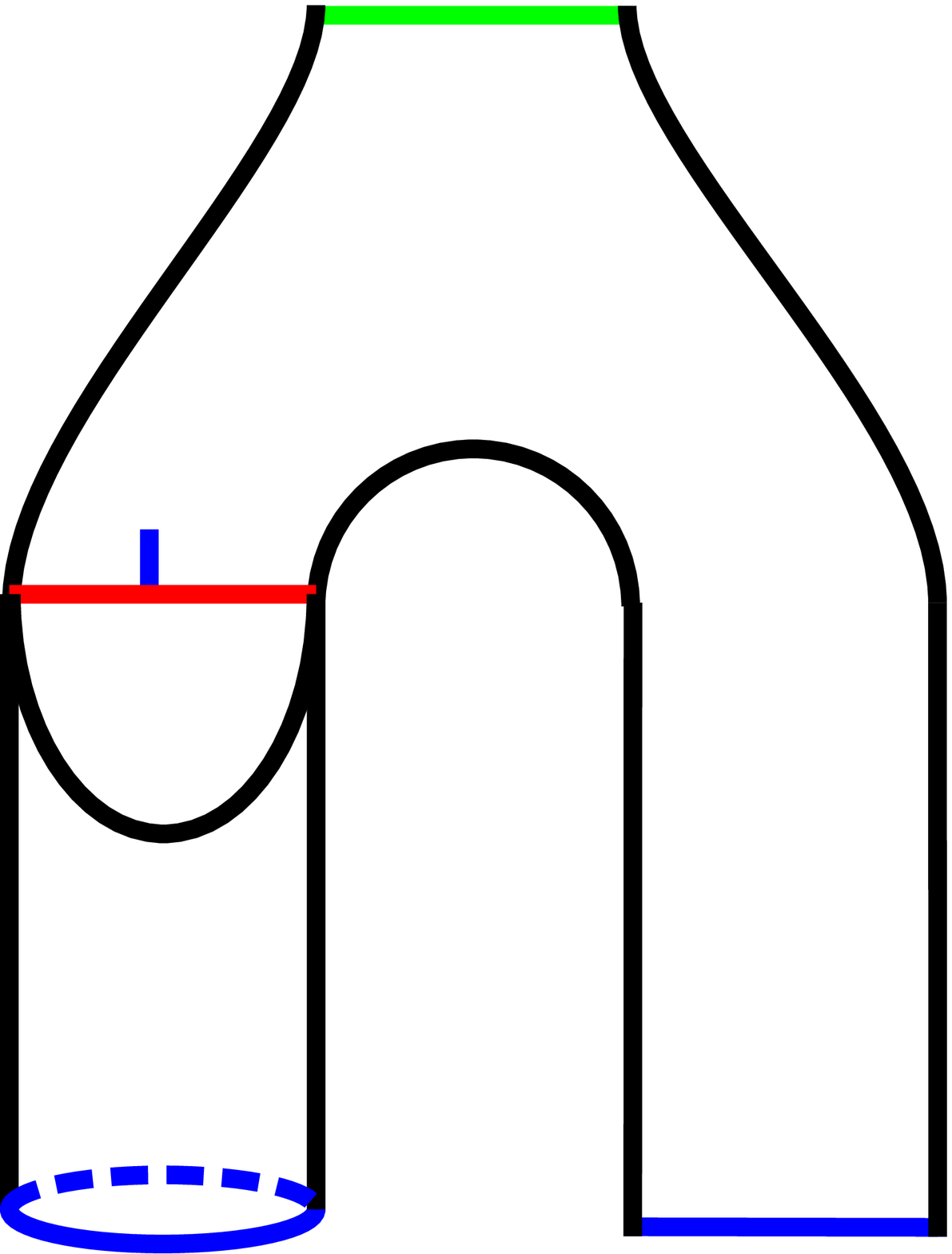}};
\node (R1r) at (3,0) {\includegraphics[scale=0.12]{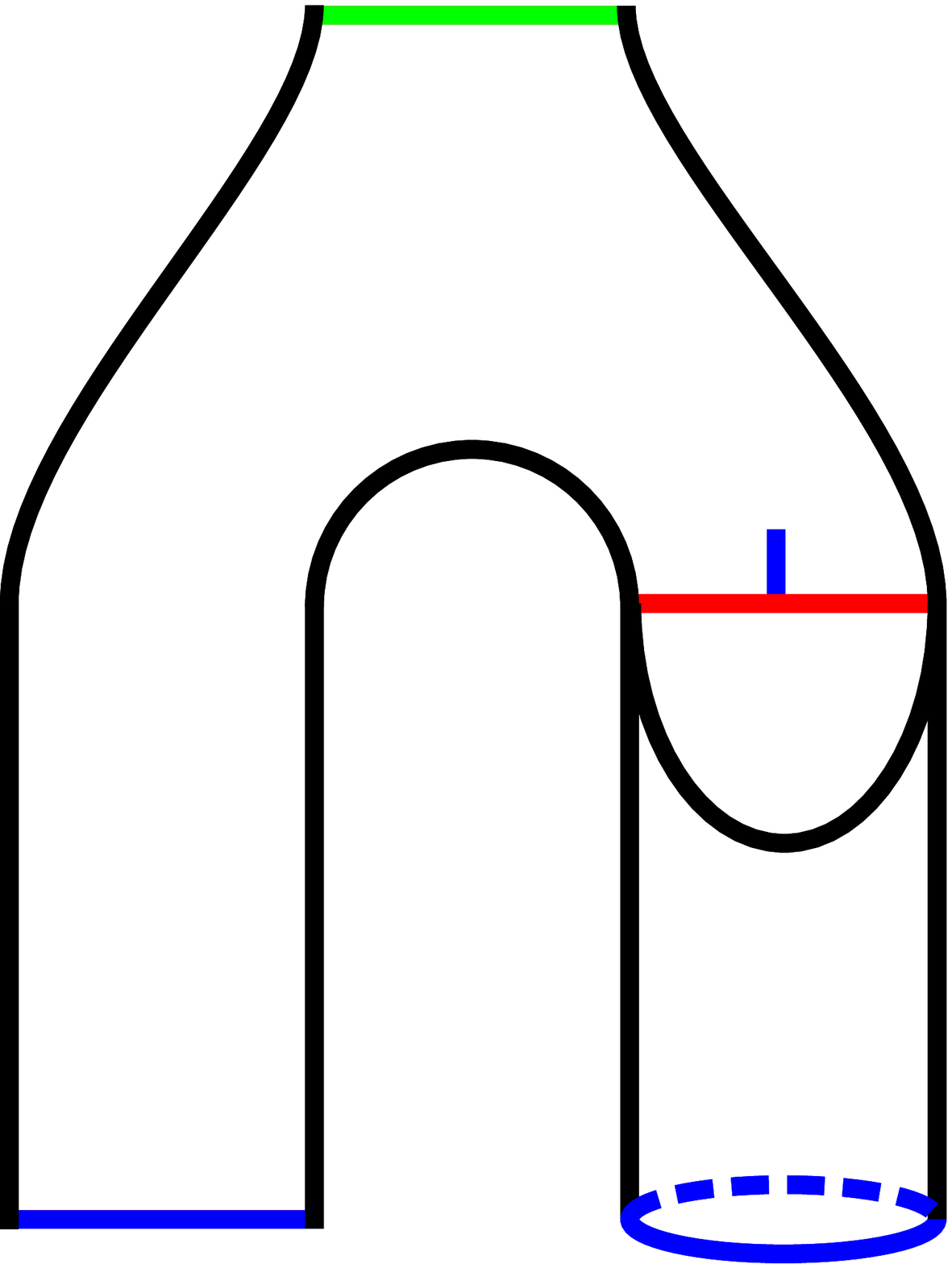}};
\node (R2l) at (7,0) {\includegraphics[scale=0.12]{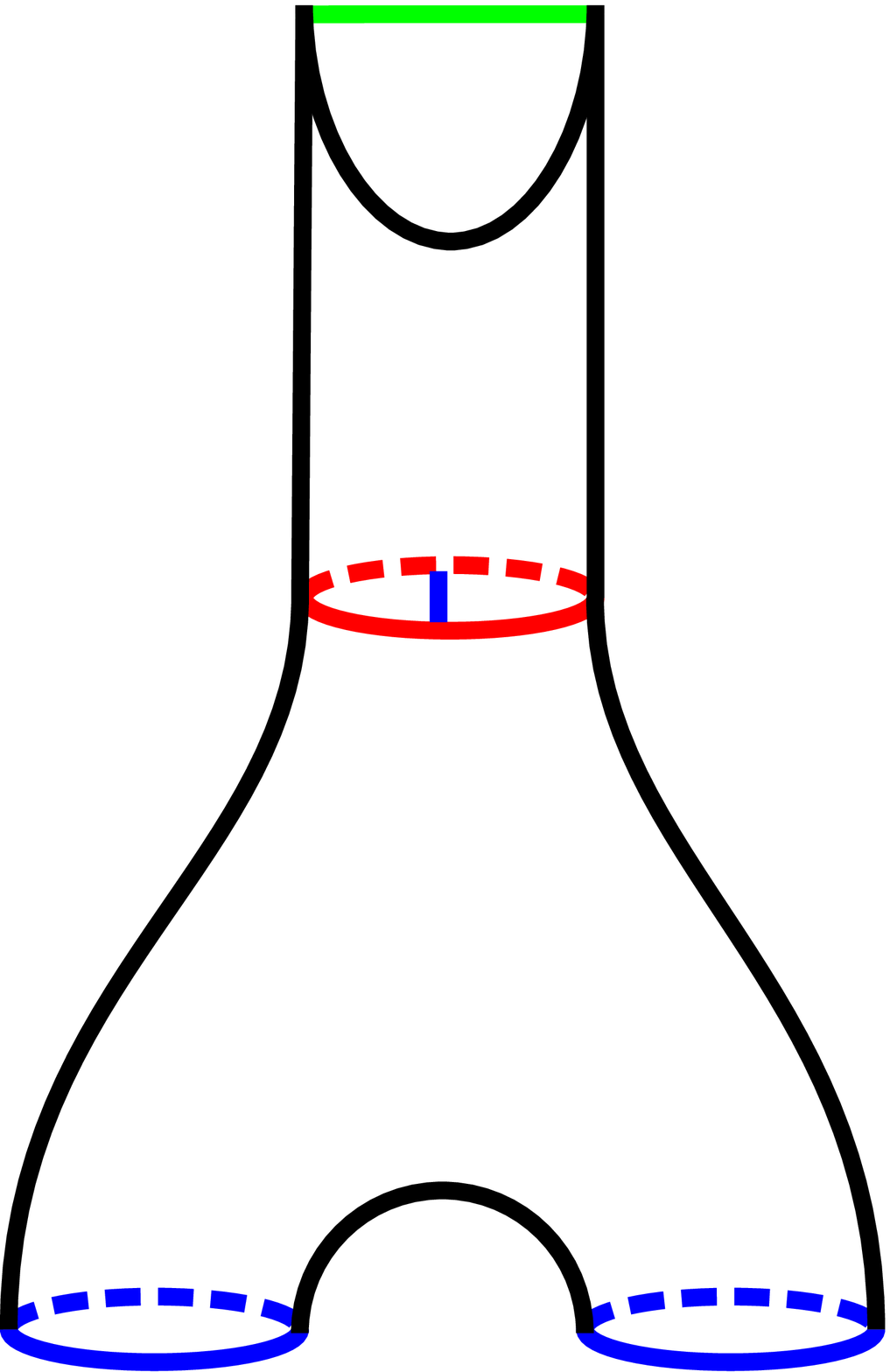}};
\node (R2r) at (10,0) {\includegraphics[scale=0.12]{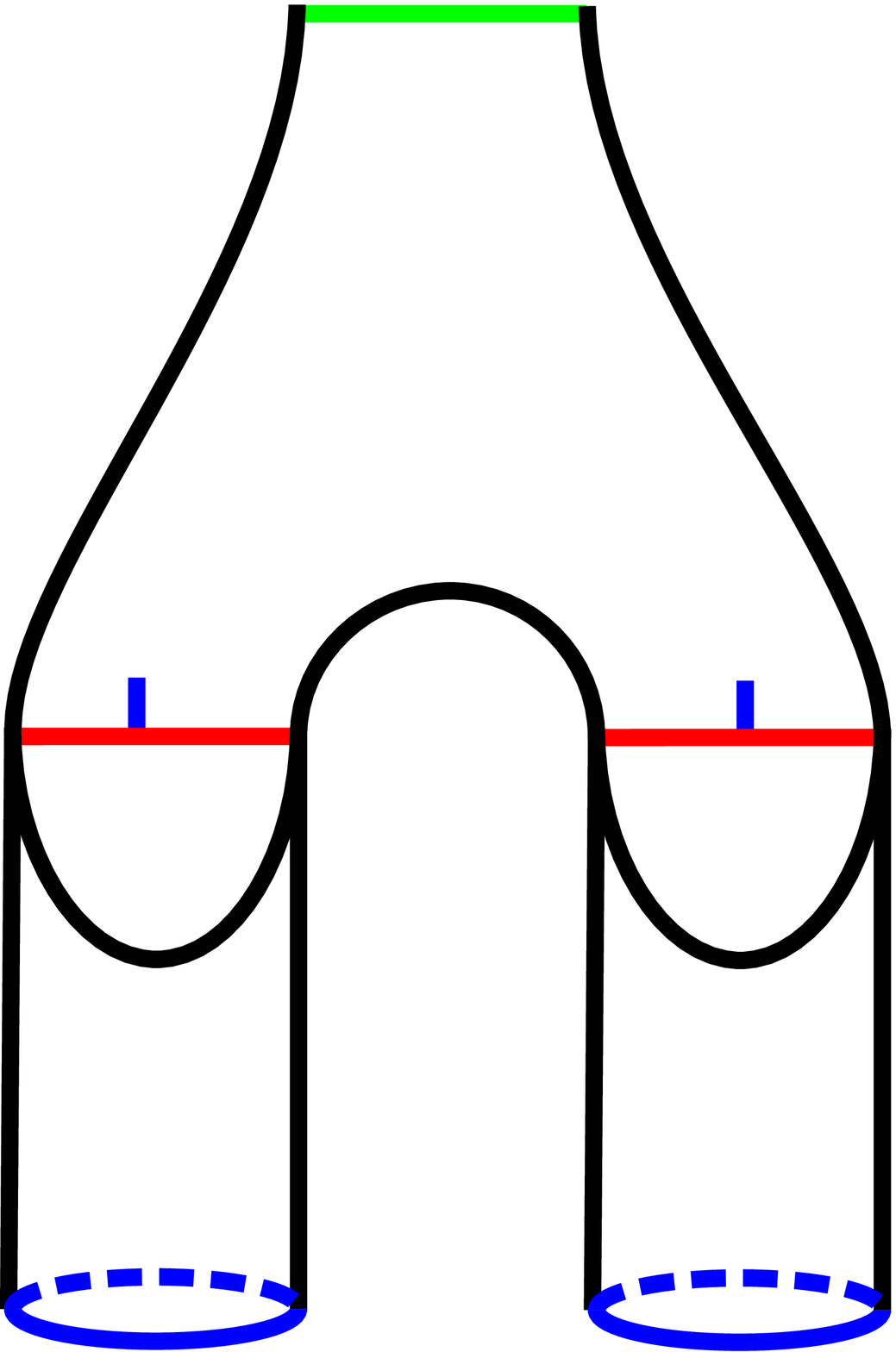}};
\node (=) at (1.7,0) {$\longleftrightarrow$};
\node (=) at (8.7,0) {$\longleftrightarrow$};
\node at (-1.8,1) {R26)};
\node at (5.2,1) {R27)};
\end{tikzpicture}
\end{center}

\begin{center}
\begin{tikzpicture}
\node at (0,0) {\includegraphics[scale=0.12]{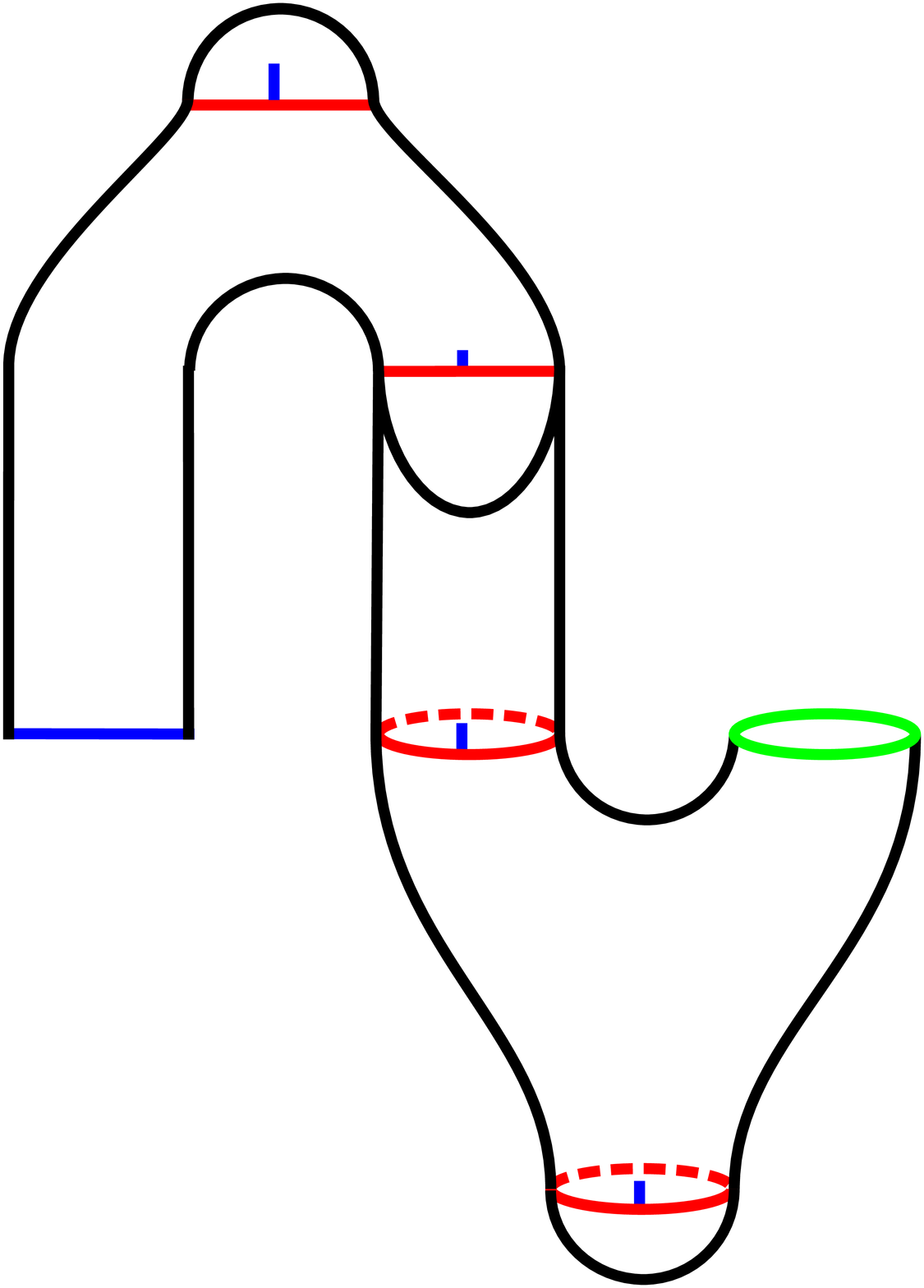}};
\node at (3,0) {\includegraphics[scale=0.12,angle=180]{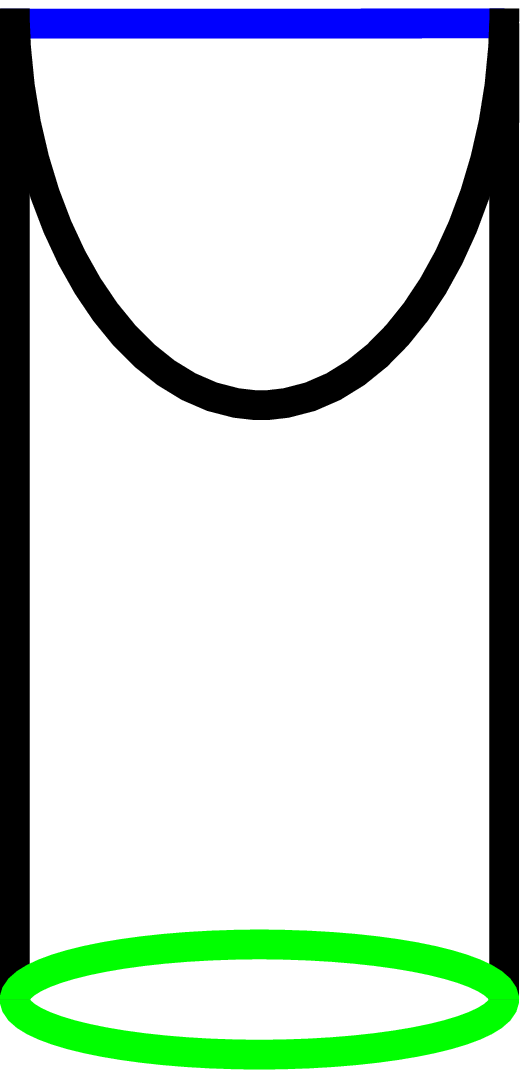}};
\node at (7,0) {\includegraphics[scale=0.12]{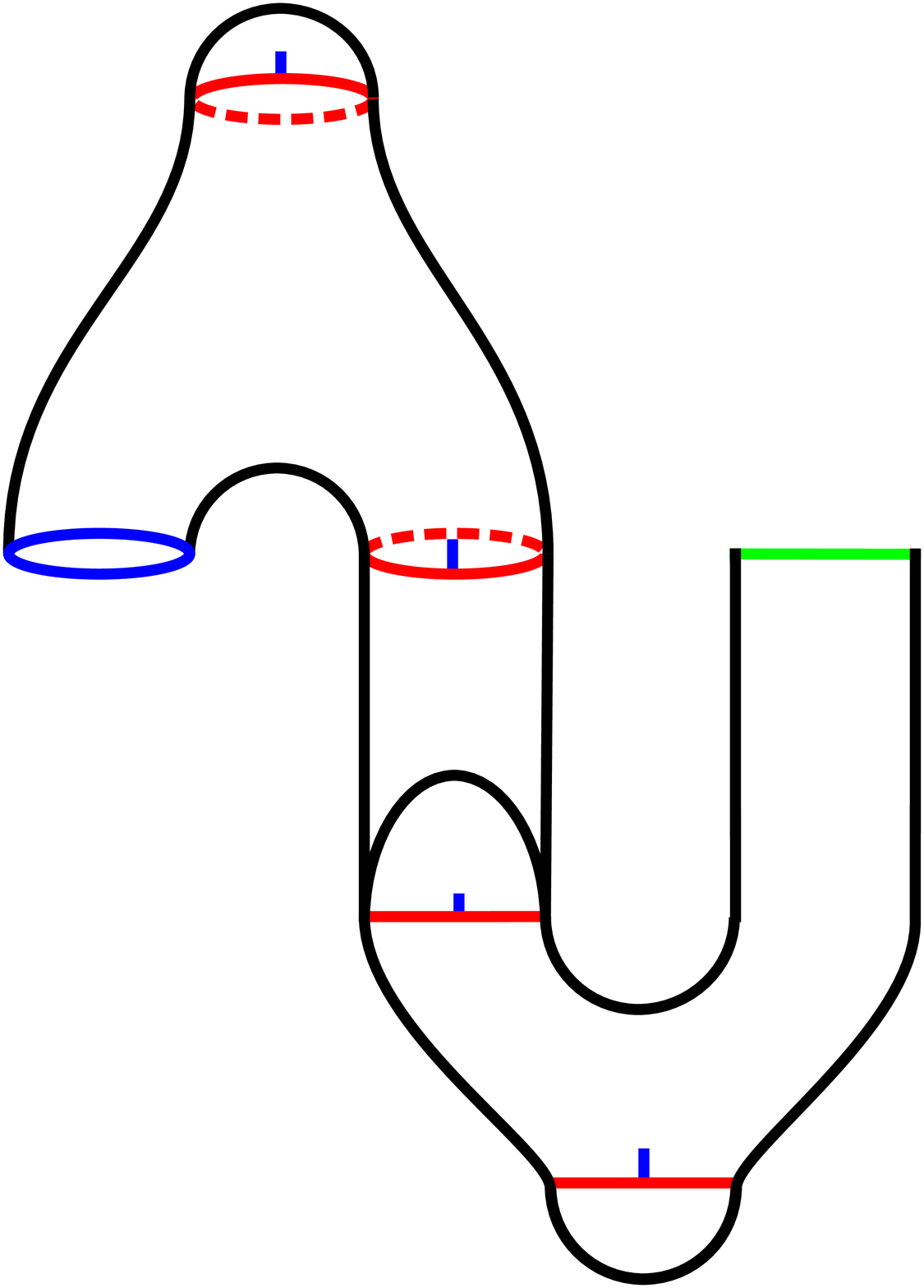}};
\node at (10,0) {\includegraphics[scale=0.12]{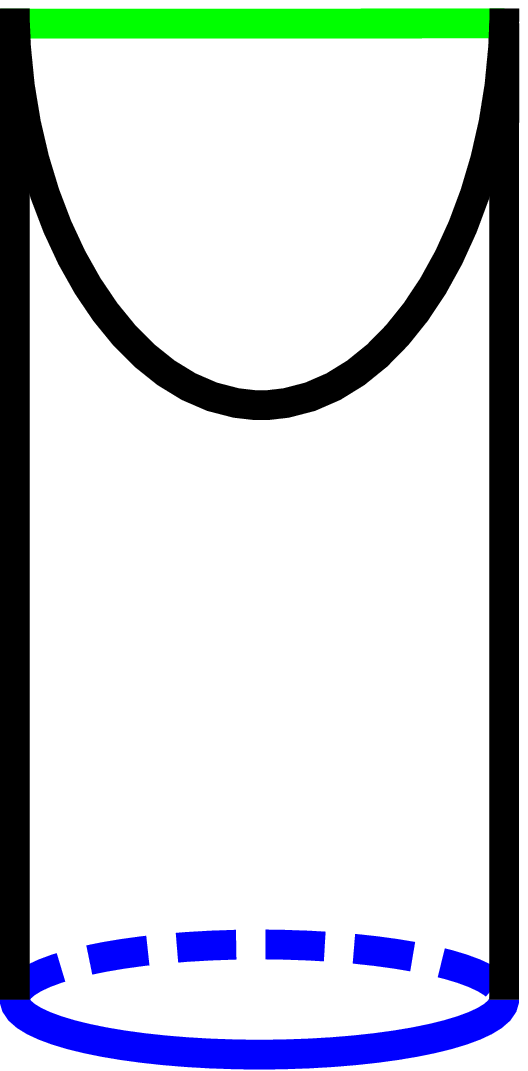}};
\node at (2,0) {$\longleftrightarrow$};
\node at (9,0) {$\longleftrightarrow$};
\node at (-2.3,1.5) {R28)};
\node at (4.7,1.5) {R29)};
\end{tikzpicture}
\end{center}
\begin{center}
\begin{tikzpicture}
\node (R1l) at (0,0) {\includegraphics[scale=0.12]{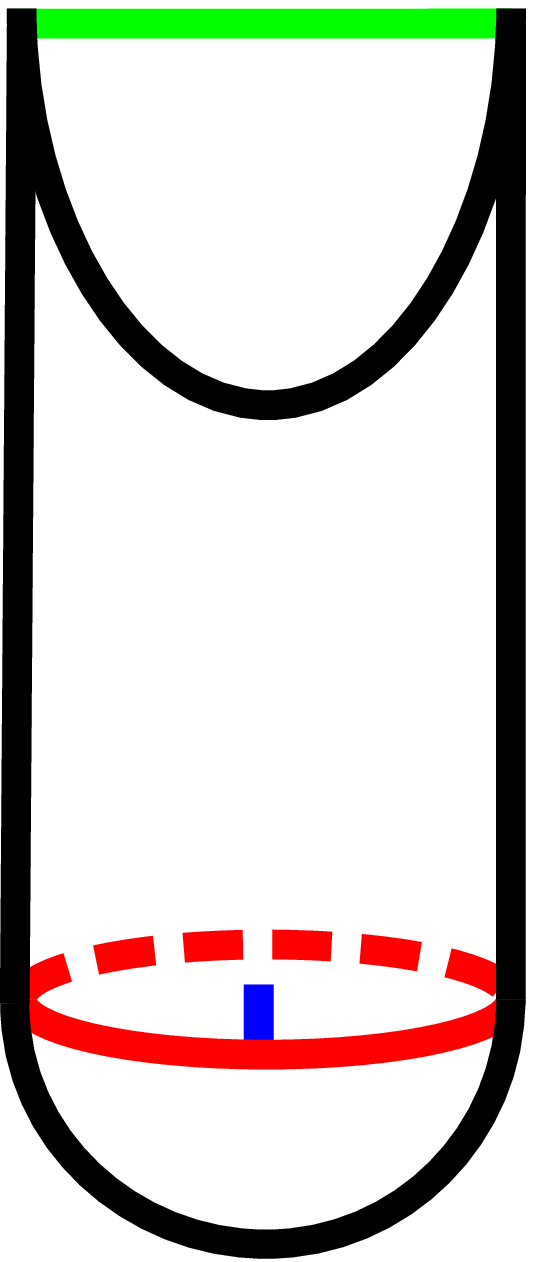}};
\node (R1r) at (3,0) {\includegraphics[scale=0.12]{figure167.eps}};
\node (R2l) at (7,0) {\includegraphics[scale=0.12]{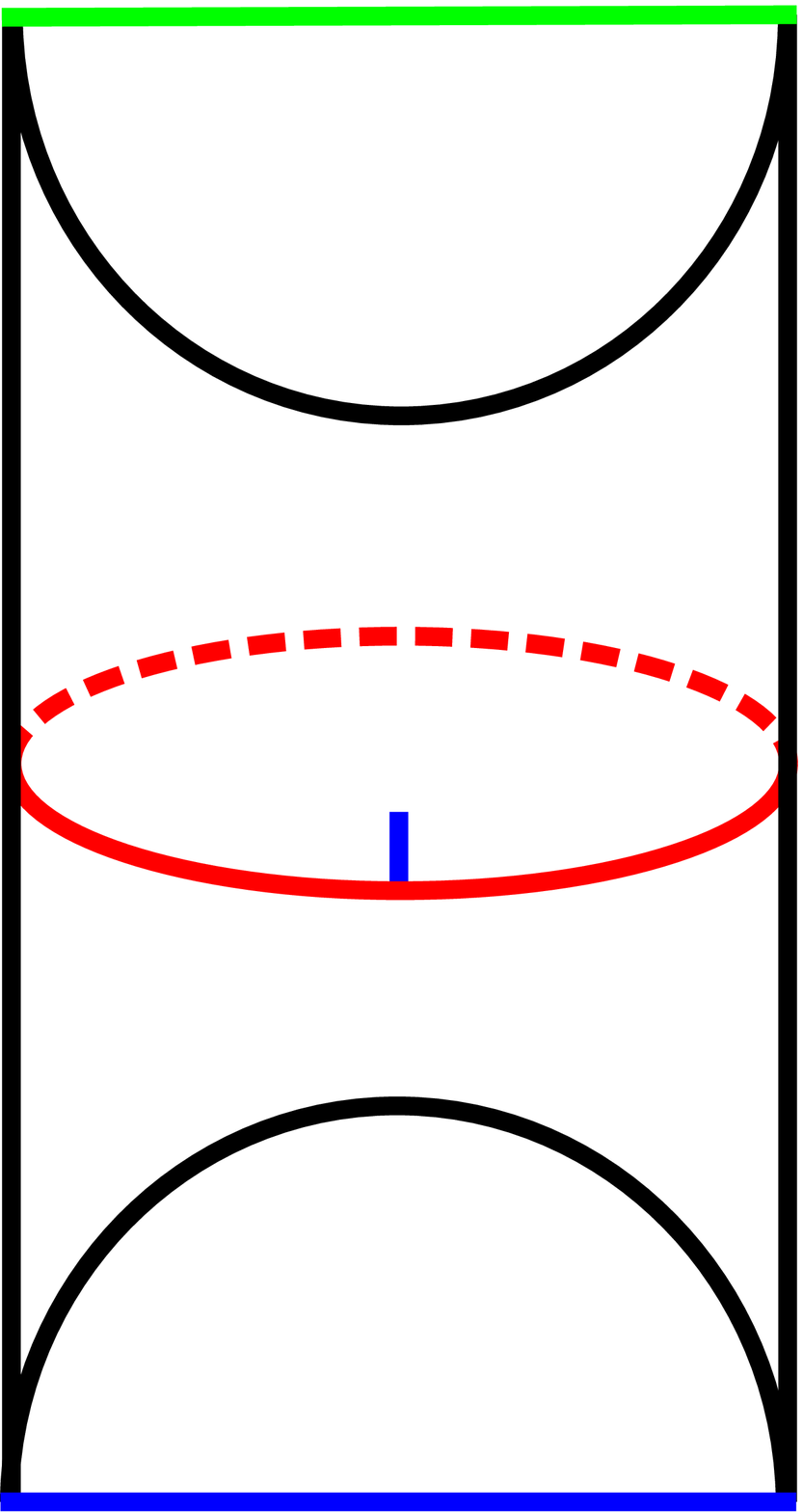}};
\node (R2r) at (10,0) {\includegraphics[scale=0.12]{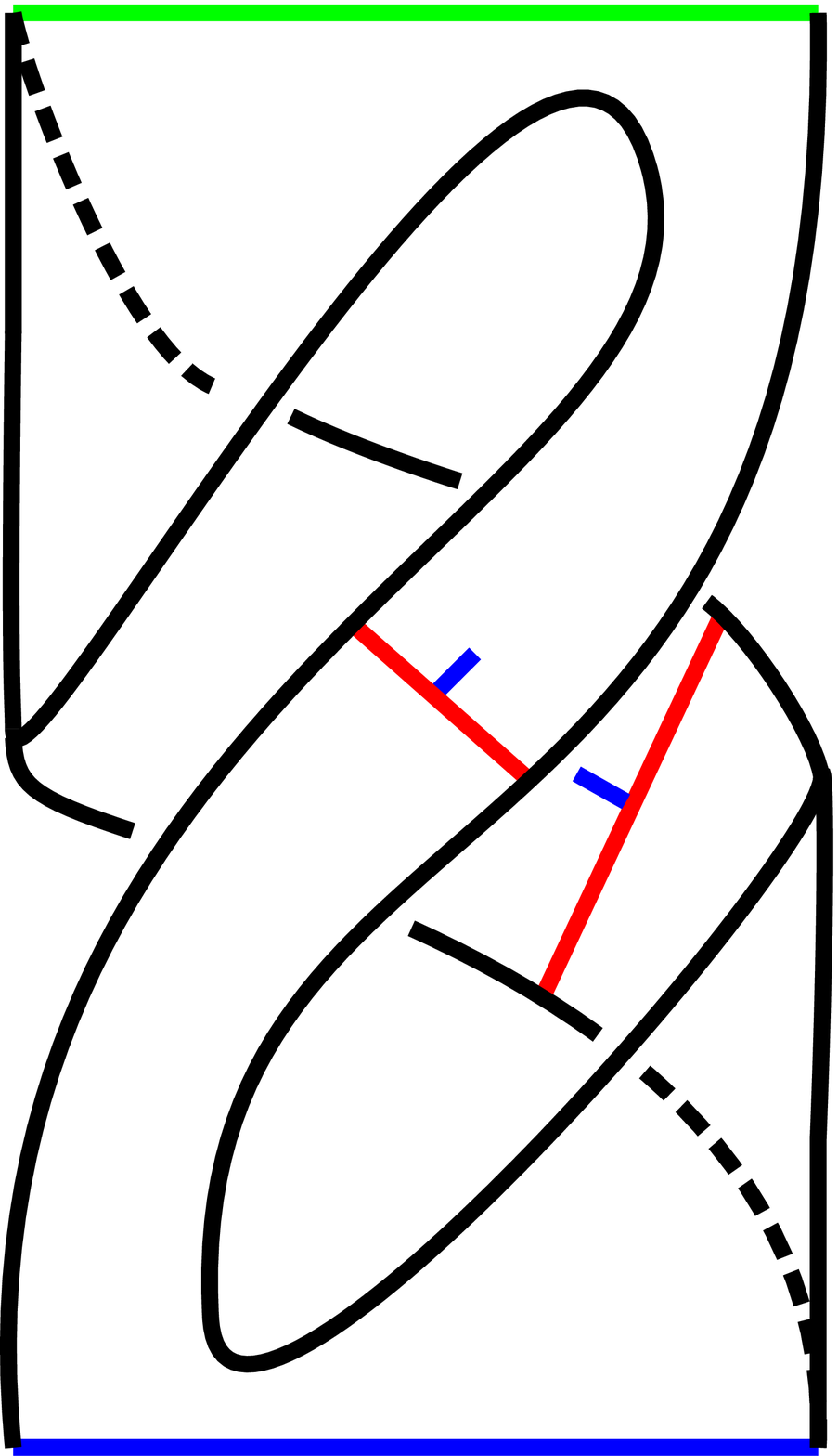}};
\node (=) at (1.7,0) {$\longleftrightarrow$};
\node (=) at (8.7,0) {$\longleftrightarrow$};
\node at (-1.8,1) {R30)};
\node at (5.2,1) {R31)};
\end{tikzpicture}
\end{center}

\item \textbf{Genus 1 Relation:} The genus one move takes place on a torus with one boundary component and interchanges $a$- and $b$-cycle of the torus as indicated by the colors. 
\begin{center}
\begin{tikzpicture}
\node at (0,0) {\includegraphics[scale=0.12]{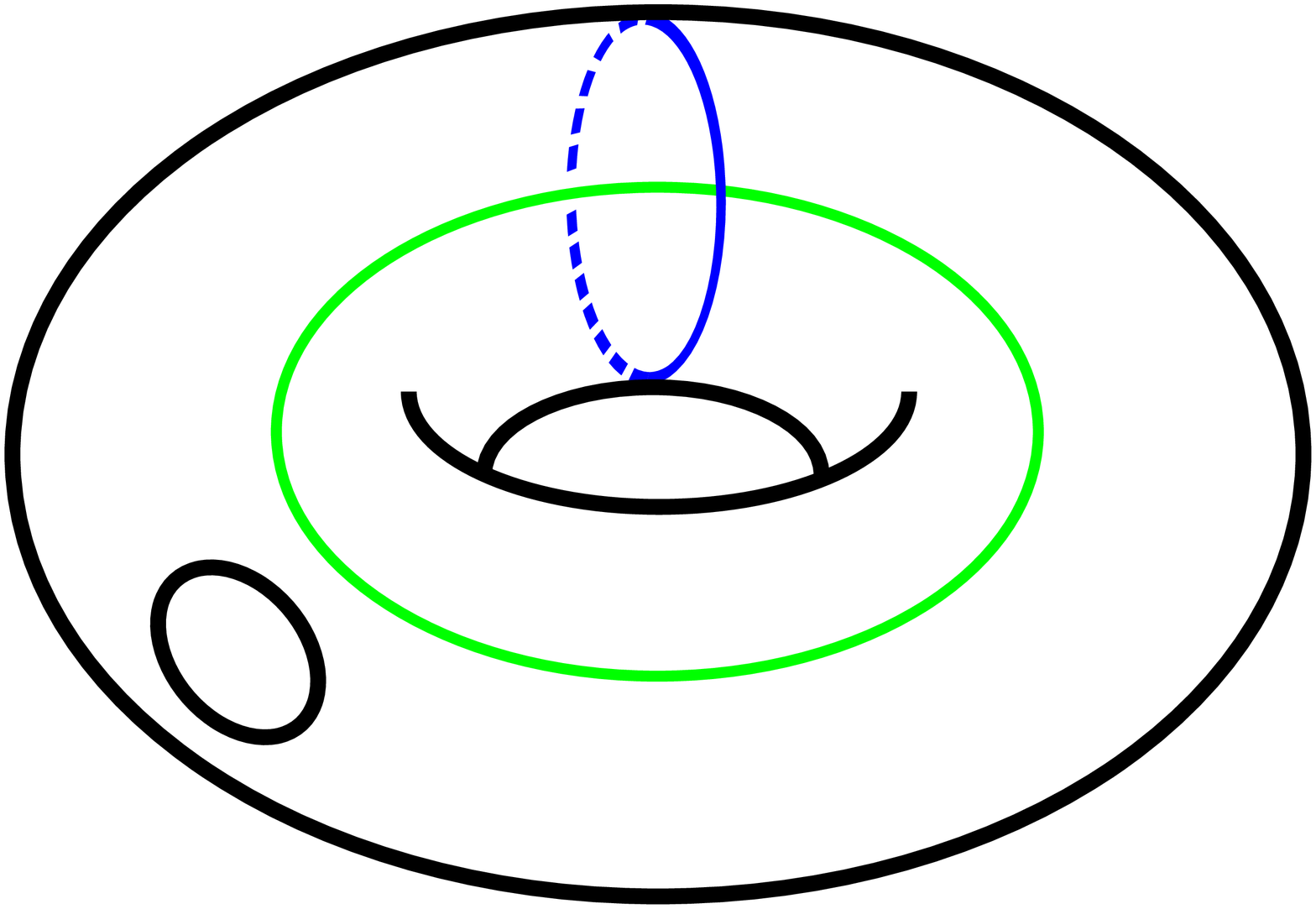}};
\node at (6,0) {\includegraphics[scale=0.12]{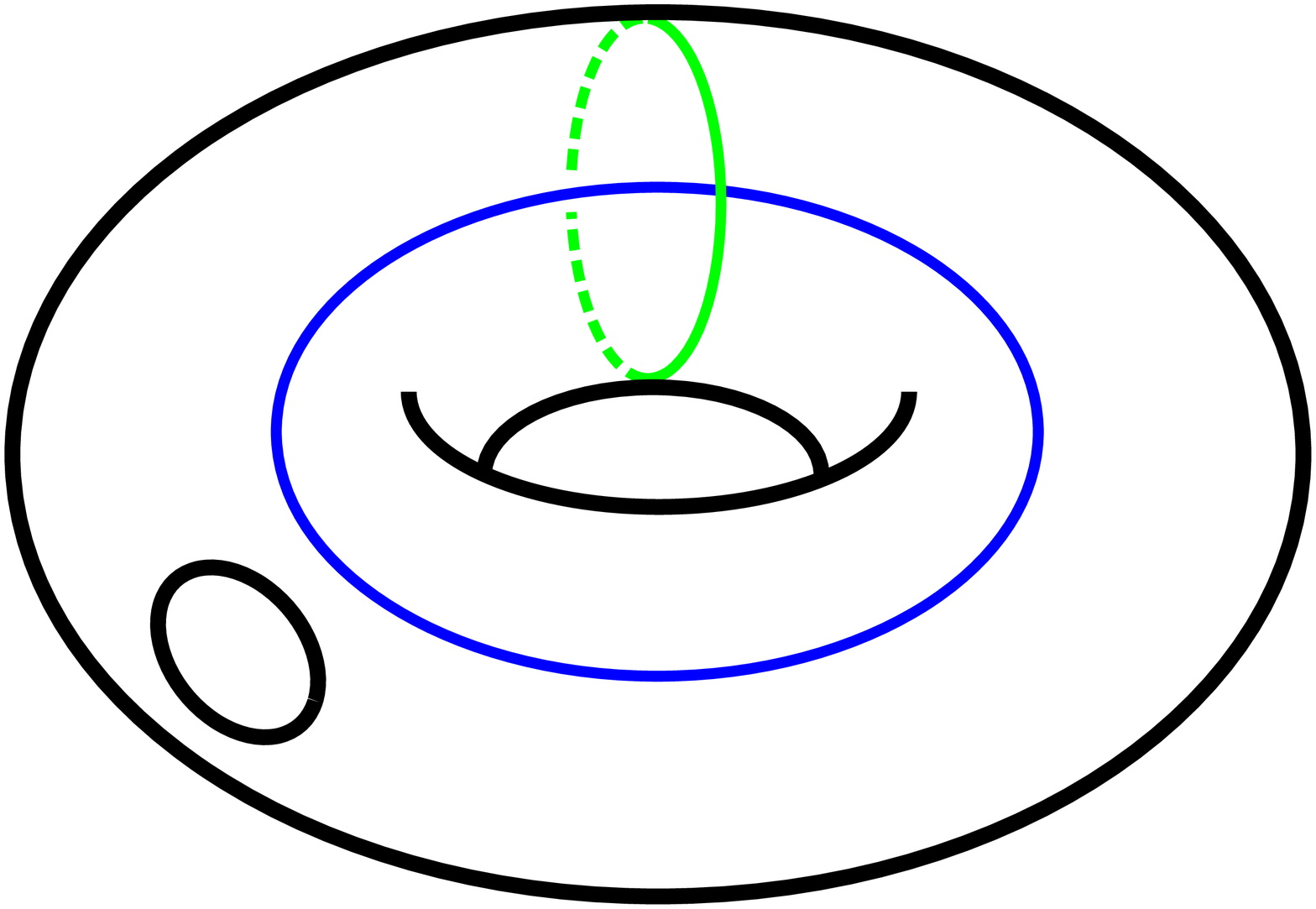}};
\node at (3,0) {$\longleftrightarrow$};
\node at (-2,1) {R32)};
\end{tikzpicture}
\end{center}
\end{enumerate}

The following theorem is the crucial simplification for the discussion of natural transformations.
\begin{theo}\cite[Theorem~2.8]{Kong_2014}\label{sewingconstraint}
Let $\Phi,\Psi\in \Funsf_\otimes(\WSsf,\Vectsf)$, $\Gcal_i$ and $\Gcal$ as above.  Then $\Gcal$ is a monoidal natural transformation if 
\eq{
\Gcal(R_{i,l})=\Gcal(R_{i,r})
}
for $\lbr R_i\rbr $ the 32 fundamental world sheet sewings given above.
\end{theo}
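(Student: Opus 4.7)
The plan is to prove that the prescription $\Gcal(S) = \Phi((\Scal,F)) \circ (\Gcal_{i_1} \otimes \cdots \otimes \Gcal_{i_r}) \circ \Psi((\Scal,F))^{-1}$ is independent of the chosen generating decomposition $(\Scal,F): S_{i_1} \otimes \cdots \otimes S_{i_r} \to S$, and then to derive naturality and monoidality as formal consequences of this well-definedness. The central conceptual input is the interpretation of the 32 relations $\{R_i\}$ as a complete set of moves connecting any two generating presentations of a fixed world sheet.

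First I would fix a world sheet $S$ and consider two arbitrary generating decompositions $(\Scal,F)$ and $(\Scal',F')$. The aim is to produce a finite sequence of generating decompositions interpolating between them such that each consecutive pair differs by a single move of one of three types: a purely local resewing of two adjacent generators within the open or closed sector (relations R1--R23), a local resewing mixing open and closed sectors via the insertion morphisms $I$, $I^\dagger$ (relations R26--R31), or the action of a mapping class group generator on a subsurface, namely a Dehn twist (R24), a braid move (R25), or the genus one move (R32). Given such a sequence, the hypothesis $\Gcal(R_{i,l}) = \Gcal(R_{i,r})$ combined with the fact that $\Phi$ and $\Psi$ are monoidal functors that identify homotopic sewings implies that $\Gcal(S)$ is unchanged across each individual move applied locally. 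Induction on the length of the sequence then delivers well-definedness of $\Gcal(S)$.

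Once $\Gcal(S)$ depends only on $S$, naturality is essentially automatic: given a morphism $(\Tsf,G): S \to T$ in $\WSsf$ and any generating decomposition $(\Scal,F)$ of $S$, the composite $(\Tsf,G) \circ (\Scal,F)$ is itself a generating decomposition of $T$. Computing $\Gcal(T)$ via this new decomposition, and using functoriality of $\Phi$ and $\Psi$, directly yields the naturality square $\Phi((\Tsf,G)) \circ \Gcal(S) = \Gcal(T) \circ \Psi((\Tsf,G))$. Monoidality is likewise immediate, since a generating decomposition of $S \otimes T$ can be obtained as the disjoint union of generating decompositions of $S$ and $T$, and both $\Phi$ and $\Psi$ preserve tensor products strictly.

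The main obstacle is the first step, namely establishing that the 32 listed moves suffice to interpolate between any two generating decompositions of a fixed world sheet. This is a finite presentation statement for $\WSsf$ and constitutes the technical heart of the result. Proving it requires combining three ingredients: the classification of compact oriented surfaces with boundary together with its open-closed refinement coming from the orientation double structure; the Lickorish-type generation of the mapping class group of a surface with boundary by Dehn twists; and a careful analysis of how open and closed boundary components interact under sewing, which is precisely what the open-closed moves R26--R31 encode. The remaining work is bookkeeping: verifying that every discrepancy between two generating decompositions can be decomposed into a sequence of moves drawn from the given list, with no additional relations needed.
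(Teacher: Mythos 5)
The paper does not actually prove this statement: it is imported verbatim as \cite[Theorem~2.8]{Kong_2014}, so there is no internal proof to compare your attempt against. Judged on its own, your plan correctly reproduces the architecture of the argument in the cited source: define $\Gcal(S)$ via one generating decomposition, show independence of that choice, and then obtain naturality and monoidality as formal consequences of functoriality of $\Phi$ and $\Psi$. Your derivation of the naturality square is sound as stated, since writing $\Gcal(T)\circ\Psi\left((\Tsf,G)\circ(\Scal,F)\right)=\Phi\left((\Tsf,G)\circ(\Scal,F)\right)\circ(\Gcal_{i_1}\otimes\cdots\otimes\Gcal_{i_r})$ and cancelling only the invertible factor $\Psi((\Scal,F))$ uses nothing beyond the stated hypothesis; the monoidality argument via disjoint unions is likewise fine.

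The genuine gap is that the entire mathematical content of the theorem lies in the step you defer. The claim that any two generating decompositions of a fixed world sheet are connected by a finite chain of the 32 listed moves is not ``bookkeeping'': it is a completeness statement for a presentation of the symmetric monoidal category $\WSsf$ by generators and relations, and it is exactly what separates this theorem from a tautology. Moreover, the main tool you invoke for it --- Lickorish-type generation of the mapping class group by Dehn twists --- is not the right one and would not suffice: knowing a generating set for the mapping class group says nothing about which relations among decompositions must be imposed, nor why a single Dehn-twist move R24, a single braid move R25 and the genus-one move R32 capture the full mapping class group action once the sewing relations are in place. What is actually needed is a connectivity result for the set of decompositions of an open-closed world sheet into generating pieces under elementary moves, in the spirit of Hatcher--Thurston and Moore--Seiberg, adapted to the orientation-double formalism. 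Without carrying out, or at least precisely quoting, that completeness result, the proposal restates the theorem rather than proving it.
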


There is an obvious symmetric monoidal functor $\mathbf{1}:\WSsf\rightarrow \Vectsf$ called the \textit{trivial functor}. It maps any world sheet to $\Kbb$ and any morphism to the identity on $\Kbb$. The following definition of a solution to the sewing constraints is originally due to \cite{Fjelstad:2006aw}.

\begin{defn} A symmetric monoidal functor $\Theta\in \Funsf_\otimes (\WSsf,\Vectsf)$ satisfies the \textit{sewing constraints} if there is a monoidal natural transformation 
\eq{
\Delta:\mathbf{1}\Rightarrow \Theta\quad .
}
\end{defn}
We briefly explain why this is a sensible definition for a solution of the sewing constraints. First of all, the monoidal natural transformation $\Delta$ picks a vector in $\Theta(\hat{S})$ for any world sheet $\hat{S}$. In physics terms, one may call this the correlator of the surface. Recall, that correlators in CFT on any surface should be invariant under the action of the mapping class group. An element of the mapping class group gives a morphism $(\emptyset,f)$ and the functor $\mathbf{1}$ assignes the identity to it. Thus $\Theta(\emptyset,f)$ has to map the correlator onto itself by naturality. By the same argument of triviality for $\mathbf{1}$ and naturality, correlators on lower genus surfaces are sewn to correlators on higher genus surfaces. Hence this definition nicely captures all the features expected from a consistent set of correlators.

\section{Consistent Correlators from String-Nets}\label{sec5}

\subsection{Functor of Conformal Blocks}

For a consistent set of correlators we need a functor of open-closed conformal blocks $\Bcal\in Fun_\otimes(\WSsf,\Vectsf)$. This is achieved with the help of string-nets spaces. Let $(\Hcalcl,\Hcalop,\iotaclop)$ be a $(\Csf|\Zsf(\Csf))$-Cardy algebra. For a compact surface $\Sigma$ with non empty boundary we write $g_\Sigma$ for the genus of the corresponding closed surface $\Sigma^\prime$ obtained from $\Sigma$ by glueing disks to all boundary components.
\begin{enumerate}[label=\Roman*)]
\item Let $\hat{S}$ be a world sheet s.th. for $b_i\in \pi_0\left(\p \tilde{S}\right)$ it holds $\iota_
S(b_i)=b_i$ and $g_{\tilde{S}}=0$. Hence $B_{cl}=\emptyset$ and we denote $n=|B_{op}^i|$, $m=|B_{op}^o|$. The associated quotient surface $S_{n,m}$ is just a disk with $n$ incoming and $m$ outgoing open boundary components. We set 
\eq{
\Blcal(\hat{S})=\hat{H}^s\left( S,\ov{\Hcalop}\right)
}
where
\eq{
\ov{\Hcalop}=\underbrace{\widetilde{\Hcalop}\otimes \widetilde{\Hcalop}}_{n+m}, \, \quad
\widetilde{\Hcalop}=\begin{cases} &\Hcalop, \, \text{for outgoing boundary}\\
&\Hcalop^\ast,\, \text{for incoming boundary}\end{cases}\quad .
}
The tensor product $\ov{\Hcalop}$ is ordered according to the ordering function $\ordrm$.
\item Next we consider world sheets $\hat{S}$ with $B_{cl}\neq \emptyset$, $g_S=0$, with all open boundary components on a single boundary circle and no connected component of the boundary of the quotient surface is a physical boundary. These are world sheets, whose quotient surfaces are of the form

\begin{figure}[H]
\includegraphics[scale=0.12]{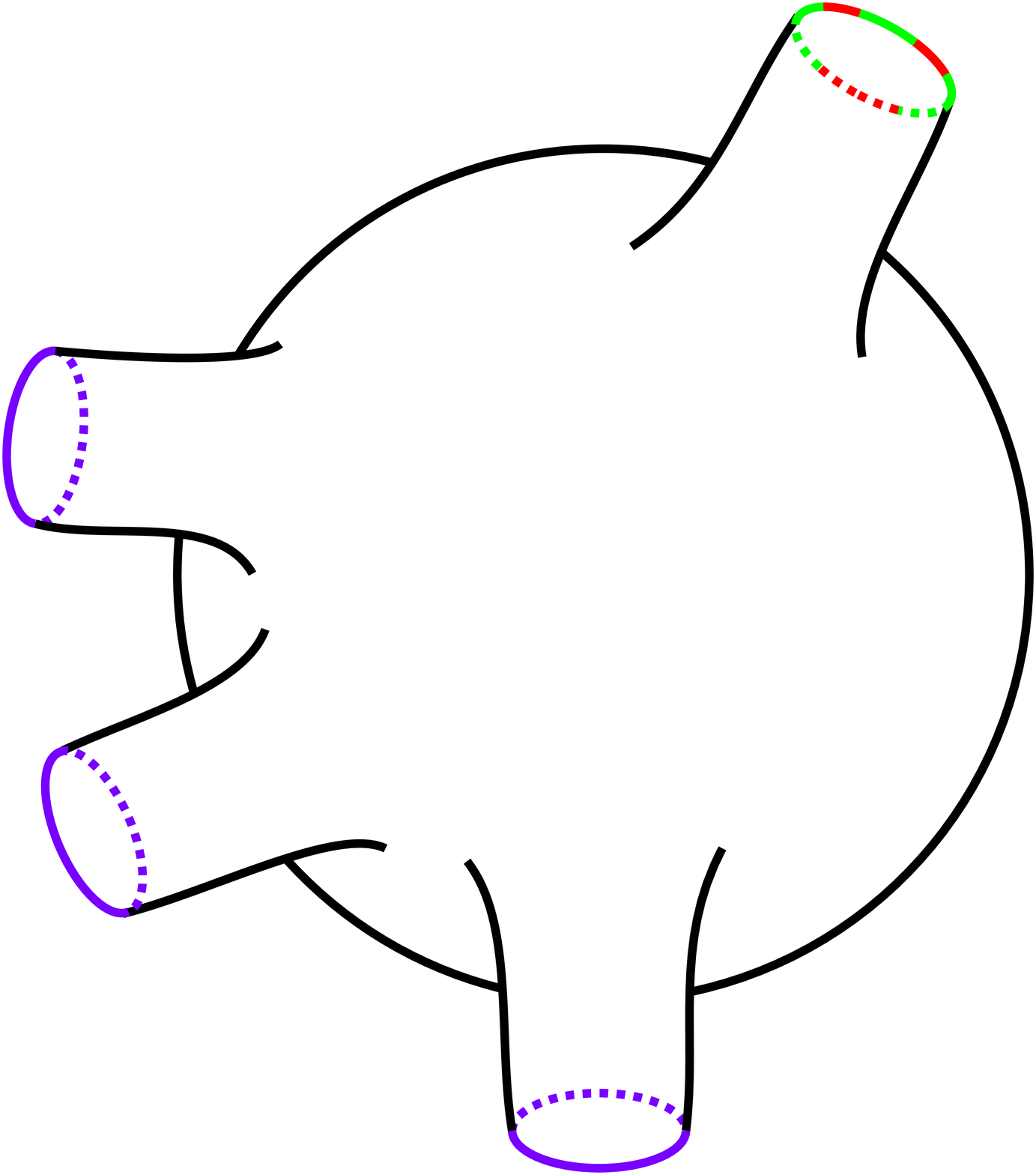}
\caption{Example of type II) world sheet quotient surface.}
\label{typeII}
\end{figure}
In figure \ref{typeII}, there are three closed boundary components, shown in purple. In addition, there are three open boundary components, sitting on the same connected component of the boundary. These are colored red. The green part of the boundary shows physical boundary components. We label closed boundary components with $\Hcalcl$ and open boundary components with $L(\Hcalop)$. The string-net space on the quotient surface is then given by $\hat{H}^s(S,\ov{\Hcalcl},\ov{L(\Hcalop)})=\hom_{\Zsf(\Csf)}(\mathbf{1},\ov{\Hcalcl}\otimes \ov{L(\Hcalop)})$. In this case a subspace of the string-net space has to be chosen in order to get a well defined functor for composition of morphisms. This is due to the fact that $L$, though being a Frobenius functor, is not a tensor functor, hence $\hom_{\Zsf(\Csf)}(\mathbf{1},\ov{L(\Hcalop)})\nsimeq \hom_{\Csf}(\mathbf{1},\ov{\Hcalop})$. But $L$ is lax and colax tensor functor thus there are morphisms $\phi^L_\mathbf{1}:\mathbf{1}_{\Zsf(\Csf)}\rightarrow L(\mathbf{1}_\Csf)$, $\phi^L:L(A)\otimes L(B)\rightarrow L(A\otimes B)$ and $\psi^l_\mathbf{1}:L(\mathbf{1})\rightarrow \mathbf{1}_{\Zsf(\Csf)}$, $\psi^L:L(A\otimes B)\rightarrow L(A)\otimes L(B)$. They are given explicitly by \cite{Kong_2008}

\begin{center}
\begin{tikzpicture}
\node at (-4.5,0) {\includegraphics[scale=0.15]{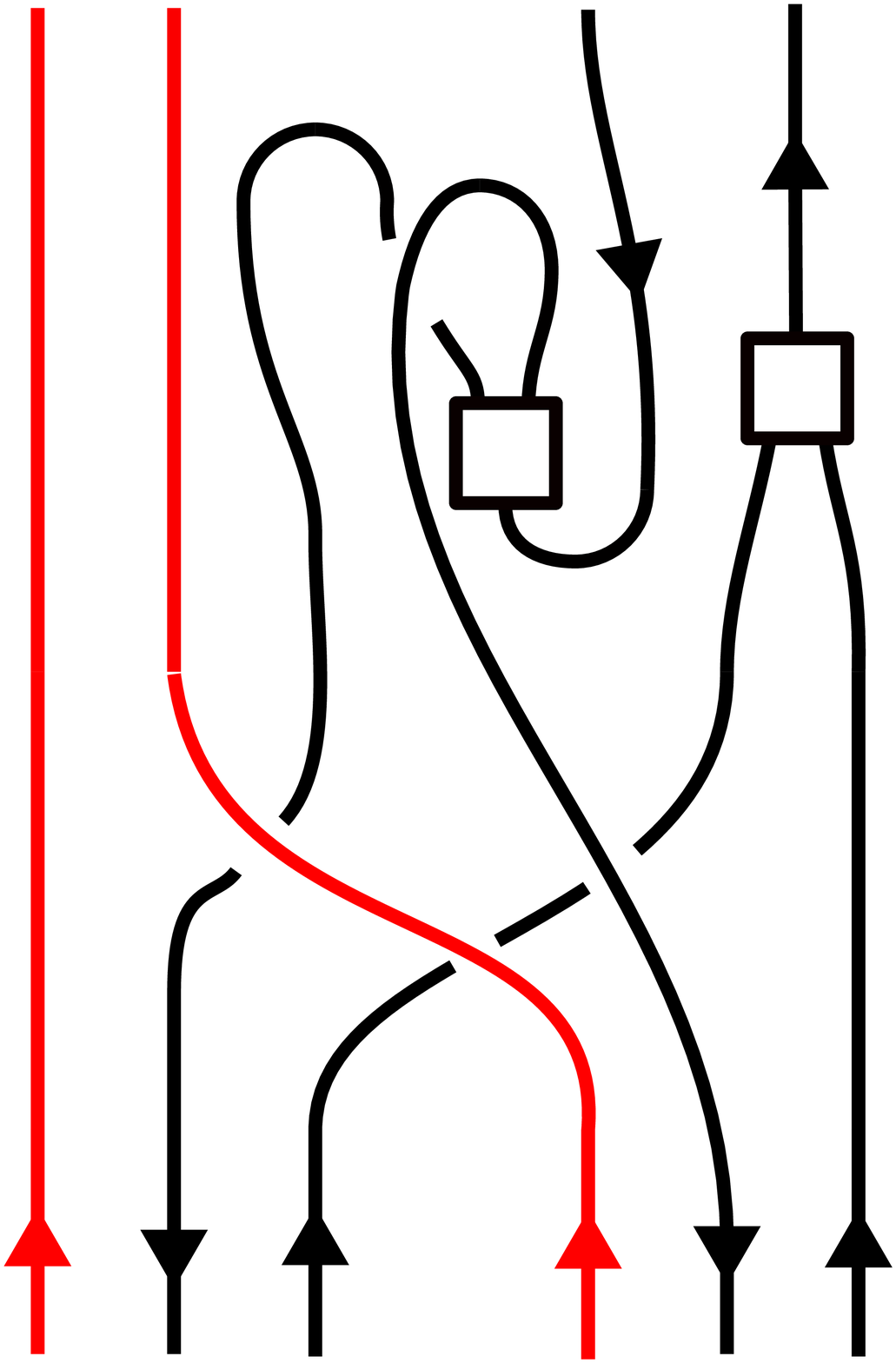}};
\node at (-6.7,0) {$\underset{i,j,k \in I}{\bigoplus}\, \underset{\alpha}{\sum}$};
\node at (-7.5,0.1) {$=$};
\node at (-7.9,0.1) {$\phi^L$};
\node at (-6.1,-2.3) {\scriptsize $\color{red}A$};
\node at (-5.2,-2.3) {\scriptsize $i$};
\node at (-4.8,-2.3) {\scriptsize $i$};
\node at (-4.2,-2.3) {\scriptsize $\color{red}B$};
\node at (-3.4,-2.3) {\scriptsize $j$};
\node at (-3.0,-2.3) {\scriptsize $j$};
\node at (-4.3,0.75) {\scriptsize $\alpha$};
\node at (-3.35,1) {\scriptsize $\alpha$};
\node at (-3.8,2.1) {\scriptsize $k$};
\node at (-3.1,2.1) {\scriptsize $k$};
\node at (3,0) {\includegraphics[scale=0.15]{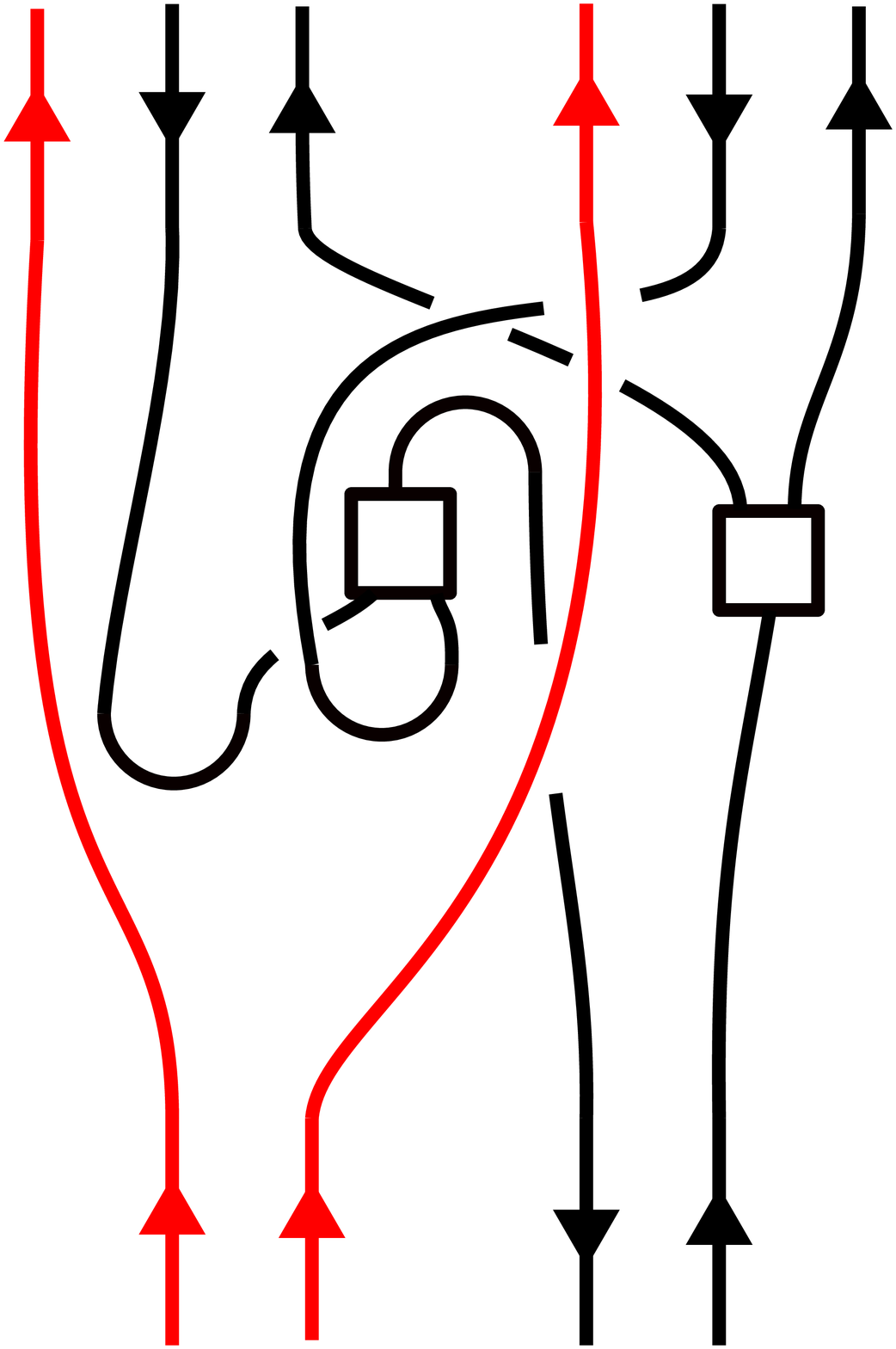}};
\node at (-1.4,0.1) {$\psi^L$};
\node at (-1.0,0.1) {$=$};
\node at (0,0) {$\underset{i,j,k\in \Isf}{\bigoplus}\,\underset{\alpha}{\sum}\,\frac{d_id_j}{d_k\Dsf^2}$};
\node at (1.9,-2.3) {\scriptsize $\color{red} A$};
\node at (2.4,-2.3) {\scriptsize $\color{red} B$};
\node at (3.6,-2.3) {\scriptsize $k$};
\node at (4.1,-2.3) {\scriptsize $k$};
\node at (2.3,2.1) {\scriptsize $i$};
\node at (2.7,2.1) {\scriptsize $i$};
\node at (4.1,2.1) {\scriptsize $j$};
\node at (4.6,2.1) {\scriptsize $j$};
\node at (2.85,0.4) {\scriptsize $\alpha$};
\node at (4.1,0.35) {\scriptsize $\alpha$}; 
\end{tikzpicture}
\end{center}

and
\begin{center}
\begin{tikzpicture}
\node at (0,0) {\includegraphics[scale=0.15]{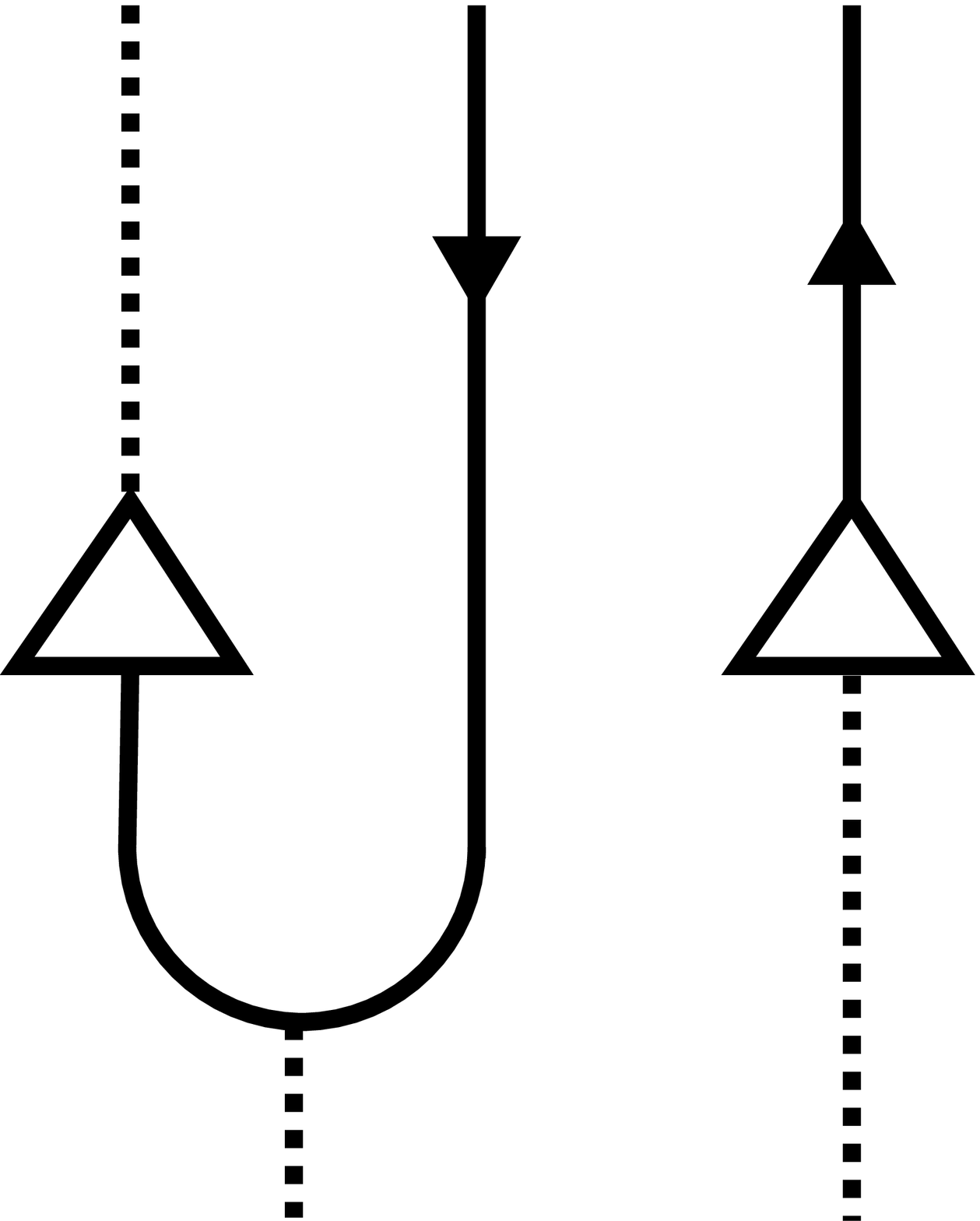}};
\node at (-3.5,0) {$\phi^L_\mathbf{1}$};
\node at (-3,0) {$=$};
\node at (-2,0) {$\begin{aligned} \bigoplus_{i\in \Isf} \sum_{\alpha} \end{aligned}$};
\node at (-0.9,0) {\scriptsize $\alpha$};
\node at (0.9,0) {\scriptsize $\alpha$};
\node at (0.2,1.2) {\scriptsize $i$};
\node at (1.1,1.2) {\scriptsize $i$};
\end{tikzpicture}
\end{center}

where we depict the monoidal unit by a dashed line. Since we don't need it in the following we just give the formula for $\psi^L_\mathbf{1}=\Dsf^2\,\id_{\mathbf{1}_{\Zsf(\Csf)}}$ and not its explicit graphical representation. We define linear maps
\eq{
Z: \hom_{\Zsf(\Csf)}\left(\mathbf{1},\ov{\Hcalop}\right)&\rightarrow \hom_{\Zsf(\Csf)}\left(\mathbf{1},\ov{L(\Hcalop)}\right)\\
f&\mapsto Z(f)\equiv (\psi^L\otimes \id\otimes \dots \otimes \id)\circ\dots \circ(\psi^L\otimes \id)\circ\psi^L\circ L(f)\circ \phi_\mathbf{1}^L
}
\eq{
Y:\hom_{\Zsf(\Csf)}\left(\mathbf{1},\ov{L(\Hcalop)}\right)&\rightarrow \hom_{\Csf}\left(\mathbf{1},\ov{\Hcalop}\right)\simeq \hom_{\Zsf(\Csf)}\left(\mathbf{1},\ov{\Hcalop}\right)\\
g&\mapsto Y(g)\equiv \dsf\circ F\left[\phi^L\circ (\id\otimes \phi^L)\circ \dots \circ (\id\otimes \dots\otimes \id\otimes \phi^L)\circ g\right]
}
where $\dsf$ is the map

\begin{center}
\begin{tikzpicture}
\node at (0,0) {\includegraphics[scale=0.15]{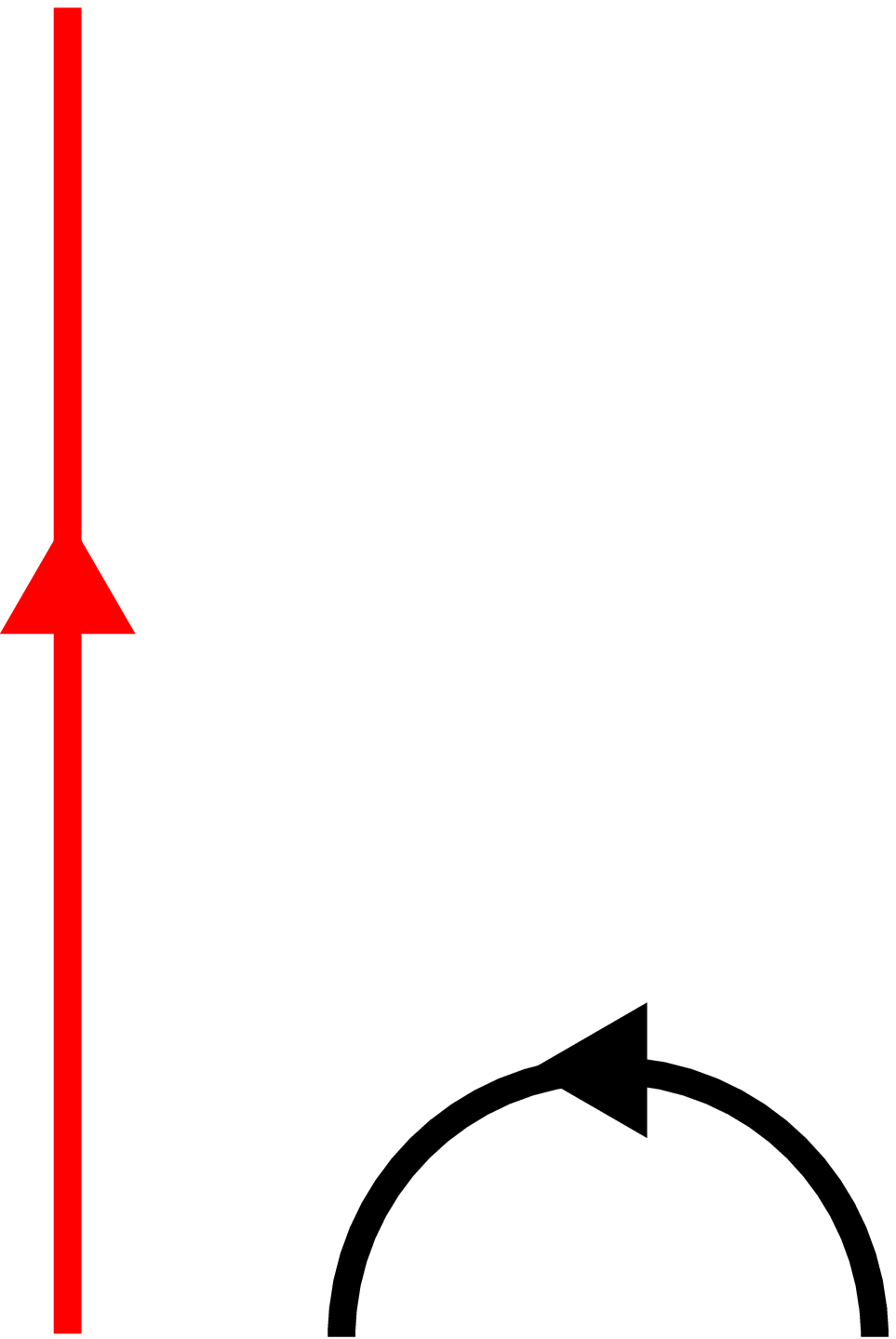}};
\node at (-2.5,0) {$\dsf$};
\node at (-2.2,0) {$=$};
\node at (-1.5,0) {$\begin{aligned} \bigoplus_{i\in \Isf} \end{aligned}$};
\node at (-0.95,-1) {\scriptsize $\color{red} \ov{\Hcalop}$};
\node at (0,-1) {\scriptsize $i$};
\end{tikzpicture}.
\end{center}

The map $Y$ is a left inverse to $Z$.

\begin{lem}\label{restricitionstatespacelemma} $Y\circ Z=\id_{\hom_{\Zsf(\Csf)}\left(\mathbf{1},\ov{\Hcalop}\right)}$.
\end{lem}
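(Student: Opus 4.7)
The plan is to unfold the definitions of $Y$ and $Z$ using the explicit graphical formulas given for $\phi^L_{\mathbf{1}}$, $\phi^L$, $\psi^L$ and $\dsf$, and then verify the identity by induction on $n+m$, the total number of tensor factors in $\ov{\Hcalop}$. The induction isolates a single $(\psi^L,\phi^L)$ pair at each step, so the entire simplification reduces to a local string-net calculation in the image of $L$.

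For the base case $n+m=1$ the composition $Y\circ Z$ is just $\dsf\circ L(f)\circ \phi^L_{\mathbf{1}}$, with no intermediate lax or oplax structure appearing. Plugging in the explicit diagrams, this becomes a sum over $i\in \Isf(\Csf)$ of a $\coevrm_{U_i}$-cap followed by $L(f)$ followed by an $\evrm_{U_i}$-cup, weighted by the normalization factors built into $\phi^L_{\mathbf{1}}$ and $\dsf$. The closed $U_i$-loops evaluate to $d_i$ and combine with those normalization factors via the definition of $\Dsf^2$ to collapse the weighted sum to $\id$ acting on $f$.

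For the inductive step one pair $(\psi^L,\phi^L)$ is stripped off $Z$ and $Y$ respectively. Writing out $\phi^L\circ \psi^L$ with the given formulas yields a sum over simple objects $i,j,k\in \Isf(\Csf)$ and basis vectors $\alpha\in \hom_\Csf(U_i\otimes U_j,U_k)$ of nested coupons. The completeness relation for the basis $\{\theta^\alpha_{(ij);k}\}$ and its dual $\{\theta^{k;(ij)}_\alpha\}$ collapses the sums over $k$ and $\alpha$, while the scalar $d_id_j/(d_k\Dsf^2)$ in $\psi^L$ combines with the factor $d_k$ arising from the circular $U_k$-strand and the dimension prefactor of $\phi^L$ to reproduce exactly the expression with one fewer tensor factor. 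The inductive hypothesis then concludes.

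The main obstacle will be the bookkeeping of the nested direct sums, basis expansions, and scalar weights; executing each simplification as a string diagram, and carefully tracking where each $d_i$ and $\Dsf^{-2}$ originates, is essential. A secondary subtlety is that $Y$ is valued in $\hom_\Csf(\mathbf{1},\ov{\Hcalop})$ whereas $Z$ is defined on $\hom_{\Zsf(\Csf)}(\mathbf{1},\ov{\Hcalop})$; the identification between these hom-spaces is used implicitly, and one must check that the output of $Y\circ Z$ is compatible with the half-braiding on $\ov{\Hcalop}$, which holds because every constituent map in the composition is by construction a morphism in $\Zsf(\Csf)$.
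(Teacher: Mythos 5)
Your plan follows the paper's proof in all essentials: unfold the explicit formulas for $\phi^L_{\mathbf{1}}$, $\phi^L$, $\psi^L$ and $\dsf$, induct on the number of tensor factors, and let the duality of the $\theta$-bases together with loop evaluations absorb the $d_i/\Dsf^2$ prefactors. The one organizational difference is that the paper first collapses the entire chain of $\phi^L$'s in $Y$ into a single ``capping'' diagram and only then peels off the $\psi^L$'s of $Z$ one at a time against it, which sidesteps the fact that $Z$ splits factors off in a left-nested bracketing while $Y$ merges them in a right-nested one, so the $(\psi^L,\phi^L)$ pairs are not literally adjacent in the composite as your inductive step suggests; your version would need an extra (co)associativity rebracketing step at that point.
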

\begin{proof}
First note that 
\begin{center}
\begin{tikzpicture}
\node at (0,0) {\includegraphics[scale=0.15]{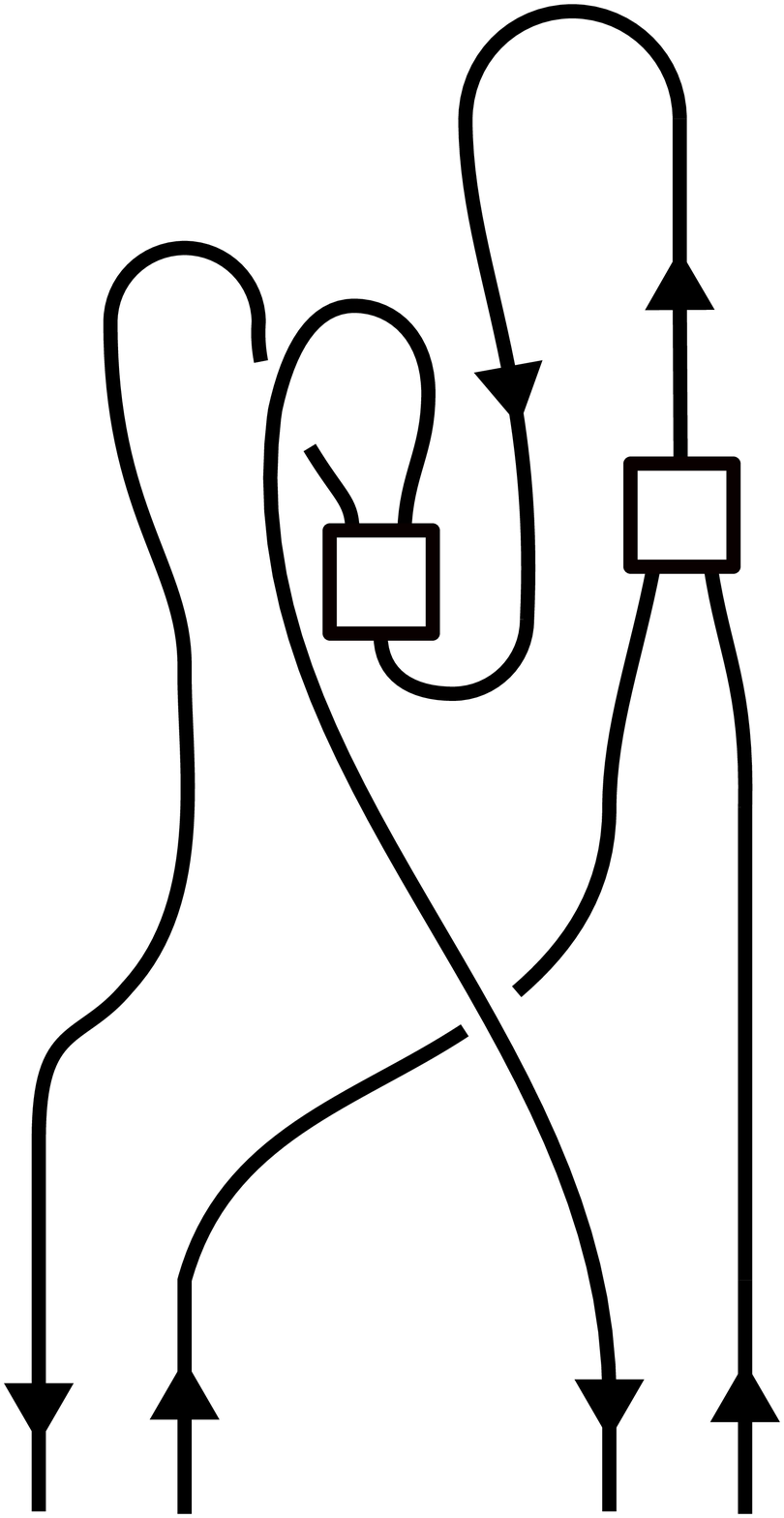}};
\node at (4,0) {\includegraphics[scale=0.15]{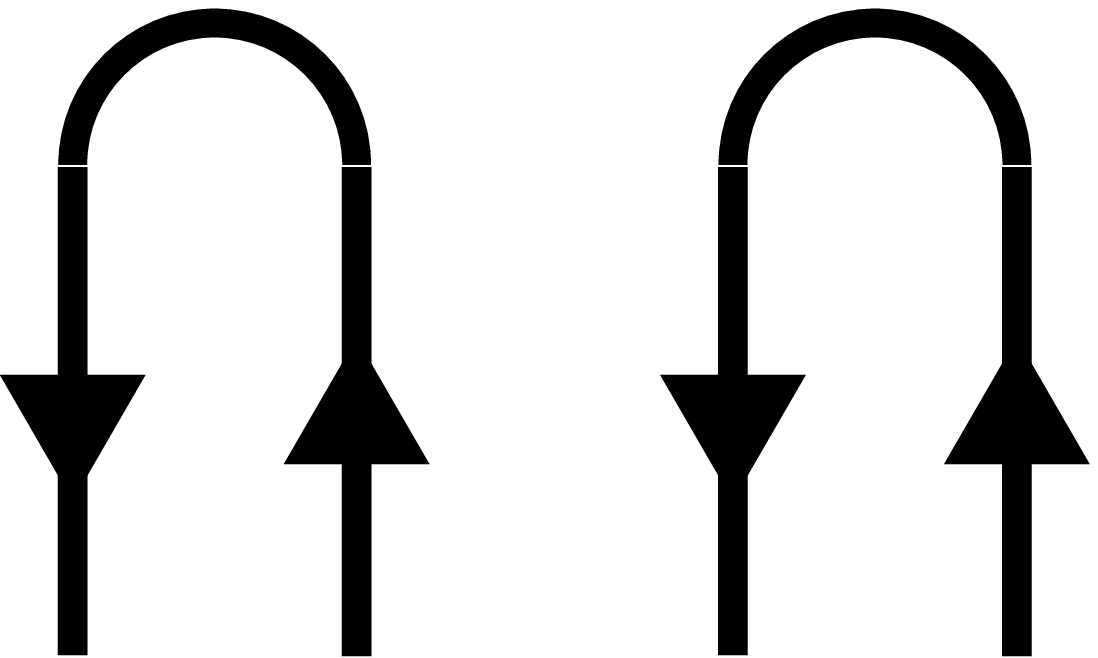}};
\node at (-2,0) {$\begin{aligned} \bigoplus_{i,j,k\, \in\,  \Isf} \, \sum_{\alpha} \end{aligned}$};
\node at (-0.95,-2.3) {\scriptsize $i$};
\node at (-0.5,-2.3) {\scriptsize $i$};
\node at (0.85,-2.3) {\scriptsize $j$};
\node at (1.25,-2.3) {\scriptsize $j$};
\node at (1.1,2) {\scriptsize $k$};
\node at (-0.05,0.55) {\scriptsize $\alpha$};
\node at (0.95,0.75) {\scriptsize $\alpha$};
\node at (1.75,0) {$=$};
\node at (2.5,0) {$\begin{aligned}\bigoplus_{i,j\, \in \,\Isf}\end{aligned}$};
\node at (3.4,-0.5) {\scriptsize $i$};
\node at (4.4,-0.5) {\scriptsize $j$};
\end{tikzpicture}
\end{center}
thus by induction it holds 
\begin{center}
\begin{tikzpicture}
\node at (0,0) {$Y(g)$};
\node at (0.5,0) {$=$};
\node at (1.2,0) {$\begin{aligned} \bigoplus_{i_1,\dots, i_n\, \in \, \Isf}\end{aligned}$};
\node at (4,0) {\includegraphics[scale=0.15]{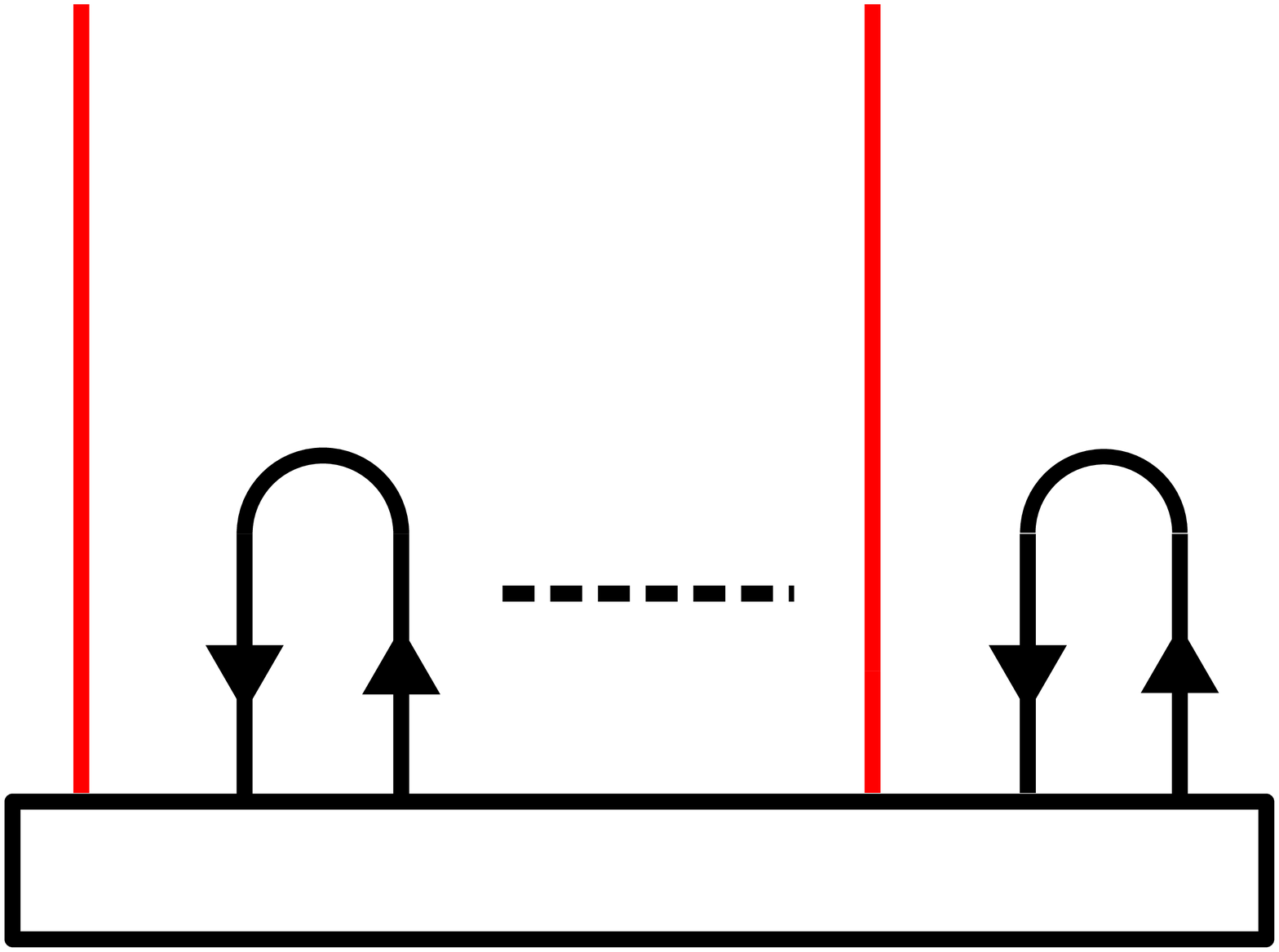}};
\node at (4,-1.1) {$g$};
\node at (3.6,-0.7) {\scriptsize $i_1$};
\node at (5.9,-0.7) {\scriptsize $i_n$};
\end{tikzpicture}
\end{center}
where red strands are labeled $\Hcalop$ or $\Hcalop^\ast$. Next it holds 
\begin{center}
\begin{tikzpicture}
\node at (0,0) {\includegraphics[scale=0.15]{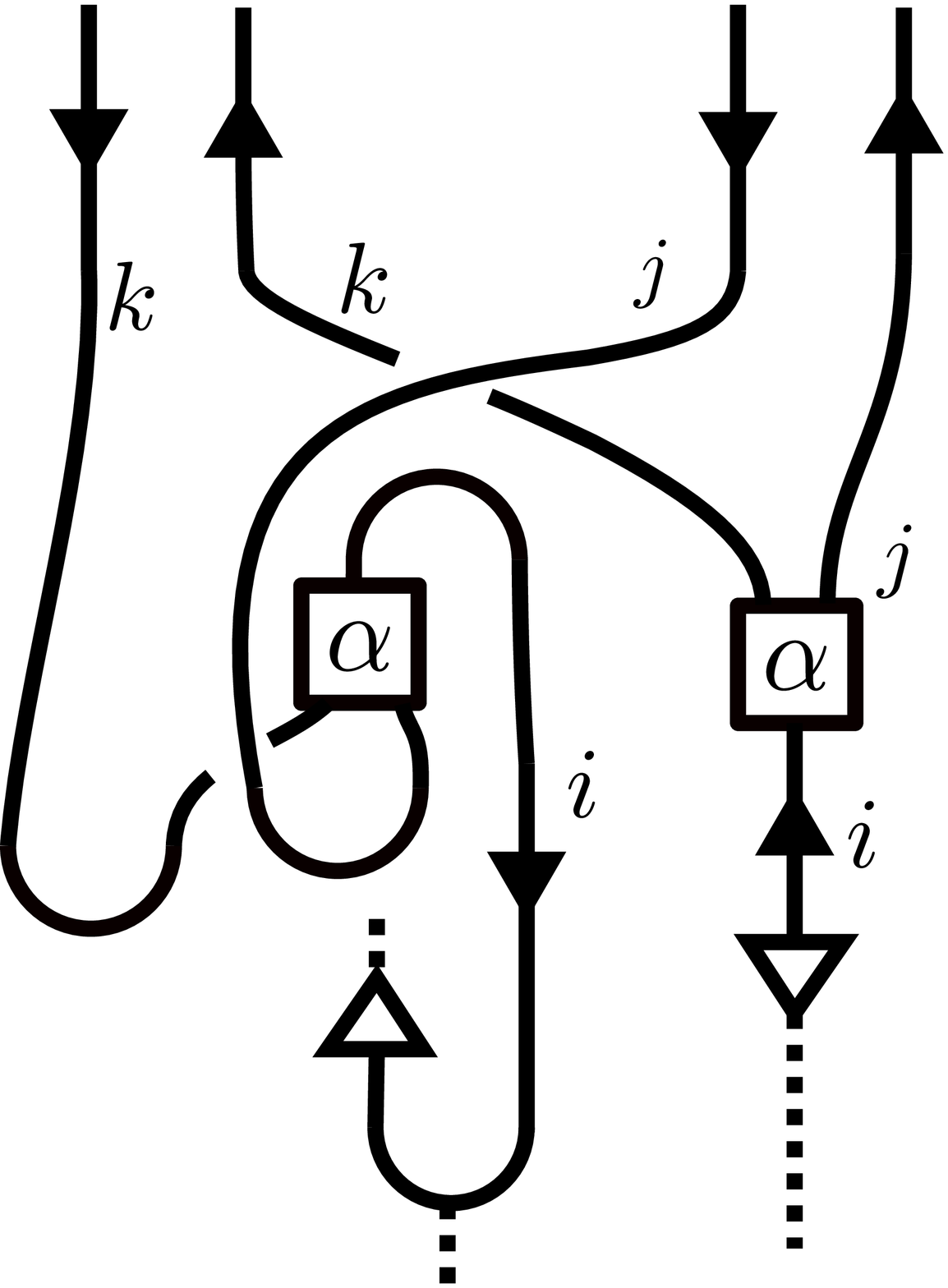}};
\node at (5,0) {\includegraphics[scale=0.15]{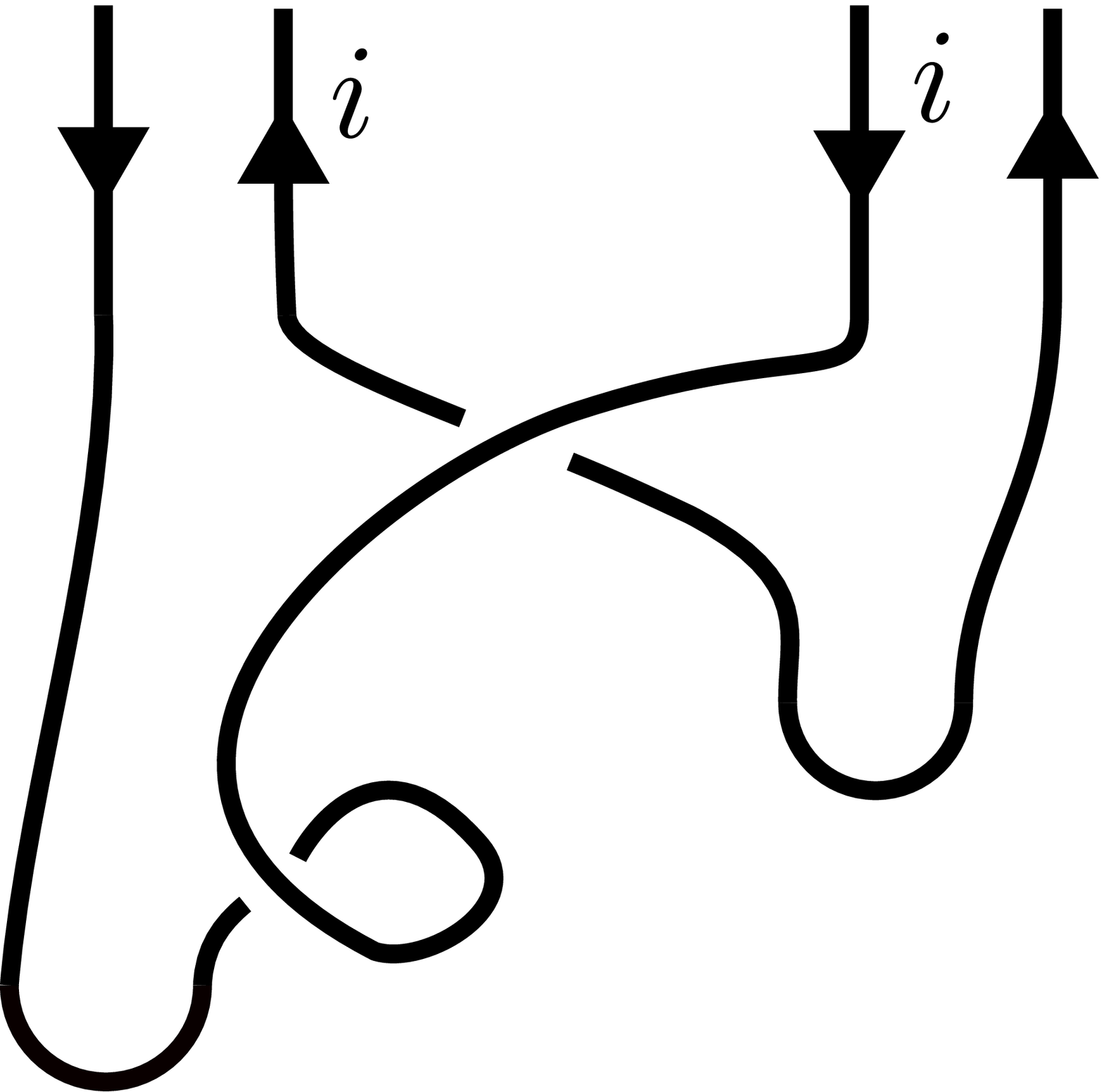}};
\node at (-3,0) {$\begin{aligned} \bigoplus_{i,j,k\, \in \, \Isf}\, \sum_\alpha\, \frac{d_kd_j}{d_i\Dsf^2}\end{aligned}$};
\node at (2,0) {$=$};
\node at (3,0) {$\begin{aligned}\bigoplus_{i\in \,\Isf}\, \frac{d_i}{\Dsf^2}\end{aligned}$};
\end{tikzpicture}
\end{center}
and composing with $(\id\otimes \dsf)\circ(\psi^L\otimes \id)$ gives
\begin{center}
\begin{tikzpicture}
\node at (0,0) {\includegraphics[scale=0.15]{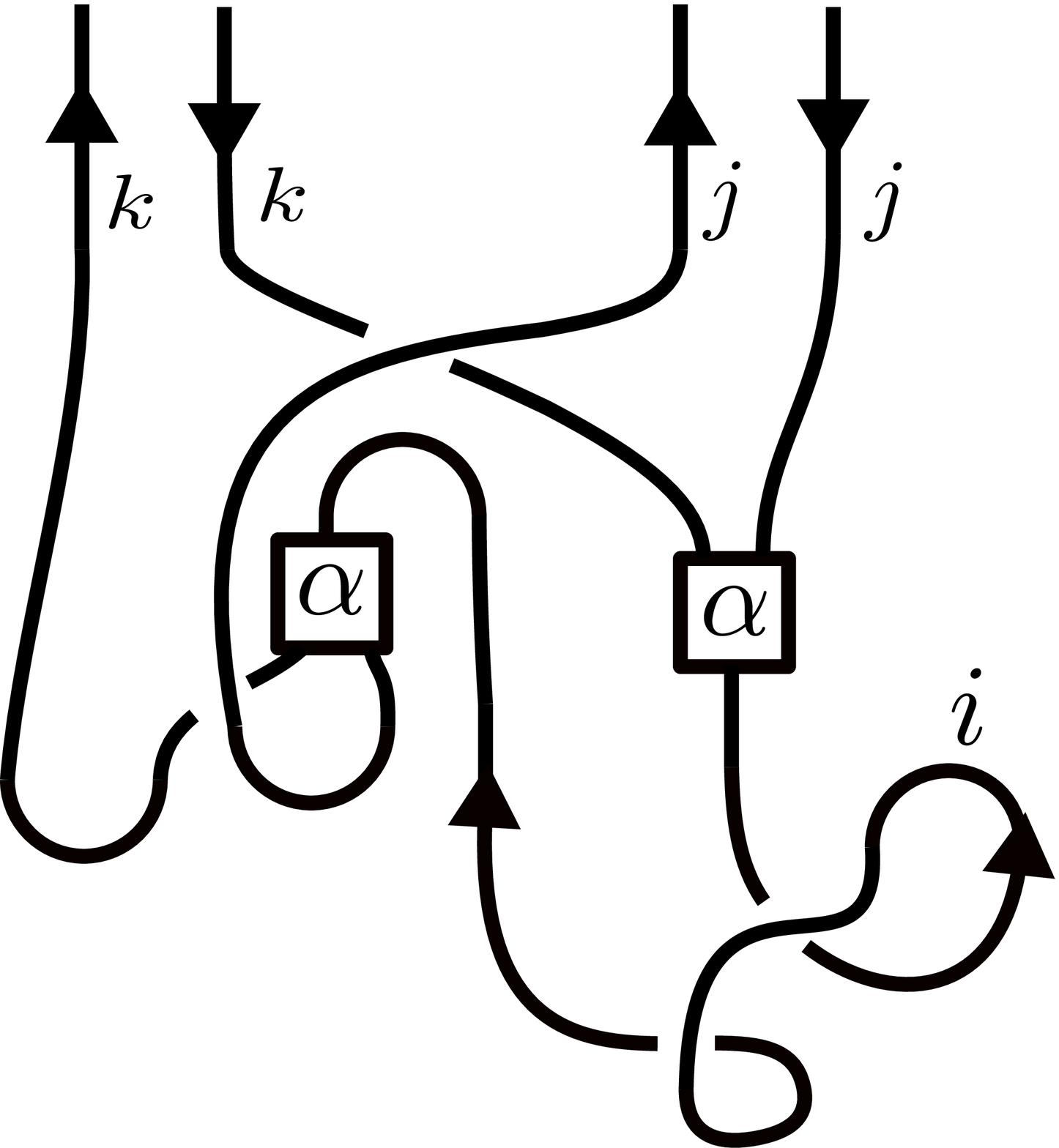}};
\node at (5,0) {\includegraphics[scale=0.15]{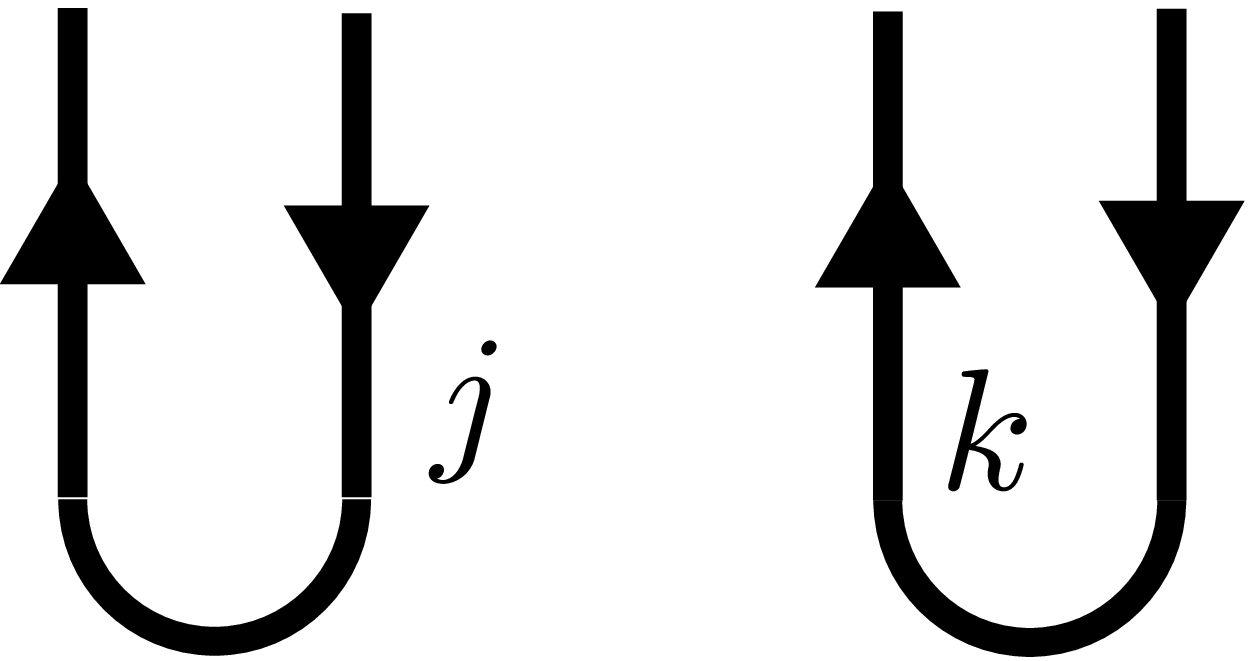}};
\node at (-3,0) {$\begin{aligned} \bigoplus_{i,j,k}\, \sum_{\alpha} \, \frac{d_jd_kd_i}{d_i\Dsf^2}\end{aligned}$};
\node at (1.5,0) {$=$};
\node at (3,0) {$\begin{aligned} \bigoplus_{j,k\in \Isf} \, \frac{d_jd_k}{\Dsf^2}\end{aligned}$};
\end{tikzpicture}.
\end{center}
We dropped and will drop red strands in the next picture as they are irrelevant to the argument and clutter pictures. Applying $(\id\otimes \dsf)\circ(\psi^L\otimes \id)$ again gives 
\begin{center}
\begin{tikzpicture}
\node at (0,0) {\includegraphics[scale=0.15]{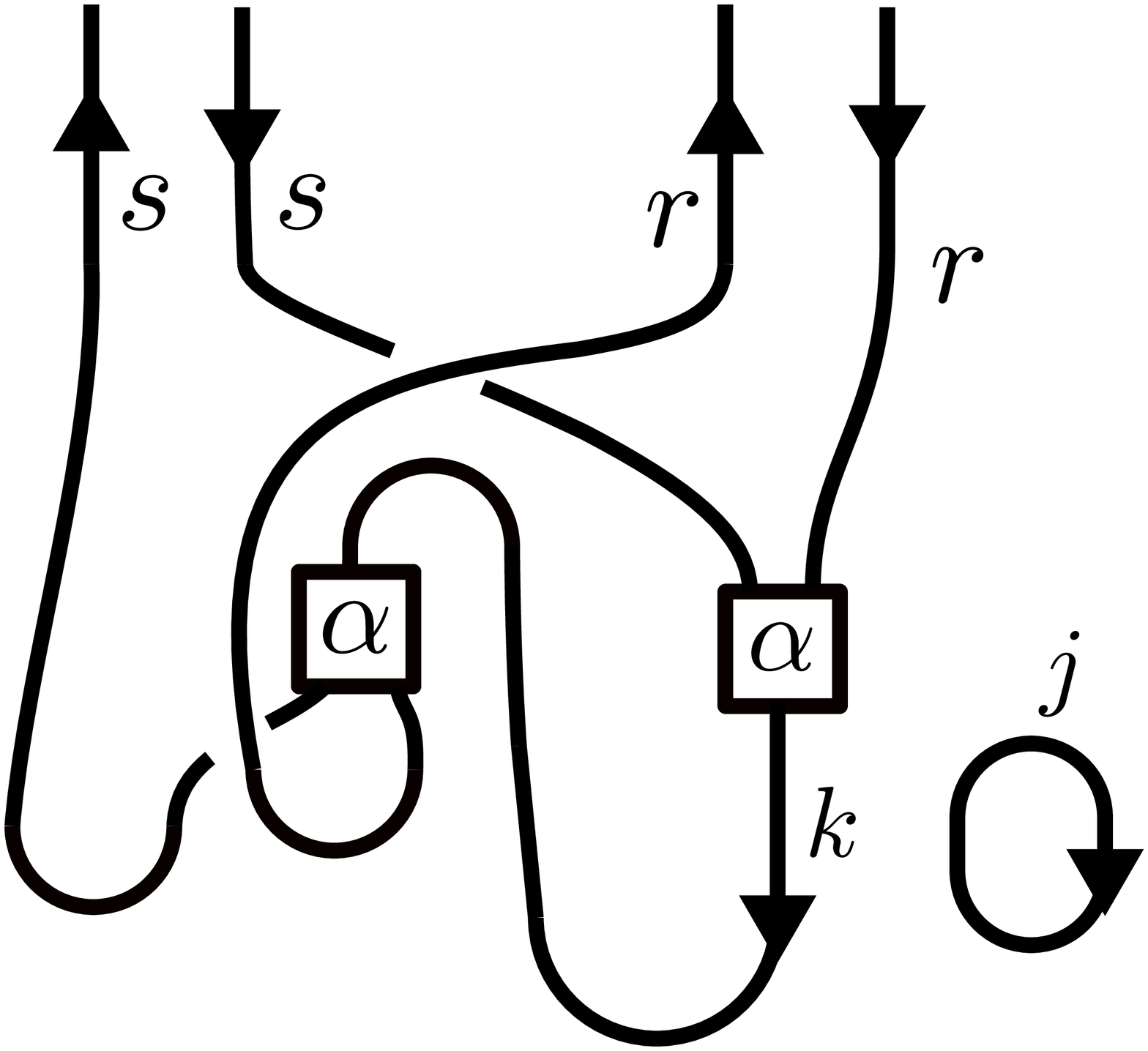}};
\node at (4.5,0) {\includegraphics[scale=0.15]{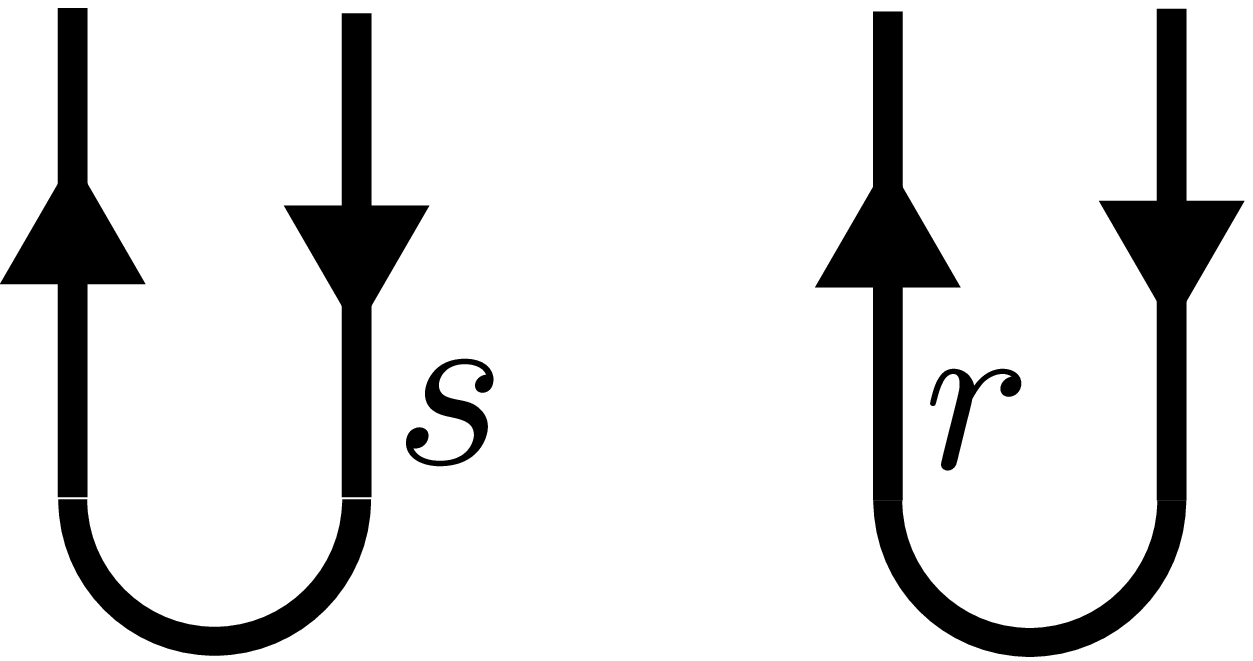}};
\node at (-3,0) {$\begin{aligned} \bigoplus_{s,r,j,k\, \in \, \Isf} \, \sum_{\alpha}\, \frac{d_sd_rd_jd_k}{d_j\Dsf^4}\end{aligned}$};
\node at (1.5,0) {$=$};
\node at (2.5,0) {$\begin{aligned}\bigoplus_{r,s\, \in \, \Isf}\,\frac{d_sd_r}{\Dsf^2}\end{aligned}$};
\end{tikzpicture}.
\end{center} 
By induction we get 

\begin{tabular}{m{1.5cm} m{0.5cm} m{1cm}}
{$Y\circ Z(f)$} & $=$ & 
\begin{tikzpicture}
\node at (0,0) {\includegraphics[scale=0.1]{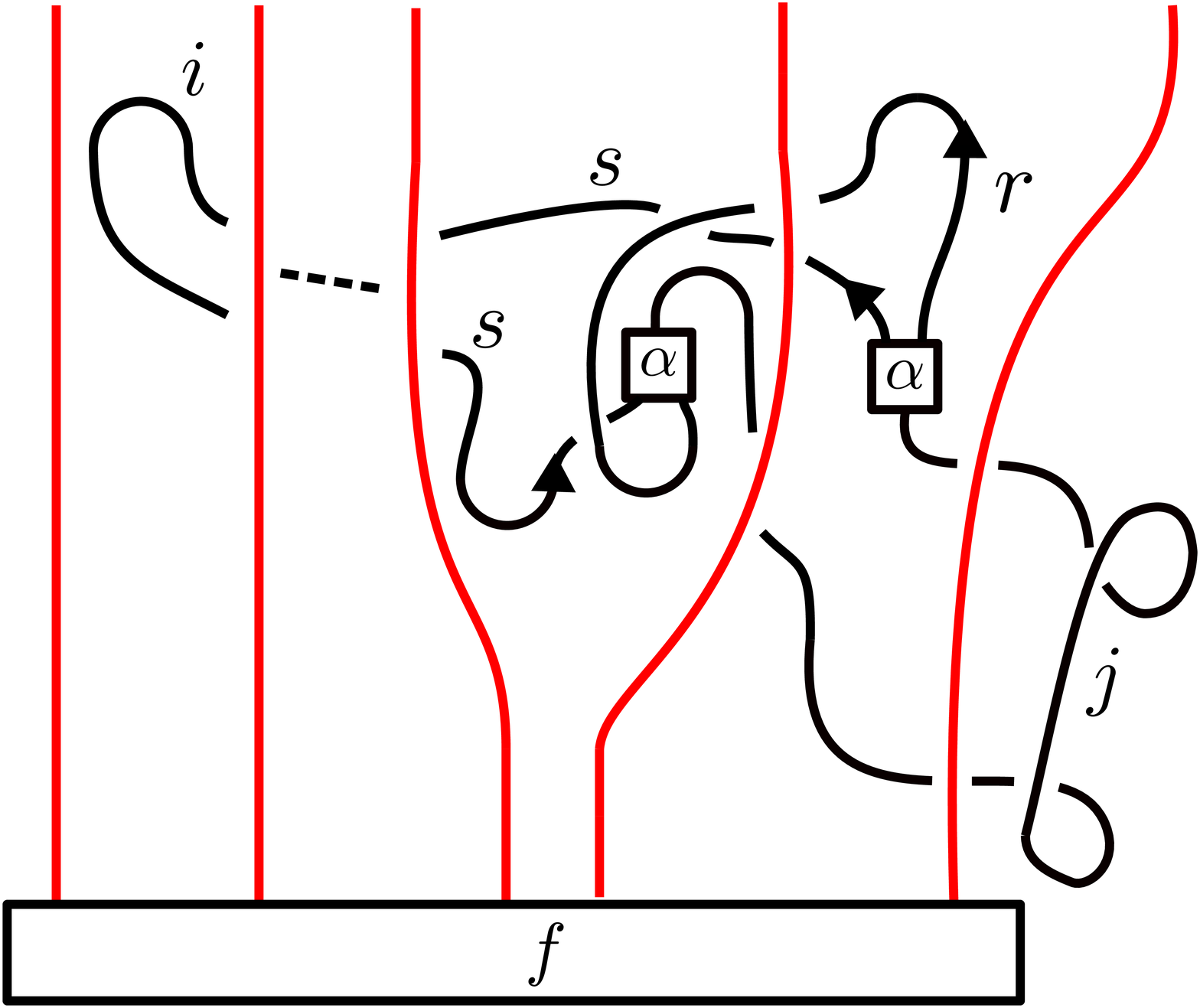}};
\node at (-4,0) {$\begin{aligned}\bigoplus_{i,\dots, s,r,j\, \in \, \Isf}\,\sum_{\alpha_1,\dots}\, C_1\frac{d_sd_r}{\Dsf^2}\end{aligned}$};
\end{tikzpicture}\\
 &=  &
\begin{tikzpicture}
\node at (0,0) {\includegraphics[scale=0.1]{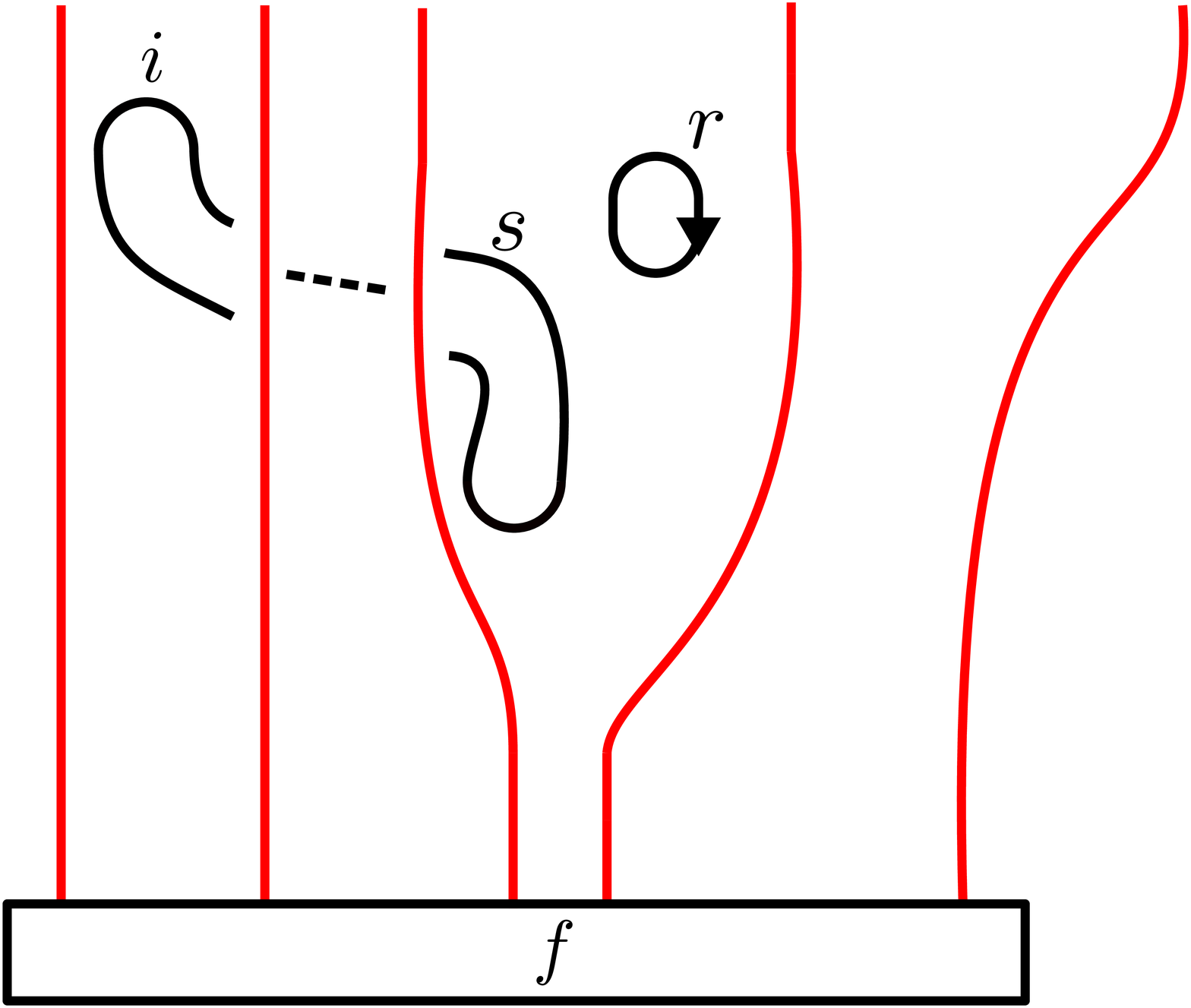}};
\node at (-3.5,0) {$\begin{aligned}\bigoplus_{i,\dots, s,r\, \in \, \Isf}\, \sum\, C_2 \frac{d_r}{\Dsf^2}\end{aligned}$};
\end{tikzpicture}\\
& = & 
\begin{tikzpicture}
\node at (0,0) {\includegraphics[scale=0.1]{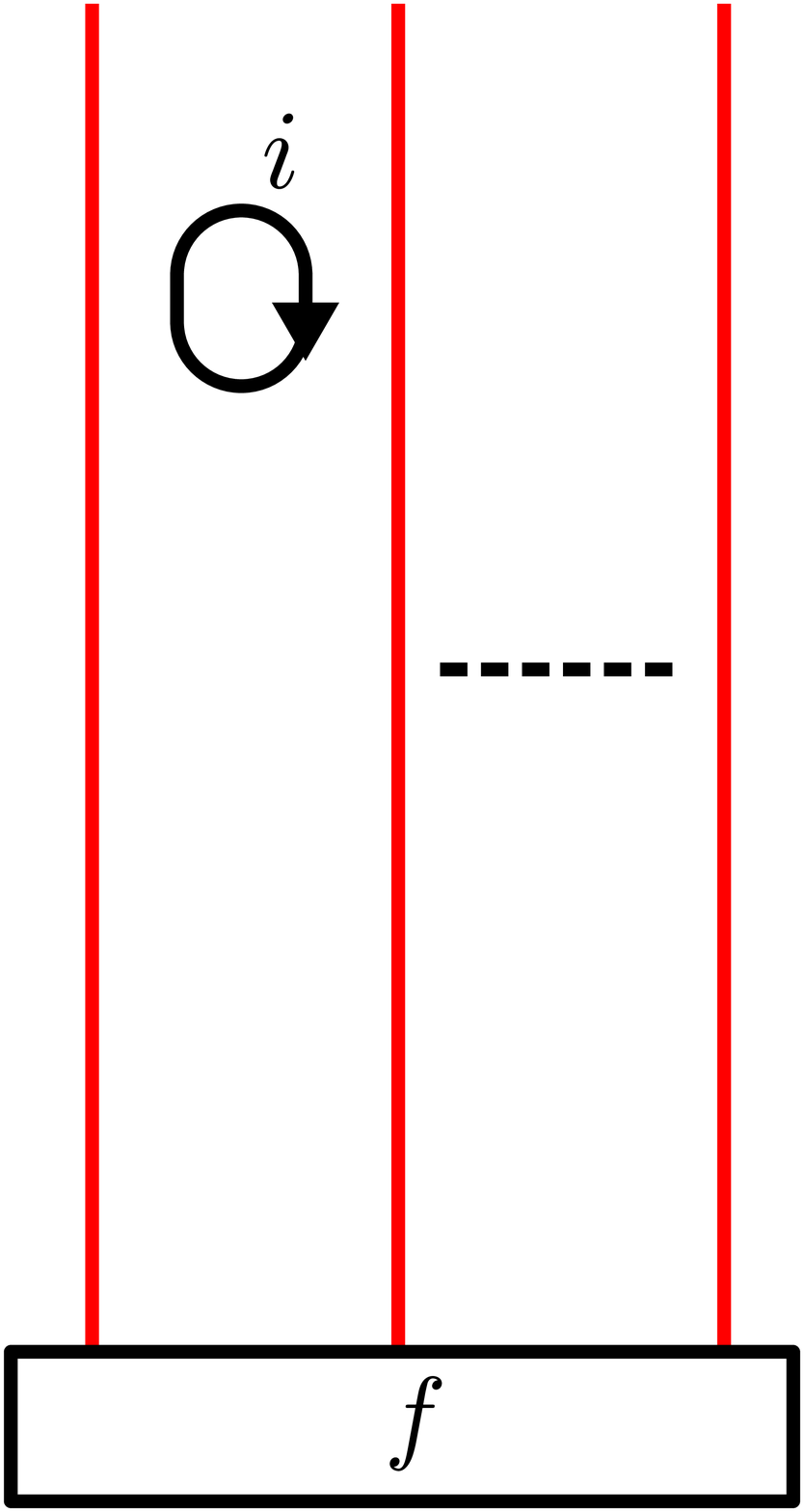}};
\node at (4,0) {\includegraphics[scale=0.1]{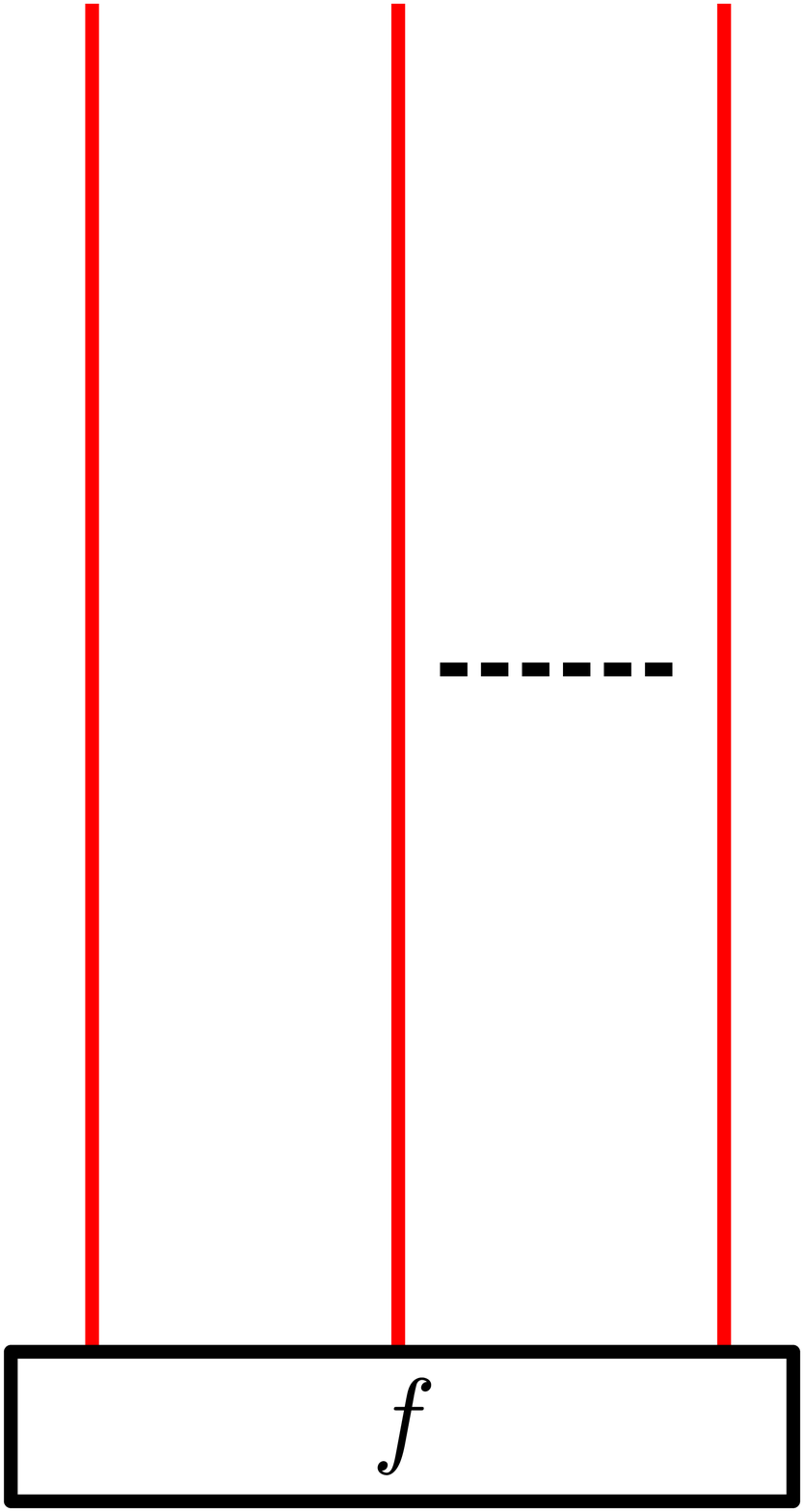}};
\node at (-2,0) {$\begin{aligned} \bigoplus_{i\, \in \, \Isf}\,\frac{d_i}{\Dsf^2}\end{aligned}$};
\node at (2,0) {$=$};
\end{tikzpicture}.
\end{tabular}

The dots in the first and second row indicate further applications of $\psi^L$, hence we have written a summation in the second row indicating summing over basis elements $\lbr b_\alpha\rbr $ similar to the summation in the first row. In addition all prefactors from $\psi^L$ are collected in the coefficients $C_1$, $C_2$.
\end{proof}

The vector space $\Blcal(\hat{S})$ can now be defined as follows. Denote $\ov{\Hcalcl^\ast}$ for the object dual to $\ov{\Hcalcl}$ (the closed labels of $\hat{S}$). Let $f\in \hom_{\Zsf(\Csf)}\left(\mathbf{1},\ov{\Hcalcl^\ast}\right)$, using evaluation morphisms, there is a map 
\eq{
\tilde{\circ}:\hom_{\Zsf(\Csf)}\left(\mathbf{1},\ov{\Hcalcl^\ast}\right)\otimes \hom_{\Zsf(\Csf)}\left(\mathbf{1},\ov{\Hcalcl}\otimes \ov{L(\Hcalop)}\right)\rightarrow \hom_{\Zsf{(\Csf)}}\left(\mathbf{1},\ov{L(\Hcalop)}\right)\quad .
}
We define 
\eq{
\Blcal(\hat{S})=\lbr g\in  \hom_{\Zsf(\Csf)}\left(\mathbf{1},\ov{\Hcalcl}\otimes \ov{L(\Hcalop)}\right)\, |\, \forall f\in \hom_{\Zsf(\Csf)}\left(\mathbf{1},\ov{\Hcalcl^\ast}\right), \, \exists h\in \hom_{\Zsf(\Csf)}\left(\mathbf{1},\ov{\Hcalop}\right)\right.\\ 
\left.\phantom{================} \text{ s.th. } f\tilde{\circ}g=Z(h) \rbr \quad .
} 
\item To any other world sheet $\hat{S}$ with $n_{op}=|B_{op}^i|$, $n_{cl}=|B_{cl}^i|$ incoming open/ closed boundaries and $m_{op}=|B_{op}^o|$, $m_{cl}=|B_{cl}^o|$ outgoing open / closed boundaries we associate the vector space 
\eq{
\Blcal(\hat{S})=\hat{H}^s\left(S,\ov{L(\Hcalop)},\ov{\Hcalcl}\right)
}
Factors of incoming and outgoing insertions in the tensor product are inserted in the order given by the ordering function on the world sheet. 
\end{enumerate}
Hence, modulo minor technicalities we assign to a world sheet the string-net space on its quotient surface decorated by the ingredients of the Cardy algebra. Tree level open world sheets (case I)) are singled out to properly disentangle the open and closed theory. We have to treat the case II) separatly, as the adjoint functor $L$ is not a tensor functor, but only a Frobenius functor. Hence, if one simply concatenates string-nets, one doesn't land in the right vector space, when glueing disks to the closed boundary components of a type II) world sheet. This is cured by choosing the restricted vector space for world sheets of type II). This defines $\Blcal$ on objects. To define it on morphisms, we first note that any homeomorphism of world sheets gives a homeomorphism of the quotient surfaces. This induces a linear map in string-net spaces. Thus we really only have to define $\Blcal$ for the sewing part of a morphism $(\Scal,F)$ in $\WSsf$. 
To define the gluing we need a small lemma.
\begin{lem} $L(A^\ast)\simeq L(A)^\ast$.
\end{lem}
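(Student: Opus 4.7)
The plan is to give an explicit isomorphism built from the defining direct sum of $L$ and the braiding of $\Csf$, and then verify compatibility with the half-braidings. First I would unpack both sides using the definition of $L$. On one side,
\eq{
L(A^\ast) = \bigoplus_{i\in \Isf(\Csf)} A^\ast \otimes U_i^\ast \otimes U_i .
}
On the other side, since $\Csf$ is rigid and semisimple with finitely many simple isomorphism classes, duality distributes over finite direct sums and reverses tensor factors, giving
\eq{
L(A)^\ast = \Bigl(\bigoplus_{i\in \Isf(\Csf)} A\otimes U_i^\ast \otimes U_i\Bigr)^\ast \simeq \bigoplus_{i\in \Isf(\Csf)} U_i^\ast \otimes U_i \otimes A^\ast .
}

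Next I would define a candidate isomorphism $\Phi: L(A)^\ast \to L(A^\ast)$ summand by summand: on the $i$-th summand, move $A^\ast$ past $U_i^\ast \otimes U_i$ using the braiding (for instance $\beta_{U_i^\ast \otimes U_i,\, A^\ast}$, whose inverse exists by rigidity). Each such braiding isomorphism is natural and invertible in $\Csf$, so assembling over $i$ gives an isomorphism $\Phi$ in $\Csf$. Invertibility, as well as naturality in $A$, is automatic since it is built from braidings and duality maps.

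The substantive step, and what I expect to be the main obstacle, is checking that $\Phi$ is actually a morphism in $\Zsf(\Csf)$, i.e.\ intertwines the over-under half-braiding $\beta^{ou}_{L(A^\ast),\,-}$ on $L(A^\ast)$ with the canonical half-braiding induced on the dual $L(A)^\ast$ of the object $(L(A),\beta^{ou}_{L(A),\,-})\in \Zsf(\Csf)$. Recall that for any $(M,\beta_{M,\,-})\in \Zsf(\Csf)$, the dual $M^\ast$ carries the half-braiding obtained by dualizing $\beta_{M,\,-}^{-1}$ with the evaluation and coevaluation. Using the explicit graphical form of $\beta^{ou}$ given in the preliminaries, this becomes a diagram chase: one pulls the braidings defining $\Phi$ through the strands encoding the half-braidings, using naturality of $\beta$, the hexagon axiom, and the duality (zig-zag) identities in the graphical calculus for $\Csf$. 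A clean alternative, if one takes for granted that $L$ is \emph{both} left and right adjoint to $F$ (which holds in the modular setting, as $F$ is a Frobenius functor) and that $F$ preserves duals, is the abstract chain
\eq{
\hom_{\Zsf(\Csf)}(X,L(A^\ast)) &\simeq \hom_\Csf(F(X),A^\ast) \simeq \hom_\Csf(A,F(X)^\ast) \\
&\simeq \hom_\Csf(A,F(X^\ast)) \simeq \hom_{\Zsf(\Csf)}(L(A),X^\ast) \\
&\simeq \hom_{\Zsf(\Csf)}(X,L(A)^\ast),
}
natural in $X\in\Zsf(\Csf)$, so that $L(A^\ast)\simeq L(A)^\ast$ by the Yoneda lemma. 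This second route bypasses the graphical verification entirely, which is why I would present it if the bi-adjointness of $L$ is invoked elsewhere in the paper; otherwise, the explicit $\Phi$ together with a short half-braiding check is the cleanest proof.
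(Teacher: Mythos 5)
Your proposal is essentially correct, but it takes a genuinely different route from the paper. The paper does not decompose $L(A)^\ast$ into simples or reduce to a half-braiding check at all: it directly exhibits $L(A^\ast)$ as a dual of $L(A)$ by transporting the duality data of $A$ through the Frobenius structure of $L$, setting $\evrm_{L(A)}=\psi^L_\mathbf{1}\circ L(\evrm_A)\circ\phi^L$ and $\coevrm_{L(A)}=\psi^L\circ L(\coevrm_A)\circ\phi^L_\mathbf{1}$, and then checking the straightening (zig-zag) identities graphically. The advantage is that every map involved ($\phi^L$, $\psi^L$, $L(f)$) is already a morphism in $\Zsf(\Csf)$, so the compatibility with half-braidings --- which you correctly identify as the main obstacle of your explicit $\Phi$ and only sketch --- never arises; moreover the resulting explicit $\evrm_{L(A)}$ and $\coevrm_{L(A)}$, with their $\Dsf^2/d_i$ and $d_i/\Dsf^2$ normalizations, are exactly the morphisms used later for gluing string-nets, so the construction is needed anyway. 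Your explicit route would work but carries a real subtlety you have not resolved: when moving $A^\ast$ past $U_i^\ast\otimes U_i$ you must choose over- versus under-crossings consistently with the over-under half-braiding $\beta^{ou}$, or $\Phi$ fails to be a morphism in $\Zsf(\Csf)$; as written, "for instance $\beta_{U_i^\ast\otimes U_i,\,A^\ast}$" leaves this open. Your second, Yoneda-based argument is clean and correct, but it buys only existence of an isomorphism (not the concrete duality morphisms) and leans on $L$ being a two-sided adjoint of $F$ and $F$ preserving duals, facts the paper imports from Kong--Runkel but does not state in the form you need.
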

\begin{proof}
Evaluation and coevaluation morphisms are defined by 
\eq{
\evrm_{L(A)}&=\psi_\mathbf{1}^L\circ L(\evrm_A)\circ\phi^L:L(A^\ast)\otimes L(A)\rightarrow \mathbf{1}\\\coevrm_{L(A)}&=\psi^L\circ L(\coevrm_A)\circ \phi_\mathbf{1}^L:\mathbf{1}\rightarrow L(A)\otimes L(A^\ast)
}
and their graphical representation reads
\begin{center}
\begin{tikzpicture}
\node at (-4,0) {\includegraphics[scale=0.15]{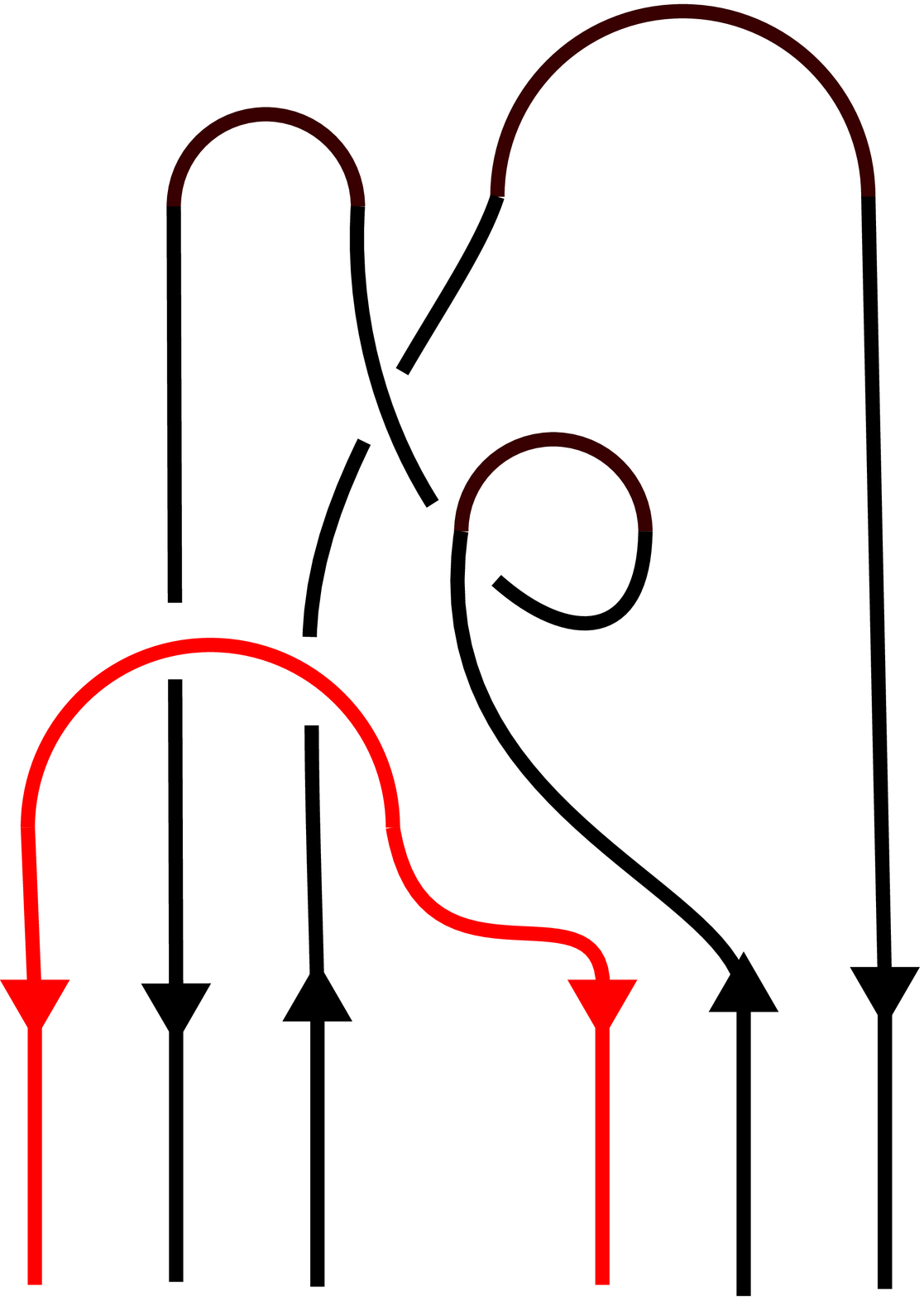}};
\node at (-7.7,0.1) {$\evrm_{L(A)}$};
\node at (-7,0.1) {$=$};
\node at (-6.5,0) {$\underset{i\in \Isf}{\bigoplus}$};
\node at (-6,0.1) {$\frac{\Dsf^2}{d_i}$};
\node at (-5.55,-2) {\scriptsize $\color{red} A$};
\node at (-4.7,-2) {\scriptsize $i$};
\node at (-4.2,-2) {\scriptsize $i$};
\node at (4,0) {\includegraphics[scale=0.15]{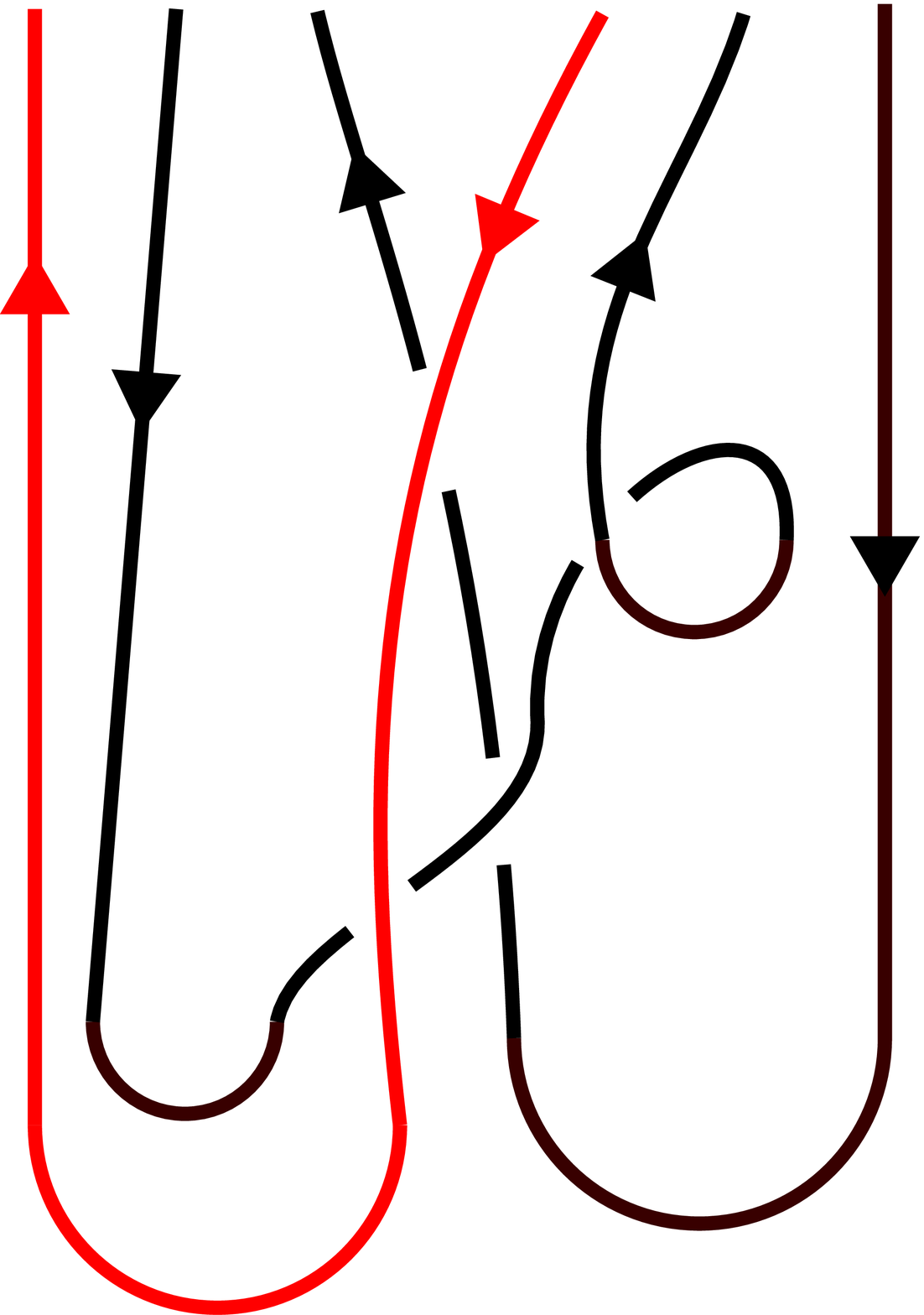}};
\node at (0,0.1) {$\coevrm_{L(A)}$};
\node at (1,0.1) {$=$};
\node at (1.5,0) {$\underset{i\in \Isf}{\sum}$};
\node at (2,0) {$\frac{d_i}{\Dsf^2}$};
\node at (2.4,2) {\scriptsize $\color{red}A$};
\node at (3.3,2) {\scriptsize $i$};
\node at (3.7,2) {\scriptsize $i$};
\end{tikzpicture}.
\end{center}
One easily checks that the straightening relations hold. 
\end{proof}

We have to distinguish five cases when defining sewing in terms of string-nets.
\begin{enumerate}[label=\roman*)]
\item Gluing two world sheets of type I) such that the resulting world sheet is again of type I). Then the linear map is just concatenation of string-nets.
\item Gluing two world sheets of type I) s.th. the result is of type III). In this case we first apply the map $Z$ to the string-nets on the surfaces, then concatenate and add projector circles to new boundary components.
\item Gluing a world sheet of type III) and type II) s.th. the resulting world sheet is of type I). This is the case if the quotient surface of the first world sheet is a sphere with closed boundary components, open boundary components sitting on a single connected component of the boundary and no connected component of the boundary is a physical boundary. The other world sheet is a disjoint union of disks with a single closed boundary. Gluing the disks to the sphere gives a disk. In this case, string-nets are concatenated and the result is post-composed with $Y$.
\item Gluing a world sheet of type I) and II) s.th. the result is of type II). Again we first replace the string-net on the type I) surface by applying $Z$ followed by stacking string-nets. The same definition applies to the case where we replace Type II) with type III) world sheets.
\item Any other gluing is just concatenation of string-nets across glued boundaries (see e.g. figure \ref{randomcorrel}). 
\end{enumerate}
This defines $\Bcal$ on morphisms of $\WSsf$.
\begin{prop} $\Blcal:\WSsf\rightarrow \Vectsf$ is a symmetric monoidal functor. 
\end{prop}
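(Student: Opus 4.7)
The plan is to verify the three structural requirements for a symmetric monoidal functor in order: well-definedness (on both objects and morphisms), functoriality, and compatibility with the symmetric monoidal structures. Most of the work is in checking that the case-by-case definition of $\Blcal$ on sewings fits together consistently, and that the type II) subspace restriction is preserved under composition.

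First I would settle well-definedness. On objects the only nontrivial point is the type II) case, where one must check that $\Blcal(\hat{S})$ is actually a linear subspace of $\hom_{\Zsf(\Csf)}(\mathbf{1},\overline{\Hcalcl}\otimes \overline{L(\Hcalop)})$; this is immediate from linearity of $f\,\tilde{\circ}\,g$ in $g$ and of $Z$. On morphisms, I would check each of the five sewing cases produces a map landing in the target vector space. The delicate ones are cases iii) and iv). For iii), the output of concatenation lives a priori in $\hom_{\Zsf(\Csf)}(\mathbf{1},\overline{L(\Hcalop)})$; the type II) subspace condition on the incoming piece, together with Lemma \ref{restricitionstatespacelemma} ($Y\circ Z=\id$), forces the composition with $Y$ to land in $\hom_\Csf(\mathbf{1},\overline{\Hcalop})$ as required. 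For iv), one must check that the image after concatenation still satisfies the defining condition of the type II) subspace; this reduces to commuting $Z$ past evaluation pairings, which is a direct graphical calculation using the explicit formulas for $\phi^L$, $\psi^L$.

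Second I would check functoriality. Identities are clear since the trivial sewing with identity homeomorphism concatenates no string-nets. For composition, any two successive morphisms $(\Scal_1,F_1)$ and $(\Scal_2,F_2)$ yield a morphism $(\Scal_2\star\Scal_1,F_2\circ F_1)$ whose associated map must agree with $\Blcal(\Scal_2,F_2)\circ \Blcal(\Scal_1,F_1)$. Away from the type I)/II)/III) boundaries this is just associativity of string-net concatenation across glued boundary circles, which is built into the definition of string-net spaces. On the boundaries between cases one has to verify that inserting $Z$ and subsequently $Y$ (or $\phi^L, \psi^L$) at the right intermediate stage agrees with performing a single one-step sewing; this again uses $Y\circ Z=\id$ to absorb spurious insertions. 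Finally, homotopy invariance follows because isotopic homeomorphisms of the quotient surface induce the same map on string-net spaces, by the isotopy invariance built into the equivalence classes defining $H(S,\mathbf{A})$.

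For symmetric monoidality, the tensor structure $\hat{S}\otimes\hat{T}$ in $\WSsf$ is disjoint union, whose quotient surface is the disjoint union of the quotient surfaces, and $\hat{H}^s$ sends disjoint unions to tensor products of morphism spaces by the factorization $\hom(\mathbf{1},A\otimes B)\supset \hom(\mathbf{1},A)\otimes\hom(\mathbf{1},B)$ assembled from basis elements; I would check the type II) subspace is compatible with this factorization. The braiding in $\Vectsf$ is the standard swap, which manifestly corresponds to the symmetry on disjoint unions via the ordering function $\ordrm$; this is the only place where the ordering function is used, and it ensures the tensor factors of $\overline{\Hcalop}$, $\overline{\Hcalcl}$ appear in the correct order after permutation. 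The main obstacle throughout is the careful bookkeeping of the type II) subspace and the non-tensor nature of the adjoint functor $L$: most identities must be verified via explicit graphical manipulation with the coefficients $\frac{d_i d_j}{d_k \Dsf^2}$ appearing in $\psi^L$, and one repeatedly exploits Lemma \ref{restricitionstatespacelemma} to collapse nested $Z$--$Y$ insertions back to identities.
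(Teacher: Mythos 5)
Your proposal is correct and follows essentially the same route as the paper: the paper's proof likewise reduces everything to checking that the gluing maps respect composition, with the two key ingredients being an explicit graphical computation showing that applying $Z$ commutes with concatenation of string-nets across a glued open boundary (using the formulas for $\phi^L$, $\psi^L$ and the bases $\theta_\alpha^{(ij);k}$), and Lemma \ref{restricitionstatespacelemma} ($Y\circ Z=\id$) to handle compositions involving type II) world sheets, while monoidality is the observation that disjoint unions of quotient surfaces give tensor products of string-net spaces. Your treatment is somewhat more systematic (spelling out well-definedness, identities, and homotopy invariance, which the paper leaves implicit), but the technical content is the same.
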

\begin{proof}
We only need to check that the linear maps for gluings respect compositions. Firstly we check that first gluing of type i) and then applying $Z$ in fact gives the same as applying $Z$ and then stacking. Recall from section \ref{sec1} that concatenation of string-nets is defined using the evaluation morphism. Let $f\in \hom_{\Zsf(\Csf)}(\mathbf{1},\ov{\Hcalop}\otimes \Hcalop)$ and $g\in \hom_{\Zsf(\Csf)}(\mathbf{1},\Hcalop^\ast\otimes \ov{\Hcalop})$, then it holds $Z(f)\in \hom_{\Zsf(\Csf)}(\mathbf{1},\ov{L(\Hcalop)}\otimes L(\Hcalop)^\ast)$, $Z(g)\in \hom_{\Zsf(\Csf)}(\mathbf{1},L(\Hcalop)\otimes \ov{L(\Hcalop)})$. For the gluing of $Z(f),\, Z(g)$ we compute  

\begin{center}
\begin{table}[H]
\begin{tabular}{m{0.5cm} m{9cm}}
= &
\begin{tikzpicture}
\node at (0,0) {\includegraphics[scale=0.2]{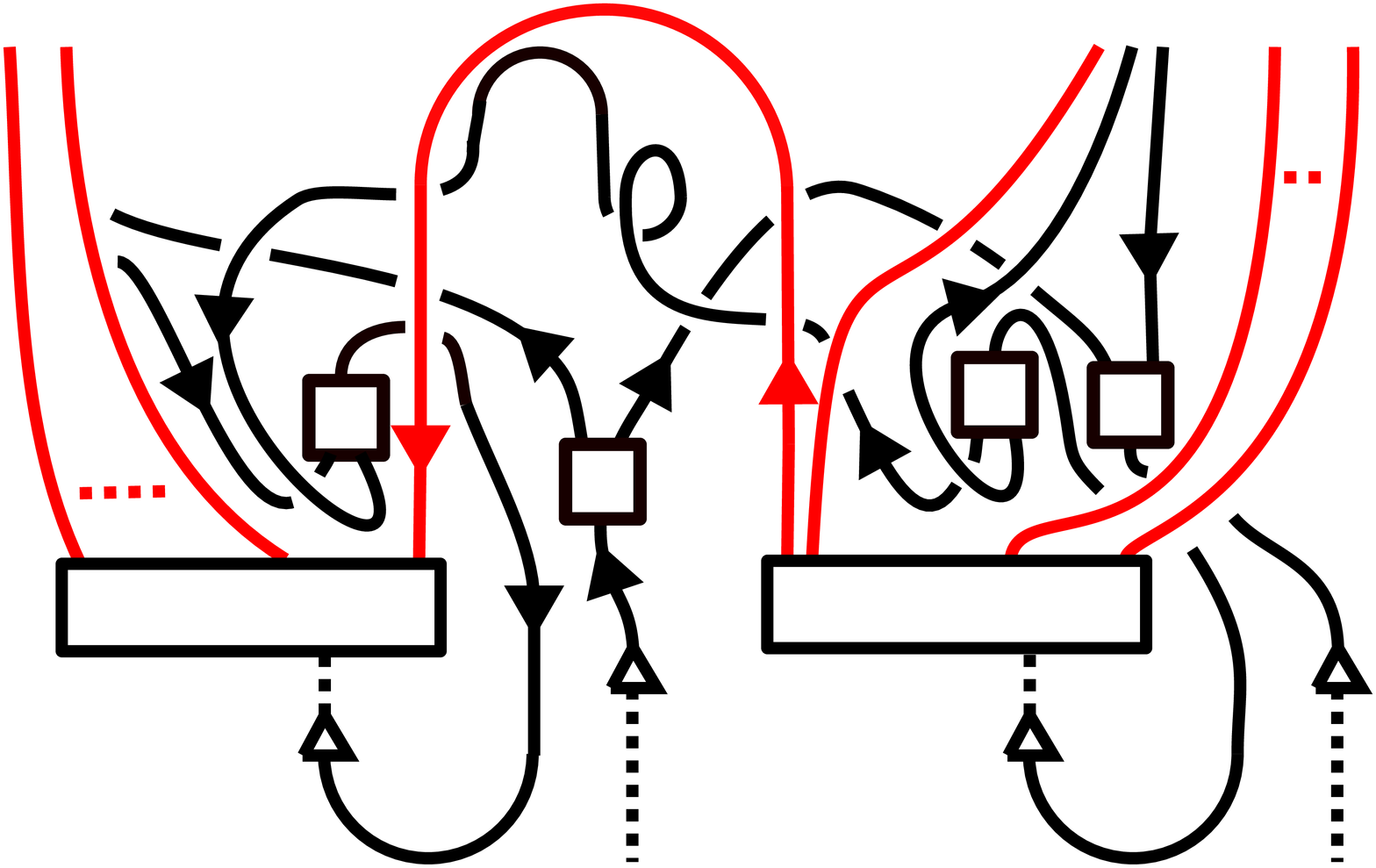}};
\node at (-2.1,-0.85) {\scriptsize $f$};
\node at (1.4,-0.85) {\scriptsize $g$};
\node at (-0.6,-1.2) {\scriptsize $i$};
\node at (-0.1,-0.9) {\scriptsize $i$};
\node at (-2,0.3) {\scriptsize $j$};
\node at (-2.55,0.8) {\scriptsize $k$};
\node at (-0.75,0.7) {\scriptsize $k$};
\node at (0.1,0.3) {\scriptsize $j$};
\node at (0.95,-0.4) {\scriptsize $j$};
\node at (1,0.6) {\scriptsize $\ell$};
\node at (2.5,-1.2) {\scriptsize $r$};
\node at (3.3,-0.8) {\scriptsize $r$};
\node at (2,0.7) {\scriptsize $j$};
\node at (2.45,1.4) {\scriptsize $\ell$};
\node at (-1.68,0.1) {\scriptsize $\alpha$};
\node at (-0.42,-0.23) {\scriptsize $\alpha$};
\node at (1.52,0.2) {\scriptsize $\tau$ };
\node at (2.19,0.12) {\scriptsize $\tau$};
\node at (-5,0) {$\begin{aligned} \bigoplus_{i,k,j,\ell,r\in \Isf}\sum_{\alpha,\tau}\frac{d_kd_\ell d_j}{\Dsf^2 d_id_r}\end{aligned}$};
\end{tikzpicture}
\end{tabular}
\end{table}
\end{center}

\begin{table}[H]
\begin{tabular}{m{0.5cm} m{9cm}}
= & 
\begin{tikzpicture}
\node at (0,-5) {\includegraphics[scale=0.2]{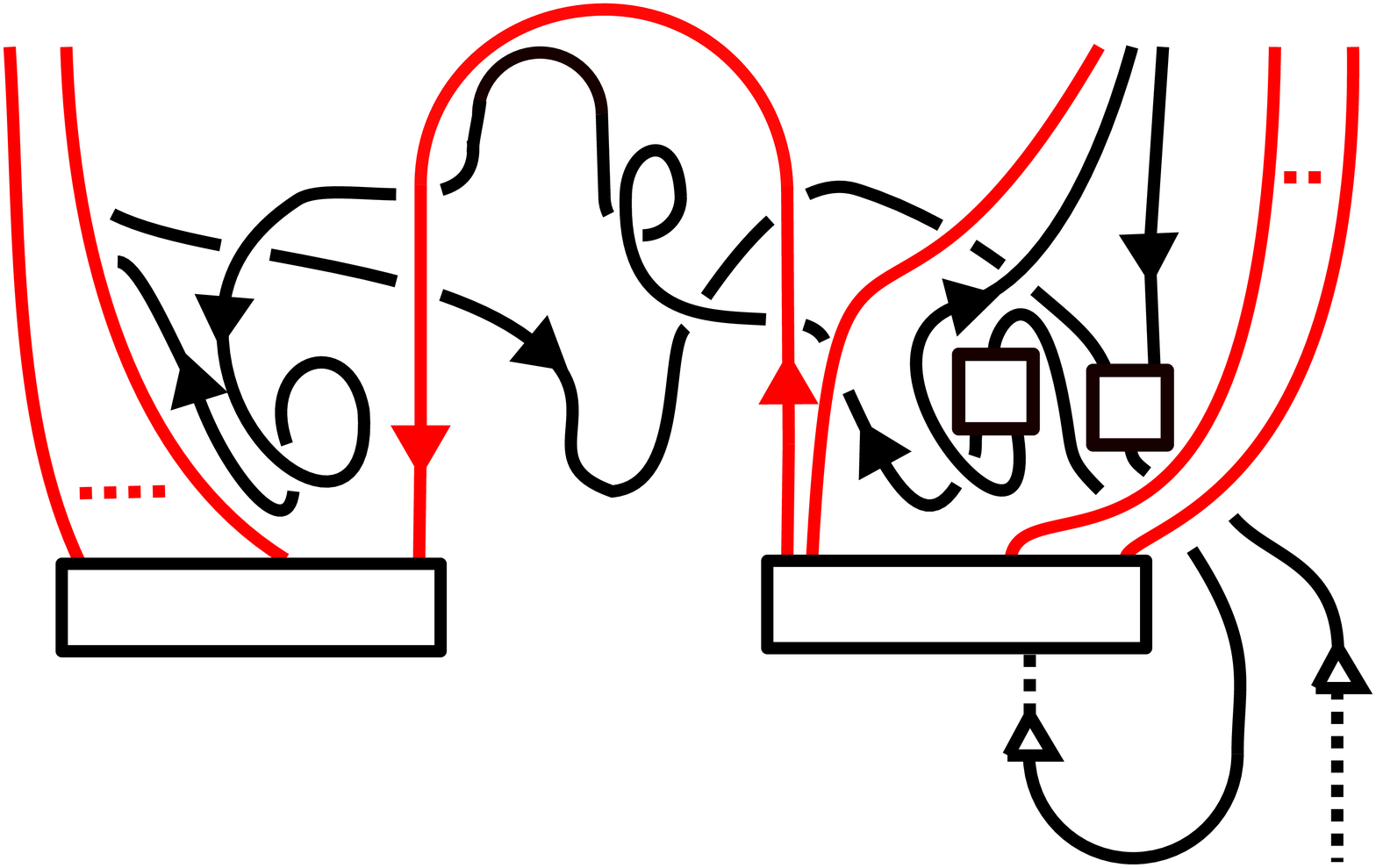}};
\node at (-2.1,-5.85) {\scriptsize $f$};
\node at (1.4,-5.85) {\scriptsize $g$};
\node at (0.95,-5.4) {\scriptsize $j$};
\node at (1,-4.4) {\scriptsize $\ell$};
\node at (2.5,-6.2) {\scriptsize $r$};
\node at (3.3,-5.8) {\scriptsize $r$};
\node at (2,-4.3) {\scriptsize $j$};
\node at (2.45,-3.6) {\scriptsize $\ell$};
\node at (1.52,-4.78) {\scriptsize $\tau$ };
\node at (2.19,-4.85) {\scriptsize $\tau$};
\node at (-5,-5) {$\begin{aligned} \bigoplus_{j,\ell,r\in \Isf}\sum_{\alpha,\tau}\frac{d_\ell d_j}{\Dsf^2 d_r}\end{aligned}$};
\end{tikzpicture}
\end{tabular}
\end{table}

\begin{table}[H]
\begin{tabular}{m{0.5cm} m{9cm}}
= & 
\begin{tikzpicture}
\node at (0,-10) {\includegraphics[scale=0.2]{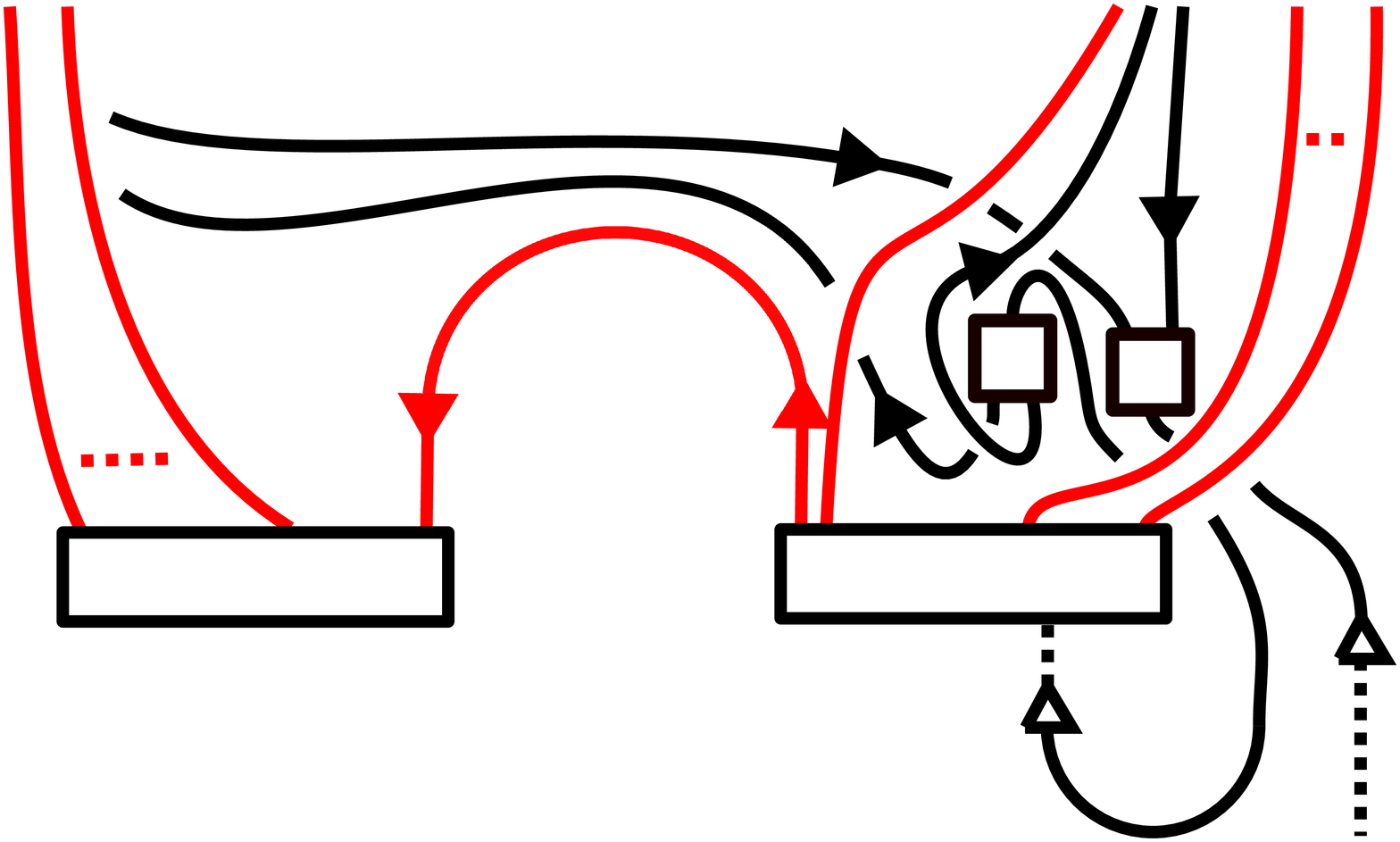}};
\node at (-2.1,-10.8) {\scriptsize $f$};
\node at (1.4,-10.75) {\scriptsize $g$};
\node at (0.95,-10.3) {\scriptsize $j$};
\node at (1,-9.4) {\scriptsize $\ell$};
\node at (2.5,-11.2) {\scriptsize $r$};
\node at (3.3,-10.8) {\scriptsize $r$};
\node at (2,-9.2) {\scriptsize $j$};
\node at (2.45,-8.6) {\scriptsize $\ell$};
\node at (1.52,-9.7) {\scriptsize $\tau$ };
\node at (2.19,-9.8) {\scriptsize $\tau$};
\node at (-5,-10) {$\begin{aligned} \bigoplus_{j,\ell,r\in \Isf}\sum_{\alpha,\tau}\frac{d_\ell d_j}{\Dsf^2 d_r}\end{aligned}$};
\end{tikzpicture}.
\end{tabular}
\end{table}

Red strands are colored with $A$ or $A^\ast$, depending on the orientation. In the first equality we use the definition of the basis elements $\lbr \theta_\alpha^{(ij);k}\rbr $ and their duals. But the final picture is nothing else than $Z$ applied to the gluing of $f$ and $g$. Next, we note that compositions of gluing disks to closed boundary components of type II) world sheets and gluing along open boundary components of the same world sheet is well defined due to Lemma \ref{restricitionstatespacelemma}. Hence, compositions including open boundary components are well defined. Compositions of gluing along closed boundary compositions are obviously well defined. Finally, a disjoint union of world sheets $\coprod \hat{S}$ gets mapped to $\bigotimes \Bcal\left(\hat{S}\right)$. Thus, $\Bcal$ is a symmetric monoidal functor.
\end{proof}

Note that $\Blcal(\hat{S})$ for type II) world sheets is non-zero and contains all interesting cases. Since \linebreak $\hom_{\Zsf(\Csf)}\left(\mathbf{1},L(\Hcalop)\right) \simeq \hom_{\Csf}\left(\mathbf{1}, \Hcalop\right)$, $\Blcal(\hat{S})$ contains all string-nets obtained from gluing type I) world sheets to a type II) world sheet with a single open boundary component.
It is worthwhile analyzing which vector spaces the functor assigns to generating world sheets.
\begin{enumerate}[label=\Roman*)]
\item \textbf{Open World Sheets:} The quotient surfaces of open generating world sheets are all homeomorphic to a disk, though with different numbers of incoming and outgoing open boundaries. For a disk $D(n_i,n_o)$ with $n_i$ incoming open boundaries and $n_o$ outgoing open boundaries we get the vector space
\eq{
\Blcal(D(n_i,n_0))=\hat{H}^s(D,\ov{\Hcalop})=\hom_{\Zsf(\Csf)}(\mathbf{1},\Hcalop)\simeq \hom_\Csf(\mathbf{1},\ov{\Hcalop})\quad .
}

\item \textbf{Closed World Sheets:} In this case the quotient surface is topologically a sphere with $n_i$ incoming closed boundaries and $n_o$ outgoing closed boundaries. With the same notation as in the open case we get the vector space
\eq{
\Blcal(S^2(n_i,n_o))=\hat{H}^s(S^2,\ov{\Hcalcl})=\hom_{\Zsf(\Csf)}(\mathbf{1},\ov{\Hcalcl})\quad .
}
\item \textbf{Open-Closed World Sheets:} Finally open-closed generating world sheets get mapped
\eq{
\Blcal(I)&=\hat{H}^s(I,L_{op}(\Hcalop)\otimes \Hcalcl^\ast)\simeq \hom_{\Zsf(\Csf)}(\mathbf{1},L_{op}(\Hcalop)\otimes \Hcalcl^\ast)\\
\Blcal(I^\dagger)&=\hat{H}^s(I^\dagger,L_{op}(\Hcalop)^\ast\otimes \Hcalcl)=\simeq \hom_{\Zsf(\Csf)}(\mathbf{1},L_{op}(\Hcalop)^\ast\otimes \Hcalcl)\quad .
}
\end{enumerate}
Thus $\Blcal$ associates to generating world sheets the vector spaces expected from the categorical description of RCFTs.
\subsection{Fundamental Correlators on Generating World Sheets}\label{subsec61}

For a consistent system of correlators we have to give fundamental correlators on generating world sheets and show the sewing constraints for this set of correlators. We start by defining fundamental correlators for Cardy algebra $(\Hcalcl,\Hcalop,\iotaclop)$. In the following red curves correspond to edges colored with $\Hcalop$, purple curves always denote edges of graphs colored by $L(\Hcalop)$ and orange curves are edges colored by $\Hcalcl$. Incoming and outgoing boundaries should be clear from the orientation of edges. Trivalent disk-shaped vertices either denote multiplication or comultiplication in Frobenius algebras $\Hcalcl,\,L(\Hcalop),\, \Hcalop$. Similarly one-valent disk-shaped vertices are either unit or counit. In both cases orientation of edges fix the kind of morphism assigned, so we suppress another graphical distinction between the two.
\begin{enumerate}[label=\Roman*)]\label{worldsheetstringnets}
\item \textbf{Open World Sheets:}

\begin{figure}[H]
\centering
\begin{tikzpicture}
\node (opprop) at (0,0) {\includegraphics[scale=0.2]{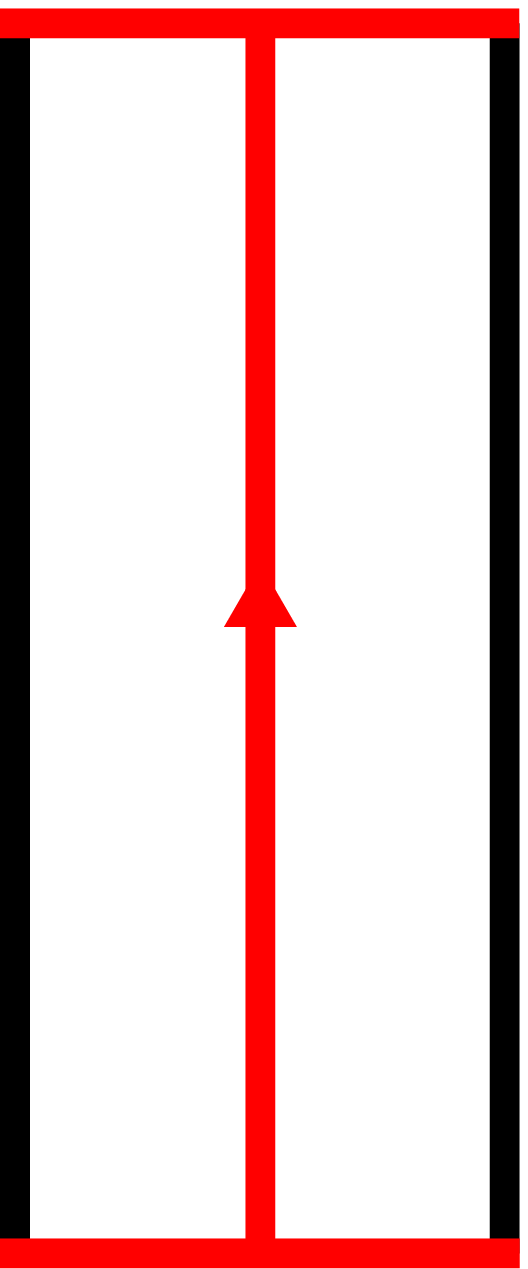}};
\node (mop) at (3,0) {\includegraphics[scale=0.2]{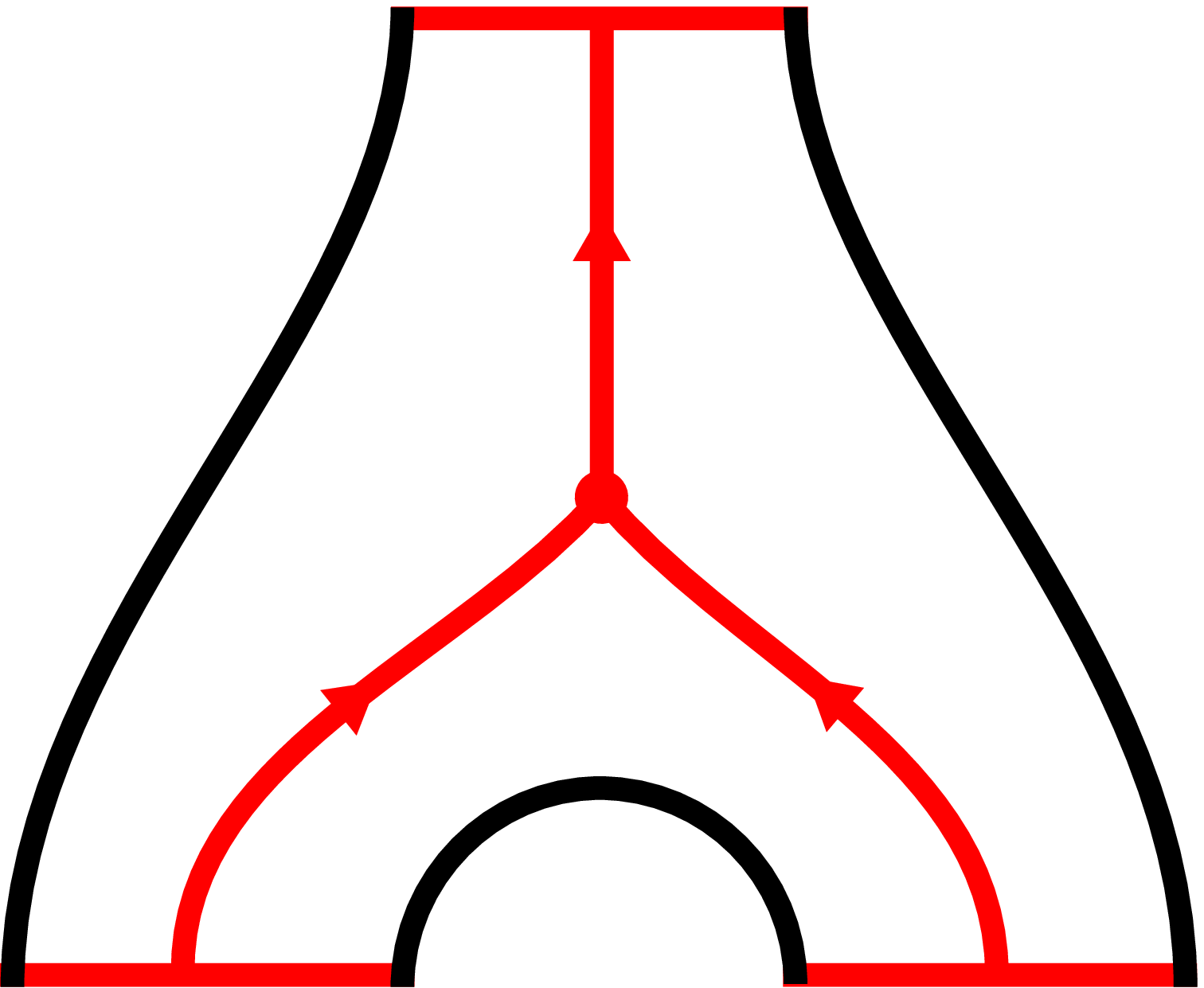}};
\node (Deltaop) at (6,0) {\includegraphics[scale=0.2, angle=180]{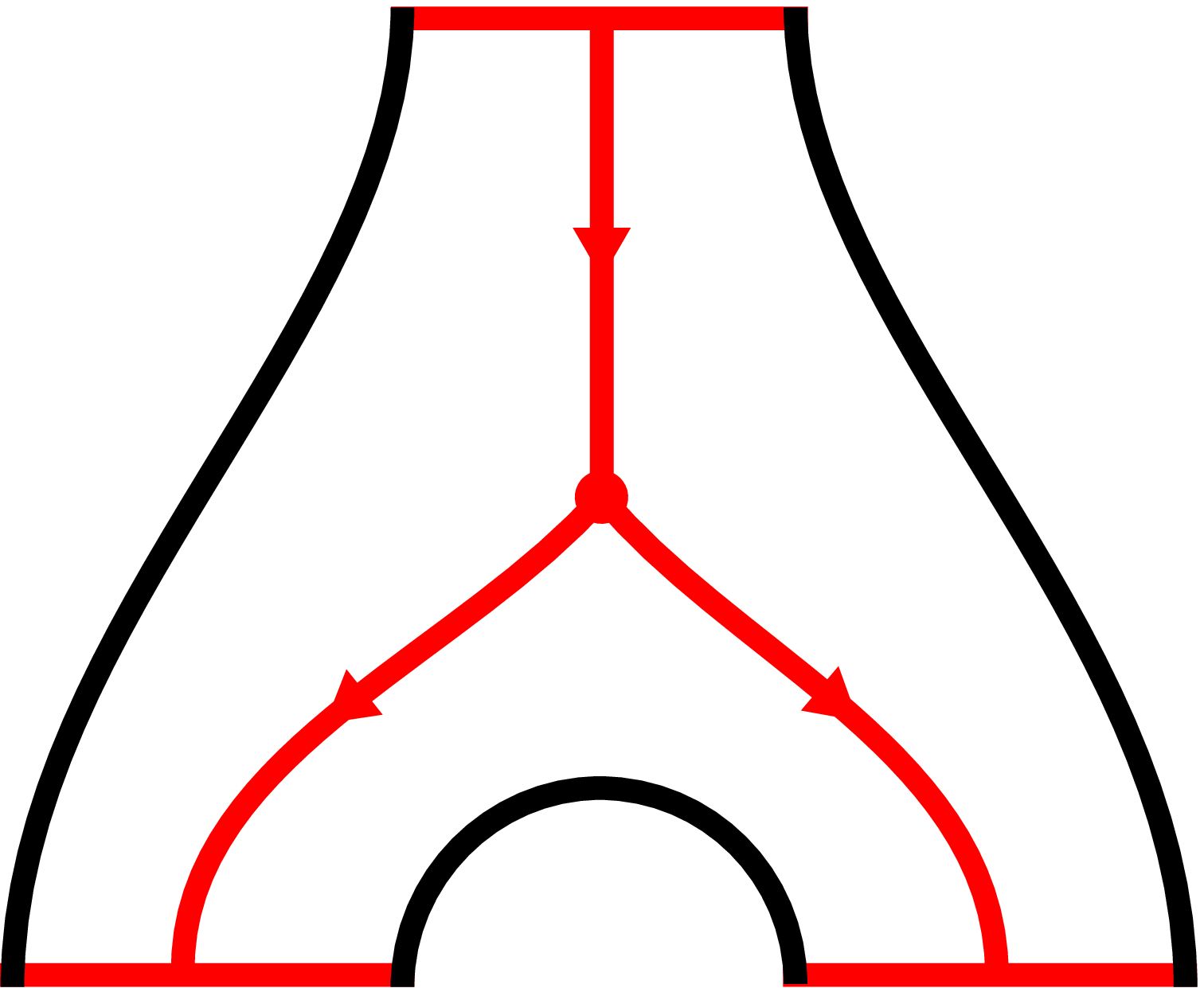}};
\node (unitop) at (9,0) {\includegraphics[scale=0.23]{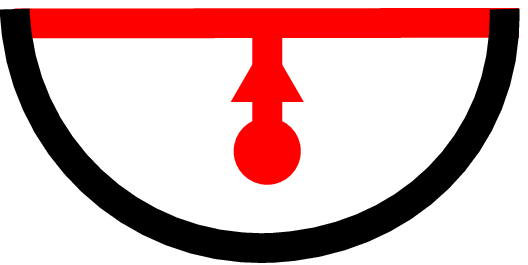}};
\node (counit) at (12,0) {\includegraphics[scale=0.23, angle=180]{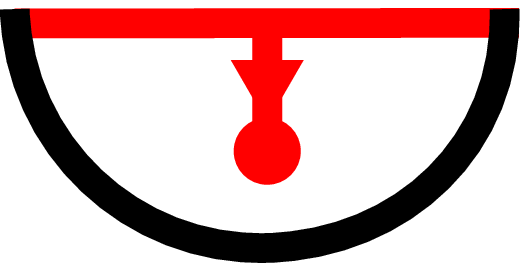}};
\node (Oprop) at (0,-2.2) {$\corr^{op}_{prop}$};
\node (Om) at (3,-2.2) {$\corr^{op}_m$};
\node (Odelta) at (6,-2.2) {$\corr^{op}_\Delta$};
\node (Ounit) at (9,-2.2) {$\corr^{op}_\eta$};
\node (Ocounit) at (12,-2.2) {$\corr^{op}_\epsilon$};
\end{tikzpicture}
\caption{Open fundamental correlators.}
\label{openfundamentalcorr}
\end{figure}
Note that for these world sheets the inserted projector $P$ (see figure \ref{projector}) is absent as it can be isotoped to a point and therefore vanishes.

\item \textbf{Closed World Sheets:}

\begin{figure}[H]
\begin{tikzpicture}
\node (opprop) at (0,0) {\includegraphics[scale=0.2]{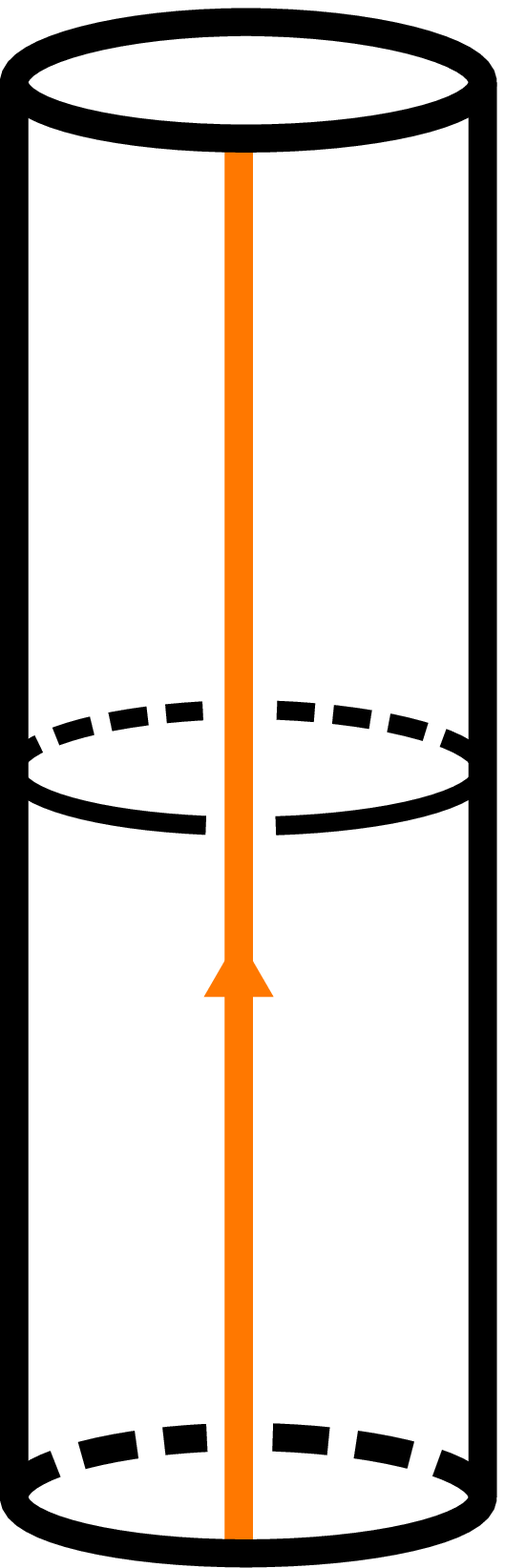}};
\node (mop) at (3,0) {\includegraphics[scale=0.2]{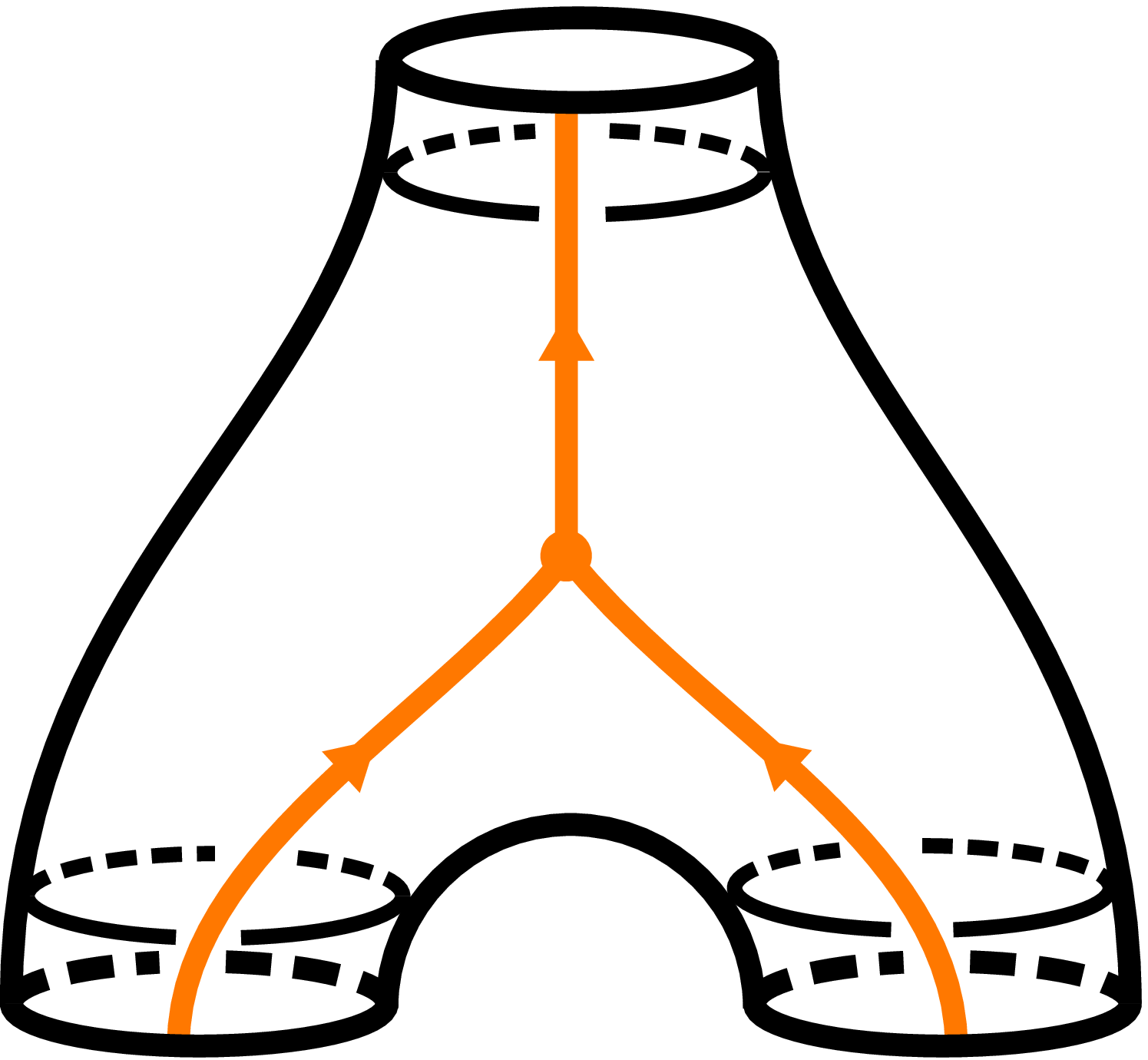}};
\node (Deltaop) at (6,0) {\includegraphics[scale=0.2, angle=180]{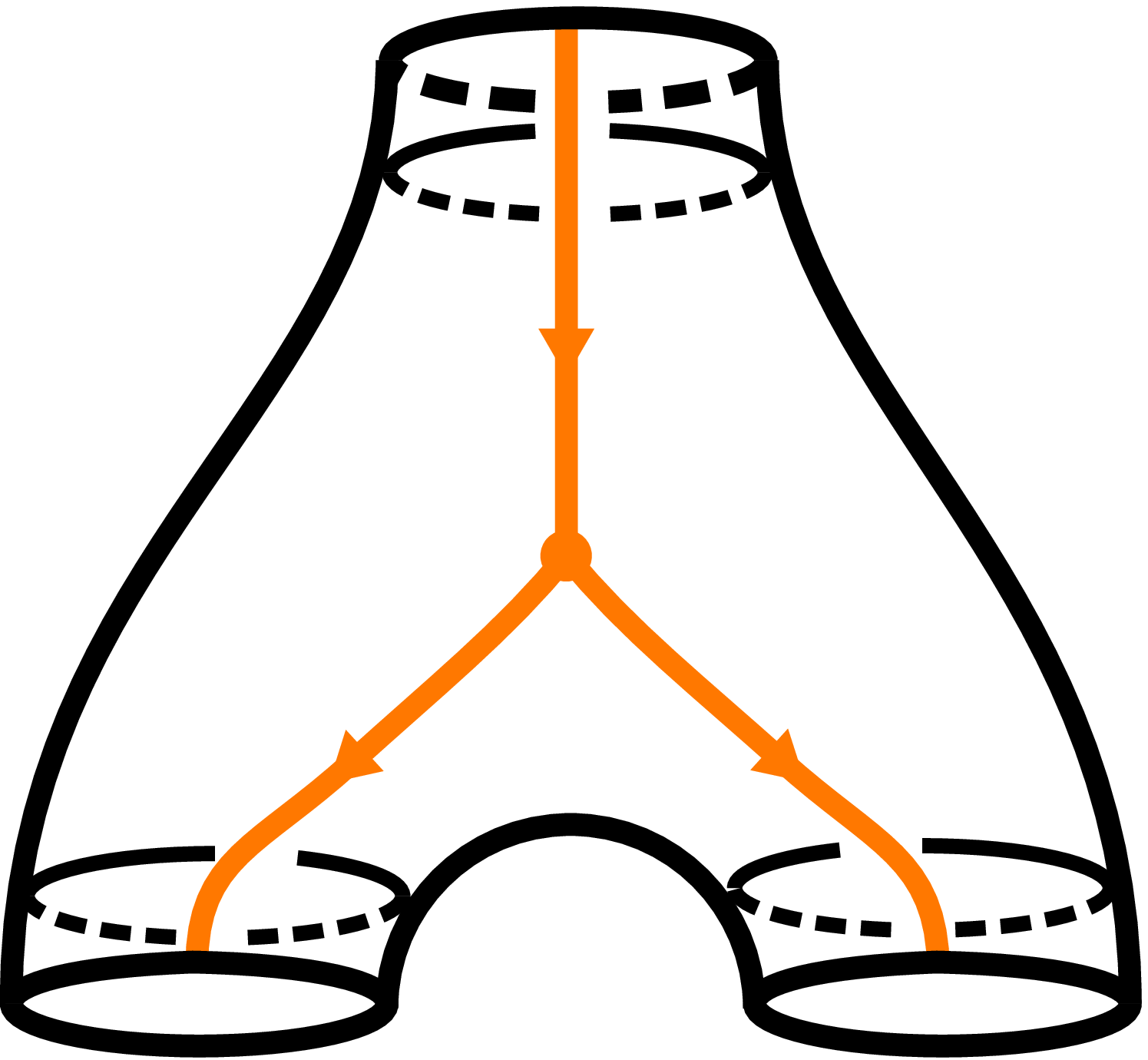}};
\node (unitop) at (9,0) {\includegraphics[scale=0.23]{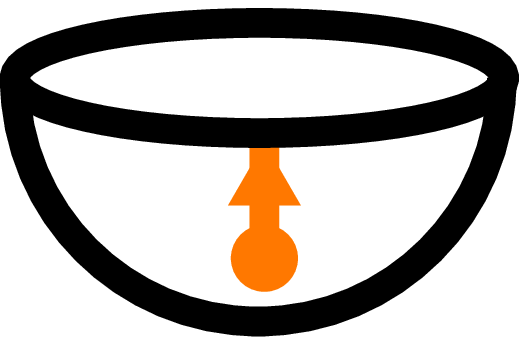}};
\node (counit) at (12,0) {\includegraphics[scale=0.23]{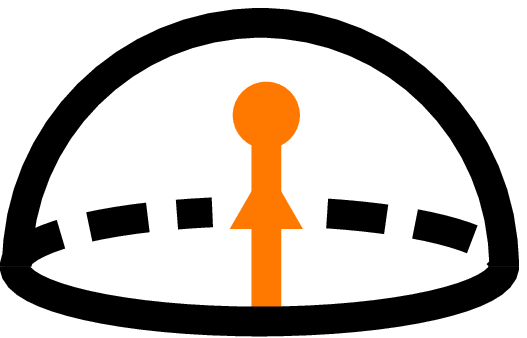}};
\node (Oprop) at (0,-2.2) {$\corr^{cl}_{prop}$};
\node (Om) at (3,-2.2) {$\corr^{cl}_m$};
\node (Odelta) at (6,-2.2) {$\corr^{cl}_\Delta$};
\node (Ounit) at (9,-2.2) {$\corr^{cl}_\eta$};
\node (Ocounit) at (12,-2.2) {$\corr^{cl}_\epsilon$};
\end{tikzpicture}
\caption{Closed fundamental correlators.}
\label{closedfundamentalcorr}
\end{figure}

\item \textbf{Open-Closed World Sheets:}

World sheets $I$, $I^\dagger$ are topologically cylinders and a single projector line is inserted. Boxes denote the morphisms $\iotaclop$ or $\iotaclop^\dagger$, where again the orientation of edges displayed fixes the type of morphism.

\begin{figure}[H]
\centering
\begin{tikzpicture}
\node (iota) at (0,0) {\includegraphics[scale=0.2]{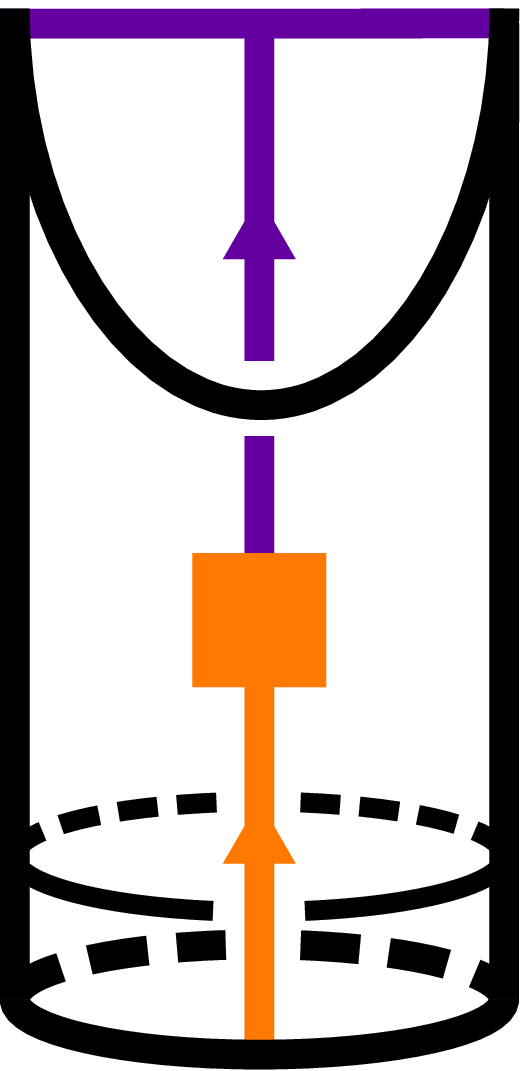}};
\node (iotadagger) at (3,0) {\includegraphics[scale=0.2, angle=180]{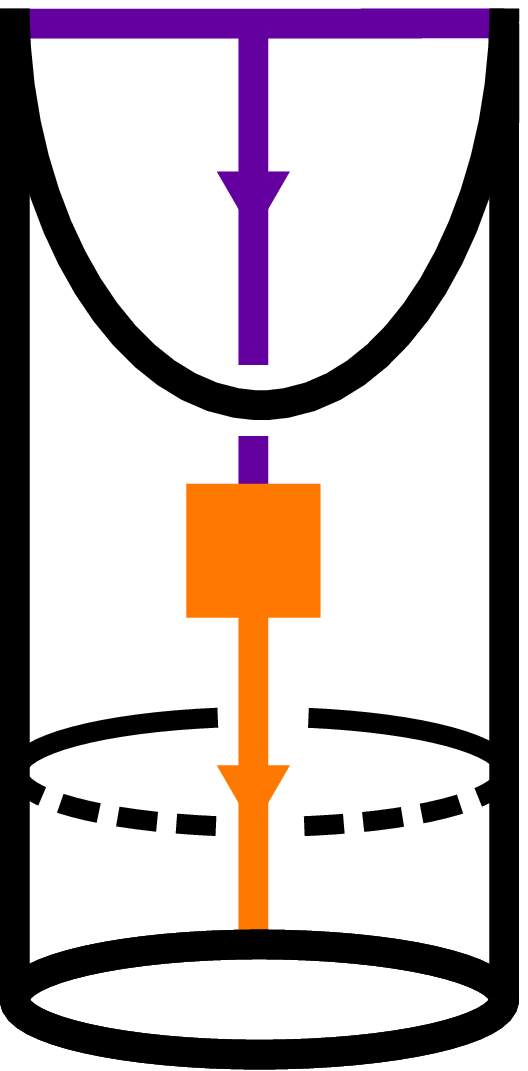}};
\node (I) at (0,-2.2) {$\corr_{I}$};
\node (Idagger) at (3,-2.2) {$\corr_{I^\dagger}$};
\end{tikzpicture}
\caption{Open-closed fundamental correlators.}
\label{openclosedfundamentalcorr}
\end{figure}
\end{enumerate}

We are now ready to state and prove the first main result of the paper.
\begin{theo} \label{maintheo1} The correlators 
\eq{
\lbr \corr^{op}_{prop},\corr^{op}_m,\corr^{op}_\Delta,\corr^{op}_\eta,\corr^{op}_\epsilon,\corr^{cl}_{prop},\corr^{cl}_{m},\corr^{cl}_\Delta,\corr^{cl}_{\eta},\corr^{cl}_{\epsilon},\corr_I,\corr_{I^\dagger} \rbr 
}
satisfy the sewing constraints.
\end{theo}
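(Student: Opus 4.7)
The plan is to invoke Theorem \ref{sewingconstraint}, which reduces the claim that the assignment $\corr$ extends to a monoidal natural transformation $\mathbf{1}\Rightarrow \Blcal$ to the verification of $32$ string-net identities, one for each relation $R1$--$R32$. Because the fundamental correlators are built from the structure morphisms of the Cardy algebra via the string-net evaluation, and because gluing of world sheets is realised by concatenation of string-nets (possibly composed with $Z$, $Y$, or a projector insertion), each relation $R_i$ takes the form of an equality between two string-nets on a single quotient surface whose supports agree outside a disk region. By the local relations, such an equality is reduced to a morphism identity in $\Csf$ or $\Zsf(\Csf)$.

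The verification falls into four groups. For the open relations R1--R13, both sides live on a disk colored by $\Hcalop$; the identities reduce directly to associativity and unitality of $(\mop,\etaop)$, coassociativity and counitality of $(\Dop,\epop)$, the Frobenius property, and the symmetry of $\Hcalop$. For the closed relations R14--R23, the analogous identities for the commutative symmetric Frobenius algebra $\Hcalcl$ in $\Zsf(\Csf)$ suffice; the Dehn twist R24 and the braid move R25 are then handled by the commutativity of $\Hcalcl$ together with the behaviour of the ribbon twist and braiding of $\Zsf(\Csf)$ on the multiplication and comultiplication. In each case I would first isotope all colored strands into a small disk region, then cite the corresponding Frobenius axiom, and finally apply the local evaluation.

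For the open-closed relations R26--R31 one uses the remaining Cardy axioms: R26 and R27 (and their duals) encode the statement that $\iotaclop$ is a homomorphism of algebras; R28 and R29 encode compatibility with units and counits, together with the definition of $\iotaclop^\dagger$ via the Frobenius structures; R30 is the center condition; and R31 is the Cardy condition. In each case, once the relevant part of the string-net is contracted into a disk via the local relations, what remains is exactly the graphical form of the corresponding axiom of a $(\Csf|\Zsf(\Csf))$-Cardy algebra.

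The principal obstacle is the genus one relation R32, which swaps the $a$- and $b$-cycles of a torus with one open boundary and therefore cannot be resolved by a single local move. The strategy is first to collect all Frobenius structure morphisms into a small disk, leaving on the torus only coupons in $\la \Hcalcl,\bullet\ra$ connected by strands realising the two cycles. One then invokes the completeness property to insert a resolution of the identity along one of the cycles as a sum over simples $U_i\otimes U_j\in\Isf(\Zsf(\Csf))$, and applies Lemma \ref{circleline} to absorb the transverse cycle into a scalar factor $d_i d_j/\Dsf^2$. The resulting identity reduces to an equality of two morphisms in $\la \Hcalcl,U_i\otimes U_j\ra$ that is precisely the modularity condition on $\Hcalcl$, which completes the verification.
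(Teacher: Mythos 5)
Your proposal follows essentially the same route as the paper: reduce to the 32 relations via Theorem \ref{sewingconstraint}, dispatch the open relations with the Frobenius axioms of $\Hcalop$ and the closed relations with those of $\Hcalcl$ (the paper simply cites \cite{Schweigert:2019zwt} for the closed block), match each open-closed relation to one of the remaining Cardy axioms, and handle R32 by combining the completeness relation, Lemma \ref{circleline} and the modularity of $\Hcalcl$. The only discrepancies are cosmetic and do not affect the argument: the paper attaches the center condition to R26 and the algebra-homomorphism property to R27 and R30 (you swap these assignments), and the single insertion in R32 is a closed, not an open, boundary.
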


In order to show the theorem, we have to show that the correlators on both world sheets for all 32 relations in section \ref{relations} agree. We split the proof in several lemmas.
\begin{lem} The correlators
\eq{
\lbr  \corr^{op}_{prop},\corr^{op}_m,\corr^{op}_\Delta,\corr^{op}_\eta,\corr^{op}_\epsilon \rbr 
}
satisfy all open relations.
\end{lem}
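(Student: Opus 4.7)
The strategy is to reduce each of the thirteen open relations to a known Frobenius algebra identity for $(\Hcalop,\mop,\etaop,\Dop,\epop)$ in $\Csf$. Since every world sheet appearing in R1--R13 is purely open, its quotient surface is (up to homeomorphism) a disk, and consequently for each such $\hat S$ we have $\Blcal(\hat S)=\hat H^s(S,\ov{\Hcalop})\simeq \hom_\Csf(\mathbf{1},\ov{\Hcalop})$ via the unique evaluation map $\la \bullet \ra_D$ from Kirillov's theorem. Under this identification the five fundamental correlators $\corr^{op}_{prop},\corr^{op}_m,\corr^{op}_\Delta,\corr^{op}_\eta,\corr^{op}_\epsilon$ are literally the structure morphisms $\id_{\Hcalop},\mop,\Dop,\etaop,\epop$ drawn as string-nets on a disk. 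So the task is to show that after sewing according to each relation, the two sides evaluate to the same morphism in $\Csf$.

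First I would verify that the sewing of open world sheets (case i) of the functoriality proof) indeed corresponds to composition of the underlying morphisms: concatenating string-nets across a glued open boundary, combined with the evaluation map on the resulting disk, reproduces the composition in $\Csf$ because $\la \cdot \ra_D$ is determined by the graphical calculus of $\Csf$ on each embedded disk. Once this is established, each of R1--R13 becomes an explicit diagrammatic identity which I would match against the axioms defining the Frobenius algebra:
\begin{itemize}
\item R1--R4 reduce to $\mop\circ(\etaop\otimes\id)=\id=\mop\circ(\id\otimes\etaop)$ and the analogous counit identities $(\epop\otimes\id)\circ\Dop=\id=(\id\otimes\epop)\circ\Dop$, which also express that the propagator glued to a generator is the generator itself.
\item R5 is associativity $\mop\circ(\mop\otimes\id)=\mop\circ(\id\otimes\mop)$, and its dual partner gives coassociativity.
\item R6, R7, R8, R9 are the unit/counit normalizations, incorporating $\epop\circ\etaop$ type relations.
\item R10, R11 are the two Frobenius properties $(\mop\otimes\id)\circ(\id\otimes\Dop)=\Dop\circ\mop=(\id\otimes\mop)\circ(\Dop\otimes\id)$.
\item R12, R13 express the symmetry condition of $\Hcalop$ together with the Frobenius identity (the string-net ``flip'' on the disk is exactly the pivotal trace condition).
\end{itemize}
For each pair $(R_{i,l},R_{i,r})$ I would draw the two glued string-nets on the common quotient disk and apply $\la \bullet\ra_D$; the identity then becomes one of the above relations, which holds in $\Csf$ by assumption.

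The only delicate point is the verification of R12 and R13, where one must carefully track the cyclic ordering induced by the ordering function $\ord$ on the open boundary together with the orientation data on the disk: after applying $\la \bullet\ra_D$ the two string-nets differ by a cyclic rotation of boundary legs, and closing this gap uses the symmetry of the Frobenius algebra (i.e.\ the equality of morphisms in the symmetry diagram in the definition of a symmetric Frobenius algebra) together with the pivotal structure of $\Csf$. All other relations amount to a direct application of the algebra or coalgebra axioms, so they are essentially immediate once the dictionary between gluings and compositions has been set up.
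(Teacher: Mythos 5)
Your proposal is correct and follows essentially the same route as the paper: identify string-nets on the disk-shaped quotient surfaces with morphisms in $\Csf$ via the evaluation map, check that sewing corresponds to composition, and then read off each open relation as an axiom of the symmetric Frobenius algebra $\Hcalop$ together with the fact that gluing $\corr^{op}_{prop}$ acts as the identity. The only discrepancy is your dictionary between relation numbers and axioms, which you could not have known without the figures: in the paper R1--R4 are the unit/counit properties, R5 is the symmetry of $\Hcalop$, R6--R7 are (co)associativity, R8--R9 are the Frobenius property, and R10--R13 are propagator invariance (there is no $\epop\circ\etaop$-type normalization among the open relations); since your list of identities to be invoked covers exactly this set, the misassignment is immaterial to the validity of the argument.
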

\begin{proof}
This is the easiest part of the theorem as string-nets on disks can be manipulated according to the graphical calculus of its coloring category. It is immediate that the relations directly follow from the fact that $\Hcalop$ is a symmetric Frobenius algebra in $\Csf$, which is fully faithfully embedded in $\Zsf(\Csf)$. Relations R1)-R4) are unit and counit properties. R5) is satisfied as $\Hcalop$ is a symmetric Frobenius algebra. Relations R6) and R7) are (co-)associativity for (co-)multiplications. Next, R8) and R9) are the Frobenius property and finally the last four relations R10)-R13) are just the fact that composing with $\corr_{prop}^{op}$ leaves any morphism invariant in the graphical calculus.
\end{proof}

\begin{lem}\cite[Lemma~3.8]{Schweigert:2019zwt} The correlators 
\eq{
\lbr \corr^{cl}_{prop},\corr^{cl}_{m},\corr^{cl}_\Delta,\corr^{cl}_{\eta},\corr^{cl}_{\epsilon} \rbr 
}
satisfy all closed relations.
\end{lem}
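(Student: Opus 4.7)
The plan is to verify the twelve closed relations R14--R25 one at a time, reducing each to a diagrammatic identity in $\Zsf(\Csf)$ between compositions of the structure morphisms of the commutative symmetric Frobenius algebra $(\Hcalcl,\mcl,\etacl,\Dcl,\epcl)$. The key observation is that every closed fundamental correlator in figure \ref{closedfundamentalcorr} is a string-net whose underlying colored graph lies in an embedded disk on the quotient sphere, together with projector circles coming from $\corr^{cl}_{prop}$. After gluing along the recipe from subsection \ref{relations}, both sides of each relation are string-nets that can be compared inside an embedded disk via the graphical calculus. A projector circle around an $\Hcalcl$-colored strand can moreover be absorbed using Lemma \ref{circleline} and the fact that $\Hcalcl$ carries a half braiding, so the projector reduces to $\id_{\Hcalcl}$ and does not enter the combinatorial bookkeeping.

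With this reduction in place, I would dispose of R14--R19 immediately: they are the (co)unit, (co)associativity, and Frobenius axioms of $\Hcalcl$, drawn on the quotient sphere instead of on a formal coupon. The propagator-absorption relations R22 and R23 follow from the projector identity above and the fact that $\corr_{prop}^{cl}\circ \corr_{prop}^{cl}=\corr_{prop}^{cl}$. The commutativity moves R20 and R21 reduce to (co)commutativity of $(\mcl,\Dcl)$ in the braided category $\Zsf(\Csf)$, and R25 (the braid move on a four-holed sphere) is again commutativity of $\Hcalcl$ together with naturality of the half braiding $\beta^{ou}$.

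The main obstacle is R24, the Dehn-twist move on a one-holed torus, because neither side can be isotoped into a disk before processing the handle. My plan is to cut the torus along the $a$-cycle to obtain a cylinder with two identified closed boundaries, insert the completeness-property resolution of $\id_{\Hcalcl}$ as $\sum_{i\in\Isf(\Csf)}\sum_\alpha b^i_\alpha \circ b_i^\alpha$ on the cut, and then apply Lemma \ref{circleline} to the loop wrapping the twisted $b$-cycle. This converts the Dehn-twisted string-net into a sum over simple objects of $\Zsf(\Csf)$ threaded through $\Hcalcl$, which is exactly the left-hand side of the modularity condition (condition I) of the definition of a $(\Csf|\Zsf(\Csf))$-Cardy algebra; the untwisted side matches the right-hand side of that condition by the same resolution. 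Since modularity of $\Hcalcl$ is part of the Cardy algebra data, R24 follows, completing the verification.
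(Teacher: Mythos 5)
The paper does not actually prove this lemma: it is imported verbatim from \cite[Lemma~3.8]{Schweigert:2019zwt}, so there is no in-text argument to compare against. Your overall strategy --- isotope the glued string-nets into an embedded disk and read off the (co)unit, (co)associativity, Frobenius, (co)commutativity and propagator-idempotence axioms of $\Hcalcl$ --- mirrors what the paper does for the open relations R1)--R13) and for the open-closed relations, and it disposes of R14)--R23) and R25) correctly.

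The gap is R24). You describe it as ``the Dehn-twist move on a one-holed torus'' and propose to cut along the $a$-cycle; but the one-holed-torus move is R32), which is explicitly excluded from this lemma and handled by a separate lemma in the paper (the $S$-move computation). R24) is a genus-zero closed relation: per the paper, R24) and R25) ``together indicate an action of the mapping class group,'' and the Dehn twist in question is around a closed state boundary of a sphere with holes. There is no $a$-cycle to cut along, so your argument does not get started; more importantly, you have not identified the algebraic input R24) actually requires, namely triviality of the twist, $\theta_{\Hcalcl}=\id_{\Hcalcl}$. This does not follow from the Frobenius axioms or from commutativity; in the Kong--Runkel framework it is a consequence of the modularity condition (I) in the definition of a $(\Csf|\Zsf(\Csf))$-Cardy algebra (a modular commutative symmetric Frobenius algebra in $\Zsf(\Csf)$ has trivial twist). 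So modularity is indeed the right ingredient, but it enters R24) through this trivial-twist lemma, not through an $S$-move-style cut-and-resolve computation. A smaller inaccuracy: absorbing a projector circle is not an application of Lemma \ref{circleline}, which kills a simple line encircled by the weighted sum over all simples unless that line is the unit; what you need is idempotence of $P$ together with the fact that it acts as the identity on $\hom_{\Zsf(\Csf)}(\mathbf{1},-)$, which is how the paper itself invokes the ``projector property.''
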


\begin{lem} All open-closed relations are satisfied.
\end{lem}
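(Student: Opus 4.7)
The plan is to verify each of the open-closed relations R26--R31 by translating the string-net equality on the underlying quotient surface into an identity among morphisms in $\Zsf(\Csf)$ involving $\iotaclop$, $\iotaclop^\dagger$, and the Frobenius structure morphisms of $\Hcalcl$ and $L(\Hcalop)$. Since each generating open-closed world sheet $I$, $I^\dagger$ contributes precisely one $\iotaclop$ or $\iotaclop^\dagger$ coupon inside a cylinder, the correlator on a glued surface is obtained by composing these coupons with the open and closed correlators already defined. Using the graphical calculus of string-nets, which coincides with the graphical calculus of $\Zsf(\Csf)$ on embedded disks, one reduces the check to an identity of morphisms that can be manipulated by the standard coupon rules from Section~\ref{sec1}.

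First I would dispose of R26--R29. These should correspond to $\iotaclop$ being a unital algebra homomorphism (condition II of the Cardy algebra): the relations gluing $I$ with $C_m$ and $O_m$, or with $C_\eta$ and $O_\eta$, translate directly to $\iotaclop \circ \mcl = m_{L(\Hcalop)} \circ (\iotaclop \otimes \iotaclop)$ and $\iotaclop \circ \etacl = \eta_{L(\Hcalop)}$. The dual identities for $\iotaclop^\dagger$ follow from $\iotaclop$ being a morphism of Frobenius algebras via the defining formula for the Frobenius adjoint, which automatically makes $\iotaclop^\dagger$ a coalgebra homomorphism. On the string-net side the opposite side of the relation is obtained by isotoping the $\iota$-box through the vertex inside an embedded disk, which permits the direct categorical comparison.

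For R30 and R31 I expect the center and Cardy conditions to enter. The center condition (III) asserts that a closed insertion mapped via $\iotaclop$ into $L(\Hcalop)$ and then multiplied with an open field can be moved past the open field using the half-braiding $\beta^{ou}$ on $L(\Hcalop)$; this is exactly the content of the relation gluing $I$ through an open multiplication vertex, since the string-net on the torus-pants quotient surface forces the closed leg to wind around the open one. The Cardy condition (IV) appears in the relation where two open ends are closed off by an $I$--$I^\dagger$ pair, expressing that the closed loop through $\iotaclop \circ \iotaclop^\dagger$ equals a specific open-field loop operator built from $\mop$ and $\Dop$.

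The main obstacle will be R31, since it requires identifying the string-net on the glued cylinder with the categorical composition $\iotaclop \circ \iotaclop^\dagger$, and then invoking condition (IV). This involves applying the completeness relation for the projector $P$ on the glued boundary circle, using the explicit formulas for $\phi^L$ and $\psi^L$ recorded in Section~\ref{sec5} to bring the composition into the standard form, and tracking the prefactors $d_i/\Dsf^2$ coming from the projector expansion. Writing out $\iotaclop^\dagger$ via the Frobenius adjoint formula in terms of $\Dop$ and $\epop$, and then matching the result to the string-net of $\corr_m^{op}$ sandwiched between $\corr_I$ and $\corr_{I^\dagger}$, will be the most delicate bookkeeping; once this identification is made, condition (IV) closes the argument.
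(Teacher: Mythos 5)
Your proposal matches the paper's proof in all essentials: each open-closed relation is reduced, via the graphical calculus on embedded disks together with completeness/projector-circle manipulations, to one of the algebra-homomorphism property of $\iotaclop$, the defining formula for the Frobenius adjoint $\iotaclop^\dagger$, the center condition, or the Cardy condition, with the Cardy relation R31 indeed being the one requiring the most careful tracking of projector circles and $d_i/\Dsf^2$ prefactors. The only discrepancy is a permutation of which numbered relation invokes which condition (in the paper R26 uses the center condition and R30 the algebra-homomorphism property, roughly swapping your assignment), which is immaterial since the same set of facts covers all six relations.
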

\begin{proof}
We start with relation R26). As the picture suggests this will follow from the center condition of $\iotaclop$.
\begin{center}
\begin{tikzpicture}
\node at (0,0) {\includegraphics[scale=0.2]{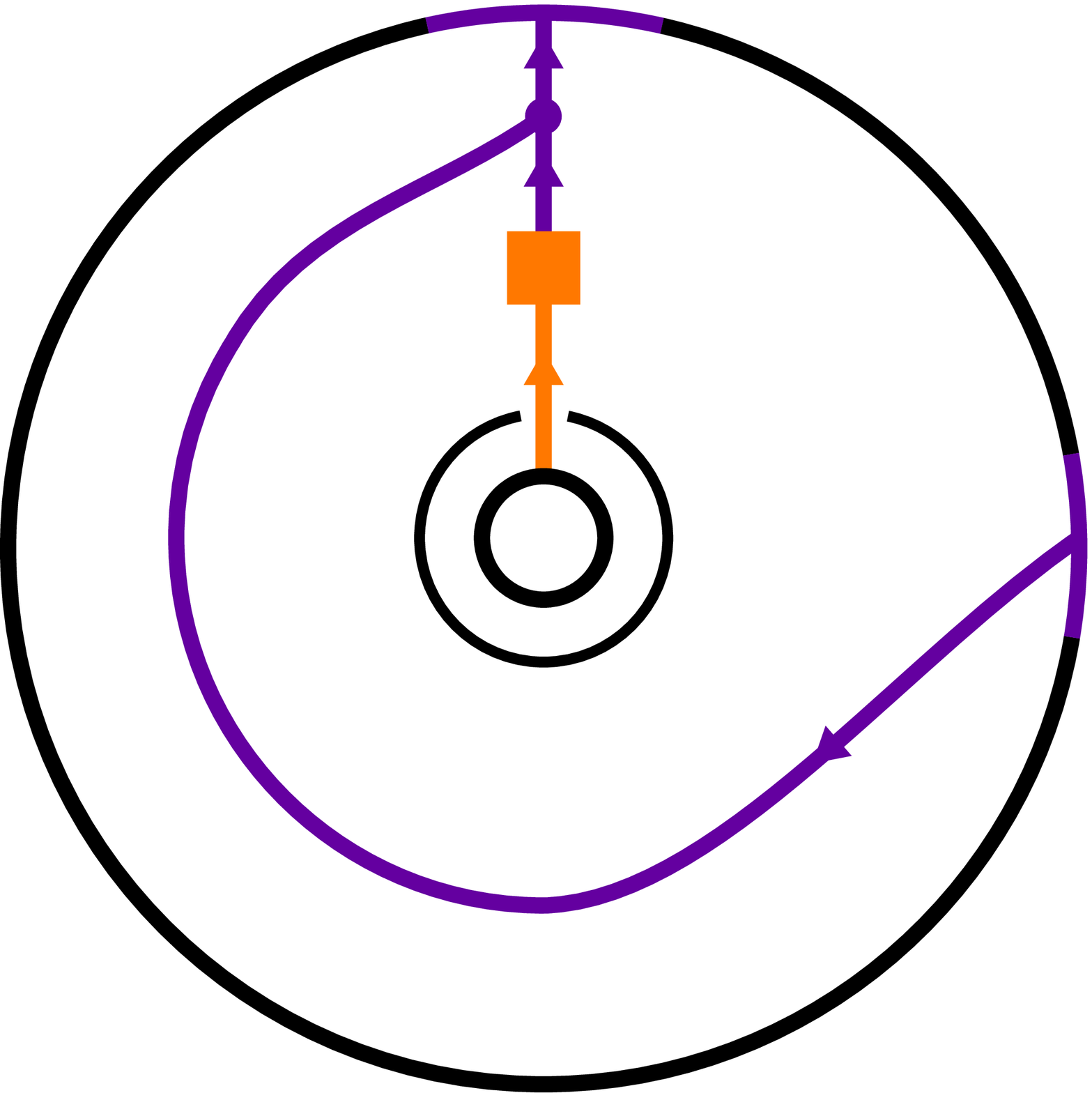}};
\node at (7,0) {\includegraphics[scale=0.2]{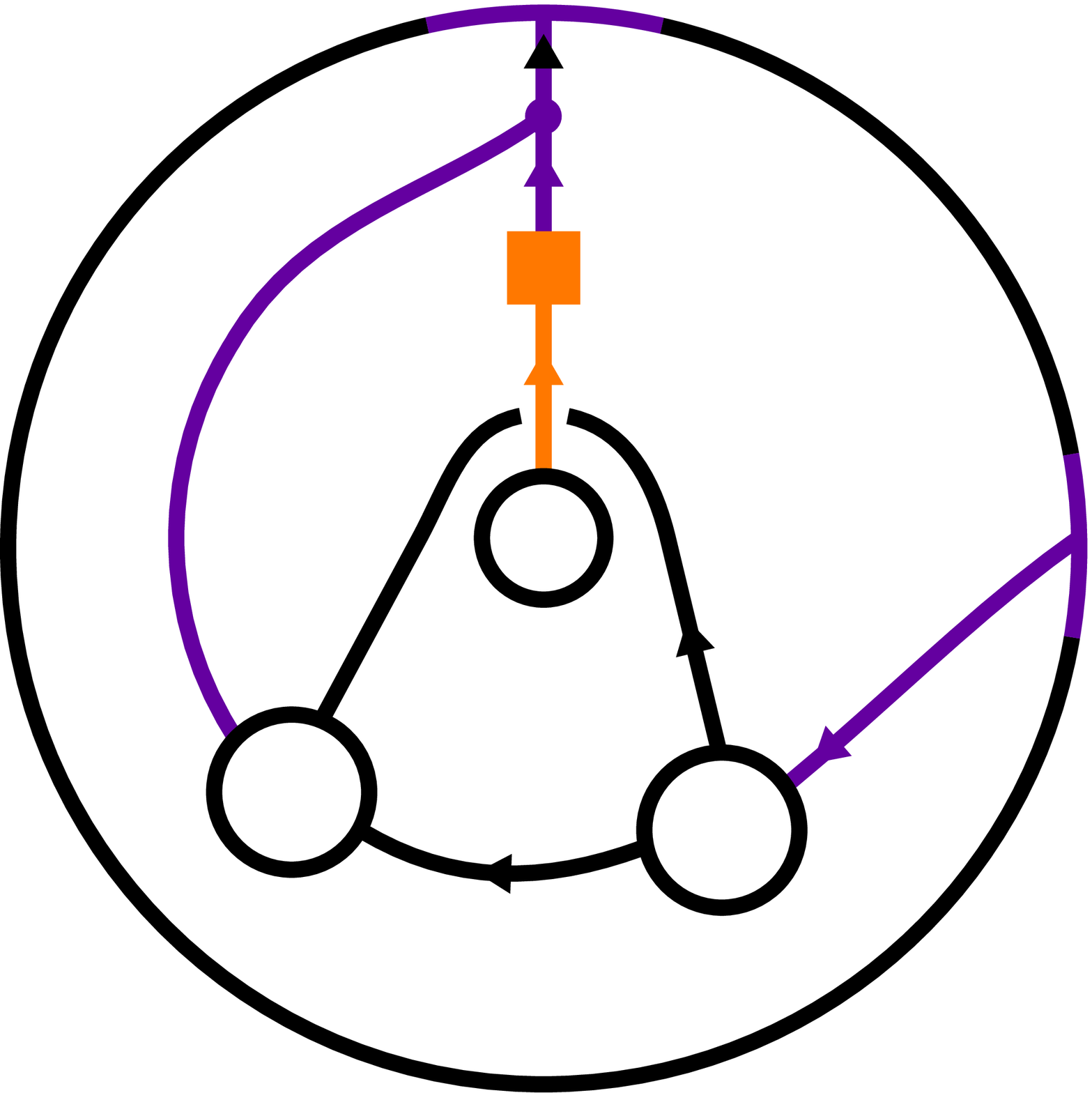}};
\node at (0,-5) {\includegraphics[scale=0.2]{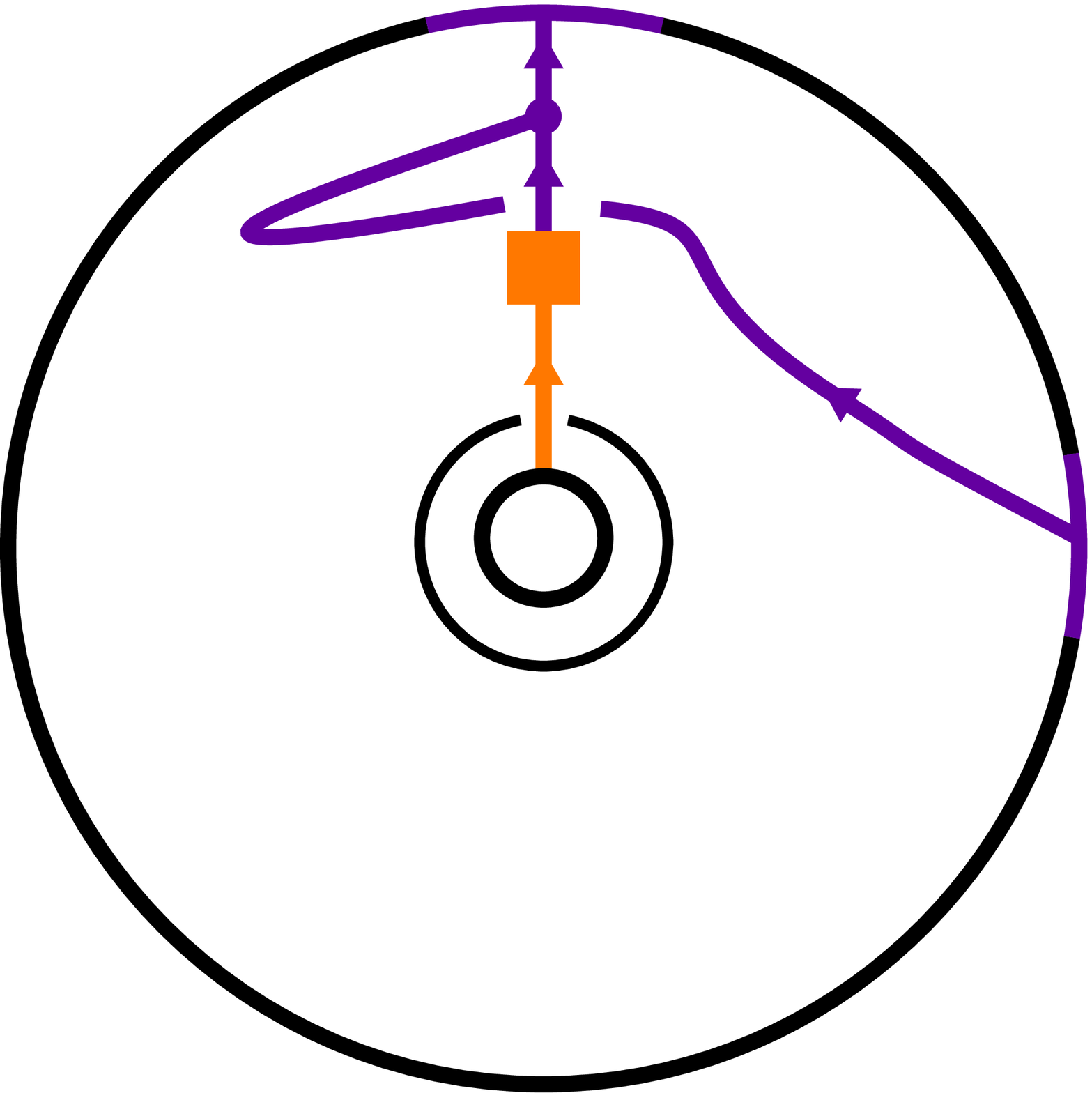}};
\node at (7,-5) {\includegraphics[scale=0.2]{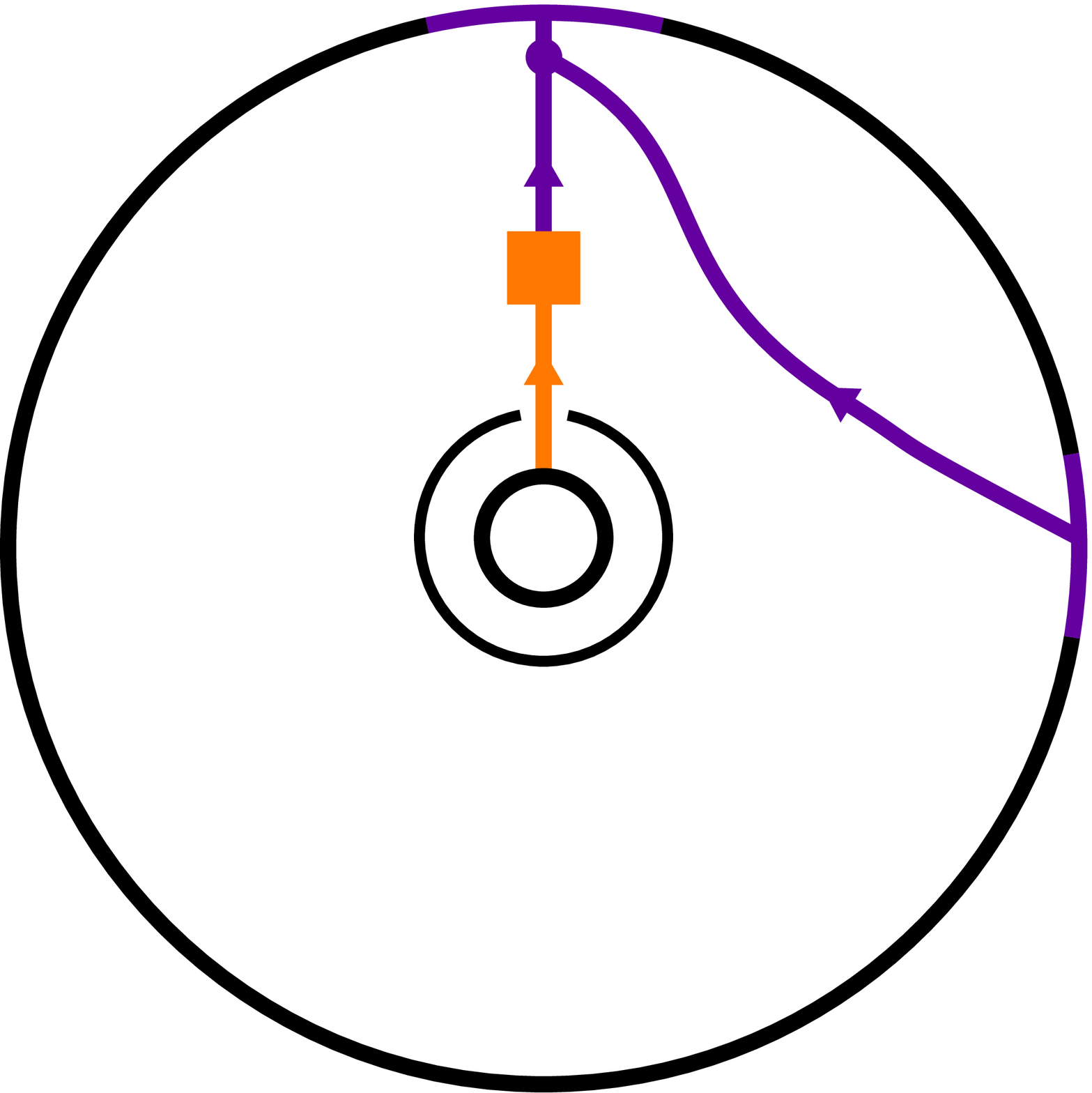}};
\node at (3,0) {$=$};
\node at (4,0) {$\begin{aligned}\sum_{i,k\in \Isf} \frac{d_id_k}{\Dsf^2}\end{aligned}$};
\node at (-3,-5) {$=$};
\node at (3.5,-5) {$=$};
\node at (6.05,-0.9) {$\alpha$};
\node at (7.67,-1.05) {$\alpha$};
\node at (6.7,-1.5) {$i$};
\node at (7.5,0.5) {$k$};
\end{tikzpicture}
\end{center}
Where we again use completeness followed by the center condition. 

Next we prove relation R27):

\begin{center}
\begin{tikzpicture}
\node at (0,0) {\includegraphics[scale=0.2]{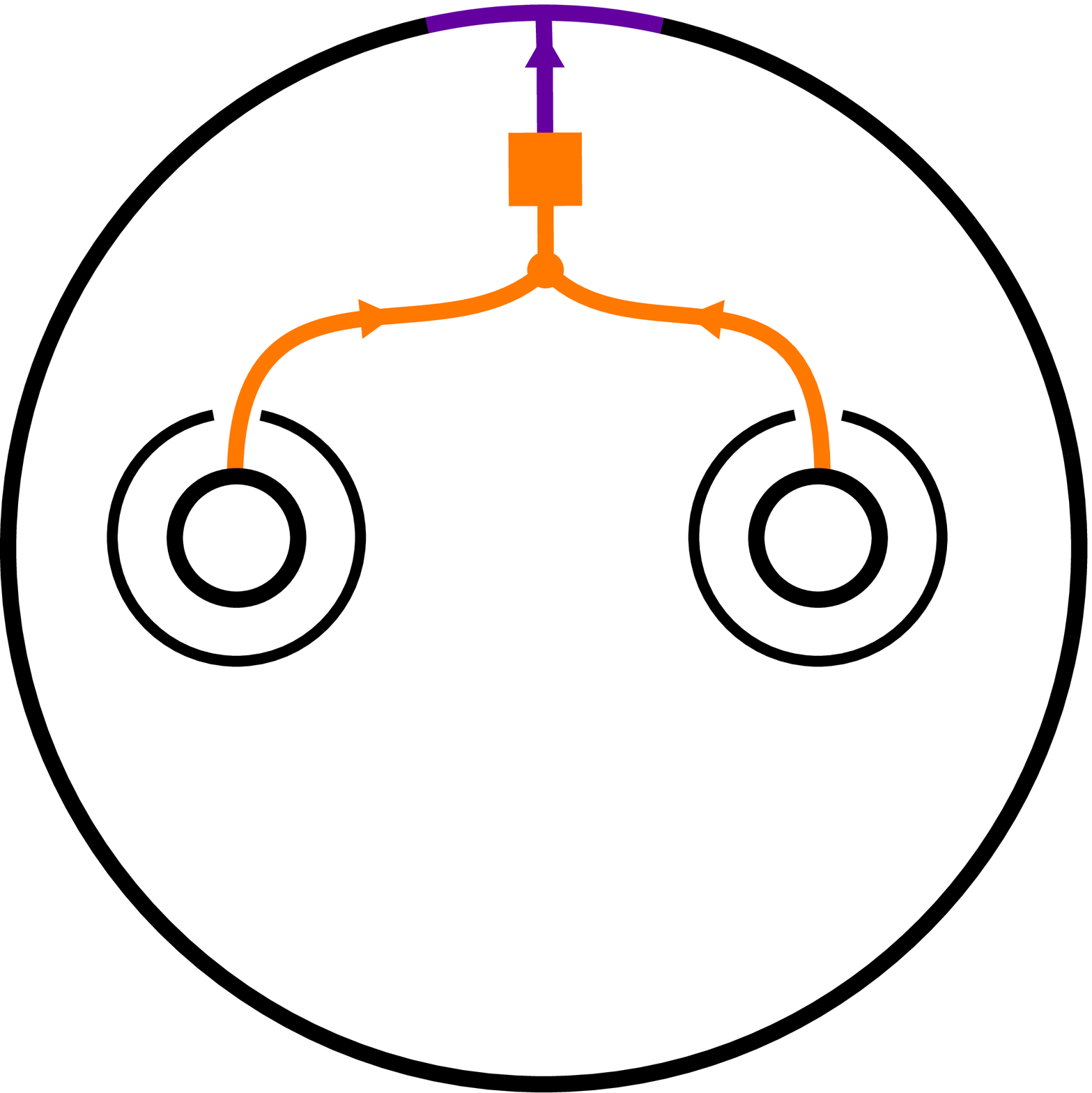}};
\node at (6,0) {\includegraphics[scale=0.2]{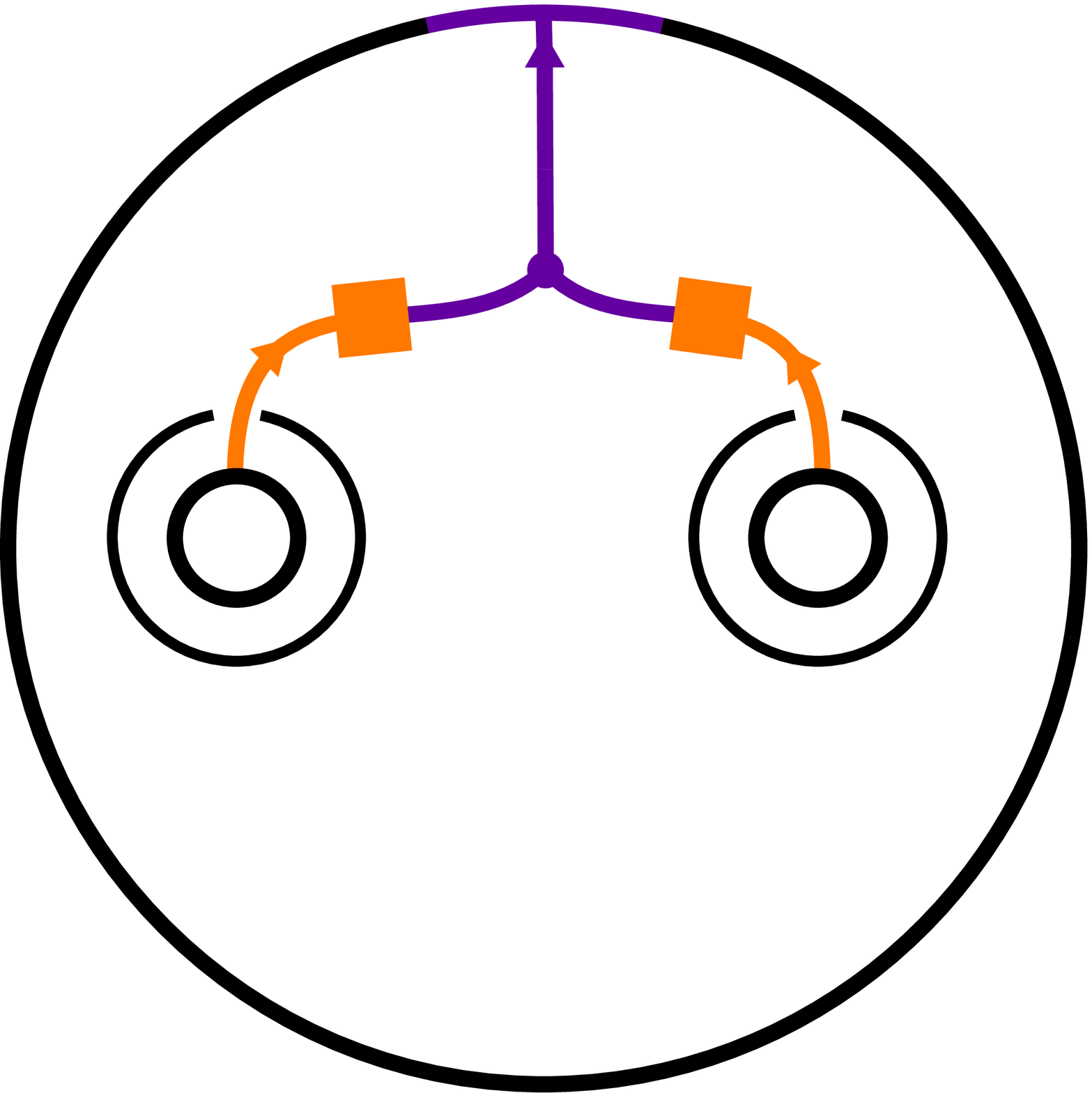}};
\node at (3,0) {$=$};
\end{tikzpicture}
\end{center}
where we see the obvious equality from $\iotaclop$ being an algebra homomorphism. Form the same reasoning it follows that relation R30) is satisfied. Relations R29) and R28) are consistency checks for the definition of $\iotaclop$ and $\iotaclop^\dagger$. We show R28), the other one goes exactly the same. 

\begin{center}
\begin{tikzpicture}
\node at (0,0) {\includegraphics[scale=0.2]{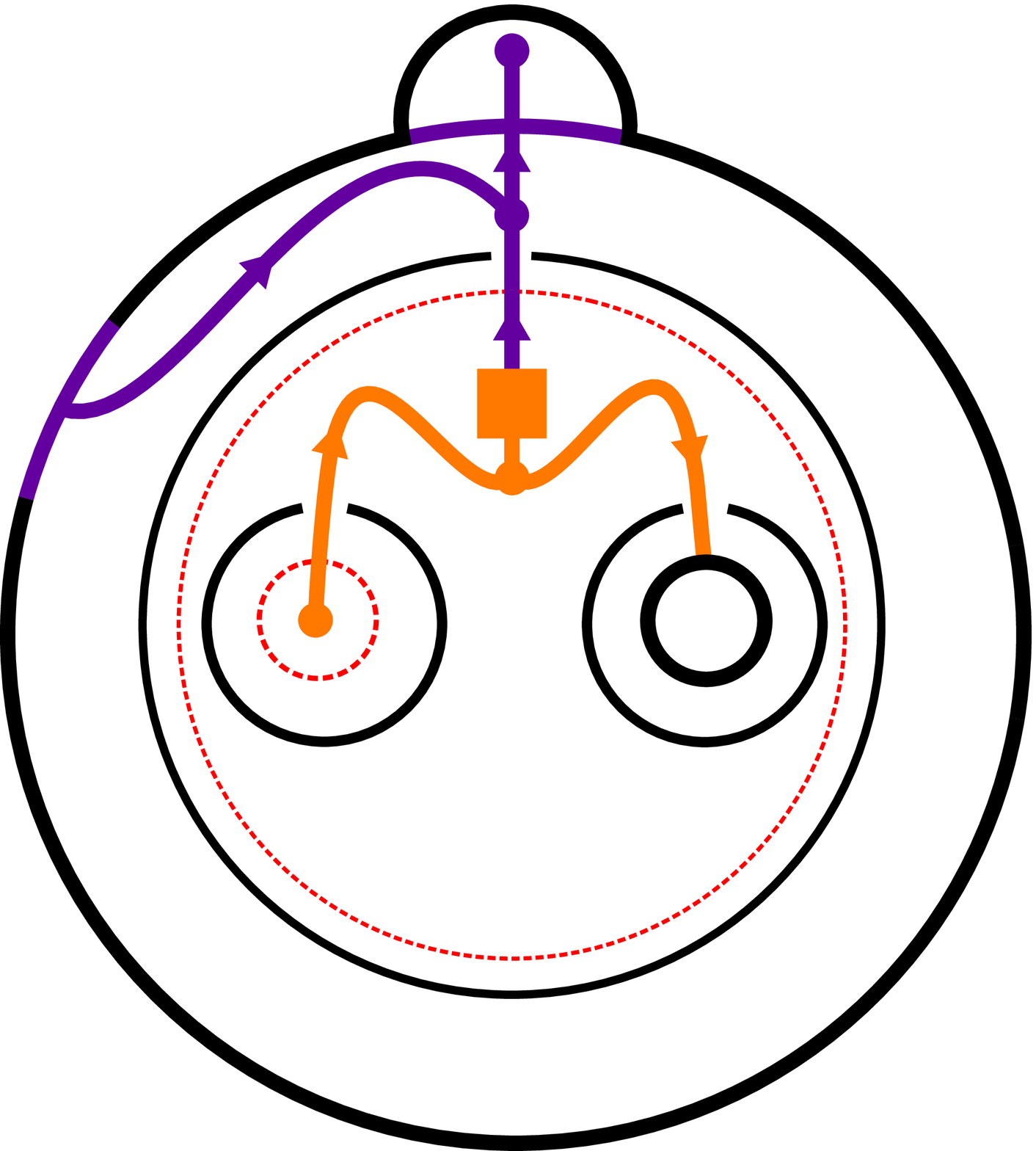}};
\node at (6,0) {\includegraphics[scale=0.2]{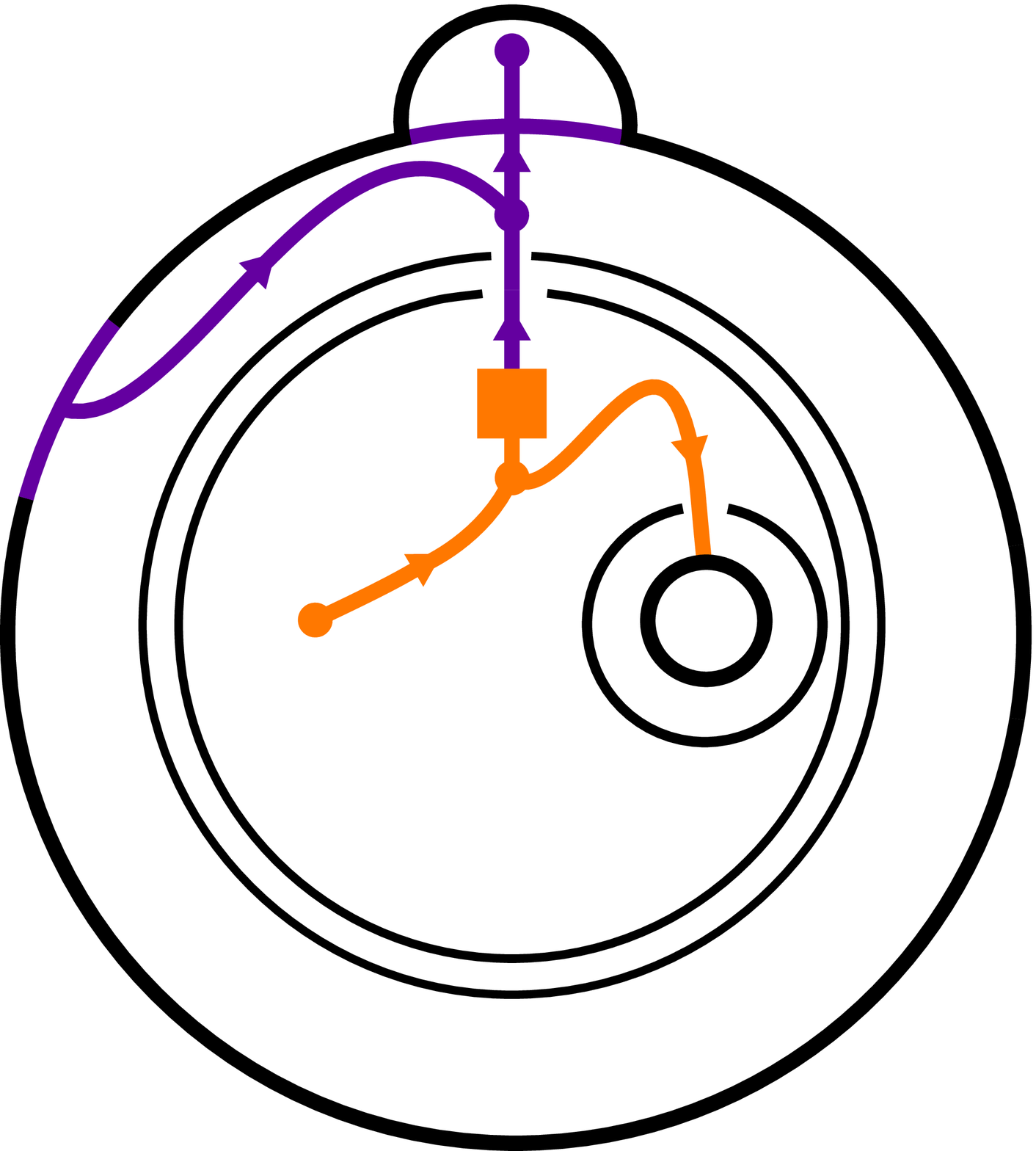}};
\node at (0,-5) {\includegraphics[scale=0.2]{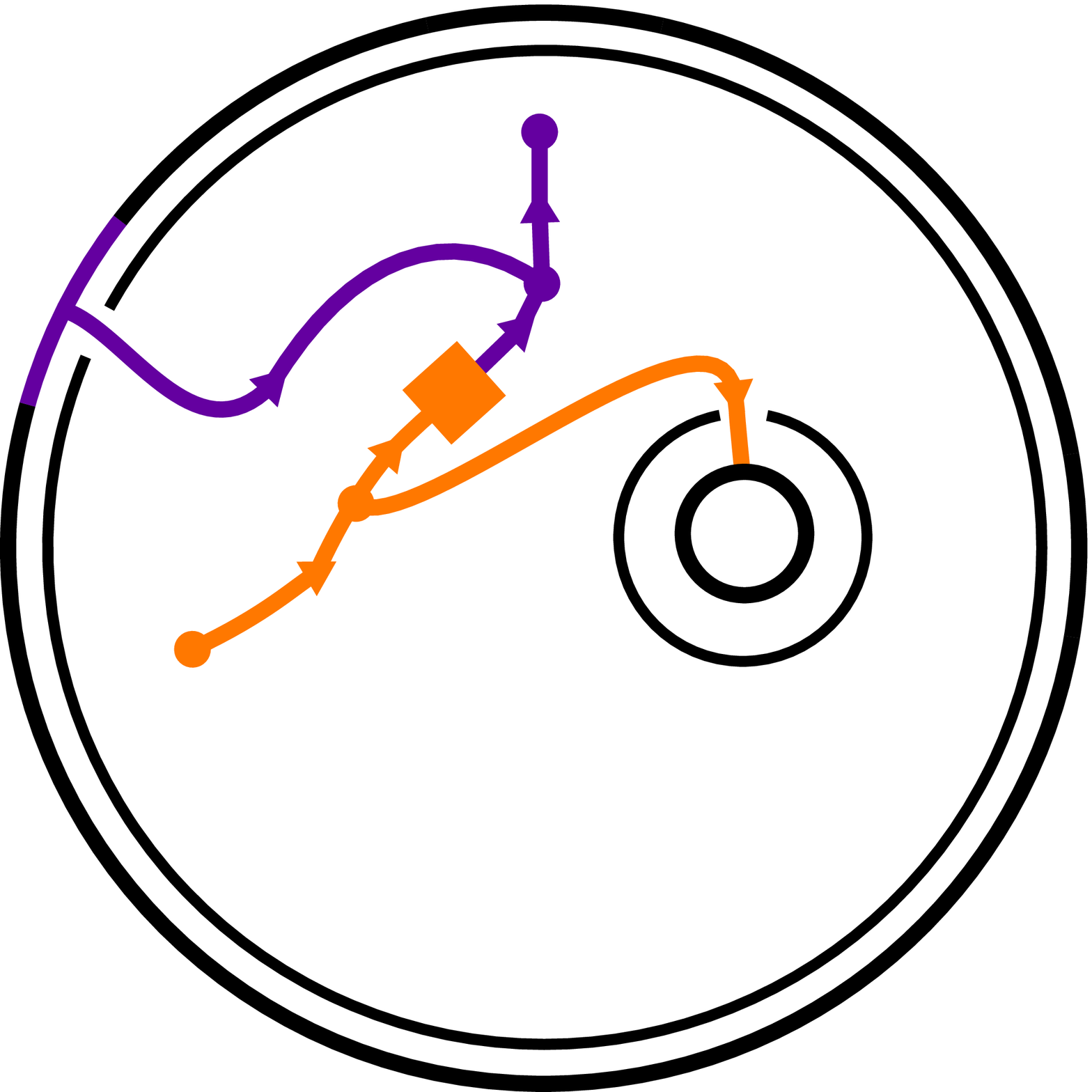}};
\node at (6,-5) {\includegraphics[scale=0.2]{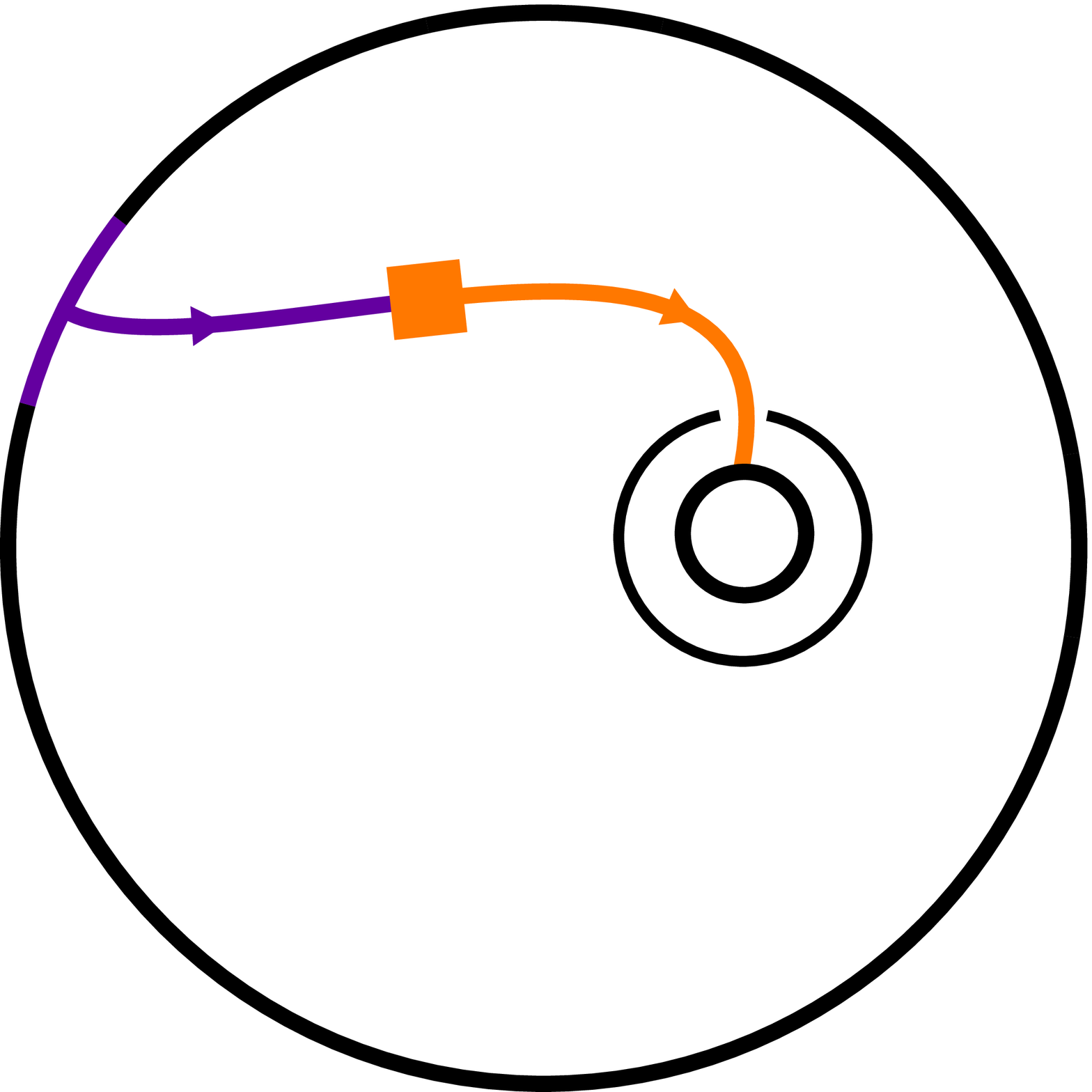}};
\node at (3,0) {$=$};
\node at (-3,-5) {$=$};
\node at (3,-5) {$=$};
\end{tikzpicture}.
\end{center}
In the first picture red dashed lines indicate where we glued world sheets. In the second equality we used the projector property and in the third equality we inserted the definition of the morphism $\iota^\dagger$.
For the Cardy condition R31) we again have to drag along projection circles:

\begin{center}
\begin{tikzpicture}
\node at (0,0) {\includegraphics[scale=0.2]{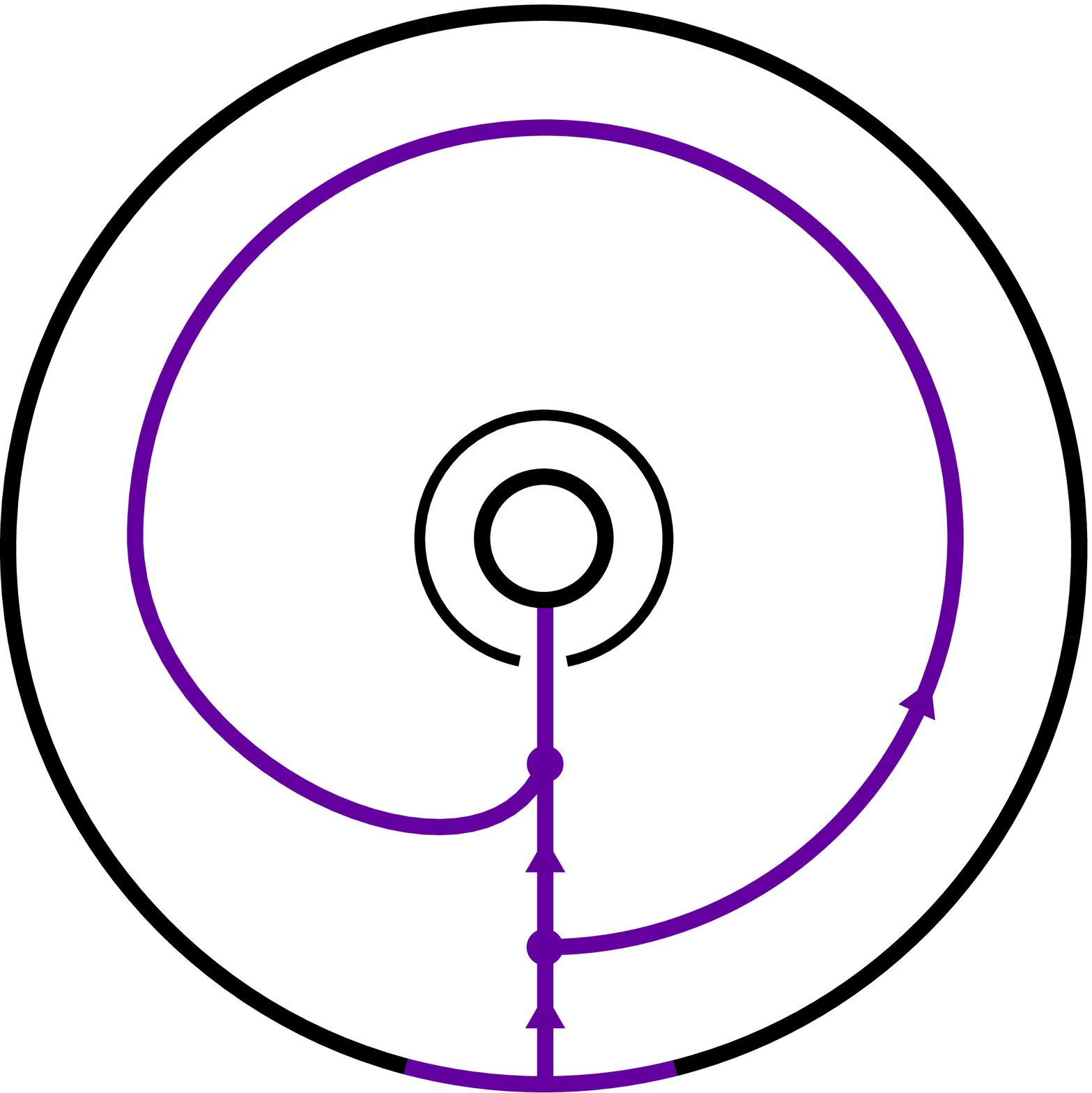}};
\node at (7,0) {\includegraphics[scale=0.2]{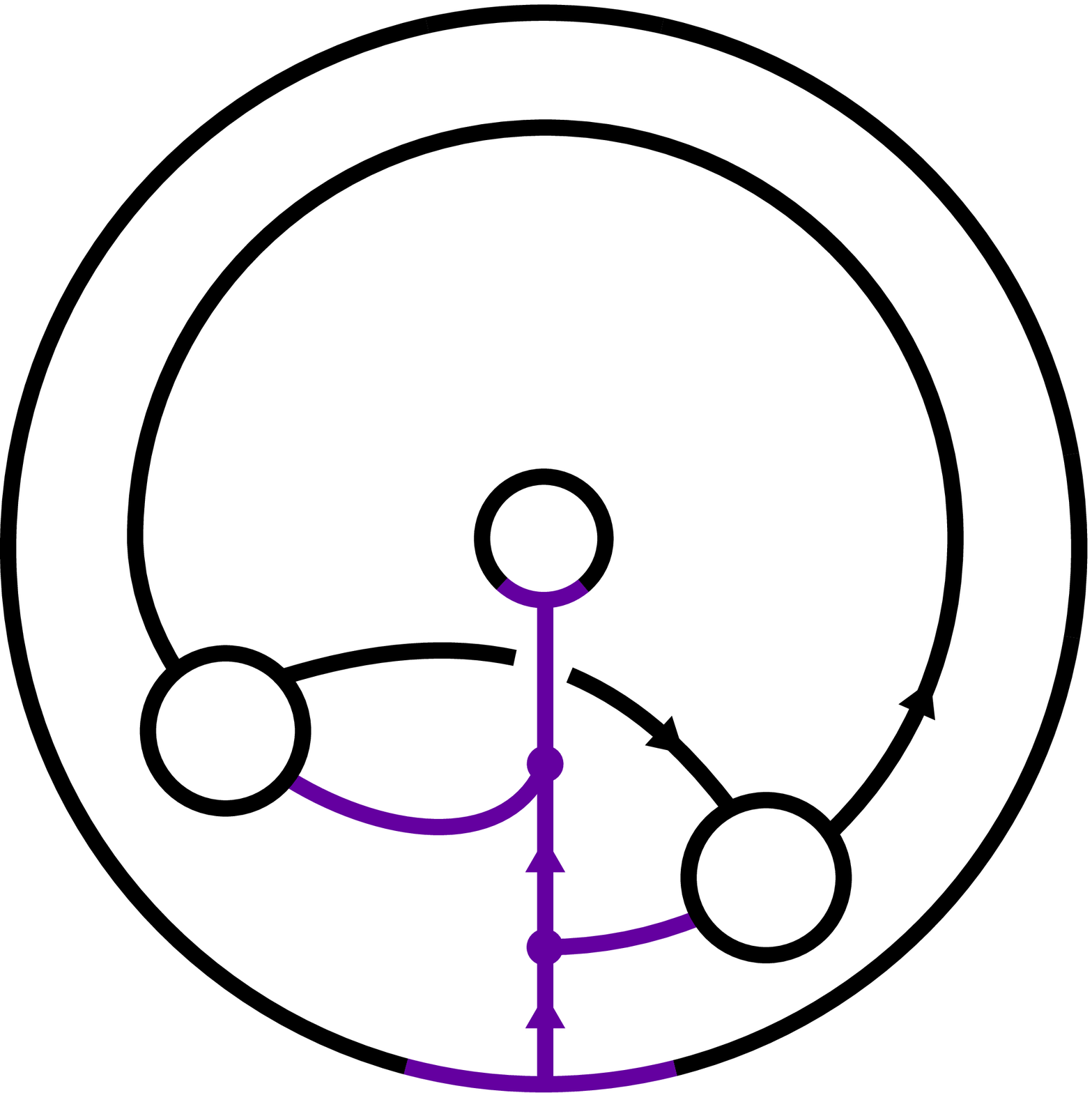}};
\node at (0,-5) {\includegraphics[scale=0.2]{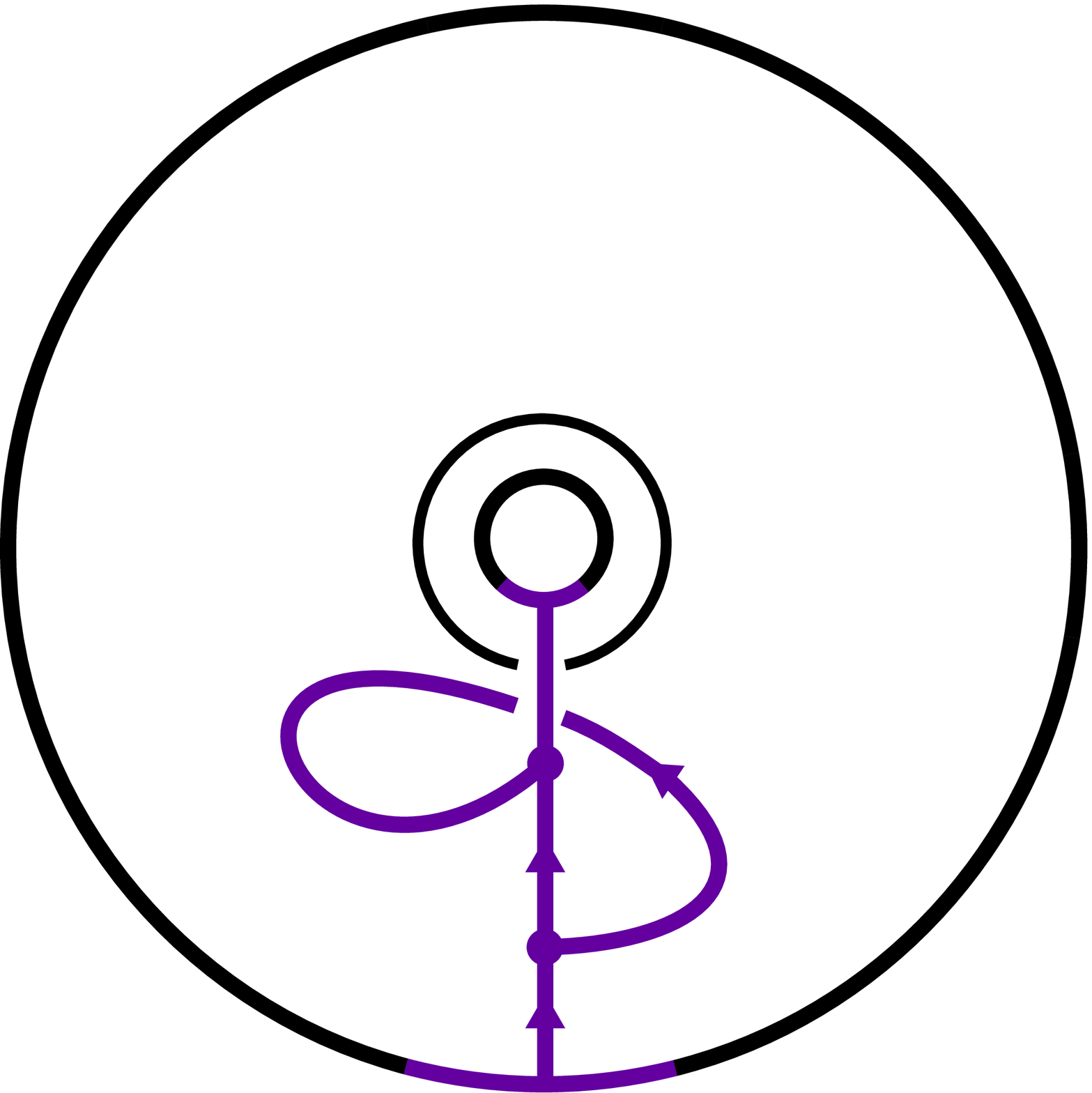}};
\node at (7,-5) {\includegraphics[scale=0.2]{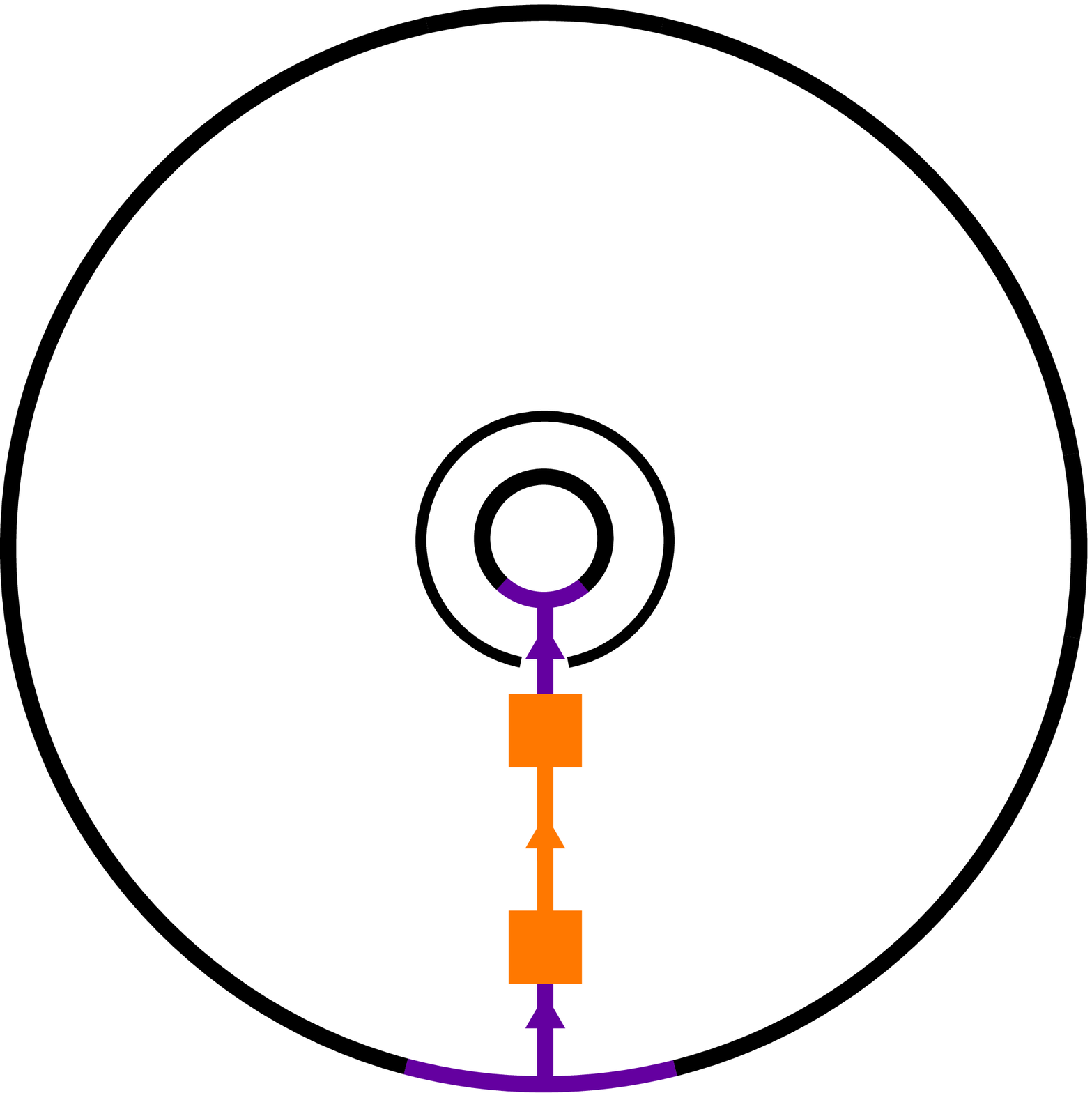}};
\node at (4,0) {$\begin{aligned}\sum_{i,k} \frac{d_id_k}{\Dsf^2}\end{aligned}$};
\node at (-3,-5) {$=$};
\node at (3.5,-5) {$=$};
\node at (3,0) {$=$};
\node at (8.2,0) {$k$};
\node at (7.6,-0.5) {$i$};
\node at (5.83,-0.69) {$\alpha$};
\node at (7.84,-1.22) {$\alpha$};
\end{tikzpicture}
\end{center}
\end{proof}

\begin{lem} The genus one one point correlator is invariant under the $S$-move.
\end{lem}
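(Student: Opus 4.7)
The plan is to first give the explicit string-net representative of the genus one one point correlator. By the generating property of $\WSsf$, the torus with one incoming closed boundary is obtained from the pair-of-pants correlator $\corr^{cl}_m$ (or equivalently $\corr^{cl}_\Delta$) by glueing two of its closed boundaries together to form a handle. Applying the definition of the block functor on sewings, the result is a projected string-net on a once-punctured torus whose colored graph consists of an $\Hcalcl$-strand running along the $a$-cycle meeting a second $\Hcalcl$-strand running along the $b$-cycle at a single $\Hcalcl$-coproduct vertex, then re-combined at a multiplication vertex near the boundary circle (which carries the usual projector $P$ from figure \ref{projector}). The $S$-move of R32 then produces the same picture with the roles of the $a$- and $b$-cycles exchanged, and the goal is to show these two string-nets represent the same element of $\Blcal(\hat{S})$.

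The next step is to expand one of the two wrapping $\Hcalcl$-strands using Lemma \ref{circleline}: inserting a trivial object on the $b$-cycle and expanding it as a sum $\frac{1}{\Dsf^2}\sum_{i,j\in \Isf(\Csf)} d_id_j$ of circles labelled by the simple objects $U_i\otimes U_j$ of $\Zsf(\Csf)$ (with the over-under half braiding). This produces, locally on an embedded disk, exactly the left hand side of the modularity diagram appearing in Condition I of the Cardy algebra definition, applied to the $\Hcalcl$-strand wrapping the $b$-cycle. Using that modularity condition converts this local configuration into $\sum_\alpha$ of the pair of morphisms on the right, each of which is a single $\Hcalcl\to i\otimes j$ coupon glued to its dual. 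The resulting string-net now has no $\Hcalcl$-line wrapping the $b$-cycle at all; instead, only the $a$-cycle carries $\Hcalcl$, decorated by additional coupons labelled by simple objects $i\otimes j$ of $\Zsf(\Csf)$.

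Performing the completely analogous manipulation on the $S$-transformed correlator (where the roles of the two cycles are exchanged) reduces it to the same universal expression, up to the obvious symmetry exchanging which cycle is wrapped by the $\Hcalcl$-line. The final step is then to identify these two expressions as equal string-nets on the torus. This uses the fact that the resulting configuration, after modularity, is symmetric in the two cycles: once the $\Hcalcl$-strand has been removed from one cycle and replaced by an inserted pair $\sum_\alpha b^\alpha\otimes b_\alpha$, the only remaining colored graph is the $\Hcalcl$-line on the other cycle together with the basis-coupons, which by the completeness property (second part of the Lemma in section~\ref{sec1}) agrees on the nose with the corresponding picture where the roles of the two cycles are swapped.

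The main technical obstacle is keeping the half-braidings consistent: $\Hcalcl$ and all coupons carrying simple objects live in $\Zsf(\Csf)$, so moving a strand around a cycle of the torus generically produces monodromy, and the expansion of Lemma \ref{circleline} must be combined with the over-under half braiding $\beta^{ou}$ coming from the construction of $L$. I expect the bookkeeping around these half braidings, rather than any deep categorical input, to be the delicate part. Once the picture is arranged so that the modularity diagram applies verbatim on an embedded disk, and once one uses that the global dimension factor in Lemma \ref{circleline} precisely matches the prefactor $\frac{d_id_j}{\Dsf^2}$ in the modularity condition, the equality of the two string-nets follows immediately from the local nature of string-net relations.
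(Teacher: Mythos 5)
Your toolkit is the right one --- the paper's own proof of R32 also runs on the modularity condition for $\Hcalcl$, the completeness relation and Lemma~\ref{circleline} --- but two things in the plan do not survive contact with the actual string-net. First, the starting correlator is misdescribed. Self-sewing a closed pair of pants (gluing the one incoming boundary of $C_\Delta$ to one of its outgoing boundaries) produces an $\Hcalcl$-loop wrapping a \emph{single} cycle of the torus, with a projector circle $P$ running along the transverse cycle (inherited from the projectors at the sewn boundary circles) and a further projector at the remaining boundary. A graph with one $\Hcalcl$-strand along the $a$-cycle and another along the $b$-cycle, joined by a $\Delta$ and an $m$ vertex, cannot even be embedded in the once-punctured torus without a crossing, since those two cycles intersect; so the object you propose to manipulate is not the one the block functor assigns to this sewing. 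The expansion you want to perform with Lemma~\ref{circleline} is, in the paper, an expansion of that projector circle into simple objects of $\Zsf(\Csf)$, which is what sets up the modularity diagram around the $\Hcalcl$-loop on the other cycle.

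Second, and more seriously, your final identification step is asserted where the real work happens. After modularity is applied, the expression obtained from the original correlator still has its surviving $\Hcalcl$-decoration and basis coupons arranged relative to one pair of cycles, and the expression obtained from the $S$-transformed correlator has them arranged relative to the swapped pair; these two string-nets are carried into one another precisely by the mapping class group element $S$, so declaring that they ``agree on the nose'' by completeness is assuming the conclusion. In the paper the passage from the post-modularity picture to the $S$-transformed correlator is a genuine multi-step computation: the projector circle is transported around the torus, the completeness relation is inserted twice more to drag the $\Hcalcl$-colored curve along that circle, and only then does Lemma~\ref{circleline} collapse the remaining sums to produce the swapped picture. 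None of these steps appears in your plan, and contrary to your closing remark the delicate part is not the half-braiding bookkeeping but exactly this transport argument.
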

\begin{proof}
The proof of R32) is again graphical and given by the following steps.
\begin{table}[H]
\resizebox{12cm}{!}{\begin{tabular}{ m{0.5cm} m{5cm} m{0.5cm} m{5cm}}
 & 
\begin{tikzpicture}
\node at (0,0) {\includegraphics[scale=0.2]{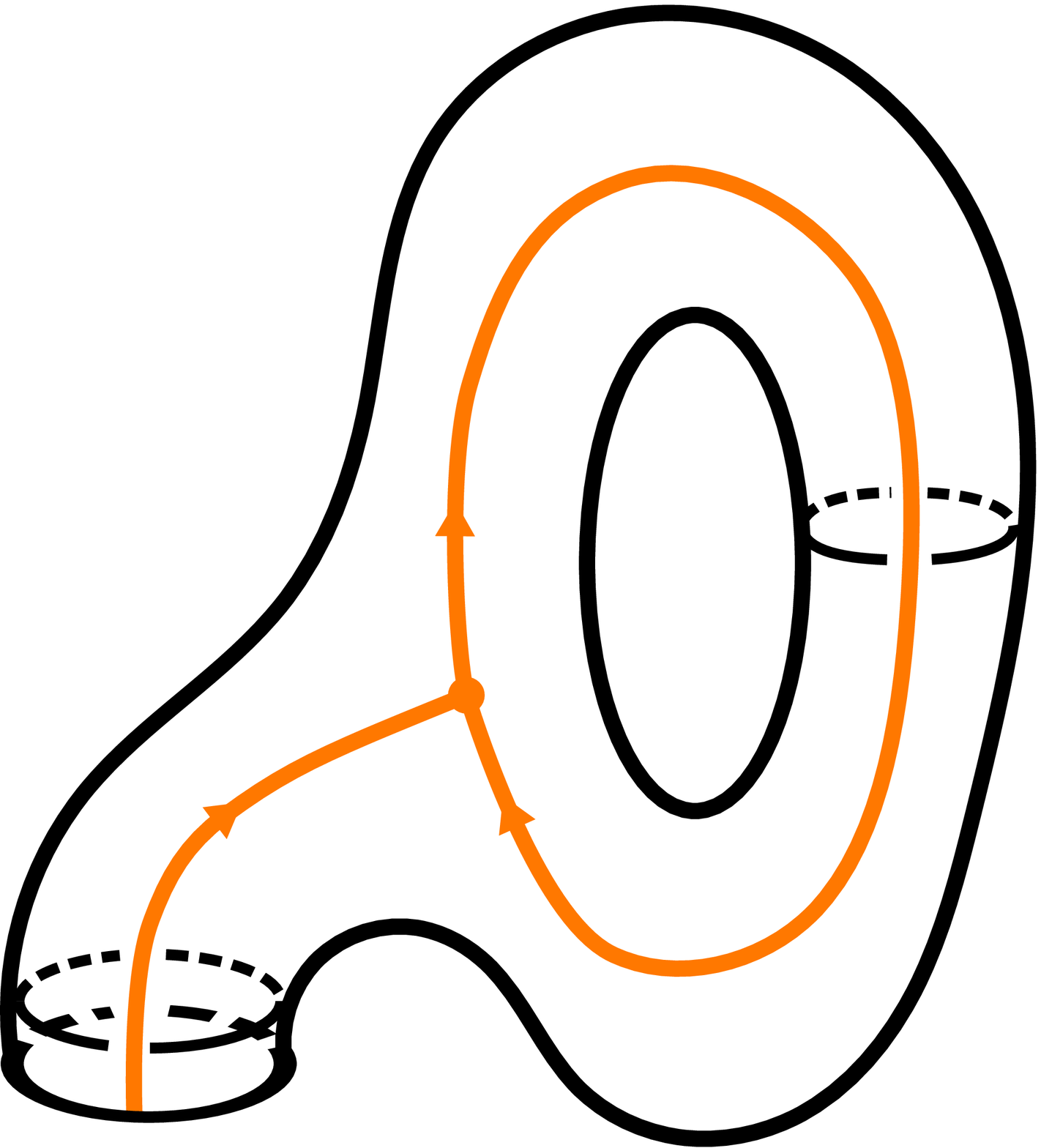}};
\end{tikzpicture} 
& = & \begin{tikzpicture}
\node at (0,0) {\includegraphics[scale=0.25]{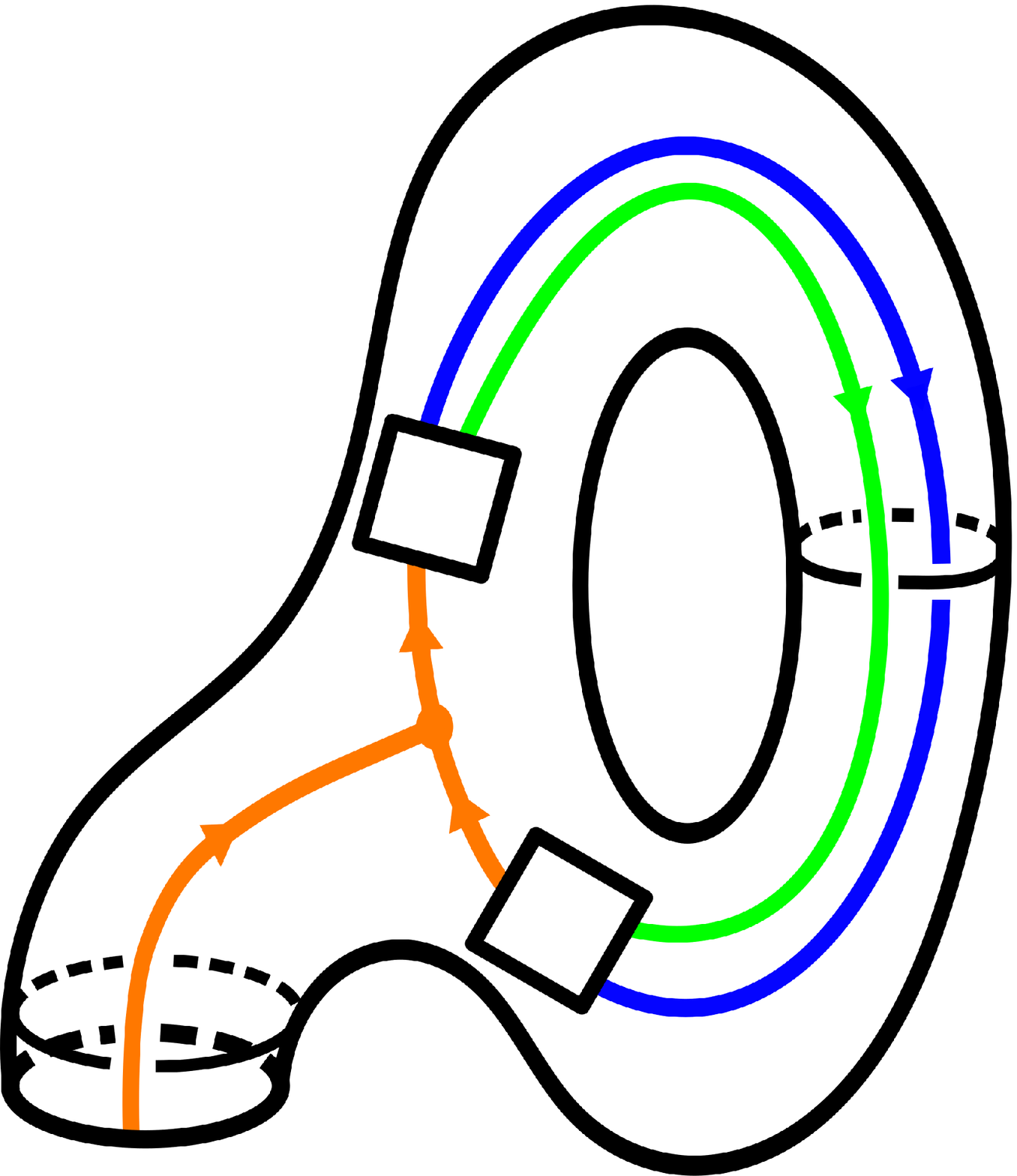}};
\node at (-2.5,0) {$\begin{aligned}\sum_{i,j\in \Isf} \end{aligned}$};
\node at (0.5,2) {$\color{blue} j$};
\node at (0.6,1.3) {$\color{green} i$};
\node at (-0.3,0.35) {$\beta$};
\node at (0.2,-1.4) {$\beta$};
\end{tikzpicture}
\end{tabular}}
\end{table}

\begin{table}[H]
\begin{tabular}{ m{0.5cm} m{5cm} m{0.5cm} m{4.5cm}}
= &
\begin{tikzpicture} 
\node at (0,0) {\includegraphics[scale=0.2]{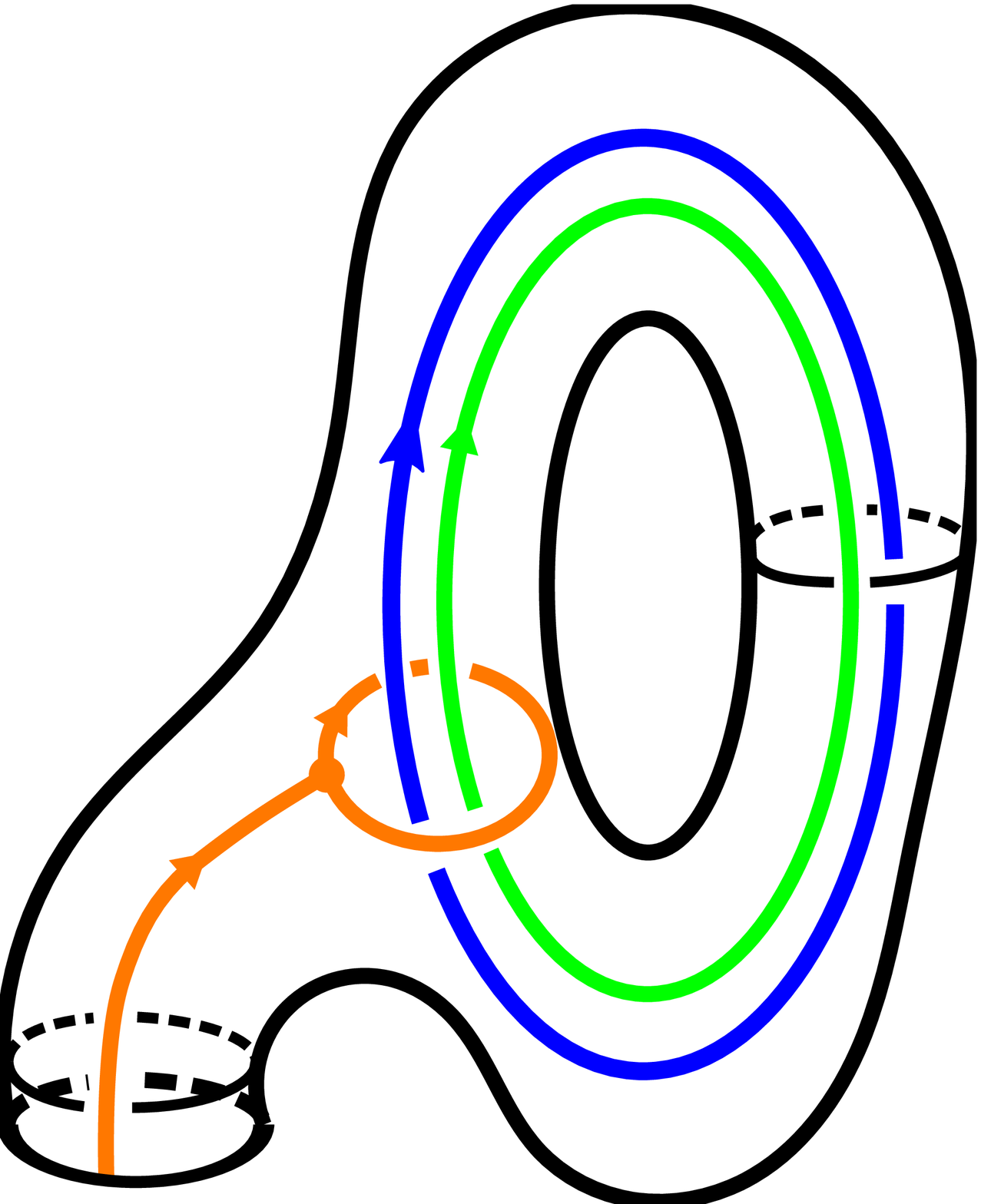}};
\node at (-2,0) {$\begin{aligned}\sum_{i,j\in \Isf} \frac{d_id_j}{\Dsf^2}\end{aligned}$};
\node at (0.5,2) {$\color{blue} j$};
\node at (0.6,1.3) {$\color{green} i$};
\end{tikzpicture}
& = & \begin{tikzpicture}
\node at (0,0) {\includegraphics[scale=0.2]{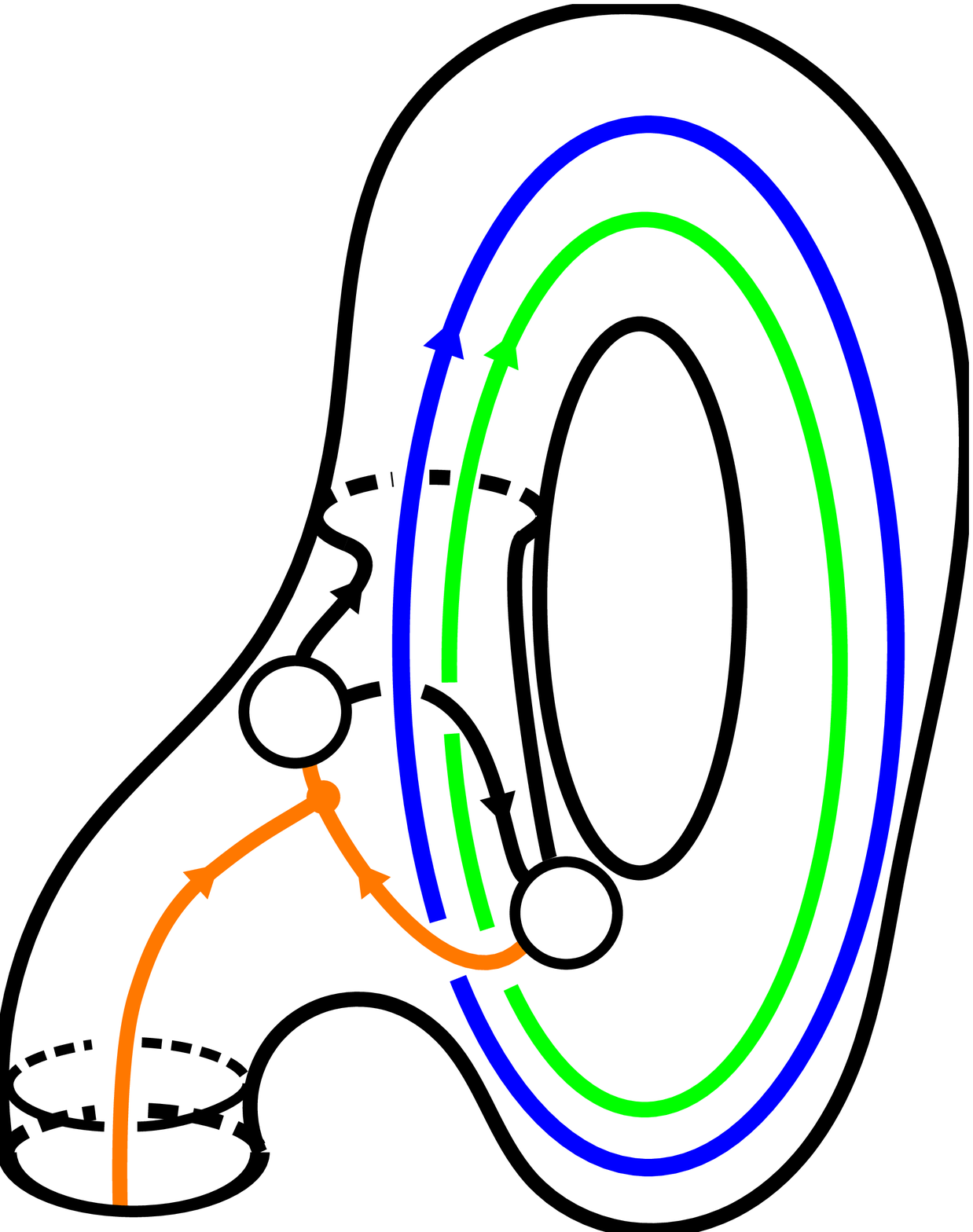}};
\node at (-3,0) {$\begin{aligned} \sum_{i,j,k,l\in \Isf} \frac{d_id_jd_kd_l}{\Dsf^4}\end{aligned}$};
\node at (0.5,2.2) {$\color{blue} j$};
\node at (0.6,1.4) {$\color{green} i$};
\node at (-0.75,-0.38) {$\alpha$};
\node at (0.35,-1.2) {$\alpha$};
\node at (0,-0.25) {$l$};
\node at (0,0.2) {$k$};
\end{tikzpicture}
\end{tabular}
\end{table}

\begin{table}[H]
\begin{tabular}{ m{0.5cm} m{5cm} m{0.5cm} m{5cm}}
=& \begin{tikzpicture}
\node at (0,0) {\includegraphics[scale=0.2]{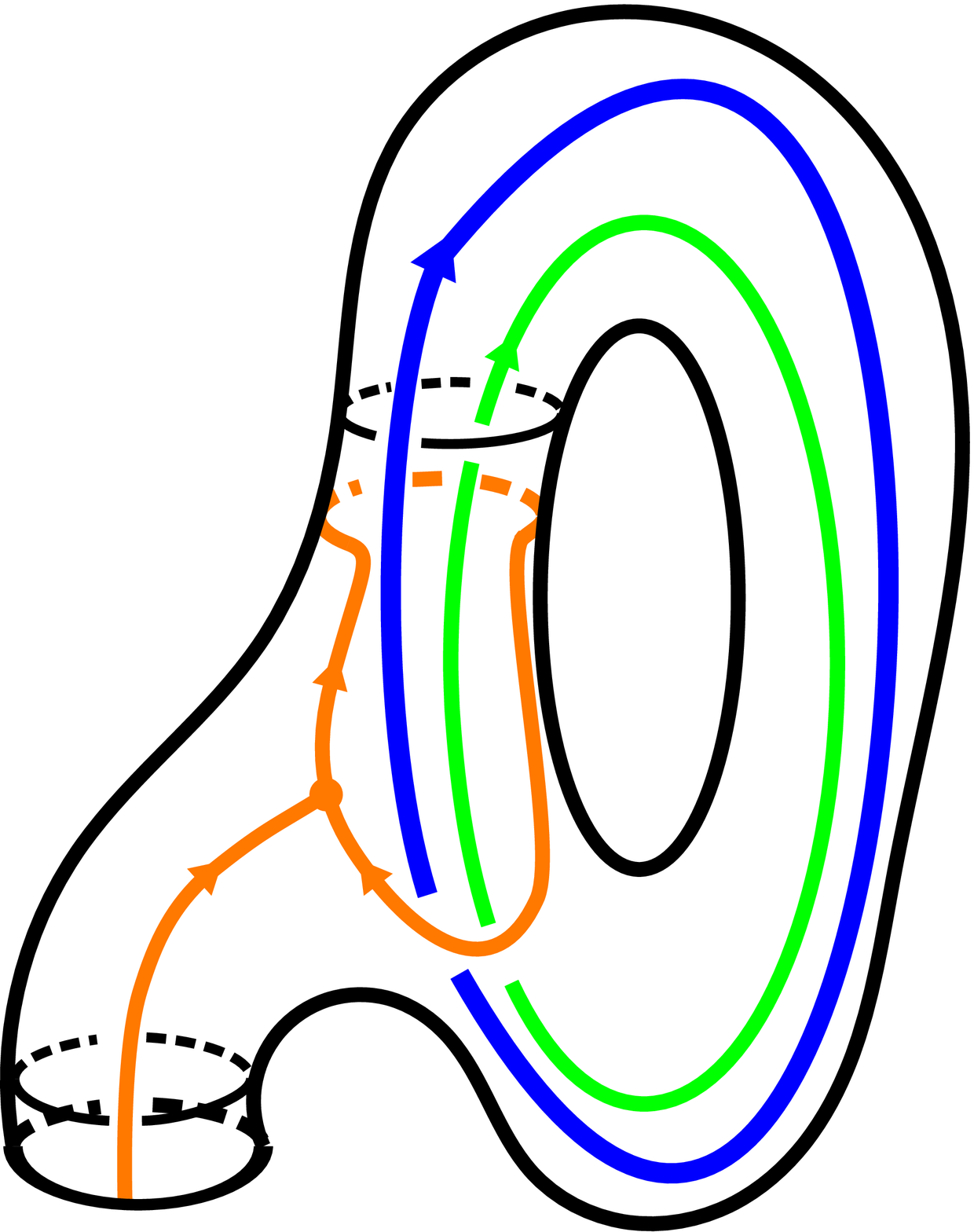}};
\node at (-2,0) {$\begin{aligned}\sum_{i,j\in \Isf}\frac{d_id_j}{\Dsf^2}\end{aligned}$};
\node at (0.2,2.2) {$\color{blue} j$};
\node at (0.6,1.4) {$\color{green} i$};
\end{tikzpicture}
& = & \begin{tikzpicture}
\node at (0,0) {\includegraphics[scale=0.2]{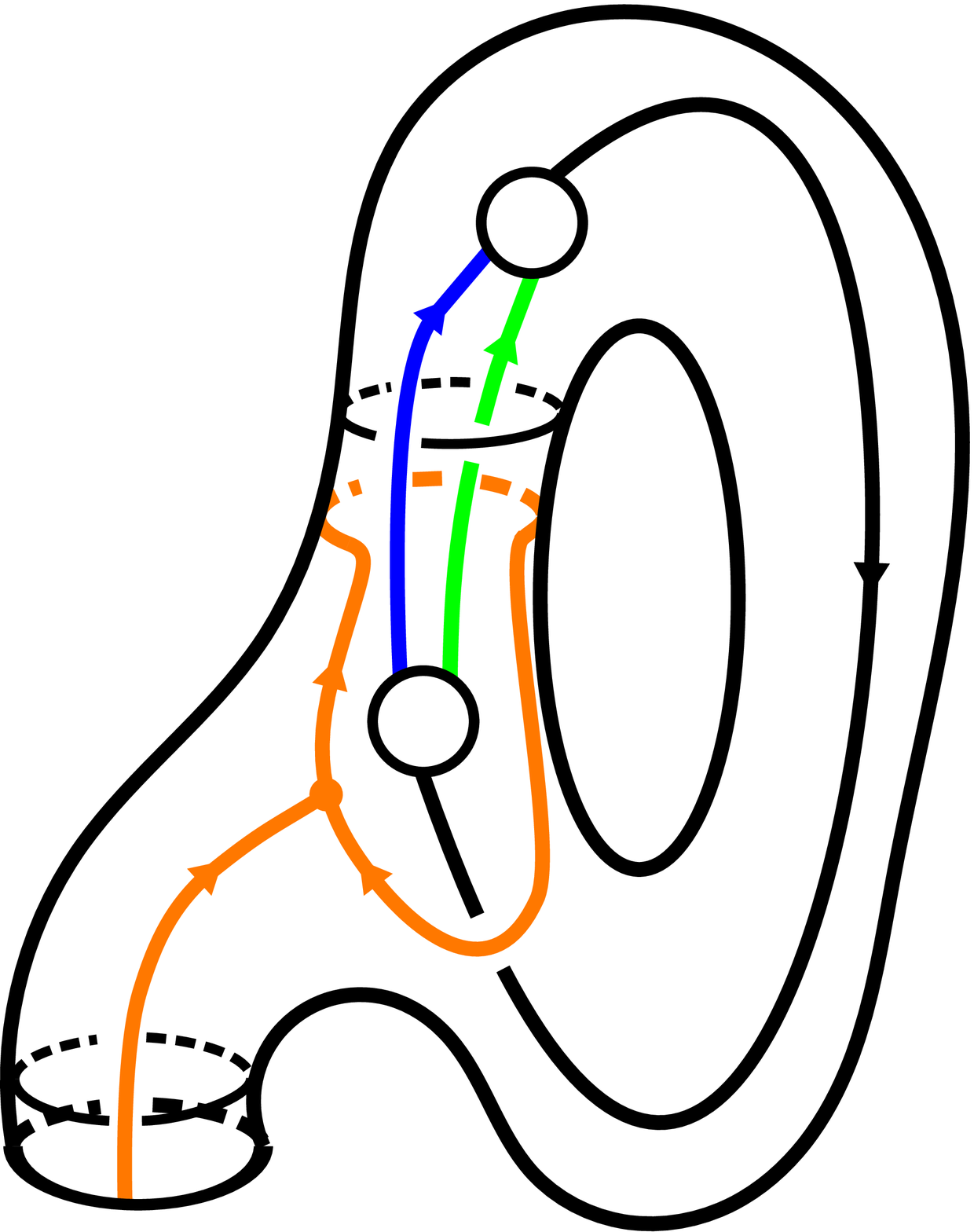}};
\node at (-3,0) {$\begin{aligned}\sum_{i,j,r\in \Isf}\frac{d_id_jd_r}{\Dsf^2}\end{aligned}$};
\node at (-0.3,1.5) {$\color{blue} j$};
\node at (0.3,1.2) {$\color{green} i$};
\node at (1.3,0) {$r$};
\node at (0.2,1.6) {$\alpha$};
\node at (-0.25,-0.4) {$\alpha$};
\end{tikzpicture}
\end{tabular}
\end{table}

\begin{table}[H]
\begin{tabular}{ m{0.5cm} m{5cm} m{0.5cm} m{5cm}}
= & \begin{tikzpicture}
\node at (0,0) {\includegraphics[scale=0.2]{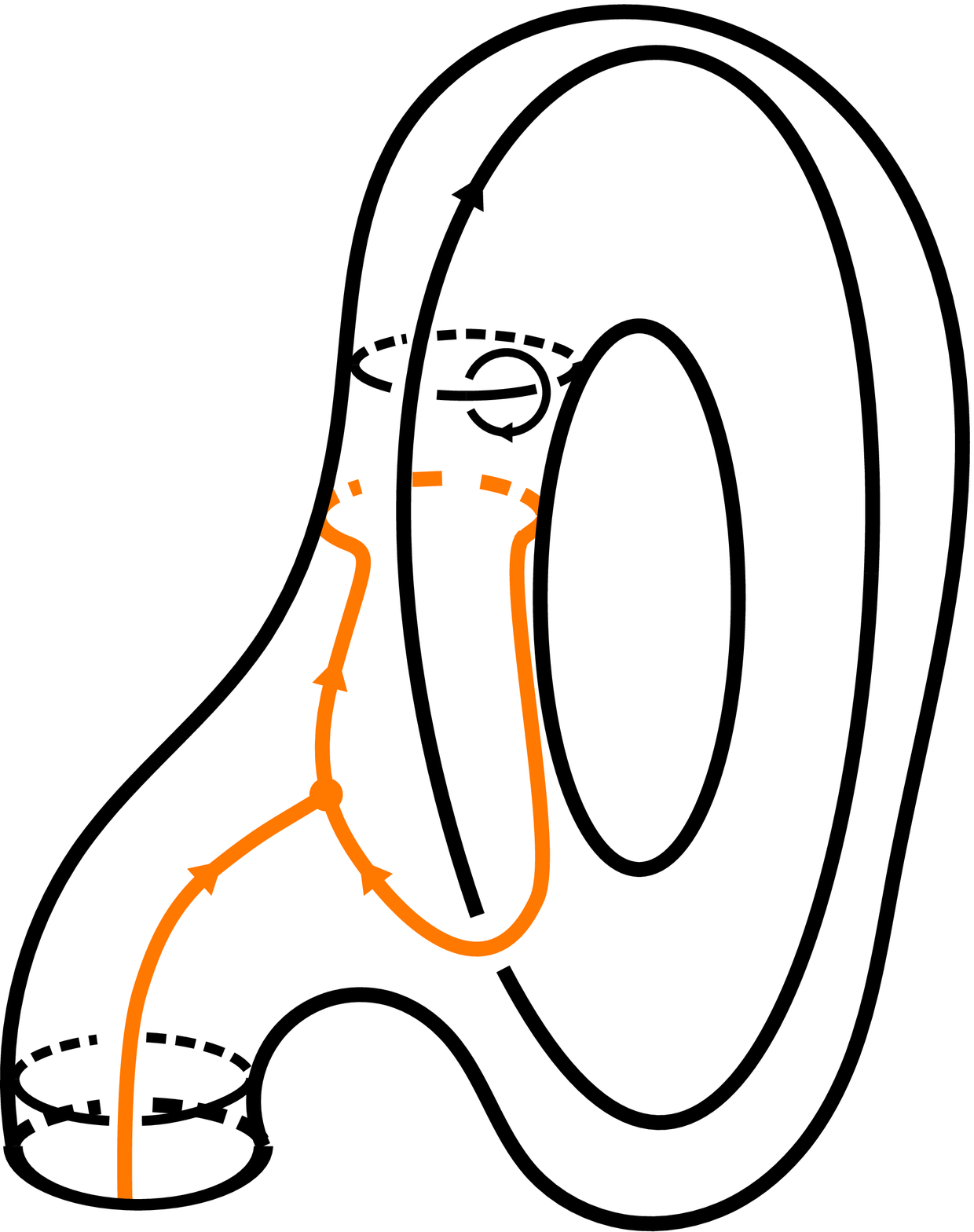}};
\node at (-2,0) {$\begin{aligned}\sum_{r,i\in \Isf} \frac{d_id_r}{\Dsf^2}\end{aligned}$};
\node at (1.4,0) {$r$};
\node at (-0.1,0.7) {$i$};
\end{tikzpicture}
& = & \begin{tikzpicture}
\node at (7,-18) {\includegraphics[scale=0.2]{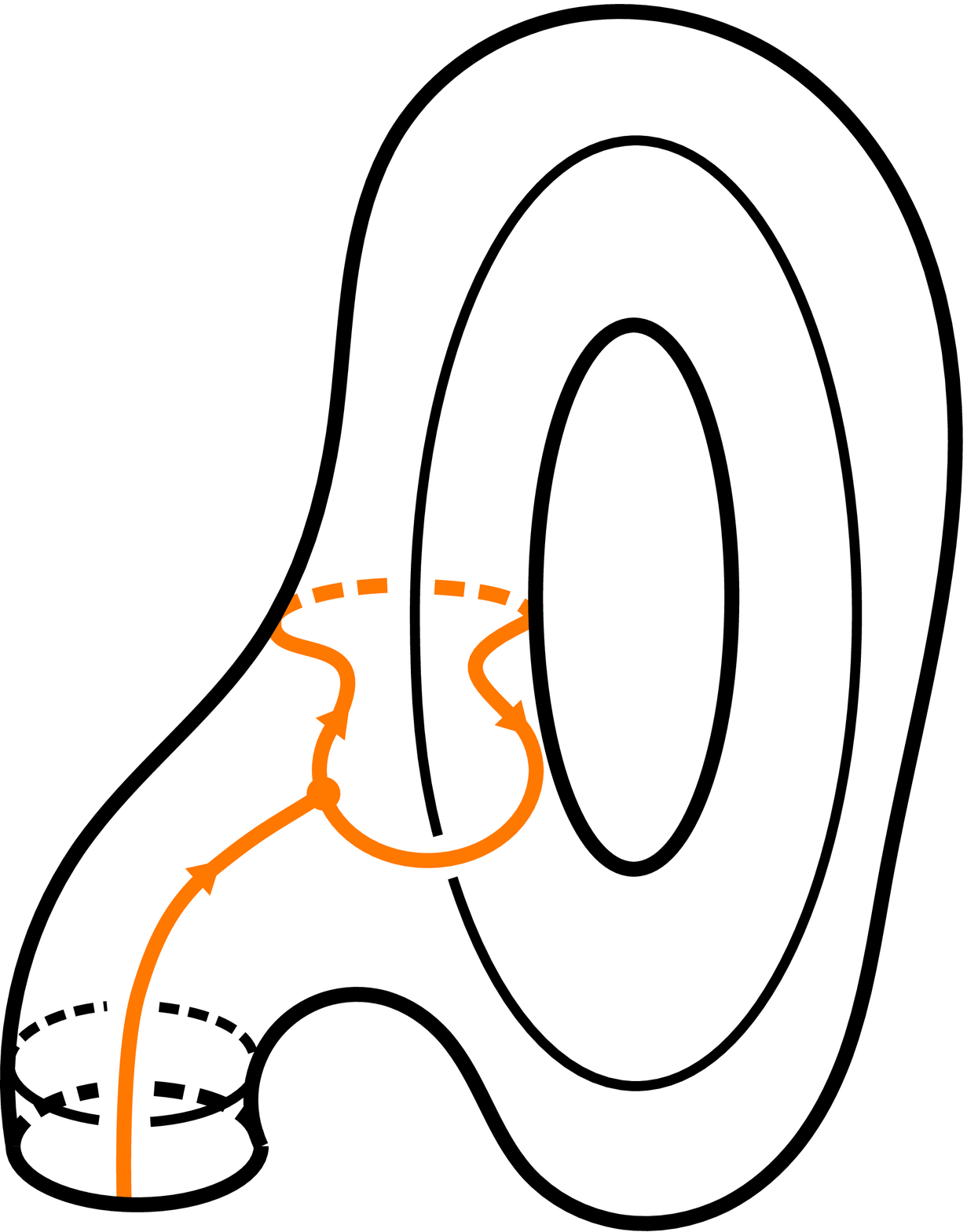}};
\end{tikzpicture}
\end{tabular}
\end{table}

The first equality uses the complete basis for elements in the Drinfeld-center but in a different normalization. Graphically this is indicated by using squares instead of round coupons. Note that there is no extra factor for the quantum dimension in this case. The second step is the modular property for $\Hcalcl$. In the third step we transported the projector circle along the torus and inserted the completeness relation in order to drag the $\Hcalcl$-colored curve along the circle in the fourth step. Using again completeness and finally lemma \ref{circleline} yields the result.
\end{proof}

This completes the proof of Theorem \ref{maintheo1}.

%%%%%%%%%%%%%%%%%%%%%%%%%%%%%%%%%%%%%%%%%%%%%%%%%%%%%%%%%%%%%%%%%%%%%%%%%%%%%%%%%%%%%%%%%%%%%%%%%%%%%%%%%%%%
%%%%%%%%%%%%%%%%%%%%%%%%%%%%%%%%%%%%%%%%%%%%%%%%%%%%%%%%%%%%%%%%%%%%%%%%%%%%%%%%%%%%%%%%%%%%%%%%%%%%%%%%%%%%
%%%%%%%%%%%%%%%%%%%%%%%%%%%%%%%%%%%%%%%%%%%%%%%%%%%%%%%%%%%%%%%%%%%%%%%%%%%%%%%%%%%%%%%%%%%%%%%%%%%%%%%%%%%%
%%%%%%%%%%%%%%%%%%%%%%%%%%%%%%%%%%%%%%%%%%%%%%%%%%%%%%%%%%%%%%%%%%%%%%%%%%%%%%%%%%%%%%%%%%%%%%%%%%%%%%%%%%%%

\subsection{From Sewing Constraints on String-Net Spaces to Cardy Algebras}

In the previous section we defined a fixed set of fundamental correlators and showed that the properties of a Cardy algebra leads to a solution of the sewing constraints for these fundamental correlators. In this section we go the other way round, i.e. we assume that a solution to the sewing constraints for the functor $\Blcal$ exists and show that this gives in fact a $(\Csf|\Zsf(\Csf))$-Cardy algebra. A string-net on a surface of genus $g$ with $n$ boundary components can have finitely many connected components winding non-contractible 1-cycles on the surface. However, on projector decorated surfaces things simplify considerably. Recall that the first homology group of a compact surface $S_{g,n}$ of genus $g$ with $n$ boundary components has $2g+n$ generators. The first $2g$-generators are the usual $a$- and $b$-cycle running around holes of tori. The other $n$ generators are simple closed curves homotopic to the boundary (see figure \ref{homologysurface}.).

\begin{figure}[H]
\centering
\includegraphics[scale=0.2]{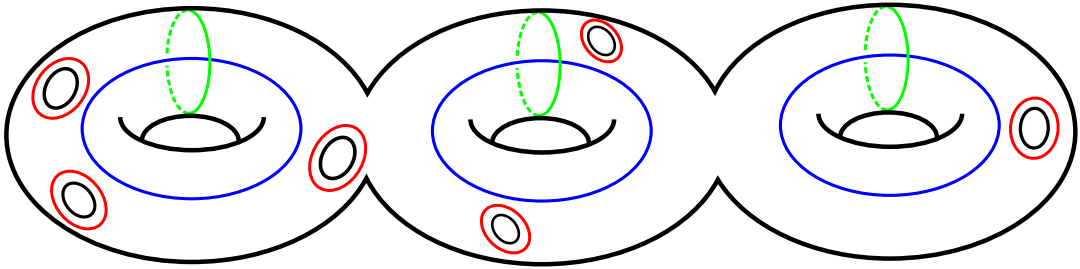}
\caption{A genus 3 surface $S_{3,6}$ with a-cycles respectively b-cycles shown in green and blue. Red circle show boundary generators in $H_1(S_{3,6})$.}
\label{homologysurface}
\end{figure}

\begin{prop}\label{trivialbdycompo} Let $S_{g,n}$ be a compact surface of genus $g$ with $n$ boundary components and $A_1,\dots,A_n\in \Zsf(\Csf)$. Any element in $\hat{H}^s(S_{g,n},A_1,\dots, A_n)$ is equivalent to a string-net with trivial winding around boundary generators of $H_1(S_{g,n})$.
\end{prop}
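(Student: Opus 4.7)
The plan is to use the projector $P = \sum_{k\in \Isf(\Csf)} (d_k/\Dsf^2)\,\mathrm{loop}_k$ inserted at every boundary component to absorb windings around boundary-parallel cycles. Fix a boundary component $C$ with boundary generator $\gamma_C$ and a representative $(f,\Gamma)$ of the given class. After a generic perturbation, the intersection of $\Gamma$ with a meridian arc $M$ of a thin collar $A_C$ of $C$ is a finite set of transverse crossings $p_1,\dots,p_r$ carrying colors $B_1,\dots,B_r$, and the signed count of these crossings realizes the winding number of $\Gamma$ around $\gamma_C$. The projector circle itself crosses $M$ exactly once, at a point chosen away from $\{p_1,\dots,p_r\}$.

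Next, I would choose an embedded disk $D\subset S_{g,n}$ that contains all of $p_1,\dots,p_r$ together with a short arc of the projector circle, and such that the portion of $\Gamma$ inside $D$ consists only of the local segments at the $p_i$. Inside $D$ the local string-net relations of section \ref{sec1} apply, so the completeness relation resolves the parallel bundle with colors $B_1,\dots,B_r$ as a finite $\mathbb{K}$-linear combination of configurations in which a single simple strand colored by some $U_j\in \Isf(\Csf)$ crosses $M$, with coupons placed at the top and bottom of $D$. Each term represents the same class in $\hat{H}^s(S_{g,n},A_1,\dots,A_n)$.

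For each such term Lemma \ref{circleline} applies directly: the projector circle $\sum_k (d_k/\Dsf^2)\,U_k$-loop encircles a single $U_j$-strand and hence collapses to $\delta_{j,\mathbf{1}}$ times the bare strand. The summands with $j\neq \mathbf{1}$ become null graphs, and the $j=\mathbf{1}$ summand contains only a trivial strand crossing $M$, which is isotopic into $D$ and therefore can be erased. After this step no edge of $\Gamma$ crosses $M$, so the winding of the new representative around $\gamma_C$ is zero. Applying the same procedure at each of the $n$ boundary components yields the required representative.

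The main obstacle is the geometric step of cleanly localizing all the windings into a single disk $D$: in general the edges of $\Gamma$ winding around $\gamma_C$ may be interspersed with non-winding edges, and vertices of $\Gamma$ may happen to sit inside $A_C$. This is handled by first shrinking $A_C$ so that no vertex of $\Gamma$ lies inside it, and then isotoping the bundle of winding edges into a thin ribbon around $C$ by ambient isotopy on the surface; since this is an isotopy of embedded colored graphs it does not change the string-net class, and it places every crossing of $\Gamma$ with $M$ inside the single disk $D$ to which the preceding argument applies.
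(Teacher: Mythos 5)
Your geometric setup (collar, meridian arc, fusing the transverse bundle with the completeness relation) matches the paper's argument, but the step where you invoke Lemma \ref{circleline} contains a genuine error. The projector circle at a boundary component is boundary-parallel, hence an essential curve in $S_{g,n}$: it bounds no disk, and after fusion it runs \emph{parallel} to the single $U_j$-strand rather than encircling it. The killing lemma requires the Kirby-colored circle to bound an embedded disk pierced exactly once by the strand, which is not the configuration you have (a short arc of the projector circle inside your disk $D$ cannot encircle anything). The quantitative conclusion is also false: on a cylinder with projectors at both ends and trivial boundary values, a $U_k$-loop around the core is a winding string-net, and fusing it into the adjacent projector circle gives $\sum_{l}\frac{d_l}{\Dsf^2}\sum_m N_{lk}^m\,[U_m\text{-loop}]=d_k\sum_m\frac{d_m}{\Dsf^2}[U_m\text{-loop}]$, i.e.\ $d_k$ times the empty extended net --- not $\delta_{k,\mathbf 1}$ times it. Your procedure would declare this nonzero element null for every $k\neq\mathbf 1$.

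The correct mechanism, which is what the paper's pictures implement, is not the killing property but the sliding/absorption property of the projector circle: after applying completeness (with coefficient $d_k$ and coupons $\alpha$, exactly as you do), the fused strand, being parallel to the projector circle, can be slid across the boundary hole through the projector (equivalently, fused into it), which contracts the winding without discarding any fusion channel; reapplying completeness in reverse then restores the original bundle in an unwound position. Both the killing lemma and the sliding property are consequences of modularity, but they are different local relations, and only the latter applies to a boundary-parallel configuration. If you replace your step 4 by this sliding move, the rest of your argument (isotoping vertices out of the collar, treating each boundary component in turn) goes through and coincides with the paper's proof.
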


\begin{Cor} For $S_{0,n}$ a sphere with $n$ boundary components any string-net in \linebreak $\hat{H}^{s}(S_{0,n},A_1,\dots, A_n)$ is equivalent to a string-net with a single coupon. 
\end{Cor}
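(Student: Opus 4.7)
The plan is to combine Proposition~\ref{trivialbdycompo} with the local relation mentioned just after the introduction of the evaluation map $\la\bullet\ra_D$, namely that any collection of vertices lying inside an embedded closed disk may be merged into a single coupon. First, I apply the proposition in the case $g=0$: since $H_1(S_{0,n})$ is generated entirely by the $n$ boundary cycles (subject to the single relation that their sum vanishes), the proposition lets me replace any class in $\hat{H}^s(S_{0,n},A_1,\dots,A_n)$ by a representative string-net $\Gamma$ with trivial winding around every generator of $H_1(S_{0,n})$.

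Next, I would argue that such a $\Gamma$ can be isotoped so that all of its vertices and interior edges lie inside a single embedded closed disk $D\subset S_{0,n}$, with $n$ pairwise disjoint thin arcs, each colored by the appropriate $A_i$, running from $\partial D$ to the marked boundary point $p_i$. Connected components of $\Gamma$ disjoint from $\partial S_{0,n}$ sit in the interior of a planar surface, so each is nullhomotopic and can be pushed into $D$. For components meeting $\partial S_{0,n}$, trivial winding around every boundary cycle, together with the fact that $S_{0,n}$ deformation retracts onto a one-complex whose only loops are the boundary generators, forces each such component to be isotopic into $D$ with straight stems reaching out to its endpoints on $\partial S_{0,n}$.

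Finally, I invoke the local relation to merge all vertices inside $D$ into a single coupon, yielding a string-net with exactly one coupon in $D$ and $n$ strands colored $A_1,\dots,A_n$ leaving $D$ for the boundary marked points, which is the assertion of the corollary.

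The hard part will be making the middle geometric step precise, in particular arguing that the arcs joining $D$ to the $n$ marked boundary points can simultaneously be chosen disjoint and non-winding. This uses the genus zero assumption crucially: on a planar surface with $n$ holes, the trivial-winding condition around each hole plus disjointness forces the arcs into a standard radial configuration up to isotopy, while the cyclic order in which they arrive at $\partial D$ is absorbed into the freedom of the coupon morphism, which lives in a space $\la A_{\sigma(1)},\dots,A_{\sigma(n)}\ra$ that is functorially invariant under cyclic permutation by the lemma preceding the introduction of coupons.
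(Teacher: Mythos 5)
Your argument is correct and follows essentially the same route as the paper, which treats the corollary as an immediate consequence of Proposition~\ref{trivialbdycompo} (at genus zero all generators of $H_1(S_{0,n})$ are boundary generators) together with the local relation that all vertices lying in an embedded disk may be merged into a single coupon. One small slip: a closed component in the interior of a planar surface is \emph{not} automatically nullhomotopic (it may encircle one of the holes); what actually lets you push such a component into $D$ is again the trivial-winding conclusion of the proposition, which you have already invoked.
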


\begin{proof}
This is the same argument as \cite[Lemma~3.7]{Schweigert:2019zwt}. Assume a string-net has non-trivial winding along a boundary component. In an annular neighborhood of the boundary the string-net can be manipulated as follows
\begin{center}

\begin{tikzpicture}
\node at (0,0) {\includegraphics[scale=0.2]{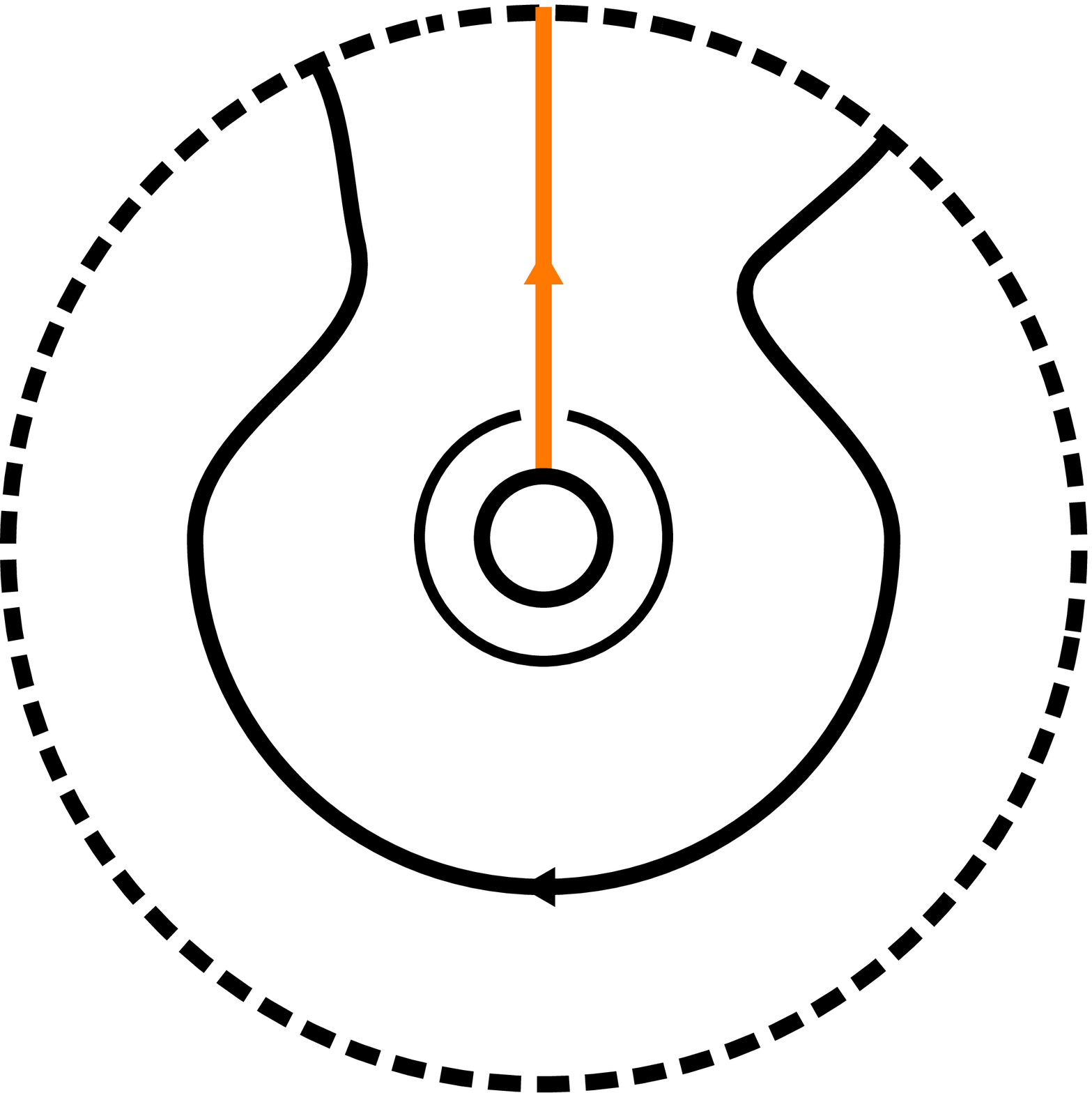}};
\node at (7,0) {\includegraphics[scale=0.2]{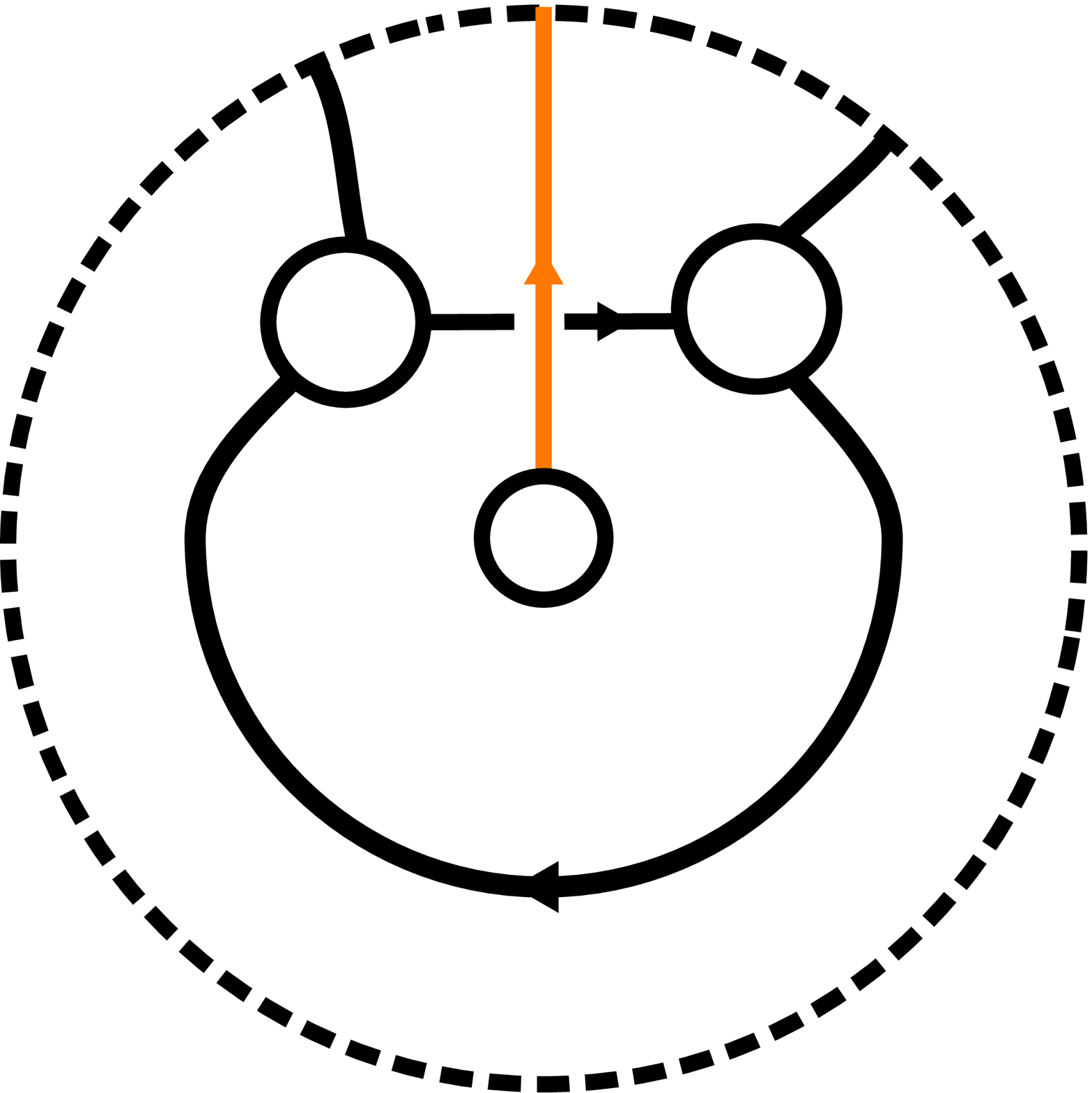}};
\node at (3.5,-5) {\includegraphics[scale=0.2]{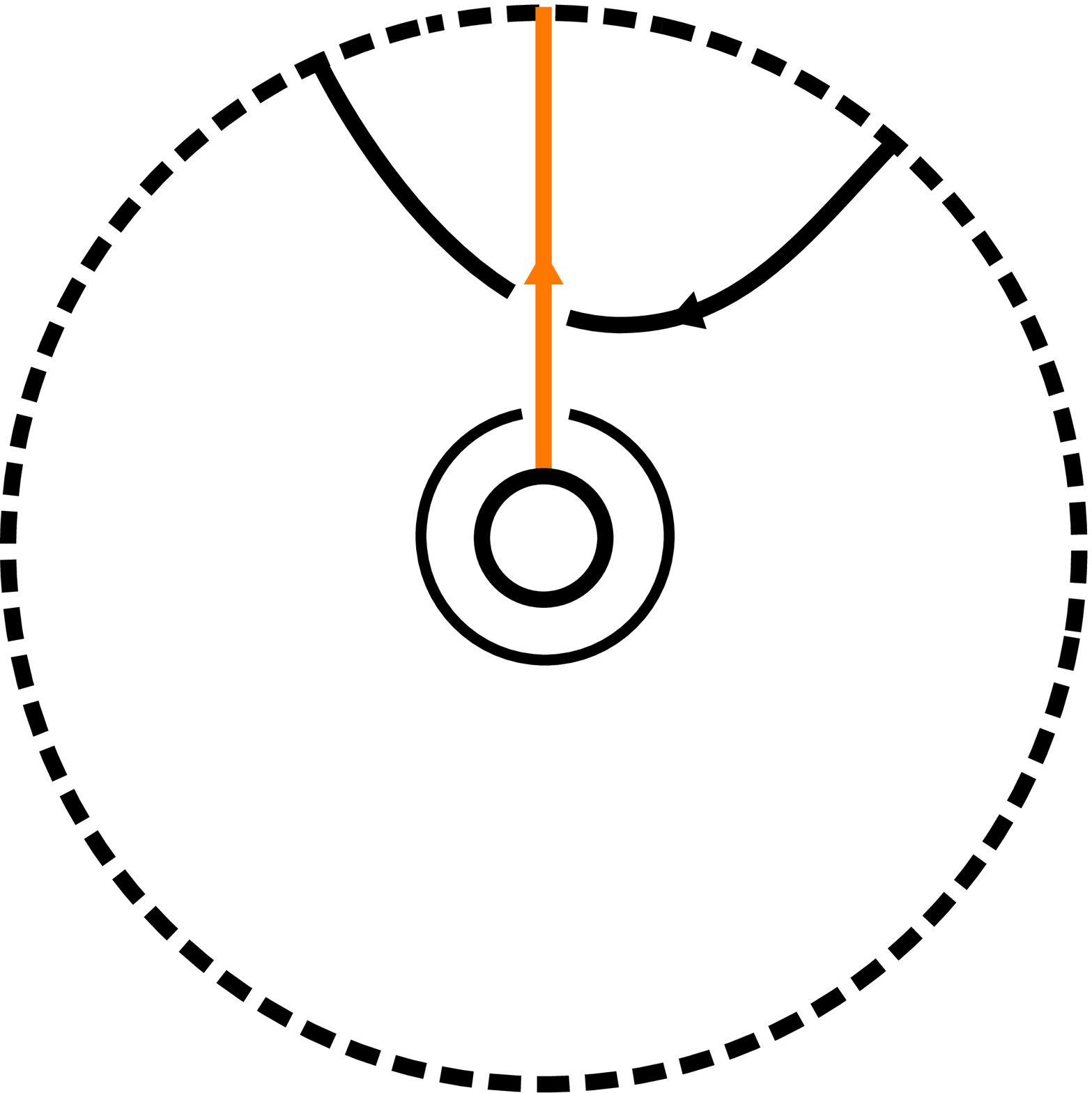}};
\node at (3,0) {$=$};
\node at (4,0) {$\begin{aligned}\sum_{k,l} \frac{d_kd_l}{\Dsf^2}\end{aligned}$};
\node at (6.25,0.85) {$\alpha$};
\node at (7.8,0.85) {$\alpha$};
\node at (6.8,0.6) {$k$};
\node at (5.8,-1) {$l$};
\node at (1,-5) {$=$};
\end{tikzpicture}.
\end{center}
Orientation of curves is chosen arbitrarily in the picture. For any other orientation the computation is exactly the same.
\end{proof}

The proposition and its corollary imply that any string-net on a generating world sheet is of the form shown in Figures \ref{openfundamentalcorr}, \ref{closedfundamentalcorr}, \ref{openclosedfundamentalcorr} where disk-shaped vertices are now fixed morphisms of the right type. Assume a boundary coloring by the closed object $\widehat{\Gcalcl}$ and open objects $\widehat{\Gcalop},\, L(\widehat{\Gcalop})$, i.e. closed boundaries of a world sheet have boundary value $\widehat{\Gcalcl}$ and open ones $\widehat{\Gcalop},\, L(\widehat{\Gcalop})$, depending on the type of world sheet. For example, on the world sheet $C_m$ the vertex corresponds to a morphism $\widehat{m}_{cl}:\widehat{\Gcalcl}\otimes \widehat{\Gcalcl}\rightarrow \widehat{\Gcalcl}$ in $\Zsf(\Csf)$. On $C_\Delta$ the vertex is a morphism $\widehat{\Delta}_{cl}:\widehat{\Gcalcl}\rightarrow \widehat{\Gcalcl}\otimes \widehat{\Gcalcl}$ and so on. We have to take special care of world sheets $O_{prop}$ and $C_{prop}$. Those give rise to maps 
\eq{
\hat{p}_{op}\in \hom_{\Zsf(\Csf)}(\widehat{\Gcalop},\widehat{\Gcalop}),\qquad \hat{p}_{cl}\in \hom_{\Zsf(\Csf)}(\widehat{\Gcalcl},\widehat{\Gcalcl}).
}
Assuming the sewing constraints hold it readily follows that $\hat{p}_{op}$ and $\hat{p}_{cl}$ are idempotent maps. In the previous discussion these maps were fixed to be the identity maps. Thus we may make the additional assumption that $\hat{p}_{op}$, $\hat{p}_{cl}$ are invertible, which by finiteness of the morphism spaces implies that $\hat{p}_{op}=\id$ and $\hat{p}_{cl}=\id$. But we don't have to. Since $\Zsf(\Csf)$ is abelian we can choose a retract $(\Gcalcl,e_{cl},r_{cl})$ for $\hat{p}_{cl}$, i.e. 
\eq{
e_{cl}:\Gcalcl\rightarrow \widehat{\Gcalcl},\quad  r_{cl}:\widehat{\Gcalcl}\rightarrow \Gcalcl\\
e_{cl}\circ r_{cl}=\hat{p}_{cl},\quad r_{cl}\circ e_{cl}=\id_{\Gcalcl}
}
and similar a retract $(\Gcalop,e_o,r_o)$ for $p_o$ in $\Csf$. On their images $\Gcalcl$, $\Gcalop$ the propagator morphisms act as the identity and sewing constraints realize a $(C|\Zsf(\Csf))$-Cardy algebra on $\Gcalcl$, $\Gcalop$ rather than on $\widehat{\Gcalcl}$, $\widehat{\Gcalop}$.
\begin{theo}\label{maintheo3}
Any set of fundamental string-nets on generating world sheets with closed boundary values $\widehat{\Gcalcl}$ and open boundary values $\widehat{\Gcalop},\,L(\widehat{\Gcalop})$, which satisfy the sewing constraints, defines a $(\Csf|\Zsf(\Csf))$-Cardy algebra $(\Gcalcl,\Gcalop,\iotaclop)$, which is unique up to isomorphism.
\end{theo}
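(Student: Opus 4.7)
The plan is to reverse-engineer the Cardy algebra data from the fundamental correlators and then translate each of the 32 sewing relations into one of the Cardy algebra axioms. First I read off candidate structure morphisms: the trivalent and univalent coupons appearing in the string-nets $\corr^{op}_m, \corr^{op}_\Delta, \corr^{op}_\eta, \corr^{op}_\epsilon$ provide maps $\widehat{m}_{op},\widehat{\Delta}_{op},\widehat{\eta}_{op},\widehat{\epsilon}_{op}$ on $\widehat{\Gcalop}$ in $\Csf$; the closed correlators analogously give $\widehat{m}_{cl},\widehat{\Delta}_{cl},\widehat{\eta}_{cl},\widehat{\epsilon}_{cl}$ on $\widehat{\Gcalcl}$ in $\Zsf(\Csf)$; and $\corr_I$ yields a morphism $\widehat{\iota}\in\hom_{\Zsf(\Csf)}(\widehat{\Gcalcl},L(\widehat{\Gcalop}))$. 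As noted in the paragraph preceding the theorem, the propagator correlators give idempotents $\hat{p}_{op},\hat{p}_{cl}$, so I choose retracts $(\Gcalop,e_o,r_o)$ in $\Csf$ and $(\Gcalcl,e_{cl},r_{cl})$ in $\Zsf(\Csf)$ and define the genuine algebra data by sandwiching, e.g.\ $m_{op}=r_o\circ\widehat{m}_{op}\circ(e_o\otimes e_o)$, $\Delta_{op}=(r_o\otimes r_o)\circ\widehat{\Delta}_{op}\circ e_o$, and $\iotaclop=L(r_o)\circ\widehat{\iota}\circ e_{cl}$; since $L$ is a functor, $L(r_o)\circ L(e_o)=\id_{L(\Gcalop)}$ so $L$ transports the retract.

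Next I verify the Frobenius axioms. The open relations R1--R4 state that $\widehat{\eta}_{op},\widehat{\epsilon}_{op}$ are (co)units up to insertion of $\hat{p}_{op}$, which after sandwiching with $r_o,e_o$ become genuine (co)units for $m_{op},\Delta_{op}$; R5 supplies symmetry, R6--R7 (co)associativity, and R8--R9 the Frobenius property. The residual propagator insertions demanded by R10--R13 are precisely what is needed for the sandwiched maps to be independent of the choice of representative within the retract. The same translation, applied to R14--R25, equips $\Gcalcl$ with a symmetric Frobenius algebra structure in $\Zsf(\Csf)$; R20--R21 give (co)commutativity, and R24--R25 (the Dehn twist and braid moves) verify the mapping-class-group invariance that is implicit in the categorical axioms for an algebra in a braided category.

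The open-closed axioms now fall out of R26--R32. R27 together with R30 identifies $\widehat{\iota}$ as an algebra map, giving condition II; R26 translates directly into the center condition III; R28 and R29 are the consistency checks expressing $\iotaclop^\dagger$ as the Frobenius adjoint of $\iotaclop$; and R31 is the Cardy condition IV. For the modularity condition I on $\Gcalcl$, I run the diagrammatic chain in the proof of Theorem \ref{maintheo1} for R32 backwards: $S$-move invariance of the genus-one one-point correlator, combined with the completeness relation and Lemma \ref{circleline} applied to strip the projector circles, forces the two sides of the modularity defining equation to coincide on $\Gcalcl$. Uniqueness up to isomorphism is formal: any two retracts of the same idempotent are canonically isomorphic, this isomorphism intertwines the sandwiched structure maps and commutes with $\iotaclop$ after applying $L$, and by \cite[Lemma~2.18]{Kong_2009} every such morphism of Frobenius algebras is invertible.

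The main obstacle I anticipate is the modularity step. Unlike R1--R31, which are local gluing identities that translate into algebraic relations by a diagrammatic argument essentially local to a disk, R32 is a mapping-class-group move on a one-holed torus; extracting the modular identity requires carefully transporting projector lines around a noncontractible cycle and repeatedly applying completeness, and the retract sandwich must be threaded through this calculation without spoiling the $S$-matrix prefactors. A secondary subtlety is that $L$ is only a lax/colax tensor functor, so the algebra-homomorphism property of $\iotaclop$ has to be stated in terms of the comparison maps $\phi^L,\psi^L$ used in the definition of $\Blcal$ on type II) world sheets, and the fact that the $\corr_I$ cylinder carries a projector line built from these very maps is what makes the translation work.
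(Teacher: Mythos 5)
Your proposal is correct and follows essentially the same route as the paper: extract the hatted structure morphisms from the coupons, split the propagator idempotents via retracts and sandwich, run the forward proofs of the 32 relations backwards to obtain the Cardy axioms (with the modularity/R32 step requiring the extra care of threading the retract through the completeness coupons, which the paper handles via an analogue of \cite[Lemma~4.4]{Kong_2014}), and deduce uniqueness from the canonical isomorphism between retracts of the same idempotent. No gaps beyond the level of detail the paper itself leaves implicit.
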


\begin{proof}
As discussed above from fundamental world sheets we get the following ten maps

\begin{table}[h]
\centering
\begingroup
\setlength{\tabcolsep}{10pt}
\begin{tabular}{c |c| c| c| c |c| c| c| c| c}
$O_m$ & $O_\Delta$ & $O_\eta$ & $O_\epsilon$ & $C_m$ & $C_\Delta$ & $C_\eta$ & $C_\eta$ & $I$ & $I^\dagger$ \\
\hline 
$m_{op}$ & $\Delta_{op}$ & $\eta_{op}$ & $\epsilon_{op}$ & $m_{cl}$ & $\Delta_{cl}$ & $\eta_{cl}$ & $\epsilon_{cl}$&  $ \iota$ & $ \iota^\dagger$ 
\end{tabular}
\endgroup
\caption{The first row states the type of world sheet and the second row the corresponding maps.}
\end{table}
where the morphisms are
\eq{
\mop&=r_o\circ \widehat{\mop}\circ (e_o\otimes e_o):\Gcalop\otimes \Gcalop\rightarrow \Gcalop\\
\Dop&=(r_o\otimes r_o)\circ \widehat{\Dop}\circ e_o:\Gcalop\rightarrow \Gcalop\otimes \Gcalop\\
\etaop&= r_o\circ \widehat{\etaop}:\mathbf{1}\rightarrow \Gcalop\\
\epop&= \widehat{\epop}\circ e_o:\Gcalop\rightarrow \mathbf{1}\\
\mcl&=r_{cl}\circ \widehat{\mcl}\circ (e_{cl}\otimes e_{cl}):\Gcalcl\otimes \Gcalcl\rightarrow \Gcalcl\\
\Dcl&=(r_{cl}\otimes r_{cl})\circ \widehat{\Dcl}\circ e_{cl}:\Gcalcl\rightarrow\Gcalcl\otimes \Gcalcl\\
\etacl&=r_{cl}\circ \widehat{\etacl}:\mathbf{1}\rightarrow \Gcalcl\\
\epcl &= \widehat{\epcl}\circ e_{cl}:\Gcalcl\rightarrow \mathbf{1}\\
\iotaclop &=L(r_o)\circ \widehat{\iotaclop}\circ e_{cl}:\Gcalcl\rightarrow L(\Gcalop)\\
\iotaclop^\dagger & = r_{cl}\circ \widehat{\iotaclop^\dagger}\circ L(e_o):L(\Gcalop)\rightarrow \Gcalcl.
}
The hatted morphisms are the maps appearing in the coupons for the string-nets. Since the graphical representation of fundamental correlators stays the same, the proofs in section \ref{subsec61} can be just run backwards giving the defining relations of a $(\Csf|\Zsf(\Csf))$-Cardy algebra for these morphisms. In order to show that this carries over to a Cardy algebra on $(\Gcalcl,\Gcalop, \iotaclop)$, we first note that relations R3), R8) and R10) give
\eq{\label{mopprop}
\hat{p}_{op}\circ \widehat{m_{op}}=\widehat{m_{op}}\circ(\hat{p}_{op}\otimes \id)=\widehat{m_{op}}\circ (\id\otimes \hat{p}_{op})=\widehat{m_{op}}\quad. 
}
Similar relations hold for $\widehat{\Delta_{op}}$, $\widehat{\eta_{op}}$, $\widehat{\epsilon_{op}}$, $\widehat{m_{cl}}$, $\widehat{\Delta_{cl}}$, $\widehat{\eta_{cl}}$ and $\widehat{\epsilon_{cl}}$. Furthermore sewing relation R29) and the preceding discussion yields 
\eq{
L(\hat{p}_{op})\circ \widehat{\iotaclop}=\widehat{\iotaclop}\circ \hat{p}_{cl}=\widehat{\iotaclop}\quad .
}
Using these relations one readily checks that $(\Gcalop,m_{op},\Delta_{op},\eta_{op},\epsilon_{op})$ is symmetric Frobenius algebra, $(\Gcalcl, m_{cl},\Delta_{cl},\eta_{cl},\epsilon_{cl})$ is a symmetric, commutative Frobenius algebra and $\iotaclop$ is an algebra homomorphism. To check modularity we introduce the graphical notation

\begin{center}
\begin{tikzpicture}
\node at (-5.5,0) {\includegraphics[scale=0.1]{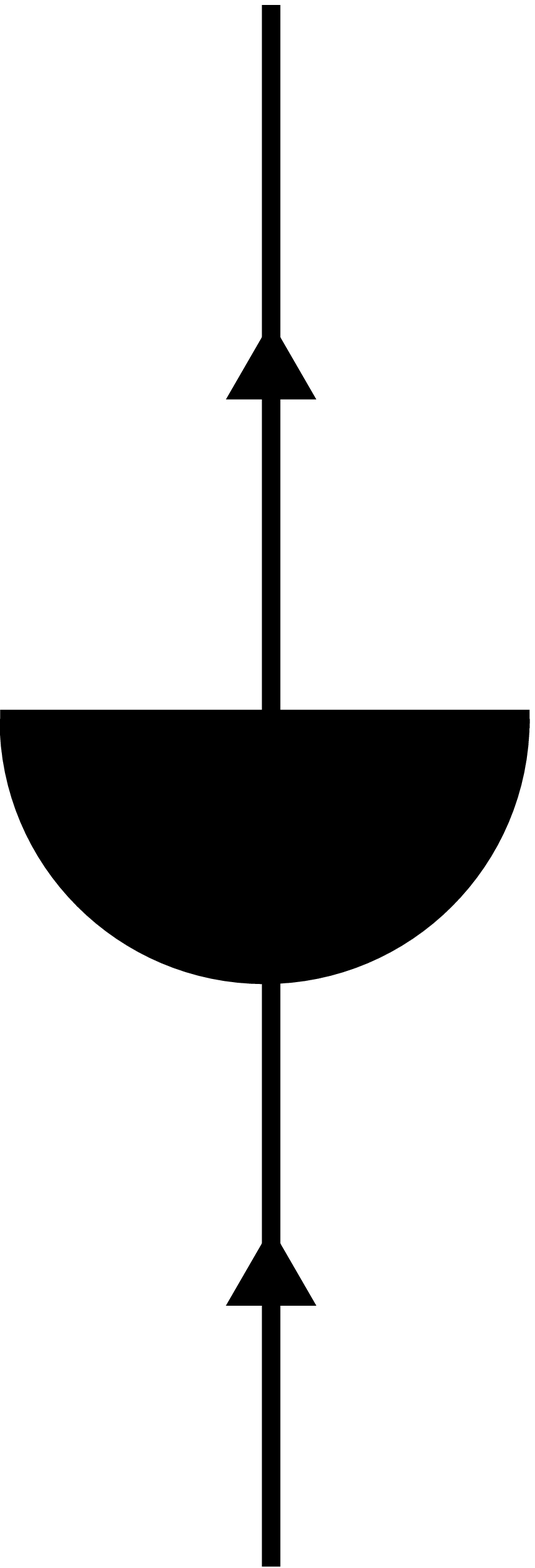}};
\node at (-6.7,0) {$e_O$};
\node at (-6.2,0) {$=$};
\node at (-5.1,-1) {\scriptsize $\Gcalop$};
\node at (-5.1,1) {\scriptsize $\widehat{\Gcalop}$};
\node at (-2.5,0) {\includegraphics[scale=0.1]{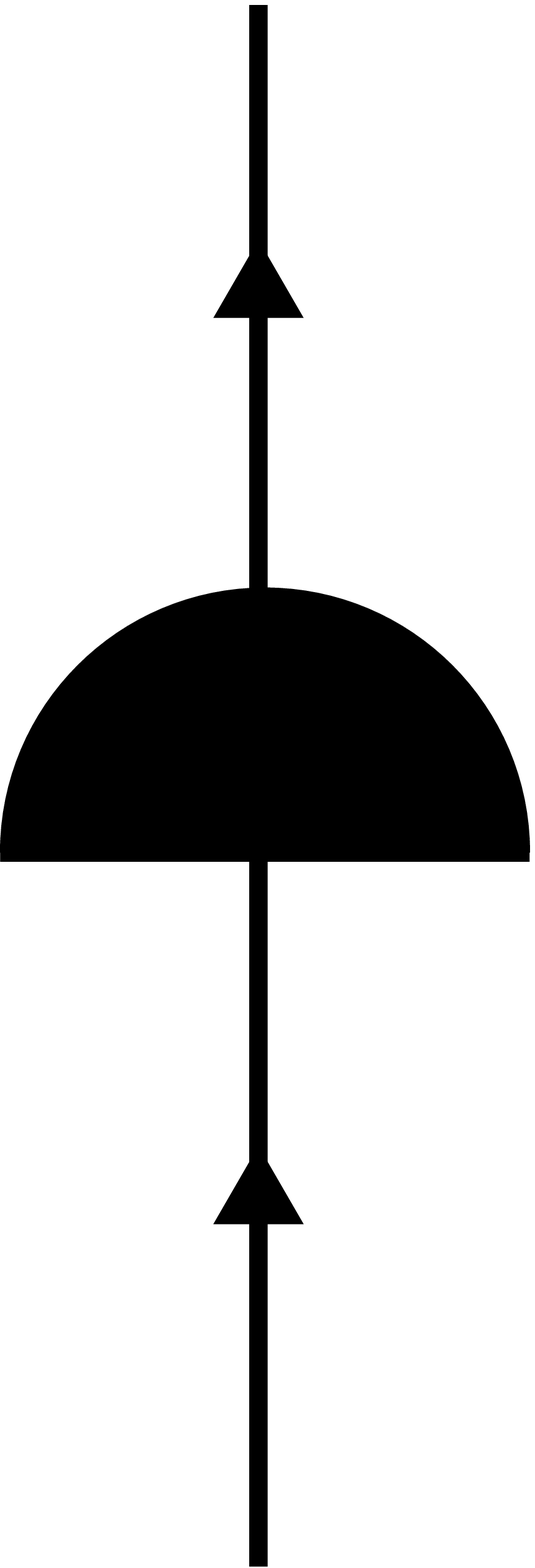}};
\node at (-2.1,-1) {\scriptsize $\widehat{\Gcalop}$};
\node at (-2.1,1) {\scriptsize $\Gcalop$};
\node at (-3.7,0) {$r_O$};
\node at (-3.2,0) {$=$};
\node at (2.5,0) {\includegraphics[scale=0.1]{figure128.eps}};
\node at (1.3,0) {$e_{cl}$};
\node at (1.8,0) {$=$};
\node at (2.9,-1) {\scriptsize $\Gcalcl$};
\node at (2.9,1) {\scriptsize $\widehat{\Gcalcl}$};
\node at (5,0) {\includegraphics[scale=0.1]{figure129.eps}};
\node at (3.8,0) {$r_{cl}$};
\node at (4.3,0) {$=$};
\node at (5.4,-1) {\scriptsize $\widehat{\Gcalcl}$};
\node at (5.4,1) {\scriptsize $\Gcalcl$};
\end{tikzpicture}
\end{center}

and note that similar to \cite[Lemma~4.4]{Kong_2014} we have

\begin{center}
\begin{tikzpicture}
\node at (-4,0) {\includegraphics[scale=0.15]{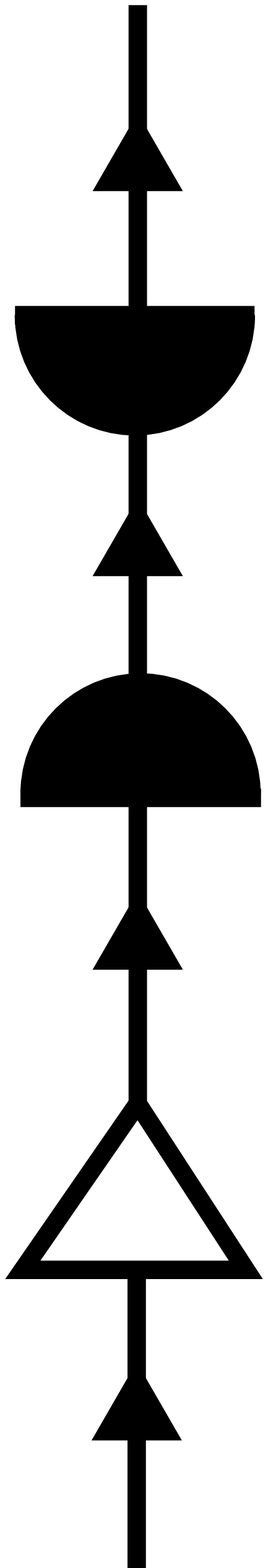}};
\node at (-2,0) {\includegraphics[scale=0.15,angle=180]{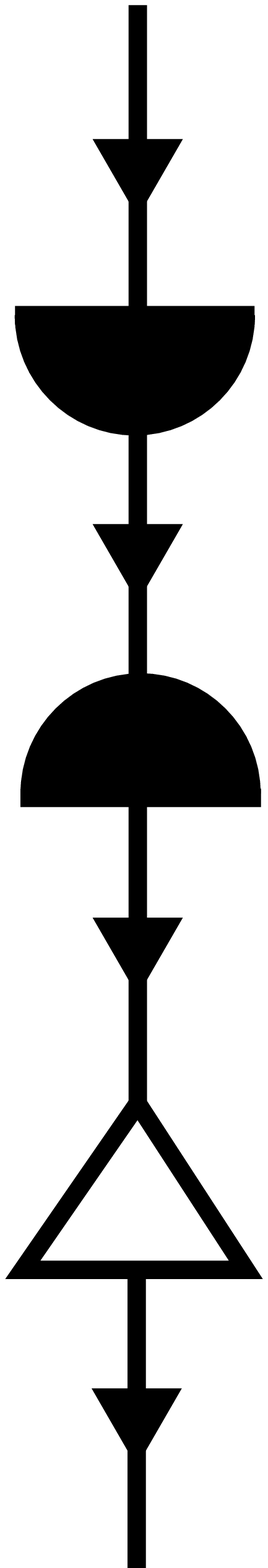}};
\node at (-5,0) {$\underset{\alpha}{\sum}$};
\node at (-4,-1.05) {\scriptsize $\alpha$};
\node at (-3.5,-1.7) {\scriptsize $i\otimes j$};
\node at (-3.6,0.5) {\scriptsize $\Gcalcl$};
\node at (-3.6,1.6) {\scriptsize $\widehat{\Gcalcl}$};
\node at (-1.6,-1.7) {\scriptsize $\widehat{\Gcalcl}$};
\node at (-1.6,-0.6) {\scriptsize $\Gcalcl$};
\node at (-2,1.05) {\scriptsize $\alpha$};
\node at (-1.6,1.6) {\scriptsize $i\otimes j$};
\node at (-0.5,0) {$=$};
\node at (2,0) {\includegraphics[scale=0.15]{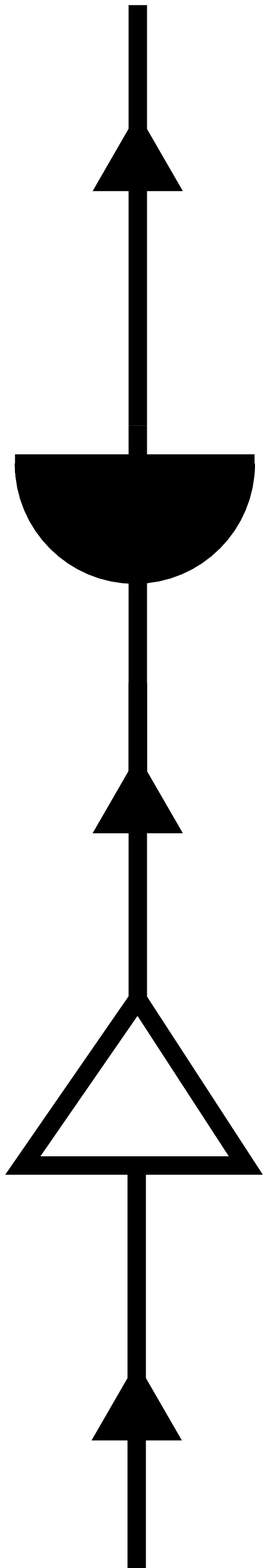}};
\node at (4,0) {\includegraphics[scale=0.15,angle=180]{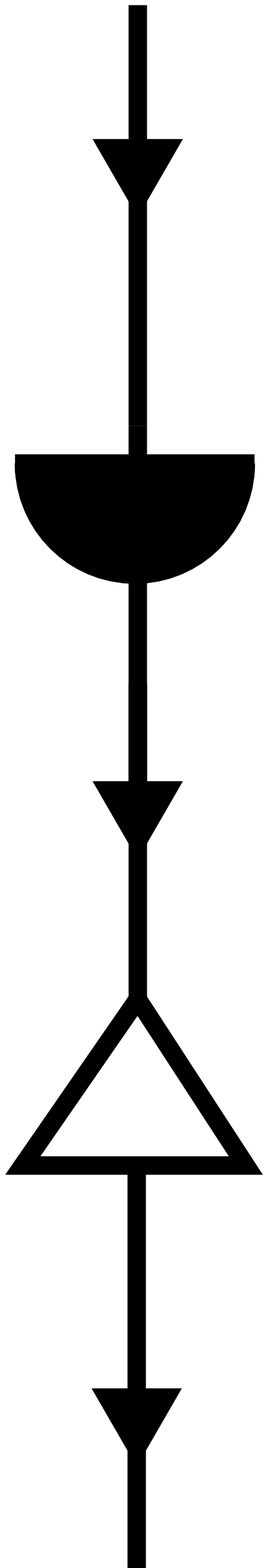}};
\node at (1,0) {$\underset{\beta}{\sum}$};
\node at (2.5,-1.7) {\scriptsize $i\otimes j$};
\node at (2.4,-0.1) {\scriptsize $\Gcalcl$};
\node at (2.4,1.6) {\scriptsize $\widehat{\Gcalcl}$};
\node at (2,-0.8) {\scriptsize $\beta$};
\node at (4.4,-1.7) {\scriptsize $\widehat{\Gcalcl}$};
\node at (4.4,-0.1) {\scriptsize $\Gcalcl$};
\node at (4.5,1.6) {\scriptsize $i\otimes j$};
\node at (4,0.75) {\scriptsize $\beta$}; 
\end{tikzpicture}
\end{center}

Using this we compute 

\begin{center}
\begin{tikzpicture}
\node at (0,0) {\includegraphics[scale=0.2]{figure52.eps}};
\node at (2.3,0) {$=$};
\node at (-2,-1.5) {\scriptsize $\Gcalcl$};
\node at (1,-1.5) {\scriptsize $i\otimes j$};
\node at (-2,0) {$\begin{aligned}\frac{d_id_j}{\Dsf^2}\end{aligned}$};
\node at (5,0) {\includegraphics[scale=0.2]{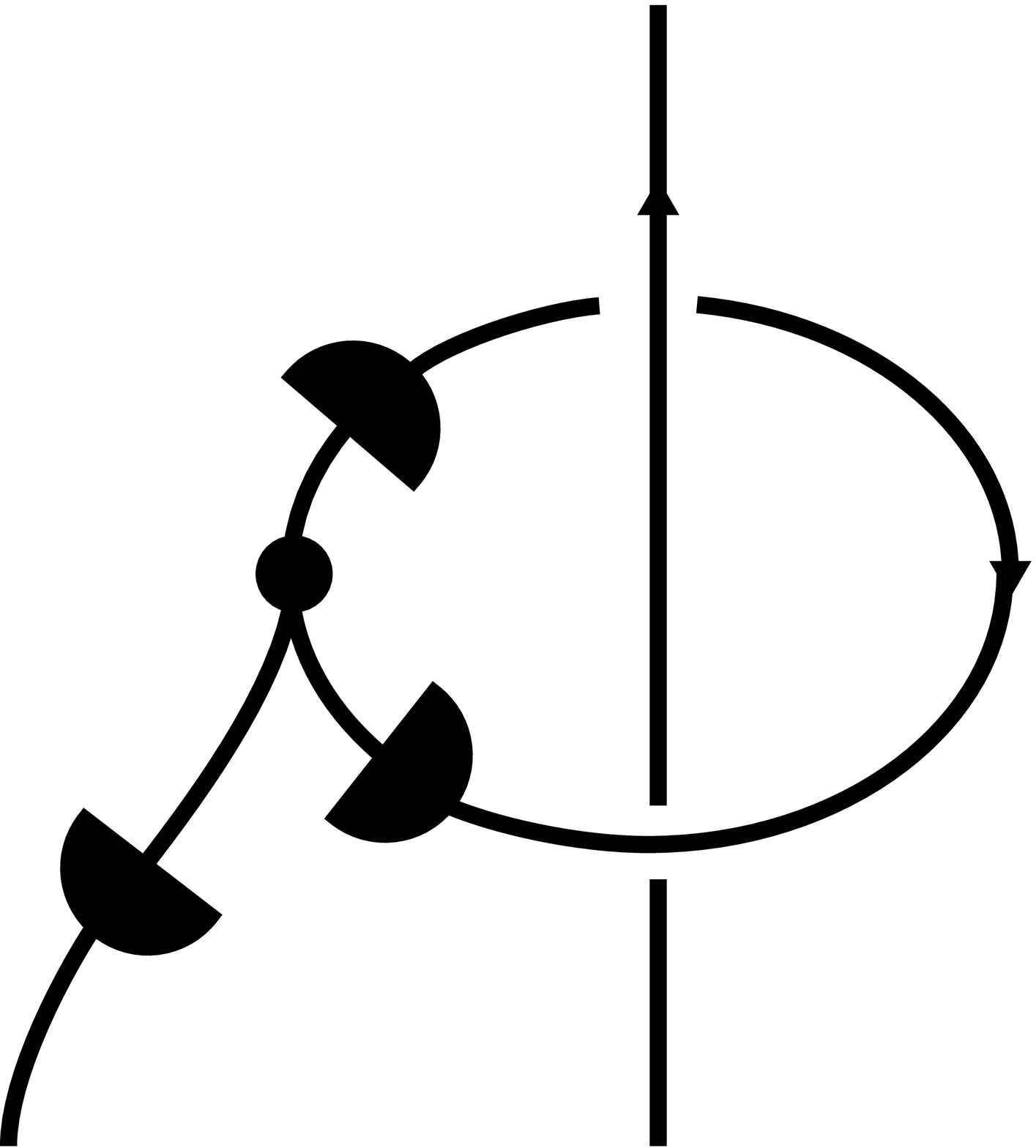}};
\node at (7.3,0) {$=$};
\node at (3,-1.5) {\scriptsize $\Gcalcl$};
\node at (6,-1.5) {\scriptsize $i\otimes j$};
\node at (3.7,-0.4) {\scriptsize $\widehat{\Gcalcl}$};
\node at (3,0) {$\begin{aligned}\frac{d_id_j}{\Dsf^2}\end{aligned}$};
\node at (10,0) {\includegraphics[scale=0.2]{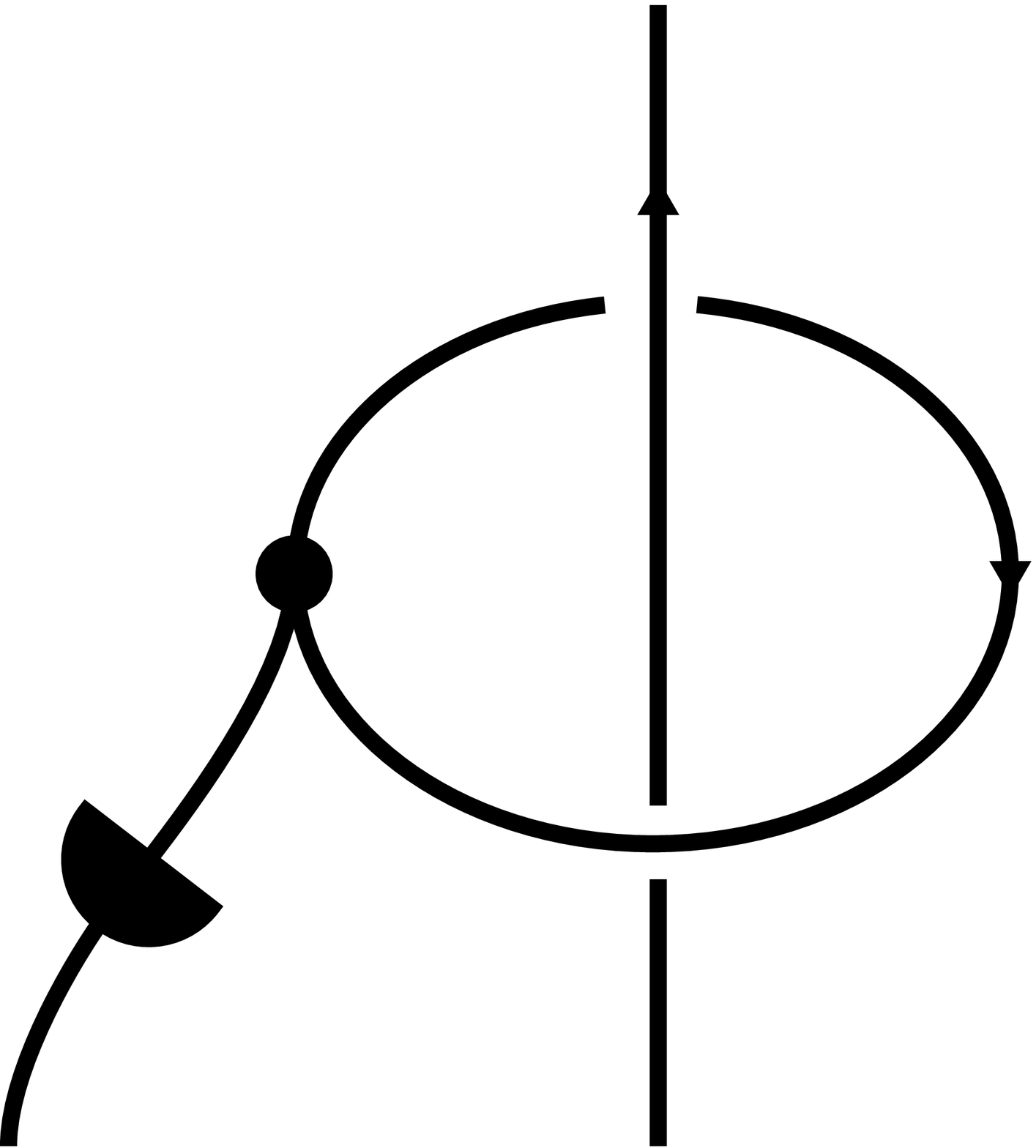}};
\node at (8,-1.5) {\scriptsize $\Gcalcl$};
\node at (11,-1.5) {\scriptsize $i\otimes j$};
\node at (8.7,-0.4) {\scriptsize $\widehat{\Gcalcl}$};
\node at (9.7,-0.4) {\scriptsize $\widehat{\Gcalcl}$};
\node at (8,0) {$\begin{aligned}\frac{d_id_j}{\Dsf^2}\end{aligned}$};
\node at (0,-5) {\includegraphics[scale=0.2]{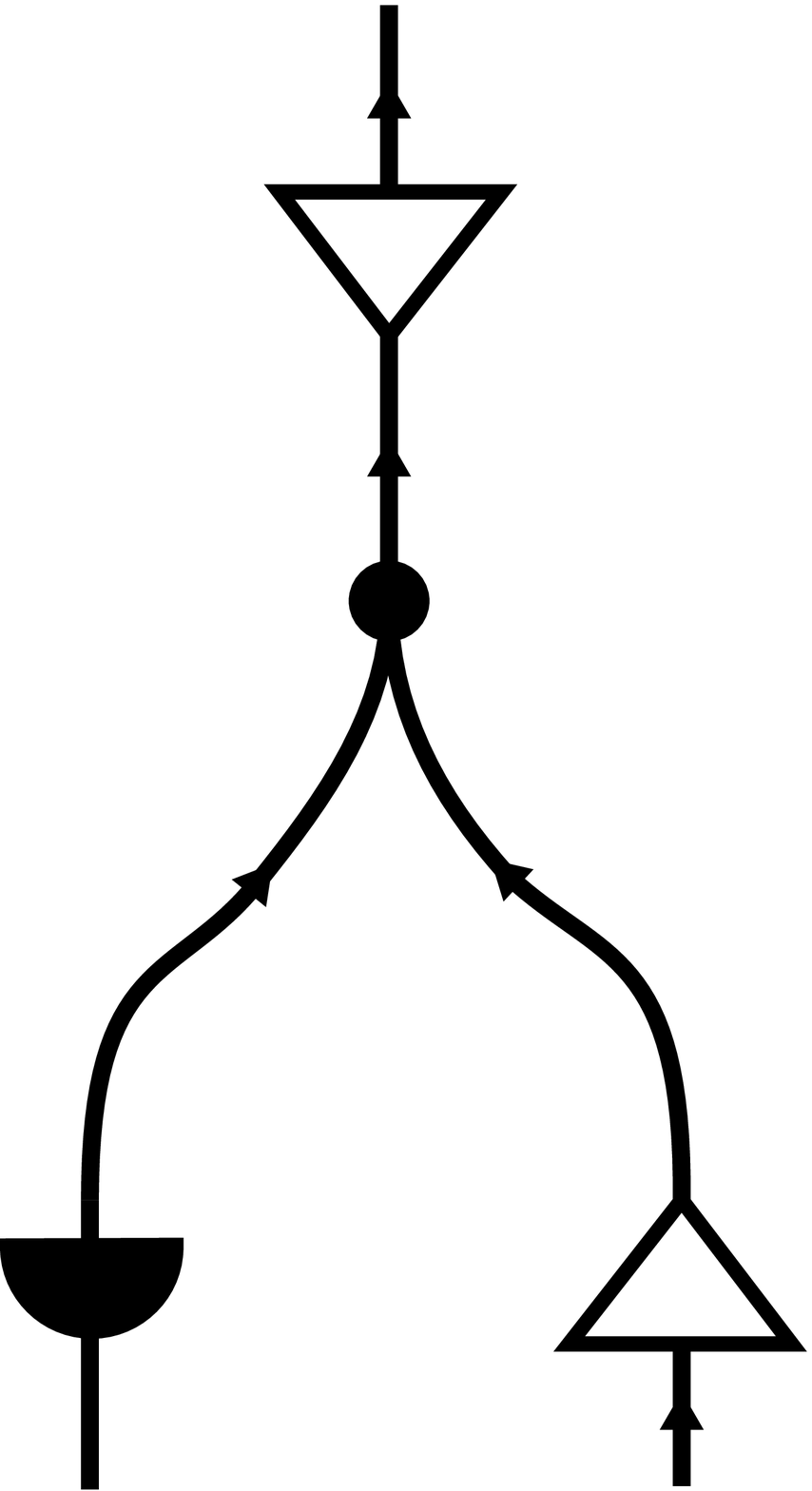}};
\node at (-2.5,-5) {$=$};
\node at (-1.5,-7.2) {\scriptsize $\Gcalcl$};
\node at (1.4,-7.2) {\scriptsize $i\otimes j $};
\node at (1,-6.85) {\scriptsize $\alpha$};
\node at (-1,-5.2) {\scriptsize $\widehat{\Gcalcl}$};
\node at (0.4,-2.7) {\scriptsize $i\otimes j$};
\node at (0,-3.28) {\scriptsize $\alpha$};
\node at (-2,-5) {$\begin{aligned}\sum_{\alpha} \end{aligned}$};
\node at (1.2,-6.1) {\scriptsize $\widehat{\Gcalcl}$};
\node at (5,-5) {\includegraphics[scale=0.135]{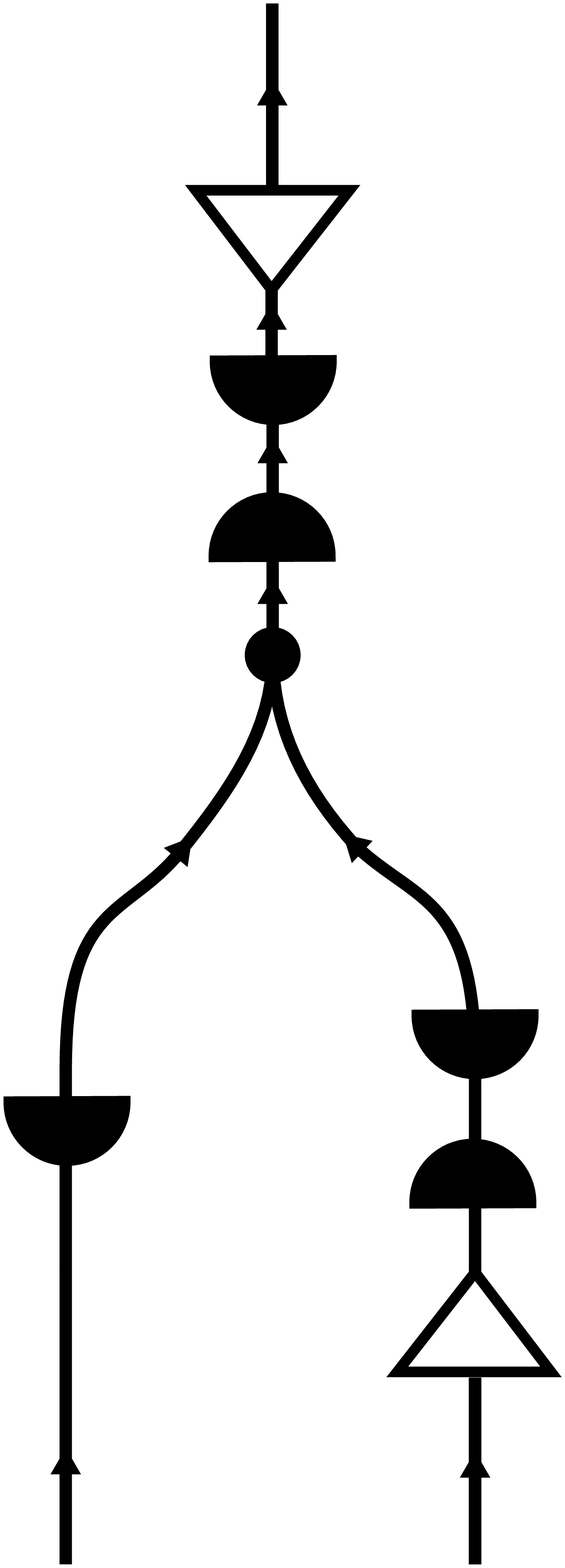}};
\node at (2.5,-5) {$=$};
\node at (3.9,-7.2) {\scriptsize $\Gcalcl$};
\node at (6.2,-7.3) {\scriptsize $i\otimes j $};
\node at (5.65,-6.8) {\scriptsize $\alpha$};
\node at (4,-5.1) {\scriptsize $\widehat{\Gcalcl}$};
\node at (5.4,-2.7) {\scriptsize $i\otimes j$};
\node at (5,-3.15) {\scriptsize $\alpha$};
\node at (3,-5) {$\begin{aligned}\sum_{\alpha} \end{aligned}$};
\node at (5.9,-5.3) {\scriptsize $\widehat{\Gcalcl}$};
\node at (10,-5) {\includegraphics[scale=0.135]{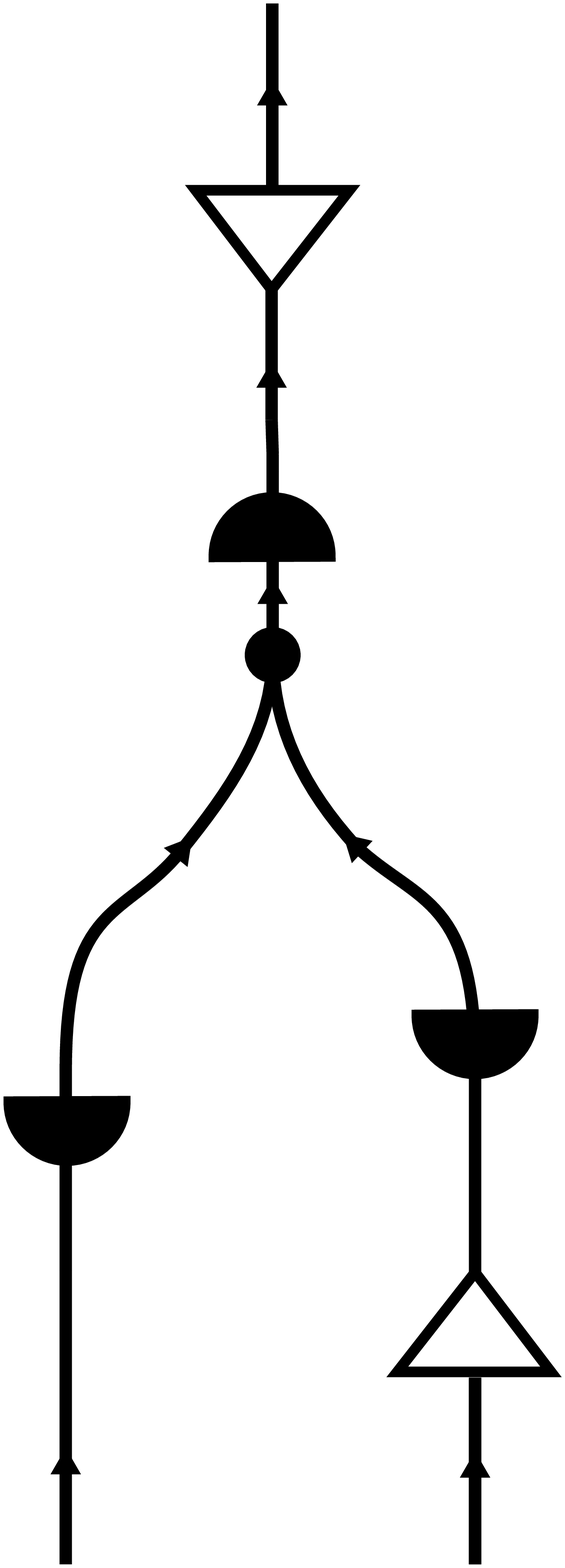}};
\node at (7.5,-5) {$=$};
\node at (9,-7.2) {\scriptsize $\Gcalcl$};
\node at (11.1,-7.3) {\scriptsize $i\otimes j $};
\node at (10.65,-6.8) {\tiny $\beta$};
\node at (9.3,-4.9) {\scriptsize $\widehat{\Gcalcl}$};
\node at (10.4,-2.7) {\scriptsize $i\otimes j$};
\node at (10,-3.15) {\tiny $\beta$};
\node at (8,-5) {$\begin{aligned}\sum_{\beta} \end{aligned}$};
\node at (10.6,-5.1) {\scriptsize $\widehat{\Gcalcl}$};
\node at (0,-11) {\includegraphics[scale=0.2]{figure53.eps}};
\node at (-2.5,-11) {$=$};
\node at (-1.4,-12.8) {\scriptsize $\Gcalcl$};
\node at (1.2,-13.5) {\scriptsize $i\otimes j $};
\node at (0.8,-12.85) {\scriptsize $\beta$};
\node at (0.2,-10.1) {\scriptsize $\Gcalcl$};
\node at (0.4,-8.8) {$i\otimes j$};
\node at (-0.15,-9.3) {\scriptsize $\beta$};
\node at (-2,-11) {$\begin{aligned}\sum_{\beta} \end{aligned}$};
\node at (1.1,-12.1) {\scriptsize $\Gcalcl$};
\end{tikzpicture}
\end{center}
which shows modularity.

The center and Cardy condition follow by similar computations. 
For the uniqueness part suppose we have chosen another set of retracts $(\Gcalcl^\prime,e_{cl}^\prime,r_{cl}^\prime)$ and $(\Gcalop^\prime,e^\prime_o,r^\prime_o)$, then it is easy to see that $f_o=r_o\circ e_o^\prime:\Gcalop^\prime\rightarrow \Gcalop$ and $f_{cl}=r_{cl}\circ e_{cl}^\prime:\Gcalcl^\prime\rightarrow \Gcalcl$ are isomorphisms of Frobenius algebras and in addition the diagram
\begin{center}
\begin{tikzcd}
\Gcalcl^\prime \ar[rr,"f_{cl}"] \ar[dd,"\iotaclop^\prime"']&  & \Gcalcl \ar[dd, "\iotaclop"]\\
& &  \\
L(\Gcalop^\prime)\ar[rr, "L(f_o)"'] & &  L(\Gcalop)
\end{tikzcd}
\end{center}
commutes.
\end{proof}

\section{Conclusion}
In this paper we have shown how string-nets on topological surfaces generate solutions to open-closed sewing relations. The major advantage of string-nets is the transportation of categorical graphical calculus onto surfaces, which allows to use the defining conditions for Cardy algebras directly when solving the sewing constraints. There are some open ends related to this work. First of all, as noted in \cite{Schweigert:2019zwt} one could further generalize the results including defects. This seems likely to be possible using the description of defect world sheets given in \cite{Fjelstad:2012mj}. Furthermore the qualifyer "rational" may be given up, leading to a more general notion of modular tensor categories, which are not fusion. As shown in \cite{Fuchs:2016wjr}\cite{Fuchs:2017unc} many of the categorical description can be transported to this situation by replacing sums over simple objects by coends. Since dragging curves along projector circles was the crucial point in manipulating string-nets for fusion categories there should be an appropriate procedure for string-nets with non-fusion colorings. 

Constructions of open-closed interactions using curves on surfaces have appeared in \cite{kaufmann2003arc}\cite{kaufmann2006closed} in the form of the $\mathsf{Arc}$-operad. Since the graphical representation of the construction very much resembles string-nets, there may be a connection between the two approaches. In general one may wonder about a (wheeled) PROP-description of string-nets, since null graphs give a pasting scheme for string-net diagrams. We plan to address some of these questions in future work.
%%%%%%%%%%%%%%%%%%%%%%%%%%%%%%%%%%%%%%%%%%%%%%%%%%%%%%%%%%%%%%%%%%%%%%%%%%%%%%%%%%%%%%%%%%%%%%%%%%%%%%%%%%%%
%%%%%%%%%%%%%%%%%%%%%%%%%%%%%%%%%%%%%%%%%%%%%%%%%%%%%%%%%%%%%%%%%%%%%%%%%%%%%%%%%%%%%%%%%%%%%%%%%%%%%%%%%%%%
%%%%%%%%%%%%%%%%%%%%%%%%%%%%%%%%%%%%%%%%%%%%%%%%%%%%%%%%%%%%%%%%%%%%%%%%%%%%%%%%%%%%%%%%%%%%%%%%%%%%%%%%%%%%
%%%%%%%%%%%%%%%%%%%%%%%%%%%%%%%%%%%%%%%%%%%%%%%%%%%%%%%%%%%%%%%%%%%%%%%%%%%%%%%%%%%%%%%%%%%%%%%%%%%%%%%%%%%%

\appendix

\section{Generating World Sheets}\label{appendix}
In this appendix we give all the generating world sheets in $\WSsf$. The following figures display the quotients of the orientation double for generating world sheets.
\begin{enumerate}[label=\Roman*)]
\item \textbf{Open World Sheets:}

\begin{center}
\begin{tikzpicture}
\node (opprop) at (0,0) {\includegraphics[scale=0.12]{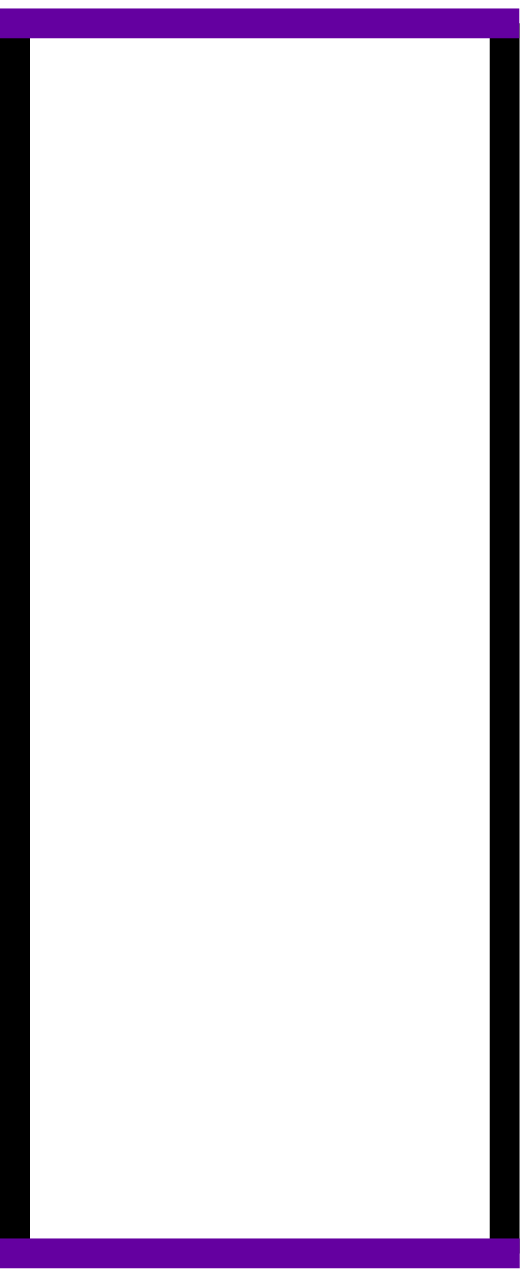}};
\node (mop) at (3,0) {\includegraphics[scale=0.12]{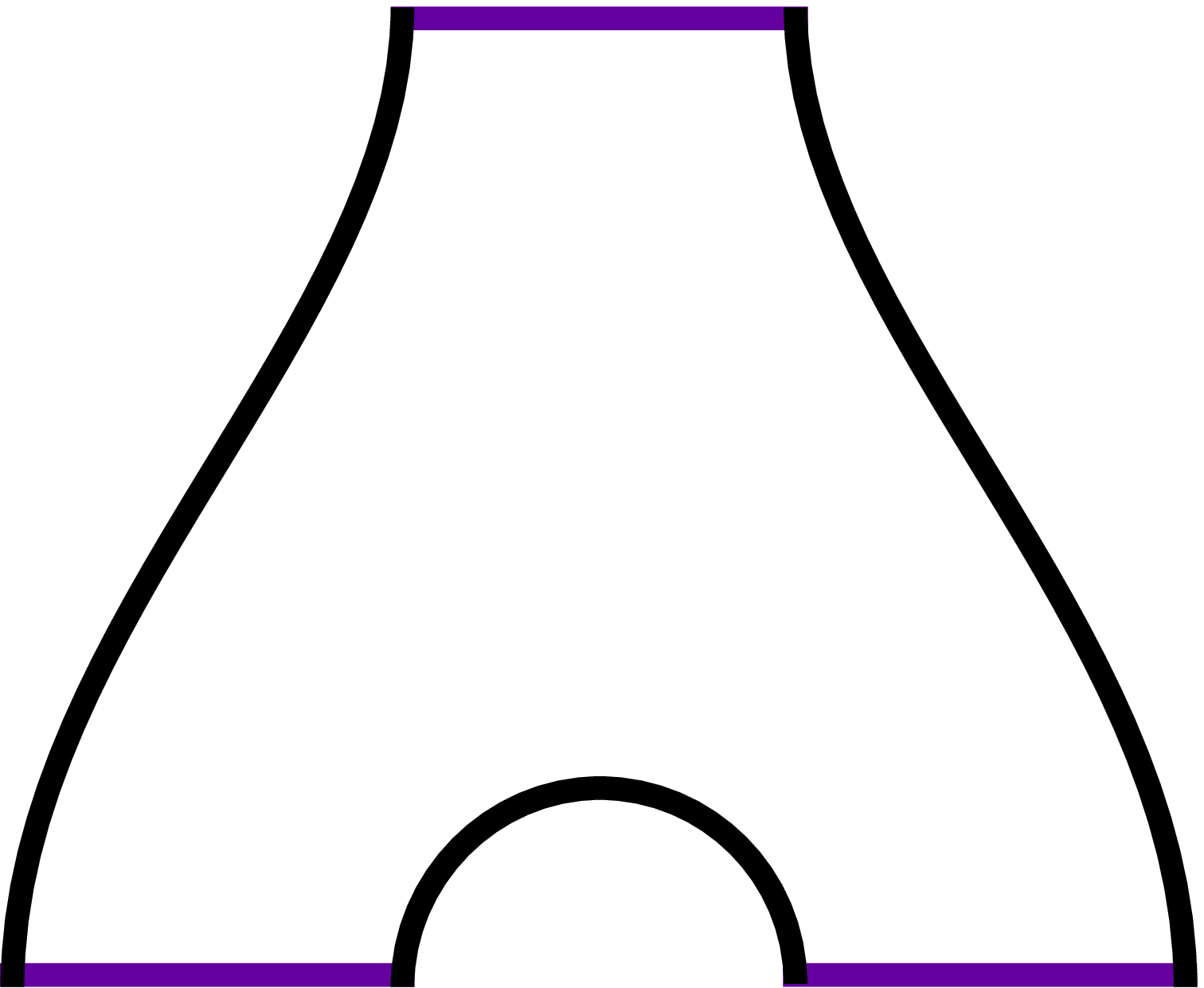}};
\node (Deltaop) at (6,0) {\includegraphics[scale=0.12, angle=180]{figure140.eps}};
\node (unitop) at (9,0) {\includegraphics[scale=0.12]{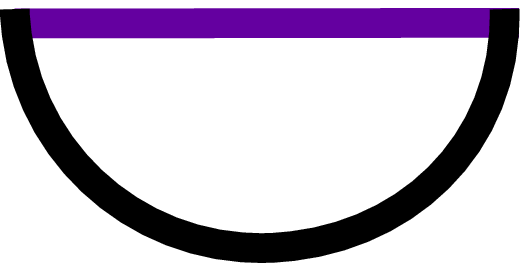}};
\node (counit) at (12,0) {\includegraphics[scale=0.12, angle=180]{figure141.eps}};
\node (Oprop) at (0,-1.2) {$O_{prop}$};
\node (Om) at (3,-1.2) {$O_m$};
\node (Odelta) at (6,-1.2) {$O_\Delta$};
\node (Ounit) at (9,-1.2) {$O_\eta$};
\node (Ocounit) at (12,-1.2) {$O_\epsilon$};
\end{tikzpicture}
\end{center}

Purple colored parts of the boundary correspond to open boundaries. Black boundaries are physical boundaries.
\item \textbf{Closed World Sheets:}

\begin{center}
\begin{tikzpicture}
\node (opprop) at (0,0) {\includegraphics[scale=0.12]{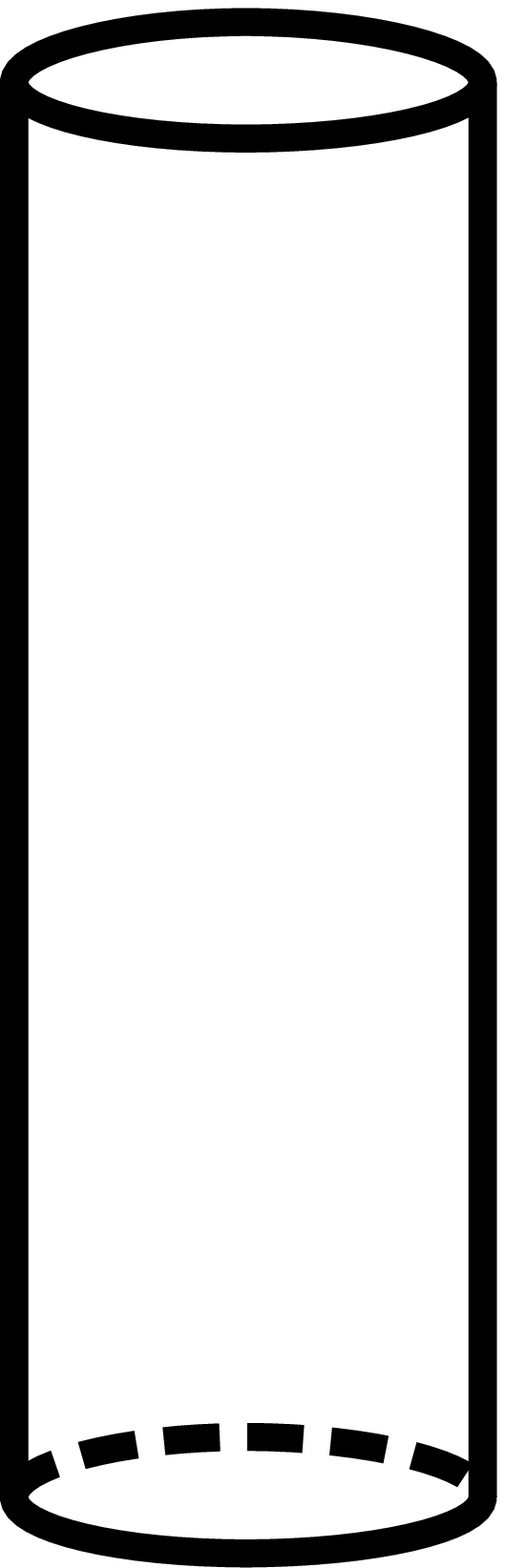}};
\node (mop) at (3,0) {\includegraphics[scale=0.12]{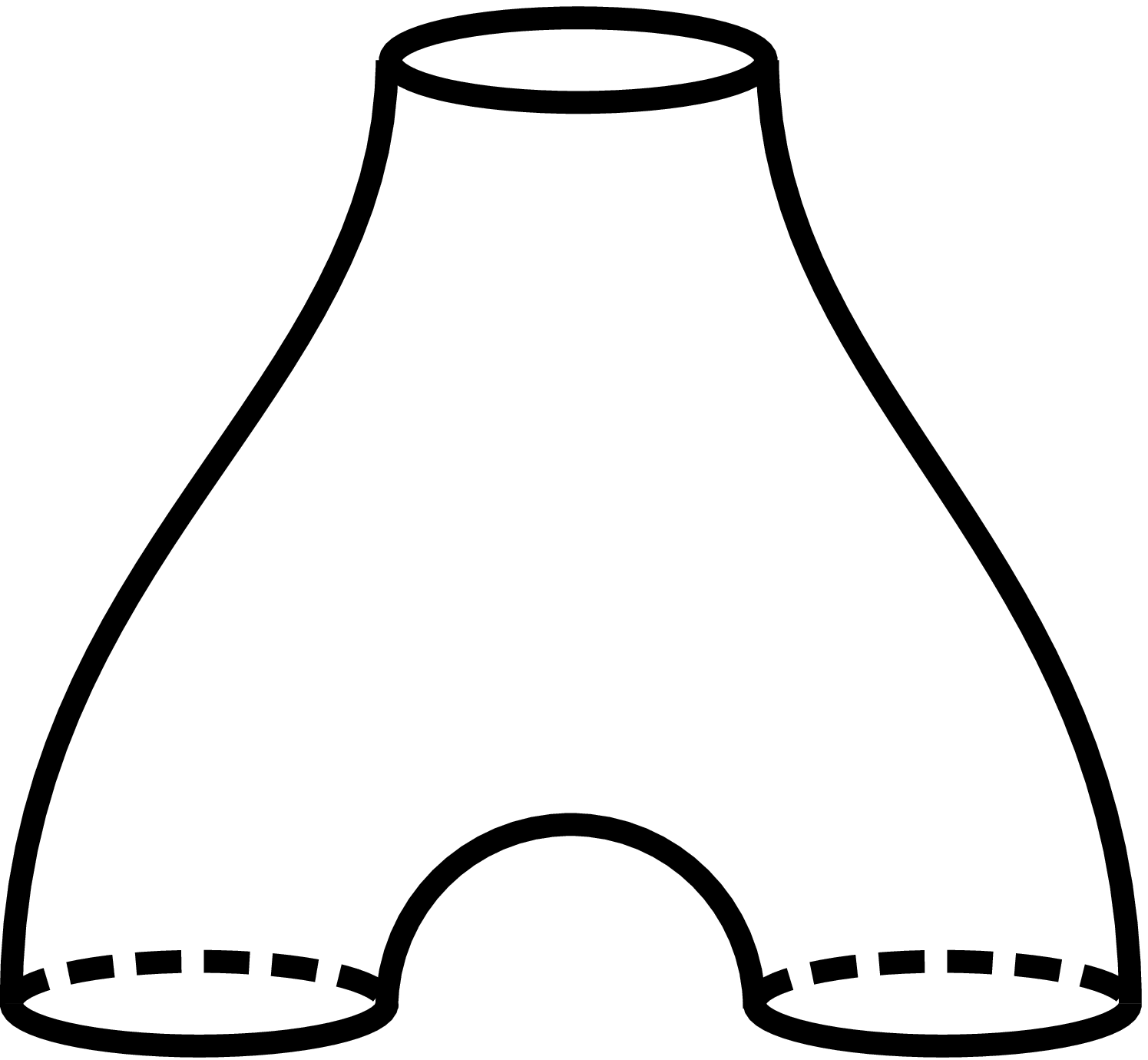}};
\node (Deltaop) at (6,0) {\includegraphics[scale=0.12, angle=180]{figure143.eps}};
\node (unitop) at (9,0) {\includegraphics[scale=0.12]{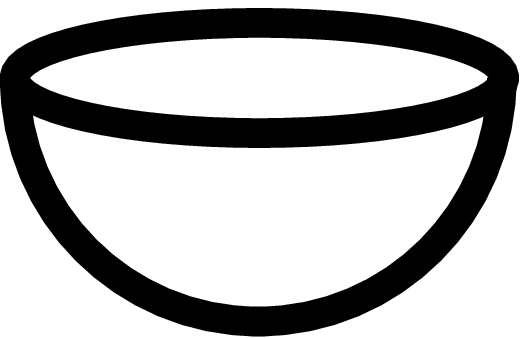}};
\node (counit) at (12,0) {\includegraphics[scale=0.12, angle=180]{figure144.eps}};
\node (Oprop) at (0,-1.3) {$C_{prop}$};
\node (Om) at (3,-1.3) {$C_m$};
\node (Odelta) at (6,-1.3) {$C_\Delta$};
\node (Ounit) at (9,-1.3) {$C_\eta$};
\node (Ocounit) at (12,-1.3) {$C_\epsilon$};
\end{tikzpicture}
\end{center}

\item \textbf{Open-Closed World Sheets:}
\begin{center}
\begin{tikzpicture}
\node (iota) at (0,0) {\includegraphics[scale=0.12]{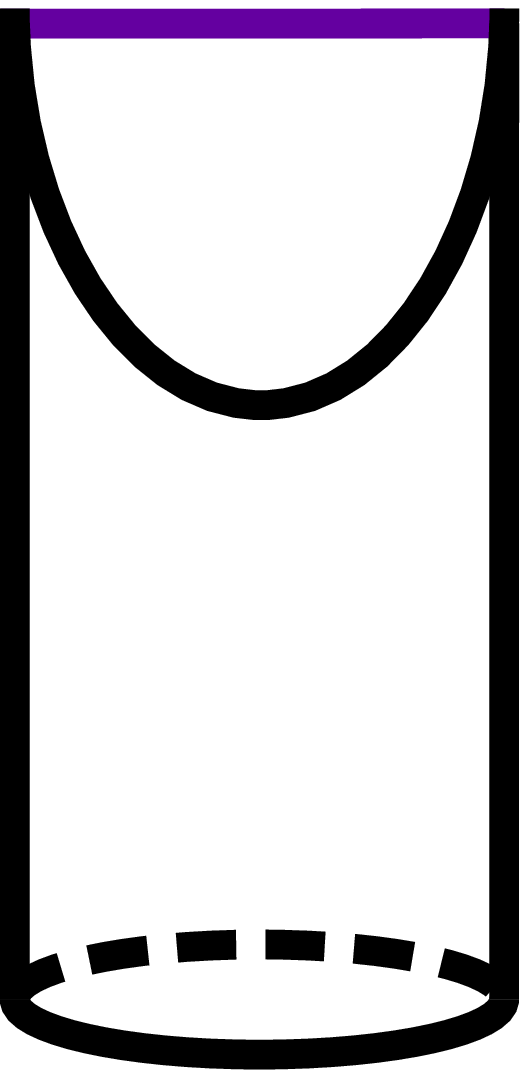}};
\node (iotadagger) at (3,0) {\includegraphics[scale=0.12, angle=180]{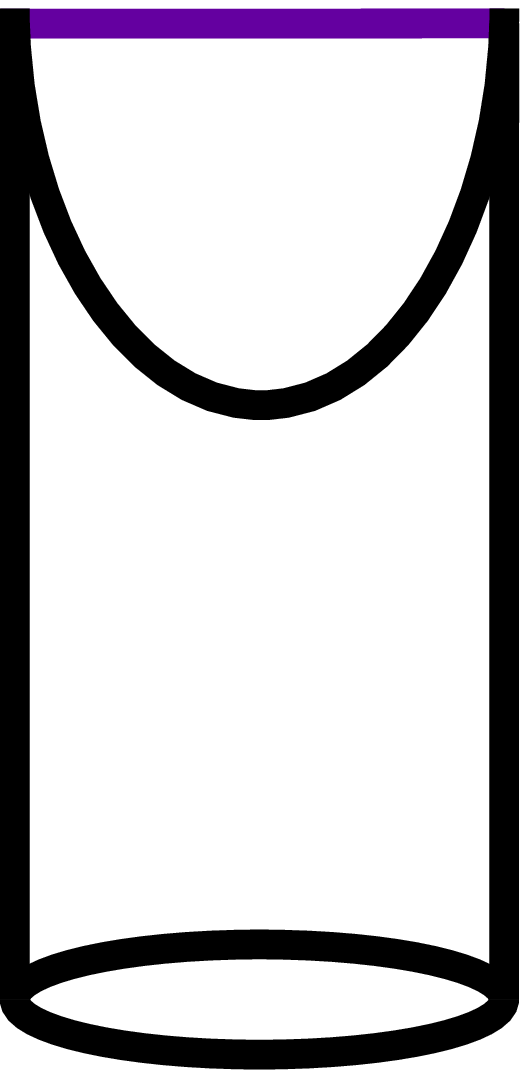}};
\node (I) at (0,-1.2) {$I$};
\node (Idagger) at (3,-1.2) {$I^\dagger$};
\end{tikzpicture}
\end{center}
\end{enumerate}

\vspace*{0.5cm}
\textbf{Acknowledgement:} The author thanks Ralph Blumenhagen, Ilka Brunner and Ingmar Saberi for valuable discussions and Ralph Blumenhagen for having a look on a first draft of this paper.

\bibliographystyle{alpha}
\bibliography{references}
\end{document}